\newcommand{\nwc}{\newcommand*}
\newcommand{\commentout}[1]{}
\nwc{\red}{\color{red}}
\nwc{\blue}{\color{blue}}
\nwc{\nn}{\nonumber}
\newtheorem{thm}{Theorem}[section]
\newtheorem{lem}[thm]{Lemma}
\newtheorem{prop}[thm]{Proposition}
\newtheorem{ex}[thm]{Example}
\newtheorem{rmk}[thm]{Remark}
\newtheorem{defi}[thm]{Definition}
\nwc{\beq}{\begin{eqnarray}}
\nwc{\eeq}{\end{eqnarray}}
\nwc{\beqn}{\begin{eqnarray*}}
\nwc{\eeqn}{\end{eqnarray*}}
\nwc{\IE}{\mathbb{E}}
\nwc{\IR}{\mathbb{R}}
\nwc{\IN}{\mathbb{N}}
\nwc{\IC}{\mathbb{C}}
\nwc{\IZ}{\mathbb{Z}} 
\nwc{\II}{\mathbb{I}}
\nwc{\mb}{\mathbf}
\nwc{\ml}{\mathcal}
\newcommand{\lt}{\left}
\newcommand{\rt}{\right}
\nwc{\ep}{\varepsilon}
\nwc{\lamb}{\lambda}
\nwc{\sgn}{\mbox{\rm sgn}}
\nwc{\lan}{\langle}
\nwc{\ran}{\rangle}
\nwc{\lb}{\llbracket}
\nwc{\rb}{\rrbracket}
\nwc{\im}{{\rm i}}
\nwc{\bk}{{\mb k}}
\nwc{\bn}{{\mb n}}
\nwc{\mbm}{\mathbf{m}}
\nwc{\cM}{\mathcal{M}}
\nwc{\cT}{\mathcal{T}}
\nwc{\cS}{\mathcal S}
\nwc{\bom}{\mathbf{w}}
\nwc{\bM}{\mathbf{M}}
\nwc{\cL}{\mathcal{L}}
\nwc{\half}{{1\over 2}}
\nwc{\cN}{\mathcal{N}}
\nwc{\bN}{\mathbf{N}}
\nwc{\bt}{\mb t}
\nwc{\bkk}{\mb k}
\nwc{\bll}{\mb l}
\nwc{\bs}{\mb s}
\nwc{\bv}{\mb v}
\nwc{\bu}{\mb u}
\nwc{\cG}{{\ml G}}
\nwc{\cB}{{\ml B}}
\nwc{\cF}{{\ml F}}
\nwc{\prox}{\hbox{prox}}
\nwc{\cO}{{\ml O}}
\nwc{\pxp}{P_X^\perp}
\nwc{\px}{P_X}
\nwc{\py}{P_Y}
\nwc{\rx}{R_X}
\nwc{\bw}{\mathbf{w}}
\nwc{\mbe}{\mathbf{e}}
\nwc{\be}{\mathbf{e}}
\nwc{\br}{\mb r}
\nwc{\yeta}{\tilde{y}}
\nwc{\bomg}{\boldsymbol{\omega}}
\nwc{\x}{x_*}
\nwc{\y}{y_*}
\nwc{\muh}{\nu}
\nwc{\xh}{x}
\newcommand{\CC}{\mathbb{C}}
\newcommand{\ZZ}{\mathbb{Z}}
\newcommand{\EE}{\mathbb{E}}
\def\STFT{{\mathcal V}}
\newcommand{\xo}{x_{*}}
\newcommand{\eps}{\varepsilon}
\newcommand{\Tp}{\mathcal{T}^{\perp}}
\newcommand{\ignore}[1]{}
\newcommand{\supp}{\operatorname{supp}}
\newcommand{\diag}{\operatorname{diag}}
\newcommand{\dist}{\operatorname{dist}}
\newcommand{\trace}{\operatorname{Tr}}
\newcommand{\rank}{\operatorname{rank}}
\newcommand{\polylog}{\operatorname{polylog}}
\newcommand{\auto}{\mathfrak{A}}
\newcommand{\cA}{\mathcal{A}}
\newcommand{\cH}{\mathcal{H}}
\newcommand{\Hn}{{\cH_n}}
\newcommand{\range}{\mathcal{R}}
\newcommand\widecheck[1]{%
\savestack{\tmpbox}{\stretchto{%
  \scaleto{%
    \scalerel*[\widthof{\ensuremath{#1}}]{\kern-.6pt\bigwedge\kern-.6pt}%
    {\rule[-\textheight/2]{1ex}{\textheight}}
  }{\textheight}%
}{0.5ex}}%
\stackon[1pt]{#1}{\scalebox{-1}{\tmpbox}}%
}
\title[The Numerics of Phase Retrieval]{The Numerics of Phase Retrieval}
\author[{Acta Numerica}]{%
Albert Fannjiang\address{Department of Mathematics, University of California  Davis, Davis CA {\tt fannjiang@math.ucdavis.edu}}
 \and Thomas Strohmer 
 \address{Center for Data Science and Artificial Intelligence Research, UC Davis {\tt strohmer@math.ucdavis.edu}}
}
\begin{document}

\label{firstpage}
\maketitle

\begin{abstract}
Phase retrieval, i.e., the problem of recovering a function from the squared magnitude of its Fourier transform, arises
in many applications such as X-ray crystallography, diffraction imaging, optics, quantum mechanics, and astronomy. 
This problem has confounded engineers, physicists, and mathematicians for many decades. 
Recently, phase retrieval has seen a resurgence in research activity, ignited by new imaging modalities and novel mathematical concepts.
As our scientific experiments produce larger and larger datasets and we aim for faster and faster throughput,  it becomes increasingly important to study the involved numerical algorithms in a systematic and principled manner. Indeed, the last decade has witnessed a surge in the systematic study of computational algorithms for phase retrieval.  In this paper we will review these recent advances from a numerical viewpoint.

\end{abstract}

\tableofcontents 
\if 0
\begin{center}
\parbox[c]{0.75\textwidth}{\section*{CONTENTS}
\contentsline {section}{\numberline {1}Introduction}{1}
\contentsline {section}{\numberline {2}Using the {\tt acta} class}{2}
\contentsline {section}{\numberline {3}Additional facilities}{2}
\contentsline {section}{\numberline {4}Some guidelines for using standard facilities}{6}
}
\end{center}
\vspace{3mm}
\fi

\section{Introduction}
\label{s:intro}

When algorithms fail  to produce correct results in real world applications,  we would like to know why they failed. Is it because of some mistakes in the experimental setup, corrupted measurements, calibration errors, incorrect modeling assumptions, or is it due to a deficiency of the algorithm itself? If it is the latter, can it be fixed by a better initialization, a more careful tuning of the parameters, or by choosing a different algorithm? Or is a more fundamental modification required, such as developing a different model, including additional prior information, taking more measurements, or a better compensation of calibration errors? As our scientific experiments produce larger and larger datasets and we aim for faster and faster throughput,  it becomes increasingly important to address the aforementioned challenges in a systematic and principled manner.
Thus, a rigorous and thorough study of computational algorithms both from a theoretical and numerical viewpoint is not a luxury, but  emerges as
an imperative ingredient towards effective data-driven discovery.

The last decade has witnessed a surge in the systematic study of numerical algorithms for the famous phase retrieval problem, i.e., the
problem of recovering a signal or image from the intensity measurements of its Fourier transform~\cite{Hur89,KST95}.
In many applications one would like to acquire information about an
object but  it is impossible or impractical to measure the
phase of a signal. We are then faced with the difficult task of reconstructing the object of interest
from these magnitude measurements.  Problems of this kind fall in the
realm of phase retrieval problems, and are notoriously difficult to solve numerically.  
In this paper we will review recent advances in the area of phase retrieval with a strong focus on numerical algorithms.

Historically, one of the first important applications of phase
retrieval is X-ray crystallography \cite{Mil90,Har93}, and today
this is still one of the most important applications.  In 1912, Max von Laue discovered the diffraction of X-rays by crystals.
In 1913, W.H~Bragg and his son W.L.~Bragg realized that one could determine crystal structure from X-ray diffraction patterns.
Max von Laue received the Nobel Prize in 1914 and the Braggs in 1915, marking the beginning of many more Nobel Prizes to be awarded for discoveries in the area of x-ray crystallography. Later, the Shake-and-Bake algorithm become of most successful direct methods for phasing single-crystal diffraction data  and opened a new era in research in  mapping the chemical structures of small molecules~\cite{hauptman1997}.

The phase retrieval problem permeates many other areas of imaging
science. For example, in 1980, David Sayre suggested to extend the approach of x-ray crystallography to non-crystalline specimens.  
This approach is today known under the name of Coherent Diffraction Imaging (CDI)~\cite{miao1999extending}. See~\cite{shechtman2015phase}
for a detailed discussion of the benefits and challenges of CDI.
Phase retrieval also arises in optics \cite{Wal63}, fiber optic communications~\cite{kumar2014fiber}, astronomical imaging \cite{DF87},
microscopy~\cite{MIS08}, speckle interferometry~\cite{DF87},
quantum physics~\cite{Rei44,Cor06}, and even in differential geometry \cite{BSV02}.  

In particular, X-ray tomography has become an invaluable tool in biomedical imaging to
generate quantitative 3D density maps of extended specimens at nanoscale~\cite{dierolf2010ptychographic}.  We refer to~\cite{Hur89,LBL02} for various instances of the phase problem and additional references. A review of phase retrieval in optical imaging can be found in~\cite{shechtman2015phase}. 

Uniqueness and stability properties from a mathematical viewpoint are reviewed in ~\cite{grohs2019}. We just note here that the very first mathematical findings regarding uniqueness related to the phase retrieval problem are Norbert Wiener's seminal results on spectral factorization~\cite{Wie32}.

Phase retrieval has seen a significant resurgence in activity in recent years. This resurgence is fueled by:
(i) the desire to image individual molecules and other nano-particles; (ii) new imaging capabilities 
{such as ptychography, single-molecule diffraction and  serial nanocrystallography,  as well as the availability of {X-ray free-electron lasers (XFELs) and} new X-ray synchrotron sources that provide extraordinary X-ray fluxes, see for example
\cite{chapman2011femtosecond,neutze2000potential,Mil06,Sca06,Bog08,MIS08,dierolf2010ptychographic,DM08}}; and (iii) the influx of novel mathematical concepts and ideas, spearheaded by~\cite{CSV2013,CESV2013} {as well as  deeper understanding of non-convex optimization methods such as Alternating Projections ~\cite{GS72} and Fienup's Hybrid-Input-Output (HIO) algorithm \cite{Fie82}}. 
These mathematical concepts include advanced methods from convex and non-convex optimization, techniques from random matrix theory, and  insights from algebraic geometry.

\bigskip

Let $x$ be a (possibly multidimensional) signal, then in its most basic form, the phase retrieval problem can be expressed as
\begin{equation}
 \text{Recover} \,\, x, \quad
\text{given} \quad |\hat{x}(\bomg)|^2 = \left| \int_{T} x({\bt}) e^{-2\pi \im \bt \cdot \bomg}\, d\bt \right|^2, \quad \bomg \in\Omega, \label{fourier1} 
\end{equation}
where $T$ and $\Omega$ are the domain of the signal $x$ and its Fourier transform $\hat{x}$, respectively (and the Fourier transform
in~\eqref{fourier1} should be understood as possibly multidimensional transform).

When we measure $|\hat{x}(\bomg)|^2$ instead of $\hat{x}(\bomg)$, we lose information about the phase of $x$. If we could somehow retrieve the phase of $x$, then it would be trivial to recover $x$---hence the term  {\em phase retrieval}.
Its origin comes from the fact that detectors can often times only record the squared modulus of the Fresnel or Fraunhofer diffraction pattern of the radiation that is scattered from an object. In such settings, one cannot measure the phase of the optical
wave reaching the detector and, therefore, much information about the scattered object or the optical field is
lost since, as is well known, the phase encodes a lot of the structural content of the image we wish to form.

Clearly, there are infinitely many signals that have the same Fourier magnitude. This includes simple modifications such as translations or reflections of a signal.  While in practice such trivial ambiguities are likely acceptable, there are infinitely many other signals sharing the same Fourier magnitude which do not arise from a simple transform of the original signal. Thus, to make the problem even 
theoretically solvable (ignoring for a moment the existence of efficient and stable numerical algorithms) additional information about the signal must be harnessed. To achieve this we can either assume prior knowledge on the structure of the underlying signal or we can somehow take additional (yet, still phaseless) measurements of $x$, or we pursue a combination of both approaches.

Phase retrieval problems are usually ill-posed and notoriously difficult to
solve. Theoretical conditions that guarantee uniqueness of the solution
for generic signals exist for certain cases. However, as mentioned in
\cite{LBL02} and \cite{unique}, these uniqueness results do not translate into
numerical computability of the signal from its intensity
measurements, or about the robustness and stability of commonly used
reconstruction algorithms. Indeed, many of the
existing numerical methods for phase retrieval rely on all
kinds of a priori information about the signal, and none of these
methods is proven to actually recover the signal.

{This is the main difference between inverse and optimization problems: the latter focuses on minimizing the loss function
while the former emphasizes minimization of reconstruction error of the unknown object. The bridge between the loss function
and the reconstruction error depends precisely on the measurement schemes which are domain-dependent. 
}

Practitioners, not surprisingly, care less about theoretical guarantees of phase retrieval algorithms as long as they perform reasonably well in practice. Yet, it is a fact that algorithms do not always succeed. And then we want to know what went wrong. Was it a fundamental misconception in the experimental setup? After all, Nature does not alway cooperate. Was is due to underestimating measurement noise or unaccounted-for calibration errors? How robust is the algorithm in presence of corrupted measurements or perturbations cause by lack of calibration?
How much parameter tuning is acceptable when we deal with large throughput of data? All these questions require a systematic empirical study  of algorithms combined with a careful theoretical numerical analysis. 
This paper provides a snapshot  from an algorithmic viewpoint of recent activities in the applied mathematics community in this field. 
{In addition to traditional convergence analysis, we give equal attention to the sampling schemes and the data structures. }

\subsection{Overview}

In Section~\ref{s:setup} we introduce the main setup, some mathematical notation, and introduce various measurement techniques arising
in phase retrieval, such as coded diffraction illumination and ptychography. Section~\ref{s:uniqueness} is devoted to questions of uniqueness and feasibility. We also analyze various noise models. Nonconvex optimization methods are covered in Section~\ref{s:nonconvex}. We first review and analyze iterative projection methods, such as alternating projections, averaged alternating reflections, and the Douglas-Rachford splitting. We also review issues of convergence.  We then analyze gradient descent methods and  the Alternating Direction Method of Multipliers in detail. We discuss convergence rates, fixed points, and robustness of these algorithms. The question of the right initialization method is the contents of Section~\ref{s:init}, as initialization plays a key role for the performance of many algorithms.
 In Section~\ref{s:convex} we introduce various convex optimization methods for phase retrieval, such as PhaseLift and convex methods without ``lifting''. We also discuss applications in quantum tomography and how to take advantage of signal sparsity.
Section~\ref{sec:blind-ptycho} focuses on blind ptychography. We describe connections to time-frequency analysis, discuss in detail ambiguities arising in blind ptychography and describe a range of blind reconstruction algorithms. Holographic coded diffraction imaging is the topic of Section~\ref{s:holo}.
We conclude in Section~\ref{s:outlook}.

\section{Phase retrieval and ptychography: basic setup}\label{s:setup}

\subsection{Mathematical formulation}

There are many ways in which one can pose the phase-retrieval problem,
for instance depending upon whether one assumes a continuous or
discrete-space model for the signal. In this paper, we consider discrete
length signals (one-dimensional or multi-dimensional) for simplicity,
and because numerical algorithms ultimately operate with digital data. Moreover, for the same reason we will often focus on finite-length signals.
We refer to~\cite{grohs2019} and the many references therein regarding the similarities and delicate differences arising between the discrete and the continuous setting.

To fix ideas, suppose our object of interest is represented by a discrete signal
$x(\bn), \bn = (n_1,n_2,\cdots,n_d)\in \IZ^d.$ 
Define the  Fourier transform of $\x$ as
\beqn
\sum_{\bn} \x(\bn)
   e^{- 2\pi\im \bn\cdot \bom},\quad \bom \in \Omega. 
\eeqn
We denote the Fourier transform operator by $F$ and $F^{-1}$ is its
inverse Fourier transform\footnote{Here, $F$ may correspond to a one- or multi-dimensional Fourier transform,  and operate in the continuous, discrete, or finite domain. The setup will become clear from the context.}. The phase retrieval problem consists in finding $x$ from the magnitude coefficients
$|(F x)[\bomg]|$, $\bomg \in \Omega$. Without further information about the
unknown signal $x$, this problem is in general ill-posed since there are many
different signals whose Fourier transforms have the same
magnitude. Clearly, if $x$ is a solution to the phase retrieval
problem, then (i) $c x$ for any scalar $c \in \CC$ obeying $|c| = 1$ is
also solution, (ii) the ``mirror function'' or time-reversed signal
$\bar{x}(-\bt)$ is also solution, and (iii) the shifted signal $x(\bt - \bs)$ is also
a solution. From a physical viewpoint these ``trivial associates'' of
$x$ are usually acceptable ambiguities. But in general infinitely many
solutions can be obtained from $\{|\hat{x}(\bomg)| : \bomg \in
\Omega\}$ beyond these trivial associates \cite{San85}. 

Most phase retrieval problems are formulated in 2-D, often with the ultimate goal to reconstruct--via tomography--a 3-D structure.
But phase retrieval problems also arise in 1-D (e.g.\ fiber optic communications) and potentially even 4-D (e.g.\ mapping the dynamics of biological structures).

Thus,  we formulate the phase retrieval problem in a more general way as follows:
Let $x \in \CC^n$ and  $a_k \in \CC^n$:
\begin{equation}\label{eq:data}
 \text{Recover $x$, \quad given} \,\, y_k = | \langle x, a_k\rangle |^2, \quad k = 1, \ldots, N.
\end{equation}
Here, $x$ and the $a_k$ can represent  multi-dimensional  signals. Here, we assume intensity measurements
but obviously the problem is equivalent from a theoretical viewpoint if we assume magnitude measurements
\beqn
b_k = | \langle x, a_k\rangle |, \quad k = 1, \ldots, N.
\eeqn
To ease the burden of notation,
when $x$ represents an image and the two-dimensionality of $x$ is essential for the presentation, we often will denote its dimension as $x\in \CC^{n\times n}$ (instead of the more cumbersome notation $x \in \CC^{\sqrt{n}\times \sqrt{n}}$),  in which case the total number of unknowns is $n^2$. In other cases, when the dimensionality of  $x$ is less relevant to the analysis, we will simply consider $x \in \CC^n$, where $x$ may be one- or multi-dimensional. The dimensionality will be clear from the context.

Also, the measurement vectors
$a_k$ can represent different measurement schemes (e.g.\ coded diffraction imaging, ptychography,...) with specific structural properties, that we will describe in more detail later.

We note that if $x$ is a solution to the phase retrieval problem, then $c x$
for any scalar $c \in \CC$ obeying $|c| = 1$ is also a solution. Thus,
without further information about $x$, all we can hope for is to recover
$x$ up to a {\em global phase}.  Thus, when we talk in this paper about exact recovery of $x$, we always mean recovery up to this global phase factor.

As mentioned before, the phase retrieval problem is notoriously ill-posed in its most classical form, where one tries to recover $x$
from intensities of its Fourier transform, $|\hat{x}|^2$. We will discuss questions about uniqueness in Section~\ref{s:uniqueness}, see also the reviews~\cite{grohs2019,jaganathan2015phase,bendory2017fourier}. To combat his ill-posedness, we have the options to include additional prior information about $x$
or acquire additional measurements about $x$, or a combination of both.  We will briefly the most common strategies below.

\subsection{Prior information}

A natural way to attack the ill-posedness of phase retrieval is to reduce the number of unknown parameters. The most common assumption is to invoke {\em support constraints} on the signal~\cite{Fie82,CMW07}. This is often justified since the object of interest may have clearly defined boundaries, outside of which one can assume that the signal is zero. The effectiveness of this constraint often hinges on the accuracy on the estimated support boundaries.
{\em Positivity} and {\em real-valuedness} are other frequent assumptions suitable in many settings, while {\em atomicity} is more limited to specific scenarios~\cite{Fie78,Fie82,Mar07,CMW07}. Another assumption that has gained popularity in recent years is {\em sparsity}~\cite{shechtman2015phase}. Under the sparsity assumption, the signal of interest has only relatively few non-zero coefficients in some (known) basis, but we do not know a priori the indices of these coefficients, thus we do not know the location of the support. This can be seen as a generalization of the usual support constraint. 

Oversampling in the Fourier domain has been proposed as a means to
mitigate the non-uniqueness of the phase retrieval problem in connection with prior signal information~\cite{MCS97}. 
While oversampling offers no benefit for most one-dimensional signals, the
situation is more favorable for multidimensional signals, where it has
been shown that twofold oversampling in each dimension almost always
yields uniqueness for finitely supported, real-valued and non-negative
signals~\cite{BS79,Hay82,San85}, see also~\cite{grohs2019}. 
As pointed out in \cite{LBL02}, these uniqueness results do not say
anything about how a signal can be recovered from its intensity
measurements, or about the robustness and stability of commonly used
reconstruction algorithms. We will discuss throughout the paper how to incorporate various kinds of prior information in the algorithm design.

\subsection{Measurement techniques}\label{s:measurements}

The  setup of classical X-ray crystallography (aside of oversampling) corresponds to the most basic measurement setup where the measurement vectors $a_k$ are the columns of the associated 2-D DFT matrix. This means if $x$ is an $n \times n$ image, we obtain $n^2$ Fourier-intensity samples, which is obviously not enough to recover $x$. Thus, besides oversampling, different strategies have been devised to obtain additional measurements about $x$. We briefly review these strategies and discuss many of them in more detail throughout the paper.

\subsubsection{Coded diffraction imaging}
The combination of X-ray diffraction, oversampling and phase retrieval has launched the field of
{\em Coherent Diffraction Imaging} or CDI~\cite{miao1999extending,Mar07}. A detailed description of CDI and phase retrieval can be found
in~\cite{shechtman2015phase}.  
As pointed out in~\cite{shechtman2015phase}, the lensless nature of CDI is actually an advantage when dealing with extremely intense
and destructive pulses, where one can only carry out a single pulse measurement with each object (say, a molecule) before
the object disintegrates. Lensless imaging is mainly used in short wavelength spectral regions such as extreme ultraviolet
(EUV) and X-ray, where high precision imaging optics are difficult to manufacture, expensive and
experience high losses. We discuss CDI in more detail in Section~\ref{ss:coded}, as well as throughout the paper.

\subsubsection{Multiple structured illuminations}  \label{ss:diffractions}

A by now very popular approach to increase the number of measurements is to collect
several diffraction patterns providing ``different views'' of the
sample or specimen, as illustrated in Figure~\ref{fig:mask}. The concept of using multiple measurements as an attempt to resolve the phase ambiguity for diffraction imaging is of course not new,
and was suggested in \cite{Mis73}. Since then, a variety of methods have been proposed to carry out these multiple measurements;
depending on the particular application, these may include the use of various gratings and/or of masks, the rotation of the axial
position of the sample, and the use of defocusing implemented in a spatial light modulator, see \cite{Dua11} for details and references. 

Inspired by work from compressive sensing and coded diffraction imaging, theoretical analysis clearly revealed the potential of combining randomness with multiple illuminations~\cite{CSV2013,unique}.
Despite the sometimes expressed skepticism towards the feasibility of random illuminations~\cite{luke2017phase}, this concept has a long history in optics and X-ray imaging, and great progress continues to be made~\cite{ptycho-rpi}, \cite{horisaki2016single},\cite{random-aperture}, \cite{seaberg2015coherent}, \cite{zhang2016phase}, \cite{marchesini2019shaping}, thereby exemplifying the exciting advanced that can be achieved by an efficient feedback loop between theory and practice. To quote from the source~\cite{marchesini2019shaping}: ``The ability to arbitrarily shape coherent x-ray wavefronts at new synchrotron and x-ray free electron facilities with these new optics will lead to advances in measurement capabilities and techniques that have been difficult to implement in the x-ray regime.''

\begin{figure}
\begin{center}
\includegraphics[width=100mm]{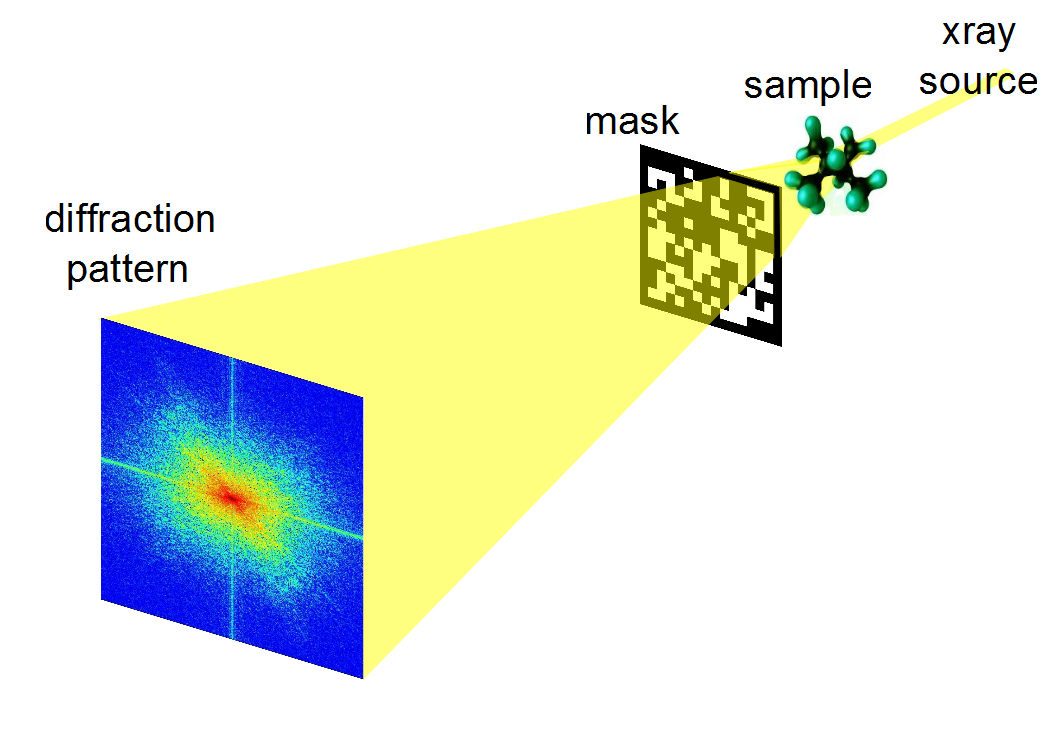}
\caption{A typical setup for structured illuminations in diffraction imaging
using a phase mask.}
\label{fig:mask}
\end{center}
\end{figure}

We can create  multiple illuminations in many ways. One possibility is to modify the phase front
after the sample by inserting a {\em mask} or a phase plate, see~\cite{Liu08} for example. A schematic layout is shown in Figure~\ref{fig:mask}.  
Another standard approach would be to change the profile or modulate the illuminating beam, which can easily be
accomplished by the use of {\em optical gratings}~\cite{LP97}. A simplified  representation would look similar to the scheme depicted in Figure~\ref{fig:mask}, with a grating instead of the mask (the grating could be placed before or after the sample).

{\em Ptychography} can be seen as an example of multiple illuminations.  But due to its specific structure, ptychography deserves to be treated separately. In ptychography, one records several diffraction patterns from overlapping areas of the sample, see \cite{Rod08,TDB09} and
references therein. We discuss ptychography in more detail in Section~\ref{ss:ptycho} and Section~\ref{s:ptycho}.
In \cite{Pfeiffer}, the sample is scanned by shifting the phase plate as in ptychography; the
difference is that one scans the known phase plate rather than the object being imaged.
{\em Oblique illuminations}  are another possibility to create multiple illuminations. Here one can use illuminating beams hitting the sample at user specified angle \cite{FHP10}.

In mathematical terms, the phase retrieval problem when using multiple structured illuminations
in the measurement process, can be expressed as follows.
\beqn
    &\text{Find}   & \quad x\\
    &\text{subject to} & \quad  y_{k,\ell} = |(FD_\ell x)_k|^2, \quad  k=1,\dots,n; \ell =1,\dots,L,
\eeqn
where $D_\ell$ is a diagonal matrix representing the $\ell$-th mask out of a total of $L$ different masks, and the total number of measurements is
given by $N=nL$.

\subsubsection{Holography}

Holographic techniques, going back to the seminal work of Dennis Gabor~\cite{gabor1948new}, are among the more popular methods that have
been proposed to measure the phase of the optical wave. The basic idea of holography is to include a reference in the illumination process. This prior information can be utilized to recover the phase of the signal.  While holographic techniques have been successfully applied in certain areas of optical imaging, they can be generally difficult to implement in
practice~\cite{Dua11}. In recent years we have seen significant  progress in this area~\cite{saliba2016novel,latychevskaia2012holography}.
We postpone a more detailed discussion of holographic methods to Section~\ref{s:holo}.

\subsection{Measurement of coded diffraction patterns}\label{ss:coded}

Due to the importance of coded diffraction patterns for phase retrieval, we describe this scheme  in more detail.
Let $\IZ_n^2=\lb 0,n-1\rb^2$ be the object domain containing the support of the discrete object $\x$ where $\lb k, l\rb$ denotes the integers between, and including,  $k\le l\in \IZ$.

 For any  vector $u$, define  its modulus vector $|u|$ as
 $
 |u|(j)= |u(j)|
 $
 and its phase vector $\sgn(u)$ as 
\[
\sgn(u)(j)=\lt\{\begin{matrix} 
e^{\im \alpha}&\mbox{if $u(j)=0$}\\
u(j)/|u(j)|&\mbox{else. }
\end{matrix}\rt.
\]
where $j$ is the index for the vector component. The choice of $\alpha\in \IR$ is arbitrary when $u(j)$ vanishes.  
However, for numerical implementation, $\alpha$ can be conveniently set to $0$.

In the noiseless case phase retrieval problem is to solve 
\beq
\label{pr}
b=|u|\quad \mbox{ with}\quad  u=A\x
\eeq
for unknown object $\x$ with given data $b$ and 
some measurement matrix $A$. 

 Let $\x(\bn), \bn = (n_1,n_2,\cdots,n_d)\in \IZ^d,$ be a discrete  object function   supported in 
\[
\cM = \{ 0\le m_1\le M_1, 0\le m_2\le M_2,\cdots, 0\leq m_d\leq M_d\}. 
\]

Define the $d$-dimensional {\em discrete-space Fourier transform} of $\x$ as
\beqn
\sum_{\bn\in \cM} \x(\bn)
   e^{-  2\pi\im \bn\cdot \bom},\quad \bom=(w_1,\cdots,w_d)\in [0,1]^d.   \eeqn
However, only the {\em intensities} of the Fourier transform, called the diffraction pattern,
are measured  
 \beq
   \sum_{\bn =-\bM}^{\bM}\sum_{\mbm\in \cM} \x(\mbm+\bn)\overline{\x(\mbm)}
   e^{-\im 2\pi \bn\cdot \bom},\quad \bM = (M_1,\cdots,M_d)\nn
   \eeq
   which is the Fourier transform of the autocorrelation
   \beqn
	  R(\bn)=\sum_{\mbm\in \cM} \x(\mbm+\bn)\overline{\x(\mbm)}.
	  \eeqn
Here and below the over-line  means
complex conjugacy. 

Note that
$R$ is defined on the enlarged  grid
 \begin{eqnarray*}
 \widetilde \cM = \{ (m_1,\cdots, m_d)\in \IZ^d: -M_1 \le m_1 \le M_1,\cdots, -M_d\le m_d\leq M_d \} 
 \end{eqnarray*}
whose cardinality is roughly $2^d$ times that of $\cM$.
Hence by sampling  the diffraction pattern
 on the grid 
\beqn
\cL = \Big\{(w_1,\cdots,w_d)\ | \ w_j = 0,\frac{1}{2 M_j + 1},\frac{2}{2M_j + 1},\cdots,\frac{2M_j}{2M_j + 1}\Big\}
\eeqn
we can recover the autocorrelation function by the inverse Fourier transform. This is the {\em standard oversampling} with which  the diffraction pattern and the autocorrelation function become equivalent via the Fourier transform.

A coded diffraction pattern is measured with a mask
whose effect is multiplicative and results in  
a {\em masked object}  $
\x(\bn) \mu(\bn)$ 
where $\mu(\bn)$ is an array of random variables representing the mask.   
In other words, a coded diffraction pattern is just the plain diffraction pattern of
a masked object. 

We will focus on the effect of {\em random phases} $\phi(\bn)$ in the mask function 
$
\mu(\bn)=|\mu|(\bn)e^{\im \phi(\bn)}
$
where  $\phi(\bn)$ are independent, continuous real-valued random variables and $|\mu|(\bn)\neq 0,\forall \bn\in \cM$ (i.e. the mask is transparent). 
The mask function by assumption is a finite set of  continuous random variables
and so is $\y=A \x$.  Therefore  $\y$ vanishes  nowhere almost surely, i.e.
\[
b_{\rm min}= \min_{j} b_j>0.
\]

For simplicity we assume $|\mu|(\bn)=1,\forall\bn$ which gives rise to
a {\em phase} mask and an {\em isometric}  propagation matrix 
\beq
\label{one}
\hbox{\rm (1-mask )}\quad A= c\Phi\,\, \diag\{\mu\},
\eeq
i.e. $A^*A=I$ (with a proper choice of the normalizing constant  $c$), where $\Phi$ is the {\em oversampled}  $d$-dimensional discrete Fourier transform (DFT). Specifically  $\Phi \in \IC^{|\tilde \cM|\times |\cM|}$ is the sub-column matrix of
the standard DFT on  the extended grid $\tilde \cM$ where $|\cM|$ is
the cardinality of $\cM$.

If the non-vanishing mask $\mu$ does not have a uniform transparency, i.e. $|\mu|(\bn)\neq 1, \forall \bn,$ then we can define  a new object vector $|\mu|\odot \x$ and a new
isometric propagation matrix
\[
A= c\Phi\,\, \diag\lt\{{\mu\over |\mu|}\rt\}
\]
with which to recover the new object first.

When two phase masks $\mu_1, \mu_2$ are deployed, 
the propagation matrix $A^*$ is the stacked coded DFTs, i.e.   
\beq \label{two}\hbox{(2-mask case)}\quad 
A=c \lt[\begin{matrix}
\Phi\,\, \diag\{\mu_1\}\\
\Phi\,\, \diag\{\mu_2\}
\end{matrix}\rt]. 
\eeq
 With proper normalization, $A$ is
isometric. 

All of the results with coded diffraction patterns present in this work apply to  $d\geq 2$. But the most relevant case is $d=2$
which is assumed hereafter. 
We can vectorize the object/masks by converting the $n\times n$ square grid into an ordered set of index. Let $N$ the total number of measured data. In other words, $A\in \IC^{N\times n^2}$ where  $N$ is about $4\times n^2$ and and $8\times n^2$, respectively, in the case of \eqref{one} and \eqref{two}.

\subsection{Ptychography}\label{s:ptycho}

Ptychography is a special case of coherent diffractive imaging that uses multiple
micro-diffraction patterns obtained through scanning across the unknown specimen
with a mask, making a  measurement for each location   via a localized illumination on the specimen~\cite{hoppe1969beugung,Rod08}.
This provides a much larger set of measurements, but at the cost of a longer, more involved experiment.
As such ptychography is a synthetic aperture technique and, along with advances in
detection and computation techniques, has enabled microscopies with enhanced resolution
and robustness without the need for lenses. Ptychography offers numerous benefits and thus attracted significant attention.
See~\cite{dierolf2010ptychographic,TDB09,Rod08,qian2014efficient,pfeiffer2018x,horstmeyer2016diffraction} for a small sample of different activities in this field.

A schematic drawing of a ptychography experiment in which a probe scans through a 2D object in an overlapping fashion
and produces a sequence of diffraction patterns of the scanned regions is depicted in Figure~\ref{fig:ptychosetup}.
Each image frame represents the magnitude of the Fourier transform
of ${\mu}(\bs) x(\bs + \bt)$, where ${\mu}(\bs)$ is a localized illumination (window) function or
a mask,  $x(\bs)$ is the unknown object of interest, and $t$ is a translational vector. Thus
the  measurements taken in ptychography can be expressed as
\begin{equation} \label{ptycho1}
|  F({\mu}(\bs) x(\bs+\bt)  |^2.
\end{equation}
Due to its specific underlying mathematical structure, ptychography deserves its own analysis.
A  detailed discussion of various reconstruction algorithms for ptychography can be found in~\cite{qian2014efficient}. For a convex approach using the PhaseLift idea see for instance~\cite{horstmeyer2015solving}. An intriguing algorithm that combines ideas from PhaseLift with the local nature of the measurements can be found in~\cite{iwen2016phase}.

\begin{figure}
\begin{center}
\includegraphics[width=100mm]{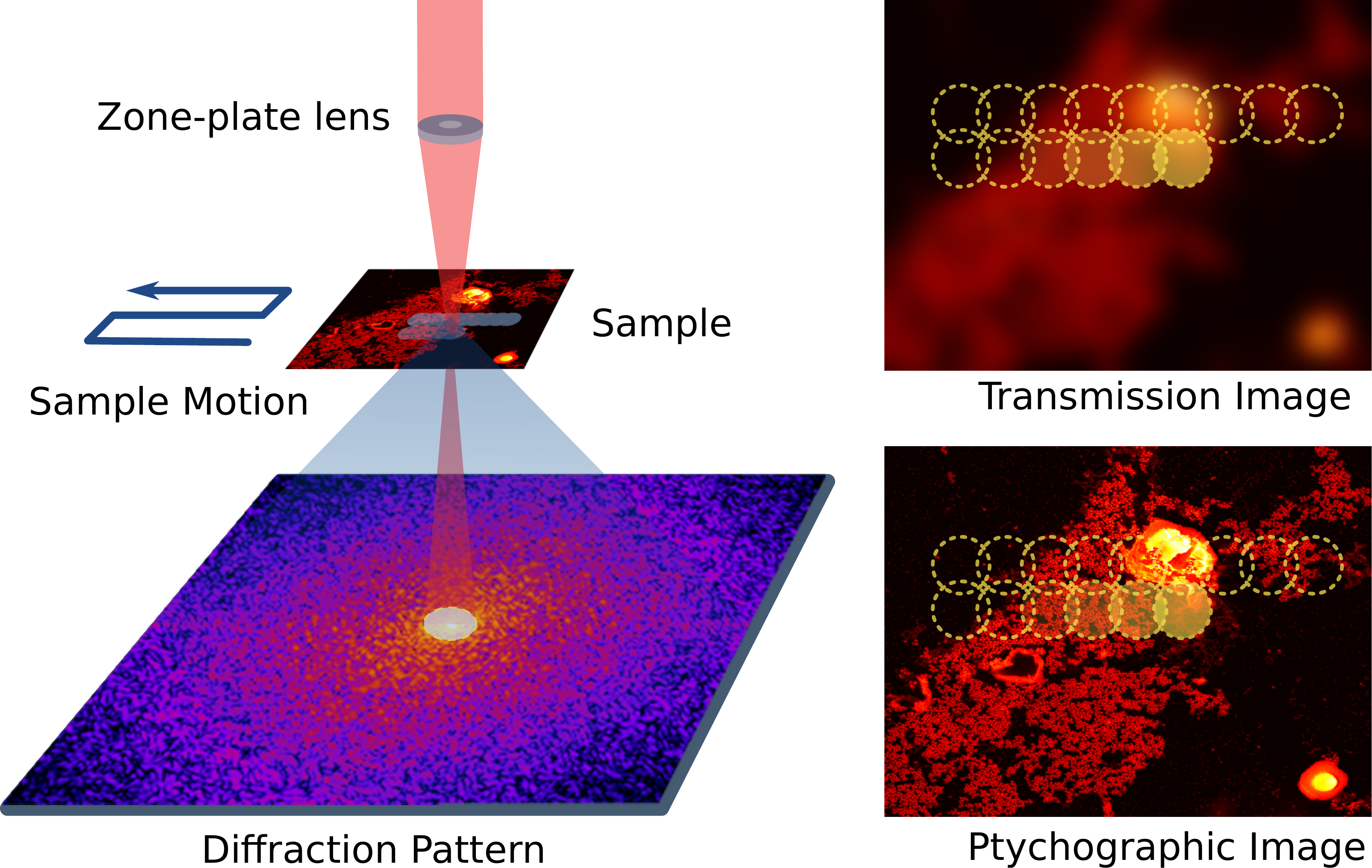}
\caption{A schematic drawing of a ptychography experiment in which a probe scans through a 2D object in an overlapping fashion
and produces a sequence of diffraction patterns of the scanned regions.   Image courtesy of~\protect\cite{qian2014efficient}. }
\label{fig:ptychosetup}
\end{center}
\end{figure}

\subsection{Ptychography and time-frequency analysis}

An inspection of the basic measurement mechanism of ptychography in~\eqref{ptycho1}  shows an interesting connection to time-frequency analysis~\cite{Grochenig2001}.  To see this, we recall the definition of the  {\em short-time Fourier transform} (STFT) and the
{\em Gabor transform}.  For $\bs,\omega \in \IR^d$ we define the {\em translation operator} $T_\bs$ and
the {\em modulation operator} $M_{\omega}$ by
\beqn
T_\bs {x}(\bt) = {x}(\bt-\bs), \qquad M_{\bomg} {x}(\bt) = e^{2\pi \im \bomg \cdot \bt}{x}(\bt),
\eeqn
{where  $x \in L^2(\IR^d)$. Let $\mu \in {\cS}(\IR^d)$}, where ${\cS}$
denotes the Schwartz space. The STFT  of ${x}$ with respect to the {\em window} ${\mu}$ is defined by
\begin{equation}
\STFT_{{\mu}}{x}(\bs,\bomg) = \int\limits_{\IR^{d}} {x}(\bt) {\mu}(\bs-\bt)  
e^{-2\pi \im \bomg \cdot \bt}dt = \langle {x}, M_\omega T_\bs {\mu} \rangle , \qquad (\bs,\bomg) \in \IR^{2d}.\nn
\end{equation}
A Gabor system consists of functions of the form
\begin{equation}
 e^{2\pi \im b\bll  \bt} {\mu}(\bt- a\bk ) = M_{b\bll } T_{a \bkk } {\mu} ,\,\,\, (\bkk,\bll) \in \ZZ^d \times \ZZ^d\nn
\end{equation}
where  $a, b >0$ are the time- and frequency-shift parameters~\cite{Grochenig2001}.  The associated Gabor transform
${G}: L^2(\IR) \mapsto \ell^2(\ZZ\times \ZZ)$ is defined as
\begin{equation}
\nn
{Gx} = \{\langle {x},  M_{b \bll } T_{a \bkk } {\mu}  \rangle\}_{(\bkk,\bll) \in \ZZ^d \times \ZZ^d}.
\end{equation}
${G}$ is  clearly just an STFT that has been sampled at the time-frequency lattice $a\ZZ \times b\ZZ$.
It is clear that the definitions of the STFT and Gabor transform above can be adapted in an obvious way for discrete or finite-dimensional functions.

Since ptychographic measurements take the form $\{| \langle {x}, M_{\bomg} T_{\bs} {\mu} \rangle |^2\}$ where $(\bs,\bomg)$ are indices of some time-frequency lattice, it is now evident that these measurements simply correspond to squared magnitudes of the STFT or (depending on the chosen time-frequency shift parameters) of the Gabor transform of the signal $x$ with respect to the mask ${\mu}$.
Thus, methods developed for the reconstruction of a function from magnitudes of its (sampled) STFT---see e.g.~\cite{eldar2014sparse,pfander2019robust,grohs2019}---become also relevant for ptychography.

Beyond ptychography, phase retrieval from the STFT magnitude has been used in in speech and audio processing~\cite{nawab1983signal,balan2010signal}.  It has also have found extensive applications in optics. As described in~\cite{jaganathan2015phase}, one example arises in frequency resolved optical gating (FROG) or XFROG, which is used for characterizing 
ultra-short laser pulses by optically producing the STFT magnitude of  the measured pulse.

\subsection{2D Ptychography} \label{ss:ptycho}

While the mathematical framework of ptychography can be formulated in any dimension, the two-dimensional case is the most relevant in practice.
In the ptychographic measurement, the $m\times m$ mask has a smaller size  than the $n\times n$ object, i.e. $m< n$, and is shifted around
to various positions for measurement of coded diffraction patterns so as to cover  the entire object. 

Let $\cM^{0}:=\IZ_m^2, m<n,$ be the initial mask area, i.e.  the support of the mask $\mu^{0}$ describing the illumination field. 
Let $\cT$ be the set of all shifts (i.e. the scan pattern), including $(0,0)$,  involved  in the ptychographic measurement. 
 Denote by $\mu^\bt$ the $\bt$-shifted mask for all $\bt\in \cT$ and $\cM^\bt$ the domain of
$\mu^\bt$. Let $\x^\bt$ the object restricted to $\cM^\bt$.
We  refer to each $\x^\bt$ as a part of $\x$ and write $\x=\vee_\bt \x^\bt$ {  where $\vee$ is the ``union" of functions consistent over their common support set}. In ptychography, the original object is broken up into a set of overlapping object parts, each of which produces a $\mu^\bt$-coded diffraction pattern.  
The totality of the coded diffraction patterns is called the ptychographic measurement data.   For convenience of analysis, we assume the value zero for $\mu^\bt, \x^\bt$ outside of $\cM^\bt$
 and the periodic boundary condition on $\IZ_n^2$ when $\mu^\bt$ crosses over the boundary of $\IZ_n^2$.

A basic scanning pattern is the 2D lattice with the basis $\{\bv_1, \bv_2\}$
\beq
\cT=\{\bt_{kl}\equiv k\bv_1+l\bv_2: k,l\in \IZ\},\quad \bv_1,\bv_2\in \IZ^2\nn
\eeq
acting on the object domain $\IZ_n^2$. 
Instead of $\bv_1$ and $\bv_2$ we can also take $ \bu_1=\ell_{11}\bv_1+\ell_{12}\bv_2$ and $\bu_2=\ell_{21} \bv_1+\ell_{22}\bv_2$ for integers $\ell_{ij}$ with $\ell_{11}\ell_{22}-\ell_{12}\ell_{21}=\pm 1$. This ensures that $\bv_1$ and $\bv_2$ themselves are integer linear combinations of $\bu_1,\bu_2$. Every lattice basis defines a fundamental parallelogram, which determines the lattice. 
There are five 2D lattice types, called period lattices, as given by the crystallographic restriction theorem. In contrast,
there are 14 lattice types in 3D, called Bravais lattices. 

Under  the periodic boundary condition the raster scan with the step size $\tau=n/q, q\in \IN,$ $\cT$ consists of $\bt_{kl}={\tau}(k,l)$, with $k,l\in \{0,1,\cdots, q-1\}$ (Figure \ref{fig:graph}(a)). The periodic boundary condition means that for $k=q-1$ or $l=q-1$  the shifted mask is wrapped around into the other end of the object domain.

 \begin{figure}[t]
\begin{center}
\subfigure[raster scan]{\includegraphics[width=3.5cm]{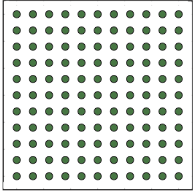}}\hspace{1cm}
\subfigure[]{\includegraphics[width=3cm]{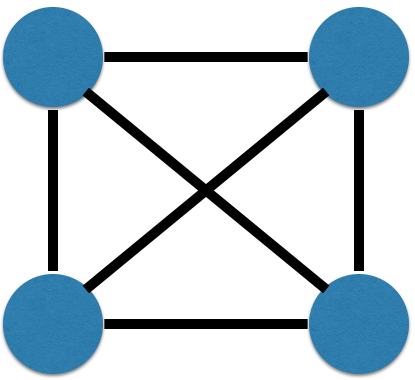}}\hspace{1cm}
\subfigure[]{\includegraphics[width=3cm]{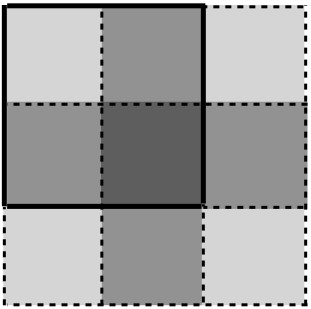}}
\caption{A complete undirected graph (a) representing four connected object parts (b) where the grey level indicates the number of coverages by the mask in four scan positions.
}
\label{fig:graph}
\end{center}
\end{figure}

A basic requirement is the strong connectivity property of the object with respect to the measurement scheme. 
It is useful to think of connectivity in graph-theoretical terms: Let the ptychographic experiment be represented by
a complete graph $\cG$ whose notes correspond to $\{\x^\bt:\bt\in \cT\}$ (see Figure \ref{fig:graph}(b)).

An edge between two nodes  corresponding to $\x^\bt$ and $\x^{\bt'}$ is $s$-connective if 
\beq
\label{r100}\label{s-conn}
|\cM^\bt\cap \cM^{\bt'}\cap\supp(\x)|\ge s\ge 2
\eeq
where $|\cdot|$ denotes the cardinality. In the case of full support (i.e. $\supp(\x)=\cM$),  \eqref{r100} becomes
$|\cM^\bt\cap \cM^{\bt'}|\ge s$.
An $s$-connective sub-graph $\cG_s$ of $\cG$ consists
of all the nodes of $\cG$ but only the $s$-connective edges.  Two nodes are adjacent (and neighbors) in $\cG_s$ iff they are $s$-connected. A chain in $\cG_s$ is a sequence of nodes  such that 
two successive nodes are adjacent. In a simple chain all the nodes are distinct. Then the object parts $\{ \x^\bt:\bt\in \cT\}$ are  $s$-connected if and only if $\cG_s$ is a connected graph, i.e. every two nodes is connected by a chain of $s$-connective edges. Loosely speaking, an object is strongly connected w.r.t. the ptychographic scheme if $s\gg 1$. 
We say that $\{ \x^\bt: \bt\in \cT\}$ are $s$-connected if there is an $s$-connected chain between any
two elements. 

Let us consider the simplest raster scan corresponding to  the  {\em square lattice} with $\bv_1=(\tau,0), \bv_2=(0,\tau)$ of  step size $\tau>0$,
i.e. 
\beq
\label{raster}
\bt_{kl}=\tau (k,l),\quad k,l=0,\dots, q-1. 
\eeq
For even coverage of the object, we assume that $\tau=n/q=m/p$ for some $p< q\in \IN$. 

  Denote the $\bt_{kl}$-shifted  masks and blocks by $\mu^{kl}$ and $\cM^{kl}$, respectively. Likewise, denote
by $\x^{kl}$ the object restricted to the shifted domain $\cM^{kl}$. 

\begin{figure}[t]
\begin{center}
\subfigure[Matrix $A_{\nu}$]{
\includegraphics[width=7cm]{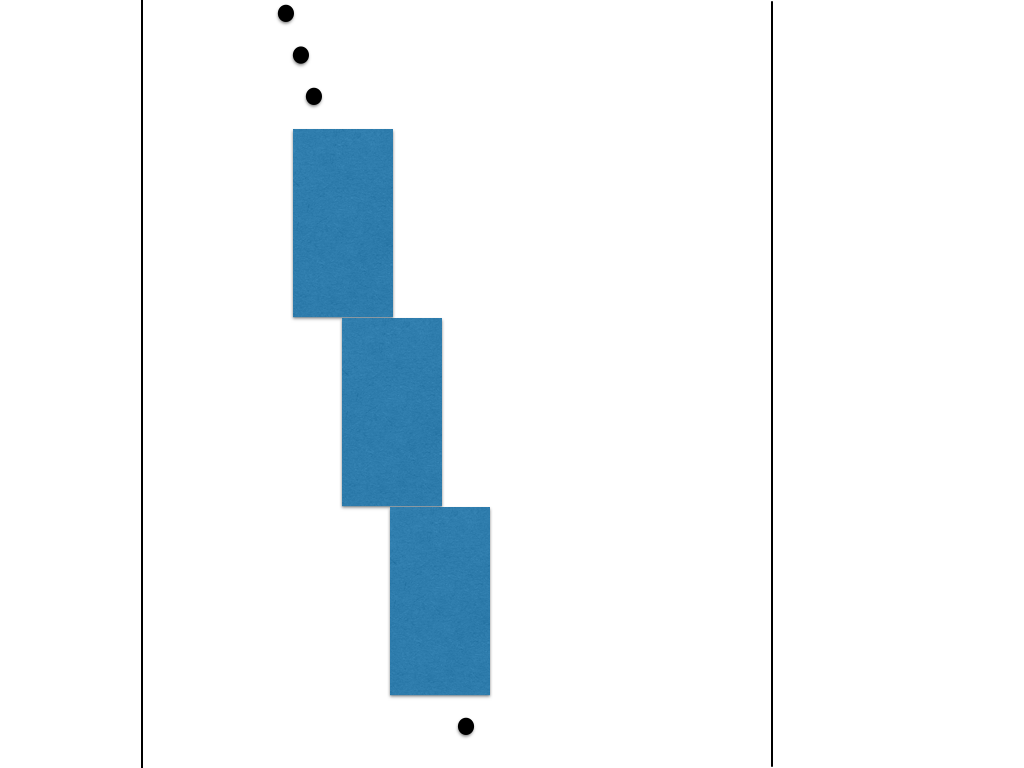}
}\hspace{-2cm}
\subfigure[Matrix $B_{\xh}$]{
\includegraphics[width=7cm]{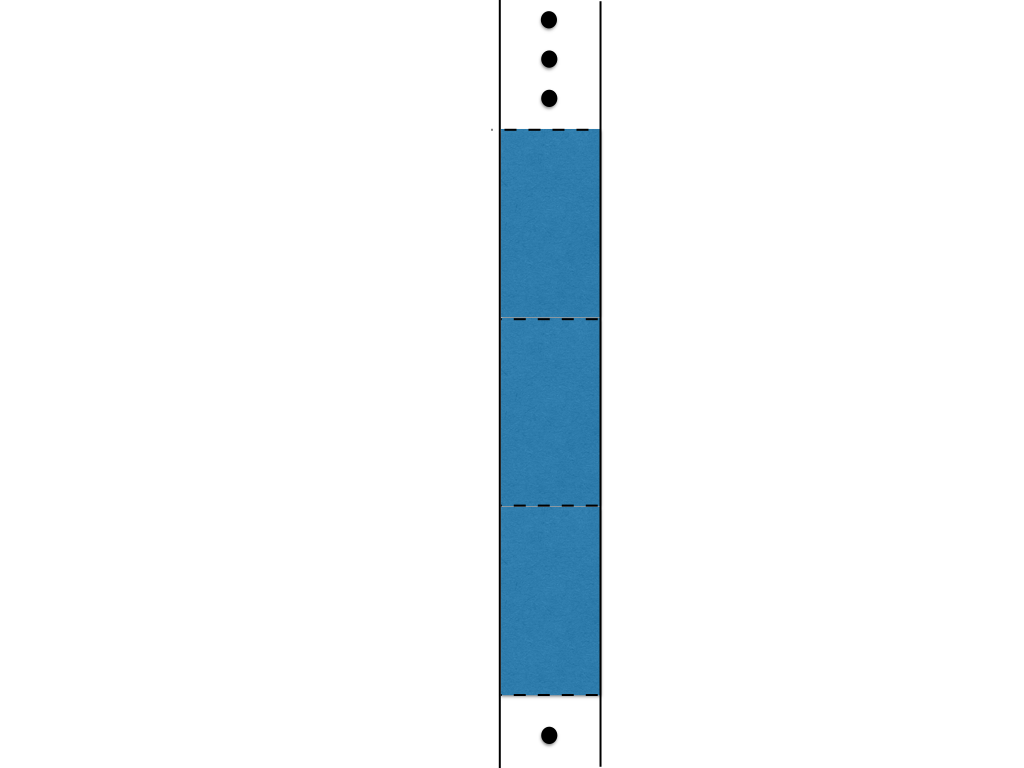}
}
\caption{(a) $A_{\nu}$ is a concatenation of shifted
blocks $\{\Phi\,\diag({\nu}^\bt):\,\,\bt\in \cT\}$; (b) $B_{\xh}$  is a concatenation of unshifted 
blocks $\{\Phi\,\diag( x^\bt):\,\,\bt\in \cT\}$.   In both cases, each block gives rise to
a coded diffraction pattern $|\Phi ({\nu}^\bt\odot  x^\bt)|$.
}
\label{fig10}
\end{center}
\end{figure}

Let $\cF(\nu,x)$ be the bilinear transformation representing the totality of the Fourier (magnitude and phase) data for any mask $\nu$ and object $x$.
From $\cF(\nu^0,x)$ we can define two measurement matrices. First, for a given $\nu^0\in \IC^{m^2}$, 
let $A_{\nu}$ be defined via the relation $A_{\nu} x:=\cF(\nu^0,x)$ for all $x\in \IC^{n^2}$; second, for a given $x\in \IC^{n^2}$,
let $B_{x}$ be defined via $B_{x} {\nu}=\cF(\nu^0,x)$ for all $\nu^0\in \IC^{m^2}$. 

More specifically, let $\Phi$ denote the over-sampled Fourier matrix. The measurement matrix $A_{\nu}$ is a concatenation of $\{\Phi \,\diag(\nu^\bt):\bt \in \cT\}$ (Figure \eqref{fig10}(a)). Likewise,
$B_{x}$ is $\{\Phi \,\diag(x^\bt):\bt \in \cT\}$ stacked on top of each other (Figure \eqref{fig10}(b)). 
Since
$\Phi$ has orthogonal columns, both $A_{\nu}$ and $B_{x}$ have orthogonal columns.
Both matrices will be relevant when we discuss blind ptychography which does not assume the prior knowledge of the mask 
in Section \ref{sec:blind-ptycho}.

\section{Uniqueness, ambiguities, noise}\label{s:uniqueness}

In this section we discuss various questions of uniqueness and feasibility related the phase retrieval problem. Since a detailed and thorough current review of uniqueness and feasibility can be found in~\cite{grohs2019}, we mainly focus on aspects not covered in that review. We will also discuss various noise models.

\subsection{Uniqueness and ambiguities with coded diffraction patterns}\label{ss:uniqueness}

{ {\bf Line object:}   $\x$ is a {\em line object}  if   the original object support   is part of a line segment.}  Otherwise, $\x$ is said to be
a nonlinear object. 

Phase retrieval solution is unique only up to a constant of modulus one
 no matter how many coded diffraction patterns are measured.
 Thus the proper error metric for an estimate $ x$ of the true solution $\x$  is given by
  \beqn
 \min_{\theta\in \IR}\|e^{-i\theta} \x- x \|=\min_{\theta\in \IR}\|e^{i\theta}  x - \x \|
 \eeqn
 where the optimal phase adjustment $\theta_*$ is given by
 \beqn
 \theta_*=\measuredangle ( { x^*\x}). 
 \eeqn

Now we recall the uniqueness theorem of phase retrieval with coded diffraction patterns.

\begin{thm} \label{unique1} \cite{unique} 
 Let $\x\in \IC^{n^2}$ be a nonlinear object and  $x$ a solution of  of the  phase retrieval problem. Suppose that the phase of the random mask(s) is independent continuous random variables on $(-\pi,\pi]$.  

\noindent {\bf (i) One-pattern case.} Suppose, in addition, that $\measuredangle \x(j) \in [-\alpha\pi,\beta\pi],\,\forall j$ with $\alpha+\beta\in (0,2)$ and that  the density function for $\phi(\bn)$ is a constant (i.e. $(2\pi)^{-1}$) for every $\bn$.
 
Then   $x=e^{i\theta} \x$ for some constant $\theta\in (-\pi,\pi]$ with a high probability which has a simple, lower bound 
\beq
\label{prob}
1-n^2\lt|{\beta+\alpha\over 2}\rt|^{\llfloor S/2\rrfloor} 
\eeq
where $S$ is the number of nonzero components in $\x$  and $\llfloor S/2\rrfloor$ the greatest integer less than or equal to $S/2$. 

\noindent  {\bf (ii) Two-pattern case.}  Then   $x=e^{i\theta} \x$ for some constant $\theta\in (-\pi,\pi]$ with  probability one. 
\end{thm}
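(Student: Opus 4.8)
The plan is to (a) convert the magnitude data into an autocorrelation identity, (b) classify all objects sharing that autocorrelation through unique factorization of the masked $z$-transform, thereby reducing the ambiguity to a global phase plus a single conjugate-reversal (``twin''), and (c) eliminate the twin --- probabilistically in the one-mask case, deterministically in the two-mask case.

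First I would invoke the standard oversampling from Section~\ref{ss:coded}: sampling $b=|\Phi\,\diag(\mu)\x|$ on $\cL$ is equivalent to knowing the full autocorrelation of the masked object $g:=\mu\odot\x$. Hence any solution $x$, with masked object $h:=\mu\odot x$, obeys $H(z)\,\widetilde H(z)=P(z)\,\widetilde P(z)$, where $P,H$ are the $z$-transforms of $g,h$ and $\widetilde P(z)=\overline{P(1/\bar z)}$ is the conjugate-reciprocal. Because $\x$ is \emph{nonlinear}, its support has a genuinely two-dimensional Newton polytope, and since the mask phases are independent continuous variables, $P$ is almost surely irreducible in $\IC[z_1,z_2]$. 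Unique factorization then leaves only $H=\gamma\,z^{\mathbf{p}}P$ or $H=\gamma\,z^{\mathbf{p}}\widetilde P$ with $|\gamma|=1$, and the monomial (shift) is pinned by the common support in $\cM$. The first option gives $h=\gamma g$, hence $x=\gamma\x$ --- the asserted solution.

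It remains to rule out the twin $h(\bn)=\gamma\,\overline{g(\bN-\bn)}$, equivalently $x(\bn)=\gamma\,\overline{\mu(\bN-\bn)}\,\overline{\x(\bN-\bn)}/\mu(\bn)$ for some reflection center $\bN$. For the two-mask case~(ii) this identity must hold for $\mu_1$ and $\mu_2$ simultaneously; dividing the two versions yields $\mu_1(\bn)/\mu_2(\bn)=(\gamma_1/\gamma_2)\,\overline{\mu_1(\bN-\bn)}/\overline{\mu_2(\bN-\bn)}$ at some pair with $\bn\neq\bN-\bn$ (which exists since $\x$ is nonlinear), an event of probability zero for independent continuous masks; hence $x=\gamma\x$ with probability one. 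For the one-mask case~(i), where admissible objects carry the sector prior $\measuredangle(\cdot)\in[-\alpha\pi,\beta\pi]$, I would certify that the twin violates this prior: taking arguments, $\measuredangle x(\bn)=\measuredangle\gamma-\phi(\bn)-\phi(\bN-\bn)-\measuredangle\x(\bN-\bn)$, so at the paired indices $\{\bn,\bN-\bn\}$ the mask enters only through the sum $\phi(\bn)+\phi(\bN-\bn)$, which is uniform on the circle. Requiring the twin's phases to lie in a common arc of width $(\alpha+\beta)\pi$ thus costs a factor at most $(\alpha+\beta)/2$ per pair; the $\lfloor S/2\rfloor$ pairs among the $S$ nonzero entries use disjoint, independent mask values, and a union bound over the $O(n^2)$ admissible centers $\bN$ (and the base point of the arc) produces a failure probability of the stated form $n^2|(\alpha+\beta)/2|^{\lfloor S/2\rfloor}$.

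The main obstacle I anticipate is the structural step: proving that, almost surely over the mask, the masked $z$-transform is irreducible, so that the twin is genuinely the \emph{only} nontrivial competitor. This is exactly where nonlinearity of $\x$ is indispensable, since a line object collapses to a univariate polynomial whose many factorizations spawn exponentially many magnitude-consistent objects. A secondary delicate point in~(i) is that the twin's global phase $\gamma$ is free and shared across all pairs, so the sector violation must be certified uniformly in $\gamma$; it is precisely the pairing $\{\bn,\bN-\bn\}$ --- two sector constraints tied to a single uniform sum --- that halves the exponent from $S$ to $\lfloor S/2\rfloor$.
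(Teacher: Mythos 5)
The paper itself contains no proof of this theorem (immediately after the statement it defers entirely to \cite{unique}), so your proposal can only be compared against the strategy of that reference as it is reflected in Section~\ref{sec:amb1} of the paper. At the level of architecture you do match it: oversampling reduces the data to the autocorrelation of the masked object; almost-sure irreducibility of the masked $z$-transform in $\IC[z_1,z_2]$ collapses the solution set to global phase, translation and conjugate inversion (this is where nonlinearity of $\x$ enters; note the coefficients have fixed moduli and only random phases, so a.s.\ irreducibility needs the variety-versus-torus argument of \cite{unique}, not mere ``continuous coefficients''); the mask randomness then eliminates the nontrivial ambiguities. However, there is a genuine gap: you dismiss the translation ambiguity with the claim that ``the monomial (shift) is pinned by the common support in $\cM$''. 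This is false unless $\x$ has tight support, which the theorem does not assume. The paper's own ambiguity characterization \eqref{12} explicitly retains the translational family $e^{\im\theta}\x(\bn+\mbm)\mu(\bn+\mbm)/\mu(\bn)$ for arbitrary $\mbm$: if $\supp(\x)$ sits strictly inside the grid, the shifted masked object is still supported in the grid, has the same autocorrelation, and yields a solution that is not $e^{\im\theta}\x$ (the discussion around \eqref{12'} and Figure~\ref{fig:ui} displays exactly this ambiguity for objects surrounded by zero pixels). So translations must be ruled out probabilistically, just like the twin: in case (i) the relevant random quantities are the differences $\phi(\bn+\mbm)-\phi(\bn)$, whose independent subfamily of size $\llfloor S/2\rrfloor$ is extracted by a matching along shift orbits (a different combinatorial step from your reflection pairing $\{\bn,\bN-\bn\}$), and the prefactor $n^2$ in \eqref{prob} is precisely the union bound over these discrete ambiguity parameters; in case (ii) the division argument must also be run for pure shifts and for the mixed cases (mask~1 data forcing a translational relation while mask~2 data forces a twin relation), none of which you address.

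Two smaller points. First, in case (ii) your probability-zero claim is made for a single pair $\{\bn,\bN-\bn\}$, but for a single pair the divided identity can always be satisfied by choosing the free constant $\gamma_1/\gamma_2$; you need at least two reflection pairs involving disjoint mask entries (available because a nonlinear object has at least three support points) before the event has probability zero. Second, in case (i) you correctly flag, but do not resolve, the uniformity in the global phase: the per-pair bound $(\alpha+\beta)/2$ is conditional on a \emph{fixed} $\gamma$, whereas uniqueness requires excluding the twin for \emph{some} $\gamma$; passing from fixed $\gamma$ to existence over $\gamma$ needs an additional device (e.g., the classical estimate for independent uniform points to fall in a common arc, or elimination of $\gamma$ between constraints), and this is exactly the step that decides whether the exponent $\llfloor S/2\rrfloor$ and the constant in \eqref{prob} survive. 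As written, the pairing observation alone does not certify the stated bound.
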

The proof of Theorem \ref{unique1} is given in \cite{unique} where
more general uniqueness theorems can be found.  
 It is noteworthy that
the probability bound for uniqueness \eqref{prob} improves exponentially  with higher sparsity of the object. 

We have the analogous uniqueness theorem for ptychography.

\begin{thm}\label{unique2}\cite{blind-ptycho}
 Let $\x\in \IC^{n^2}$ be a nonlinear object and  $x$ a solution of  of the  phase retrieval problem. Suppose that the phase of the random mask(s) is independent continuous random variables on $(-\pi,\pi]$.  

If the connectivity condition \eqref{s-conn} holds, then $\x$ is the unique ptychogaphic solution up to a constant phase factor.

\end{thm}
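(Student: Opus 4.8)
The plan is to decouple the problem into a \emph{local} recovery step for each object part and a \emph{global} phase-synchronization step driven by the connectivity of $\cG_s$. Suppose $x$ is any ptychographic solution, so that $|\Phi(\mu^\bt\odot x^\bt)|=|\Phi(\mu^\bt\odot \x^\bt)|$ for every $\bt\in\cT$, where $x^\bt,\x^\bt$ denote the restrictions to $\cM^\bt$. The goal is to show $x=e^{\im\theta}\x$ for a single constant $\theta$. Since the mask is transparent ($\mu^\bt$ vanishes nowhere), dividing by $\mu^\bt$ throughout lets us pass freely between statements about the masked parts $\mu^\bt\odot x^\bt$ and the bare parts $x^\bt$.

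First I would establish the local step: for each fixed $\bt$, the single coded diffraction pattern forces
\[
x^\bt=e^{\im\theta_\bt}\,\x^\bt \qquad\text{or}\qquad x^\bt \text{ equals the (per-part) conjugate-reflected twin of } \x^\bt,
\]
for a part-dependent phase $\theta_\bt$. This is precisely the coded-diffraction uniqueness of Theorem~\ref{unique1} applied to the masked part: oversampling renders the intensity $|\Phi(\mu^\bt\odot x^\bt)|^2$ equivalent (through the Fourier transform) to the autocorrelation of $\mu^\bt\odot x^\bt$, and for a genuinely two-dimensional (\ie\ nonlinear) support the $z$-transform of $\mu^\bt\odot\x^\bt$ is almost surely irreducible, so the only trivial associates are global phase, translation, and conjugate reflection. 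The prescribed support $\cM^\bt$ removes the translation ambiguity, leaving the stated phase-or-twin dichotomy. The hypothesis that $\x$ is nonlinear is used exactly here, to guarantee irreducibility; a line object would factor and admit a continuum of spurious associates.

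Next I would run the gluing step across each $s$-connective edge. If $\x^\bt$ and $\x^{\bt'}$ overlap in at least $s\ge 2$ support points, then on that overlap both local representations must agree with the single candidate $x$. I claim this forces both parts onto the \emph{genuine} branch with a common phase, $e^{\im\theta_\bt}=e^{\im\theta_{\bt'}}$. Indeed, should either part sit on its twin branch, equating the two representations at the $\ge 2$ shared support points would impose a nontrivial algebraic relation among the mask phases $\phi^0$ evaluated at the distinct, hence independent, shift-indices entering the two patterns---an event of probability zero once $\x$ is noncollinear. Because $\cG_s$ is connected, propagating the resulting equality $\theta_\bt=\theta_{\bt'}$ along a chain between any two nodes produces a single phase $\theta$ valid for all parts. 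Since $\{\cM^\bt\}_{\bt\in\cT}$ covers $\supp(\x)$ and $\x=\vee_\bt\x^\bt$, we conclude $x=e^{\im\theta}\x$ throughout, as claimed.

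The routine portion is the connectivity bookkeeping, which is just a walk through the connected graph $\cG_s$. The main obstacle is the probabilistic heart of the gluing step: eliminating the residual conjugate-reflection (twin) ambiguity \emph{with probability one} from only $s\ge 2$ overlapping support points. The crucial structural fact is that an overlap point is illuminated by two \emph{distinct} shifts of the mask, whose phases $\phi^0(\bn-\bt)$ and $\phi^0(\bn-\bt')$ are independent continuous random variables; this independence is what upgrades the one-pattern, high-probability conclusion \eqref{prob} to the two-pattern, probability-one conclusion of Theorem~\ref{unique1}(ii). Making it rigorous requires exhibiting the exceptional ``twin-consistent'' configurations as the zero set of a nontrivial real-analytic function of the mask phases and checking that this function does not vanish identically (equivalently, that at least one mask realization breaks the twin), which is where the two independent phases at $s\ge 2$ points, together with the noncollinearity of $\supp(\x)$, are indispensable.
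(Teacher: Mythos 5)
The survey itself does not prove Theorem~\ref{unique2}; it is quoted from \cite{blind-ptycho}. So your proposal can only be measured against the method the survey points to, namely the single-pattern ambiguity characterization \eqref{12} and the discussion surrounding it. Your skeleton --- per-part classification of ambiguities, gluing across $s$-connective edges via independence of the mask phases, propagation of a common phase through the connected graph $\cG_s$ --- is indeed the intended route, and your explanation of why $s\ge 2$ is essential (one overlap equation can be absorbed into the free relative phase $\theta_\bt-\theta_{\bt'}$, two equations force a nontrivial algebraic relation among independent mask phases, which holds with probability zero) is exactly the right mechanism. But there are two genuine gaps.

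First, your local step rests on a false claim: that ``the prescribed support $\cM^\bt$ removes the translation ambiguity,'' leaving only a phase-or-twin dichotomy. It does not. Theorem~\ref{unique2} assumes no anchoring or tight-support condition on any part, and by \eqref{12} the translation ambiguity --- modulated by the mask ratio $\mu(\bn+\mbm)/\mu(\bn)$ --- is a genuine single-pattern ambiguity whenever $\supp(\x^\bt)$ sits loosely inside $\cM^\bt$. The paper makes this point explicitly in Section~\ref{sec:amb1} (the translation ambiguity appears precisely because the phantom is surrounded by dark pixels), and the blind analogue, Theorem~\ref{thm:many}, has to assume an \emph{anchor} exactly to exclude translations. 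Consequently each part can also sit on a translated-and-modulated branch, and your gluing step, which only rules out the twin branch, never eliminates these; in the non-blind setting translations can only be killed at the gluing stage (the modulation ratios for distinct shifts $\bt\neq\bt'$ involve distinct independent mask phases), and that case analysis --- the bulk of the actual argument --- is missing. Relatedly, the twin branch is not the bare conjugate reflection of $\x^\bt$ but carries the factor $\overline{\mu}(\bN-\bn+\mbm)/\mu(\bn)$ as in \eqref{12}, so your overlap equations must carry these factors as well.

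Second, you invoke the nonlinearity hypothesis per part, to claim the $z$-transform of $\mu^\bt\odot\x^\bt$ is almost surely irreducible. But nonlinearity is assumed only of the whole object $\x$; an individual part $\x^\bt$ may well be a line object, or nearly empty (cf.\ Figure~\ref{fig:corn}), in which case its $z$-transform factors and the dichotomy you assert fails --- line objects admit a continuum of additional ambiguities, which is precisely why Theorem~\ref{unique1} excludes them. As written, your local step silently strengthens the hypothesis from ``$\x$ is nonlinear'' to ``every part $\x^\bt$ is nonlinear''; a correct proof must either treat degenerate parts separately (recovering them through their overlaps with well-behaved neighbors) or run the classification on unions of parts where the global nonlinearity hypothesis actually applies.
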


\subsection{Ambiguities with one diffraction  pattern}\label{sec:amb1}
 \begin{figure}[t]
   \centering
                \subfigure[Original object]{  \includegraphics[width = 4cm]{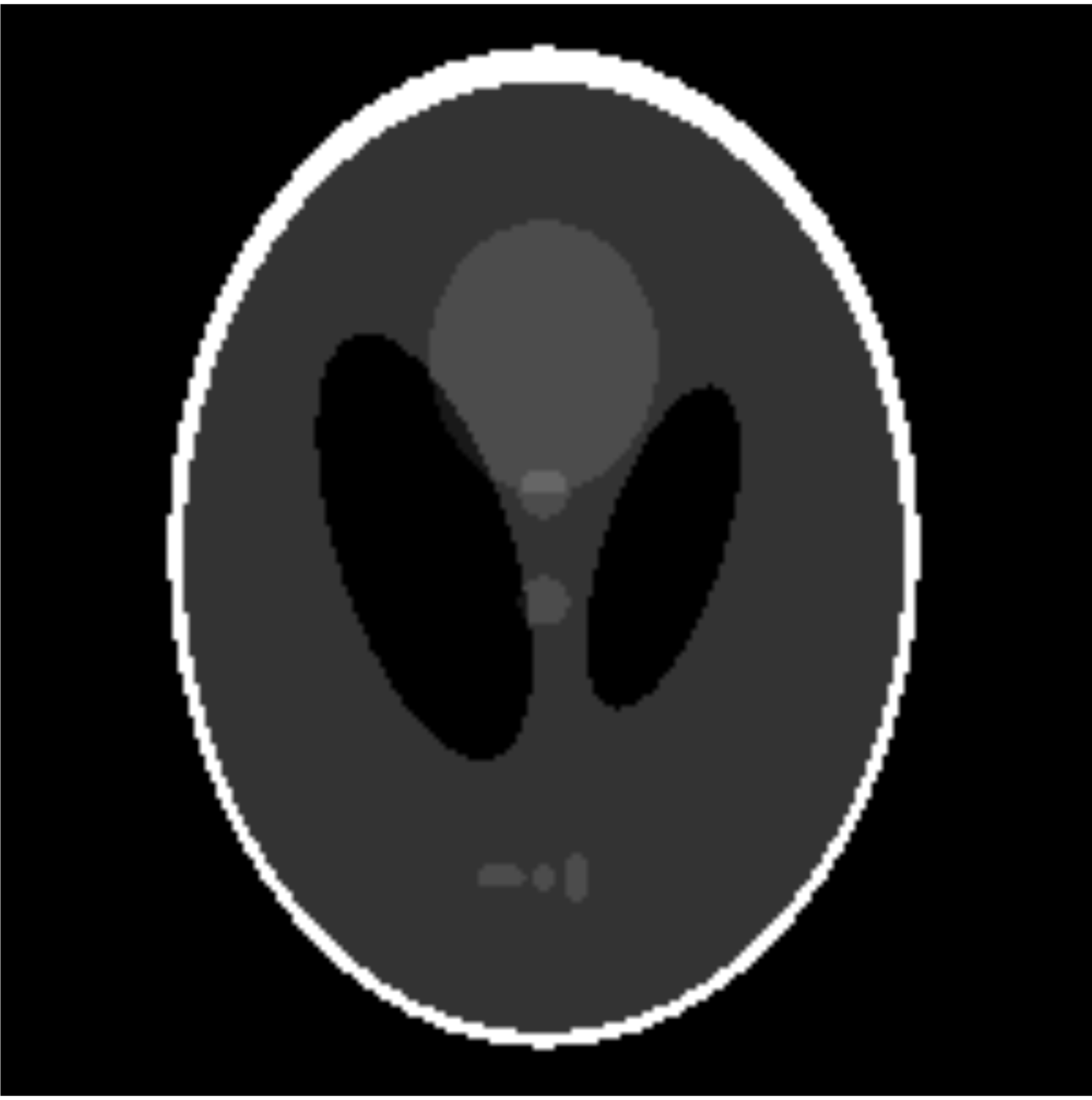}}
  \subfigure[AP]{
    \includegraphics[width = 4cm]{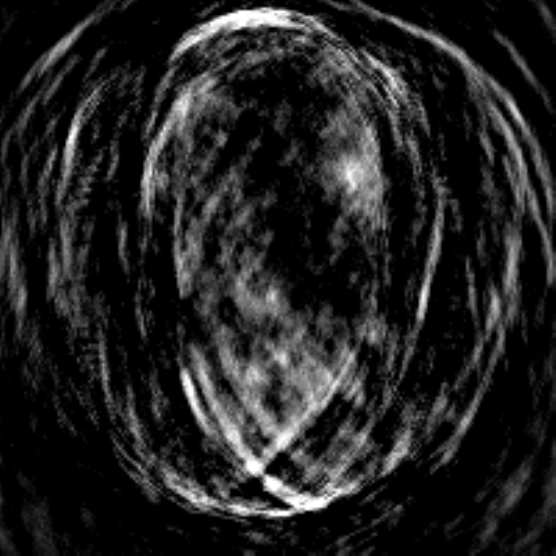}}
          \subfigure[AAR]{
    \includegraphics[width = 4cm]{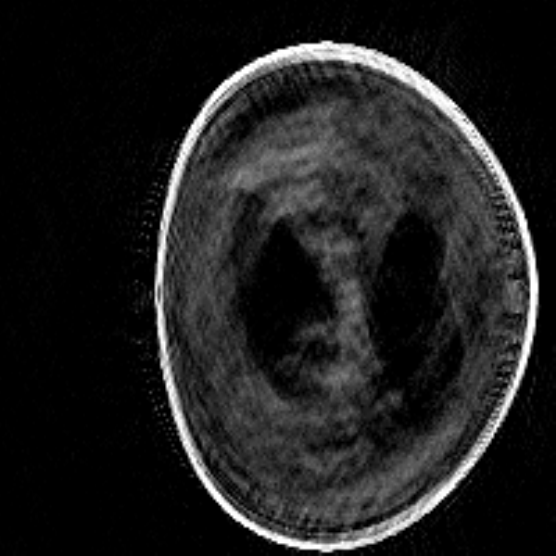}}
 \caption{
 (b) AP and (c) AAR reconstruction of the nonnegative real-valued phantom  with a plain uniform mask.} 
 \label{fig:ui}
\end{figure}

By the methods in \cite{unique}, it can be shown that 
an object estimate $\xh$ produces the same coded diffraction pattern as $\x$ 
if and only if 
\beq
\label{12}
\xh(\bn)&=&\lt\{\begin{matrix} e^{\im \theta} \x(\bn + \mbm)\mu(\bn + \mbm)/\mu(\bn) \\
e^{\im \theta} \overline{\x}(\bN-\bn + \mbm)\overline{\mu} (\bN-\bn + \mbm)/\mu(\bn),
\end{matrix}\rt.
\eeq
for  some $\mbm\in \IZ^2, \theta\in \IR$ {almost surely}. The ``if" part of the above statement is straightforward to check. {The ``only if" part 
is a useful result of using a random mask in measurement.}
Therefore, in addition to the trivial phase factor, there are translational (related to $\mbm$), conjugate-inversion (related to
$\overline{\x}(\bN-\cdot)$) as well as modulation ambiguity (related to $\mu(\bn+ \mbm)/\mu(\bn)$ or 
$\overline{\mu} (\bN+ \mbm-\bn)/\mu(\bn)$). {Among these, the conjugate-inversion (a.k.a. the twin image) is more prevalent as
it can not be eliminated by a tight support constraint. }

If, however, we have the prior knowledge that $\x$ is real-valued, then  none of the ambiguities in \eqref{12} can happen since
the right hand side of \eqref{12} has a nonzero imaginary part almost surely for any $\theta,\mbm$. 

On the other hand, if the mask is uniform (i.e. $\mu=$ constant), then \eqref{12} becomes
\beq
\label{12'}
\xh(\bn)&=&\lt\{\begin{matrix} e^{\im \theta} \x(\bn + \mbm) \\
e^{\im \theta} \overline{\x}(\bN-\bn + \mbm),
\end{matrix}\rt.
\eeq
for  some $\mbm\in \IZ^2, \theta\in \IR$. So even with the real-valued prior, all the ambiguities in \eqref{12'} are present, including
 translation, conjugate-inversion and constant phase factor. In addition, there may be other ambiguities not listed in \eqref{12'}. 
 
These ambiguities result in poor reconstruction as shown in Figure \ref{fig:ui} for the nonnegative real-valued phantom in Figure \ref{fig:ui}(a) with a plain, uniform mask by two widely used algorithms,
 Alternating Projections (AP) and Averaged Alternating Reflections (AAR), both of which are discussed in Section \ref{sec:solution}.

The phantom and its complex-valued variant, randomly phased phantom (RPP) used in Figure \ref{fig1} have the 
distinguished feature that their support is not the whole $n\times n$ grid but surrounded by an extensive area of dark pixels, thus making
the translation ambiguity in \eqref{12'} show up. This is particularly apparent in Figure \ref{fig:ui}(c). 
In general, when the unknown object has the full $n\times n$ support, phase retrieval becomes somewhat easier because 
the translation ambiguity is absent regardless of the mask used. 

\subsubsection{Twin-like ambiguity with a Fresnel mask} 

 \begin{figure}
\centering

\subfigure[$q=2$]{\includegraphics[width=4cm]{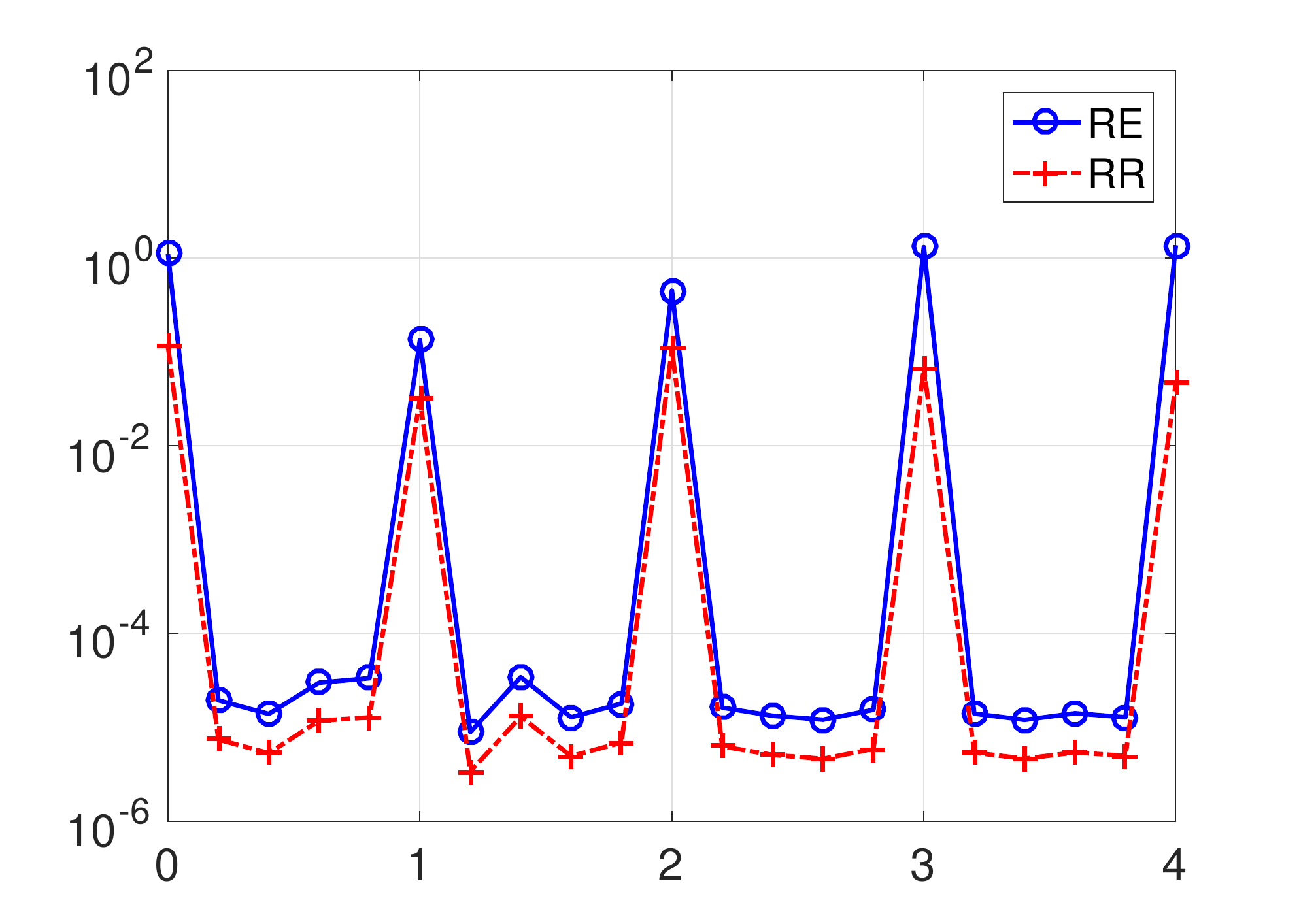}}
 \subfigure[$q=4$]{\includegraphics[width=4cm]{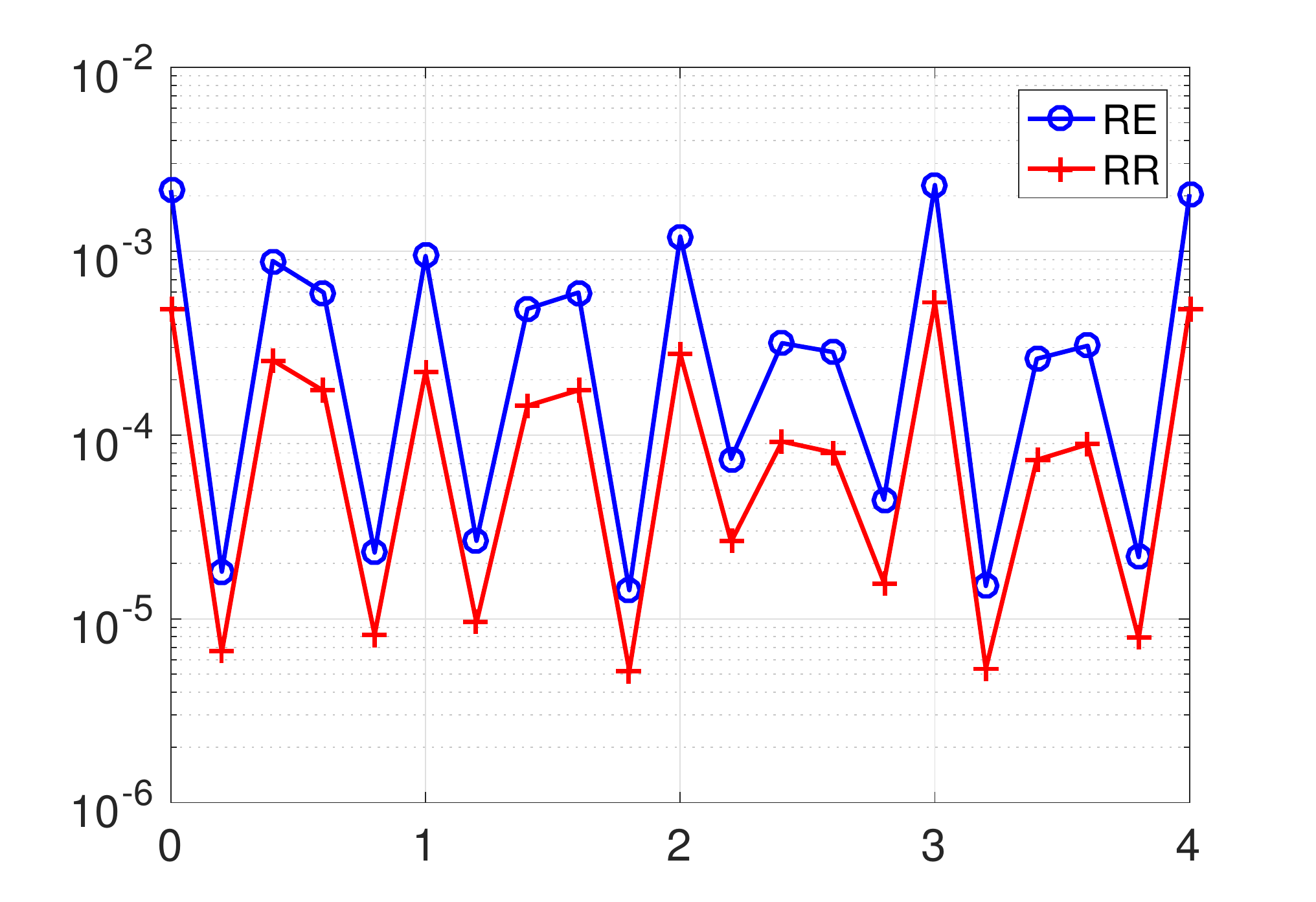}}
  \subfigure[$q=6$]{\includegraphics[width=4cm]{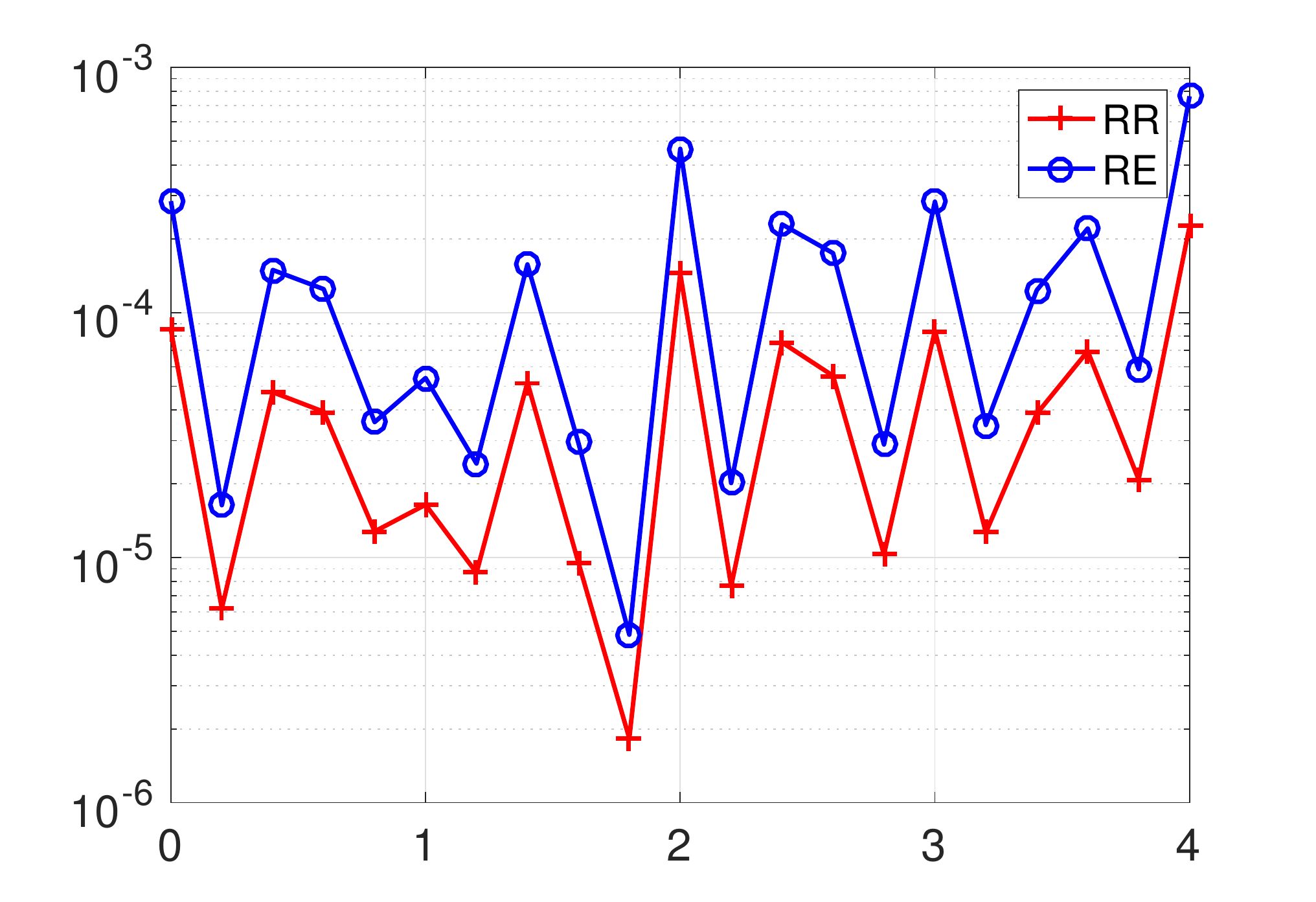}} 
  \caption{Relative error (RE) and relative residual (RR) on the semi-log scale  versus the parameter ${f}$ of the Fresnel mask for 
  the test object RPP.  }
  \label{fig1}
  \end{figure}

The next example shows that a commonly used mask can harbor a twin-like image as ambiguity
and the significance of using ``random" mask for phase retrieval. 

Consider
the Fresnel mask function which up to a shift  has the form 
\beq\label{par}
\mu^0(k_1,k_2):=\exp\lt\{\im \pi {f}(k_1^2+k_2^2)/m\rt\},\quad k_1, k_2=1,\cdots, m
\eeq
where ${f}\in \IR$ are adjustable parameters (see Figure \ref{fig:different-q}(c) for the phase pattern of \eqref{par}).

We construct a twin-like ambiguity for the Fresnel mask with $f\in \IZ$ and $q=2$. Similar twin-like ambiguities can be constructed for
general $q$. 

For constructing the twin-like ambiguity we shall write the object vector $\x$ as $n\times n$ matrix.
Let  ${\check{ \xi}}$ be the conjugate inversion of any $\xi\in \IC^{n\times n}$, i.e.
\beqn
{\check{\xi}}_{ij}=\overline{\xi}_{n+1-i,n+1-j}.
\eeqn
\begin{prop}\label{twin} \cite{DRAP-ptycho}
Let $f\in \IZ$ and  $\mu\in \IC^{m\times m}$ be the Fresnel mask \eqref{par}. For an even integer $n$, the matrix  \beq \nn
\overline{{\check{\mu}}}\odot \mu:=h=
\left(
\begin{array}{ccc}
  h_1&  h_2   \\
  h_3 &   h_4   
\end{array}
\right),\quad h_j\in \IC^{m/2\times m/2},\quad j=1,2,3,4,\eeq
satisfies the symmetry
\beq
h_1=h_4=\sigma h_2=\sigma h_3,\; \sigma=(-1)^{{f}(1+m/2)}.\nn
\eeq
Moreover, for $q=2$ (hence $m=n$ and $\tau=m/2$), then
$x={\check{x}_*} \odot \overline{h}$ and $\x$
 produce  the same ptychographic  data set  
with the Fresnel mask $\mu$. 
\end{prop}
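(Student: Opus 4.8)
The plan is to treat the two assertions in turn: first verify the block symmetry of $h$ by a direct computation exploiting the quadratic phase of the Fresnel mask, and then use that symmetry, together with the twin invariance of a single coded diffraction pattern, to exhibit the ambiguity.

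\textbf{Block symmetry of $h$.} Writing entries, $h_{ij}=\overline{\check{\mu}}_{ij}\,\mu_{ij}=\mu_{m+1-i,\,m+1-j}\,\mu_{ij}$, so by \eqref{par},
\[
h_{ij}=\exp\Big\{\tfrac{\im\pi f}{m}\big[(m+1-i)^2+(m+1-j)^2+i^2+j^2\big]\Big\}.
\]
I would compute the ratio $h_{i,\,j+m/2}/h_{i,j}$ by expanding the squares. The terms linear in the running index $j$ collect into a factor $e^{2\pi\im f j}$, which equals $1$ since $f,j\in\IZ$; the remaining constant is exactly $e^{-\im\pi f(1+m/2)}=\sigma$. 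Hence $h_2=\sigma h_1$, and by the symmetry $i\leftrightarrow j$ the identical computation gives $h_3=\sigma h_1$, while shifting in both coordinates yields $h_4=\sigma^2h_1=h_1$. Rearranging these four relations is precisely the stated symmetry $h_1=h_4=\sigma h_2=\sigma h_3$; integrality of $f$ is the one essential ingredient, since it is what kills the $j$-linear phase.

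\textbf{The twin-like ambiguity.} For $q=2$ the scan consists of the four shifts $\bt=\tau(k,l)$, $k,l\in\{0,1\}$, $\tau=n/2$, and the datum indexed by $\bt$ is $|\Phi(\mu^{\bt}\odot x)|$ with $\mu^{\bt}$ the periodically shifted mask. I would first record the twin invariance of a single pattern, $|\Phi\psi|=|\Phi\check{\psi}|$, which follows because $\widehat{\check{\psi}}$ equals $\overline{\hat{\psi}}$ up to a unimodular linear-phase factor; thus it suffices to show that for each of the four shifts $\mu^{\bt}\odot x$ equals $\check{(\mu^{\bt}\odot\x)}$ up to a unimodular constant. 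Since $\overline{h}=\check{\mu}\odot\overline{\mu}$ and $|\mu|\equiv1$, the unshifted case is immediate: $\mu\odot x=\mu\odot\check{x}_*\odot\overline{h}=\check{x}_*\odot\check{\mu}=\check{(\x\odot\mu)}$. For a nonzero shift I would use the commutation $\check{(S_{\bt}\psi)}=S_{-\bt}\check{\psi}$ (where $S_{\bt}$ denotes the periodic shift), together with the key arithmetic fact that $-\bt\equiv\bt\pmod n$ because $\tau=n/2$ and $k,l\in\{0,1\}$; this gives $\check{\mu^{\bt}}\odot\overline{\mu^{\bt}}=S_{\bt}(\check{\mu}\odot\overline{\mu})=S_{\bt}\overline{h}$. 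Finally, Part 1 shows that a half-period shift merely permutes the blocks of $h$ and, through $h_1=h_4=\sigma h_2=\sigma h_3$, reproduces $h$ up to the sign $\sigma$, so $S_{\bt}\overline{h}=\pm\overline{h}$. Combining these, $\mu^{\bt}\odot\overline{h}=\pm\check{\mu^{\bt}}$, whence $\mu^{\bt}\odot x=\pm\check{(\mu^{\bt}\odot\x)}$, and the twin invariance closes each of the four patterns.

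\textbf{Main obstacle.} The delicate point is the bookkeeping of how the periodic shifts interact with the conjugate inversion $\rho(\bn)=(n+1)\mathbf{1}-\bn$, and in particular the observation that $q=2$ (equivalently $\tau=n/2$) is exactly the condition making $-\bt\equiv\bt$, so that the single fixed symmetry of $h$ established in Part 1 suffices simultaneously for all four diffraction patterns. For general $q$ this coincidence fails and one must track shift-dependent phases, which is why the construction is stated for $q=2$, with the remark that analogues can nevertheless be built for general $q$.
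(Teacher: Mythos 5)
You should know at the outset that the survey itself does not prove Proposition \ref{twin}: it is quoted from \cite{DRAP-ptycho} with no argument supplied, so there is no in-paper proof to compare against and your proposal must be judged on its own merits. On those merits it is correct and complete. In Part 1, the entrywise identity $h_{ij}=\mu_{m+1-i,m+1-j}\,\mu_{ij}$ is right, and the expansion $(m/2+1-j)^2+(j+m/2)^2-(m+1-j)^2-j^2=(m/2)(4j-m-2)$ gives exactly the ratio $h_{i,j+m/2}/h_{ij}=e^{2\pi\im f j}\,e^{-\im\pi f(1+m/2)}$, so integrality of $f$ (together with evenness of $m$, which makes $f(1+m/2)$ an integer so that the constant is $\sigma=\pm 1$) yields $h_2=\sigma h_1$; the $i\leftrightarrow j$ symmetry of $h_{ij}$ and the double shift then give the remaining relations. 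In Part 2, your three ingredients are each valid and assemble as claimed: (i) $|\Phi\psi|=|\Phi\check{\psi}|$ holds pointwise on the sampling lattice because the transform of the conjugate inversion equals $\overline{\hat{\psi}}$ times the unimodular linear phase $e^{-2\pi\im(n+1)(\omega_1+\omega_2)}$; (ii) $\check{(S_\bt\psi)}=S_{-\bt}\check{\psi}$ for the periodic shift, and $-\bt\equiv\bt \ (\mathrm{mod}\ n)$ precisely because the $q=2$ raster shifts have components in $\{0,n/2\}$, whence $\check{\mu^\bt}\odot\overline{\mu^\bt}=S_\bt\overline{h}$; (iii) half-period shifts permute the four blocks of $h$, and the symmetry of Part 1 turns each such permutation into multiplication by $\sigma^{k+l}=\pm 1$. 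Since $|\mu|\equiv 1$, these combine to $\mu^\bt\odot x=\pm\,\check{(\mu^\bt\odot\x)}$ for each of the four shifts, and (i) closes each pattern. The only step you leave tacit --- harmless here, but worth stating explicitly --- is that for $q=2$ one has $m=n$ with periodic boundary conditions, so every object part $\x^\bt$ is all of $\x$ and the data set consists of exactly the four quantities $|\Phi(\mu^\bt\odot\x)|$ that your computation addresses.
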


To demonstrate the danger of using a ``regularly"  structured mask, we plot the relative error (RE) and relative residual (RR) of
reconstruction (200 AAR iterations followed by 100 AP iterations) in Figure \ref{fig1}.
The test object is randomly phased phantom (RPP)  whose modulus is exactly the nonnegative phantom (Figure \ref{fig:ui}(a))
but whose phase is randomly and uniformly distributed in $ [-\pi, \pi]$.  The scan scheme is the raster scan with $\tau=m/2$, i.e. 50\% overlap ratio between adjacent masks. Both RE and RR spike at integer-valued $f$ and the spill-over effect gets worse as $q$ increases.

\subsection{Phase retrieval as feasibility}\label{sec:solution}

For two dimensional, complex-valued objects, let 
$\IC^{n^2}$ be the object space where $n$ is the number of pixels in each dimension. 
Sometimes, it may be more convenient to think of the object space as $\IC^{n\times n}$. 
Let $N$ be the total number of data. The data manifold 
\beq
Y:=  \{u\in \IC^N: |u|=b\}\nn
\eeq
 is { an} $N$-dimensional real torus.  
For phase retrieval it is necessary that  $N>2n^2$ \cite{Balan1}. Without loss of generality  we assume that $A$ has a full rank.

Due to the rectangular nature (more rows than columns) of the measurement matrix $A$, 
it is more convenient to work with the transform domain $\IC^N$. Let $X:=A\IC^{n^2}$, i.e. the range of $A$.

The problem of phase retrieval and ptychography can be formulated as the feasibility problem
\beq
\hbox{Find}\quad  u\in  X \cap Y, \nn
 \eeq
 in the transform domain instead of 
the object domain. Let  $P_X$ and $P_Y$ be the projection onto $X$ and $Y$, respectively.

Let us clarify  the meaning of solution in  the transform domain since $A$ is overdetermining. 
Let $\odot$ denotes the component-wise (Hadamard) product and we can write
\beq
P_X u=AA^+ u,&& P_Y u=b\odot\sgn(u)\nn
\eeq
where the pseudo-inverse 
\[
A^+= (A^*A)^{-1} A^*
\]
becomes $A^*$ if { $A$ isometric which we assume henceforth}.

We refer to $u=e^{\im \alpha}A\x, \alpha\in \IR$, as the {\em true} solution (in the transform domain),
up to a constant phase factor $e^{\im \alpha}$. 
We say that $u$ is a {\em generalized solution} (in the transform domain) if 
\[
|\tilde u|=b,\quad \tilde u:=P_X u. 
\]
In other words,  $u$ is said to be a generalized solution if $A^+u$ is a phase retrieval solution. 
Typically a generalized solution $u$  is neither a feasible solution (since $|u|$ may not equal $b$)  nor unique (since $A$ is overdetermining) and $u+z$ is also a generalized solution
if $P_X z=0$.  

We call  $u$ a {\em regular} solution if $u$ is a  generalized solution and
$P_X u=u$.
Let $\tilde u= P_X u$ for a generalized solution $u$. 
Since  $P_X \tilde u=\tilde u$ and $|\tilde u|=b$, $\tilde u$ is a regular solution.
Moreover, since $P_X R_X u=P_X u$ and $R_X R_X u=u$, $u$ is a generalized solution if and only if  $R_X u$ is a generalized solution.

The goal of the inverse problem \eqref{pr} is the unique determination of $\x$, up to a constant phase factor, from the given data $b$. 
In other words,  uniqueness holds if, and only if, all regular solutions $\tilde u$ in the transform domain have the form
\beqn
\tilde u=e^{\im \alpha} A \x
\eeqn
or equivalently, any generalized solution $u$ is an element of 
the $(2N-2n^2)$ real-dimensional vector space
 \beq
\label{gen}\label{unique}
\{e^{\im \alpha}A\x+z: P_Xz=0,\,\,  z\in \IC^N,\,\,\alpha\in\IR\}.  
\eeq
In the transform domain, the uniqueness is
characterized by the uniqueness of the regular solution, up to a constant phase factor. 
Geometrically, uniqueness  means that the intersection $X\cap Y$ is a circle (parametrized $e^{\im \alpha}$ times $A\x$).

\subsection{Noise models and log-likelihood functions}

In the noisy case, it is more convenient to work with the optimization framework instead of the feasibility framework.
When the noise statistics is known, it is natural to consider the maximum likelihood estimation (MLE) framework.
In MLE,  the negative log-likelihood function is the natural choice for the loss function.  

\subsubsection{Poisson noise}
For the Poisson noise,  the negative log-likelihood function is \cite{ML12},\cite{Poisson2}
\beq
L(u) &=&\sum_i |u(i)|^2-b^2(i)\ln |u(i)|^2. \label{Poisson}
\eeq
A disadvantage of working with the Poisson loss function \eqref{Poisson} is the occurrence of
divergent derivative where $u(i)$ vanishes but $b(i)$ does not. This roughness can be softened as
follows. 

At the high signal-to-noise (SNR) limit, the Poisson distribution
\[
P(n)={\lambda^ne^{-\lambda}\over n!}
\]
has the asymptotic limit 
\beq
\label{PG1}
P(n)\sim {e^{-(n-\lambda)^2/(2\lamb)}\over \sqrt{2\pi\lamb}}. 
\eeq
Namely  in the low noise limit the Poisson noise is equivalent to the Gaussian noise of the mean $|A\x|^2$ and
 the variance equal to the intensity of the diffraction pattern.  
The overall SNR can be tuned by varying the  signal energy $\|A\x\|^2$. 

The negative log-likelihood function for the right hand side of  \eqref{PG1} is 
\beq
\label{pg}
\sum_j \ln |u(j)| +{1\over 2} \lt|{b^2(j)\over |u(j)|}-|u(j)|\rt|^2
\eeq
which is even rougher than \eqref{Poisson} where $u(i)$ vanishes but $b(i)$ does not. 
To rid of the divergent derivatives at $u(j)=0$ we make the substitution 
\[
{{b(j)}\over |u(j)|}\to 1,\quad \ln|u(j)|\to \ln{b(j)}=\mbox{const.},
\]
in \eqref{pg} and obtain
\beq
 L(u)&=&\half \| |u|- b\|^2 \label{Gaussian}
\eeq
after dropping irrelevant constant terms. Expanding the loss function \eqref{Gaussian} 
\beq\label{3'}
L(u)&=& \frac{1}{2}\|u\|^2-\sum_{j} b(j)|u(j)|+\frac{1}{2}\|b\|^2
\eeq   
we see that \eqref{3'} has a bounded sub-differential  where $u(j)$ vanishes but $b(j)$ does not.
There are various tricks to smooth out \eqref{Gaussian} e.g. by introducing an additional regularization parameter as
\beq
 L(u)&=&\half \| \sqrt{|u|^2+\ep}- \sqrt{b^2+\ep}\|^2,\quad\ep>0\nn
\eeq
(see  \cite{March19}).

\subsubsection{Complex Gaussian noise}

Another type of noise due to interference from multiple scattering can be modeled as  complex circularly-symmetric Gaussian noise
(aka Rayleigh fading channel), resulting in
\beq
\label{ray}
b&=&|A\x+\eta|
\eeq
where $\eta$ is a complex circularly-symmetric Gaussian noise. 
Squaring the expression, we obtain
\beq
b^2&=& |A\x|^2+|\eta|^2+2\Re(\overline{\eta}\odot A\x)\nn
\eeq

Suppose $|\eta|\ll |A\x|$ so that $|\eta|^2 \ll 2\Re(\overline{\eta}\odot A\x)$.
Then 
\beq
b^2&\approx &   |A\x|^2+2\Re(\overline{\eta}\odot A\x).\label{23-1}
\eeq
Eq. \eqref{23-1} says that at the photon counting level, the noise appears additive and Gaussian but with variance
proportional to $|A\x|^2$, resembling the distribution \eqref{PG1}. Therefore 
the loss function \eqref{Gaussian} is suitable for Rayleigh fading interference noise at low level. 

\subsubsection{Thermal noise}
 
 On the other hand, if the measurement noise is thermal (i.e. incoherent background noise) as in 
 \beq
 |b|^2&=& |A\x|^2+\eta,\nn
 \eeq
 where $\eta$ is real-valued Gaussian vector of covariance $\sigma^2 I_N$,   then the suitable loss function
 is 
 \beq
 L(u)&=&\half \| |u|^2- b^2\|^2 \label{Gaussian2}
\eeq
which is smooth everywhere. 
 See \cite{noise}, \cite{noise2},\cite{noise3} for more choices of loss functions. 

 In general the amplitude-based Gaussian loss function  \eqref{Gaussian} outperforms the
intensity-based loss function \eqref{Gaussian2} \cite{Waller15}.

Finally, we note that the ambiguities discussed in Section \ref{sec:amb1} are global minimizers of 
the loss functions \eqref{Poisson},  \eqref{Gaussian} and \eqref{Gaussian2} along with $e^{\im\theta} A\x$ 
in the noiseless case. Therefore, to remove the undesirable global minimizers, we need sufficient number of measurement data
as well as proper measurement schemes.

\subsection{Spectral gap and local convexity}
 For sake of convenience, we shall assume that $A$ is an isometry which can always be realized by rescaling the columns of the measurement matrix. 

In local convexity of the loss functions as well as geometric convergence of iterative algorithms, the following matrix plays a central role:
  \beq B&=& \diag\lt[\sgn(\overline{A x})\rt] A\label{25.1}
 \eeq
which  is an isometry and  varies with $x$. 

With the notation  
 \beq\label{3.6'}
\nabla f(x)&:= &
{\half}\lt({\partial f(x)\over \partial \Re(x)}+i {\partial f(x) \over \partial \Im(x)}\rt),\quad x\in \IC^{n^2} 
\eeq
we can write the sub-gradient of  the loss function \eqref{Gaussian} as
\beqn
{2}\Re[\zeta^* \nabla L(Ax)] &=&\Re(x^{*}\zeta)-b^{\top} \Re(B \zeta),\quad 
\forall \zeta\in \IC^{n^2}. 
\eeqn
In other words,  $x$ is a stationary point if and only if 
\beqn
x&=&B^*b=A^* (\sgn(Ax)\odot b)
\eeqn
or equivalently
\beq
B^*\lt[|Ax|-b\rt]=0.\label{stationary}
\eeq
Clearly, with noiseless data, $|A\x|=b$ and hence $\x$ is a stationary point. 
In addition, there likely are other stationary points 
since $B^*$ has many more columns than rows. 

{On the other hand,  with noisy data  there is no regular solution to $|Ax|=b$ with high probability (since $A$ has many more rows than columns)  and the true solution $x_*$ is unlikely to be a stationary point (since \eqref{stationary} imposes extra constraints on noise).}

Let $\mbox{Hess}(x)$ be the Hessian of $L(Ax)$. If $Ax$ has no vanishing components,  $\mbox{Hess}(x)$ can be given explicitly as 
 \beqn
\Re[\zeta^*\mbox{Hess}(x) \zeta]&=&\|\zeta\|^2-\Im(  B\zeta)^T\diag\lt[{b\over |Ax|}\rt]\Im(  B\zeta),\quad 
\forall \zeta\in \IC^{n^2}.
\eeqn

\commentout{
Let 
\beqn
V(\zeta) =\lt[\begin{matrix} \Re(\zeta)\\
\Im(\zeta)\end{matrix}\rt]
\eeqn
and
\beq 
  \label{B'} \mathcal{B}=\left[
\begin{matrix}
- \Re(B)    &
 \Im(B)
\end{matrix}
\right]
\eeq
which is nonexpansive, i.e. $\|\cB\xi\|\le \|\xi\|$ for all $\xi\in \IR^{2n^2}$.

Note that 
\beqn
 \Im\left(  B \zeta \right)=
   \mathcal{B} V(\im\zeta)
   \eeqn
   and hence
   \beq
   \label{20-2}
  \Re[\zeta^*\mbox{Hess}(x) \zeta]  &=&\|\zeta\|^2-V(\im\zeta)^T\cB^T  \diag\lt[{b\over |Ax|}\rt] \mathcal{B}V(\im\zeta ).
  \eeq

Now let us focus on the Hessian at $\x$.   With noiseless data, $b=|A\x|$ and hence 
 \beq
   \label{20-1}
   \Re[\langle\zeta, \mbox{Hess}(\x) \zeta\rangle ]  &=&\|\zeta\|^2-\langle     \mathcal{B} V(\im \zeta) , \mathcal{B}V(\im \zeta )\rangle. 
  \eeq

 Let $1\ge \lambda_1\ge \lambda_2\ge \ldots\ge \lambda_{2n^2}\ge 0$ be the singular values of $\mathcal{B}$ at $\x$ with the corresponding singular vectors $\{{\bu}_k\in \IR^{2n^2}\}_{k=1}^{2n^2}$. By the isometry of $A$ and the structure of $\cB$ \eqref{B'}, we
 have 
 \beq
 \lamb_k^2+\lamb_{2n^2-k+1}^2=1,\quad k=1,\dots, 2n^2\label{59-1}
 \eeq
 and 
 \beq
\label{58}
\bu_{2n^2+1-k}&=&V( -\im V^{-1}(\bu_k) )\\
\bu_{k}&=&V(\im V^{-1}(\bu_{2n^2+1-k}) ).\label{59}
\eeq

Due to the inherent ambiguity of a constant
phase factor, there is a one-directional center manifold at $\x$ and hence $\lamb_1=1$. 
Indeed, we have the following. 
\begin{prop}\label{prop4.2}
We have $\lambda_1=1, \lambda_{2n^2}=0$ and 
\beq
\label{26'}
\bu_{1}=V(\x),  \quad \bu_{2n^2}=V(-\im \x). \eeq
\end{prop}

\begin{proof} Since
$ B\x=\sgn(\overline{A\x})\odot \x=|\y|$ 
we have
\beq
& &\Re[B\x ]=\mathcal{B} V(\x)=|\y|, \quad \Im[B\x]=\mathcal{B} V(-\im \x)=0. \label{59-2}
\eeq
 Now that  $\|V(\x)\|=\|\y\|$, we conclude
that $V(\x)$ is the leading singular vector corresponding to $\lamb_1=1$. 

Likewise \eqref{59-1} and \eqref{59-2} imply that $V(-\im \x)$ is the trailing singular vector corresponding to $\lamb_{2n^2}=0$. 
\end{proof}

Local convexity of $L$ at $\x$ hinges on  $\lamb_2<1$, i.e. a positive spectral gap $1-\lamb_2$. 

From the identity 
\beq
\label{57'}\Im(B\zeta)
&=&{1\over|A\x|}\odot \Im \lt(\overline{A\x} \odot A\zeta \rt)\\
&=& {1\over|A\x|}\odot  \lt[\Re(A\x) \odot \Im( A\zeta )-\Im(A\x) \odot \Re(A\zeta)\rt] \nn
\eeq
and  the Cauchy-Schwarz inequality we have 
\beq\label{47}
\|\Im(B \zeta)\|
&\le &
 \|A\zeta\| =\|\zeta\|
\eeq
where the equality holds  if and only  if \beq\label{95}
\Re(A\x)\odot \Re(A\zeta)+\Im(A\x)\odot \Im(A\zeta)=0
\eeq
or equivalently
\beq\label{95'}
\sgn(A\zeta)= \sigma\odot \sgn(A\x)
\eeq
where the components of $\sigma$ are either 1 or -1.
 
 Now we recall  the following uniqueness theorem for the non-ptychographic setting. 
 \begin{prop} \label{prop:u}  \cite{FDR}
 Suppose $\x$ is a nonlinear object and let the mask phase be continuously and independently distributed. 
If for  the  matrix \eqref{one} we have 
\beq
\label{mag}
\measuredangle A\zeta =\pm \measuredangle A\x
\eeq
where
 the $\pm$ sign may be  pixel-dependent, then almost surely $\zeta= c \x$ for some constant $c\in \IR$. 
\end{prop}

In other words, $\|\Im(B \zeta)\|
=\|\zeta\|$ (and hence 
$  \Re\lt[\langle \zeta, \nabla^2 L(A\x ) \zeta\rangle\rt]=0 $)  holds if and only if $\zeta=c\x$  for some real constant $c$ as long as the measurement matrix $A$ contains at least
{\em one}  randomly coded diffraction pattern. 
}

\begin{thm}
\label{gap}\cite{AP-phasing}, \cite{FDR}, \cite{DRAP-ptycho}
Suppose $\x$ is not a line object. For $A$ given by \eqref{one}, \eqref{two} or the ptychography scheme under the connectivity condition \eqref{s-conn}  with independently and continuously distributed mask phases,  the second largest singular value $\lambda_2$ of the real-valued matrix
\beq 
  \label{B'} \mathcal{B}=\left[
\begin{matrix}
- \Re(B)    &
 \Im(B)
\end{matrix}
\right]
\eeq
 is strictly less than 1 
 with probability one. 
 
 Therefore, the Hessian of \eqref{Gaussian} at $A\x$ (which is nonvanishing almost surely) is positive semi-definite and has
 one-dimensional eigenspace spanned by $\im \x$ associated with eigenvalue zero.

\end{thm}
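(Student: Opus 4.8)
The plan is to reduce the statement about $\lambda_2$ to a genericity statement about the phase of $A\x$ and then invoke the uniqueness results the cited papers provide. First I would record the Hessian of \eqref{Gaussian} at the true solution. Since the data are noiseless we have $b=|A\x|$, and with $B$ as in \eqref{25.1}, the real embedding $V(\zeta)=[\Re(\zeta);\Im(\zeta)]$, and the elementary identity $\Im(B\zeta)=\mathcal{B}\,V(\im\zeta)$ (with $\mathcal{B}$ as in \eqref{B'}), one obtains
\[
\Re\,\langle \zeta,\,\mathrm{Hess}(\x)\,\zeta\rangle=\|\zeta\|^2-\|\mathcal{B}\,V(\im\zeta)\|^2=\|\zeta\|^2-\|\Im(B\zeta)\|^2,\qquad \zeta\in\IC^{n^2}.
\]
Because $A$ is an isometry and $V$ preserves norms, every real vector is of the form $V(\im\zeta)$; hence controlling the top singular space of $\mathcal{B}$ is the same as controlling those $\zeta$ for which $\|\Im(B\zeta)\|=\|\zeta\|$.

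Next I would establish the nonexpansive bound and pin down its equality case. Writing $\Im(B\zeta)_j=\Im\big(\overline{(A\x)_j}\,(A\zeta)_j\big)/|(A\x)_j|$ and applying Cauchy--Schwarz pixelwise gives $\|\Im(B\zeta)\|\le\|A\zeta\|=\|\zeta\|$, so $\mathcal{B}$ is nonexpansive and the Hessian is positive semidefinite. Equality holds if and only if $\Re\big(\overline{A\x}\odot A\zeta\big)=0$, i.e. $\sgn(A\zeta)=\pm\im\,\sgn(A\x)$ with pixel-dependent signs. This condition is visibly satisfied by $\zeta=\im\x$ (there $A\zeta=\im A\x$), so $\im\x$ lies in the null space; moreover, since $B\x=|A\x|$ is real one checks $\mathcal{B}V(\x)=-|A\x|$, whence $\|\mathcal{B}V(\x)\|=\|\x\|=\|V(\x)\|$ and $\lambda_1=1$.

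The crux is to show that the equality set is one-dimensional. Here I would make the substitution $\zeta\mapsto-\im\zeta$, which converts $\sgn(A\zeta)=\pm\im\,\sgn(A\x)$ into $\measuredangle A(-\im\zeta)=\pm\measuredangle A\x$ with pixel-dependent signs, precisely the hypothesis treated by the generic phase-uniqueness lemmas: a single randomly coded pattern \eqref{one} is covered by \cite{FDR}, two patterns \eqref{two} by \cite{AP-phasing}, and the ptychographic scheme under \eqref{s-conn} by \cite{DRAP-ptycho}. Each asserts that, almost surely in the continuous, independent mask phases and provided $\x$ is not a line object, $\measuredangle A\zeta'=\pm\measuredangle A\x$ forces $\zeta'=c\x$ for some real $c$. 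Undoing the substitution yields $\zeta=c\,\im\x$, so the equality set---equivalently the null space of the Hessian, equivalently the singular space of $\mathcal{B}$ for the value $1$---is exactly $\mathrm{span}_{\IR}\{\im\x\}$.

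Finally I would translate back to singular values: $\mathcal{B}$ is nonexpansive with $\lambda_1=1$ attained on a one-dimensional space, so the singular value $1$ has multiplicity one and $\lambda_2<1$ almost surely; simultaneously the Hessian is positive semidefinite with one-dimensional null space spanned by $\im\x$. The hard part is the uniqueness step of the third paragraph: it is where nonlinearity of $\x$, continuity and independence of the mask phases, and (for ptychography) the connectivity \eqref{s-conn} are all indispensable, and it is what upgrades the deterministic inequality $\lambda_2\le1$ to the strict, almost-sure bound. The remaining steps are exact bookkeeping of real and imaginary parts.
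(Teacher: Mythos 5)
Your proposal is correct and follows essentially the same route as the paper's own argument: the same reduction of the Hessian quadratic form at $\x$ to $\|\zeta\|^2-\|\Im(B\zeta)\|^2$ via the identity $\Im(B\zeta)=\mathcal{B}\,V(\im\zeta)$, the same pixelwise Cauchy--Schwarz bound with equality characterized by $\Re\big(\overline{A\x}\odot A\zeta\big)=0$, and the same invocation of the phase-uniqueness results of \cite{FDR}, \cite{AP-phasing}, \cite{DRAP-ptycho} (for one mask, two masks, and ptychography under \eqref{s-conn}, respectively) to conclude that the equality set is exactly $\mathrm{span}_{\IR}\{\im\x\}$, whence $\lambda_2<1$ almost surely. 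If anything, your bookkeeping of the factor $\im$ (the substitution $\zeta\mapsto-\im\zeta$ that turns the orthogonality condition into the $\measuredangle A\zeta'=\pm\measuredangle A\x$ hypothesis of the uniqueness lemma) is slightly more careful than the paper's own sketch.
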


\section{Nonconvex optimization}\label{s:nonconvex}

\subsection{Alternating Projections (AP)}\label{ss:ap}

The earliest phase retrieval algorithm for a non-periodic object (such as a single molecule)  is the Gerchberg-Saxton algorithm~\cite{GS72} and its variant, Error Reduction \cite{Fie82}. The basic idea is Alternating Projections (AP), going all the way back to the works of 
von Neumann, Kaczmarz and Cimmino  in the 1930s \cite{Cimmino}, \cite{Kac}, \cite{Neuman}. 
And these further trace the history  back to  Schwarz \cite{Schwarz}  who in 1870 used AP to solve the Dirichlet
problem on a region given as a union of regions each having a simple to solve Dirichlet problem.

AP is defined by 
\beq\label{3.9}\label{papf}
x_{k+1}=A^*[ b\odot \sgn(Ax_k)].\eeq
  In the case with real-valued objects, \eqref{papf} is exactly Fienup's Error Reduction algorithm \cite{Fie82}.  

The AP fixed points satisfy 
\beqn
x=A^*[b\odot { \sgn}(Ax)]&\mbox{or}& B^*[|Ax|-b]=0
\eeqn
which is exactly the stationarity  equation \eqref{stationary} for $L$ in \eqref{Gaussian}. 
The existence of non-solutional fixed points (i.e. $|Ax|\neq b$), and hence local minima of $L$ in \eqref{Gaussian}, can not
be proved presently but manifests in numerical stagnation of AP iteration. 
 
Indeed,  AP can be formulated as a  gradient descent  for the loss function \eqref{Gaussian}.  The function \eqref{Gaussian} has the sub-gradient 
\beqn
{2}\nabla L(Ax) &= &x-A^*[b\odot  \sgn(A x)]
\eeqn
and hence we can write the AP map as 
\beqn
{T}(x)&=&x-{2} \nabla L(Ax)
 \eeqn
implying  a constant
step size $1$. In \cite{AP-phasing}, local geometric convergence to $\x$ is proved for AP. 
In other words, AP is both noise-agnostic in the sense that it projects onto the data set 
as well as noise-aware in the sense that it is the sub-gradient descent of the loss function \eqref{Gaussian}.

The following result identifies  any limit point of the AP iterates  
with a fixed point of AP with 
a norm criterion for distinguishing the phase retrieval solutions from the non-solutions among many coexisting fixed points.

\begin{prop}\label{Cauchy}  \label{prop2.21} \cite{AP-phasing}
Under the conditions of Theorem \ref{unique1} or \eqref{unique2}, the AP sequence $x_k={T}^{k-1}(x_1)$,  with any starting point $x_1$,  is bounded
and 
every limit point  is a fixed point.

Furthermore, if  a fixed point $x$ satisfies $\|A x\|= \|b\|$, then
 $|Ax|=b$ almost surely. 
On the other hand, if $|Ax|\neq b$, then $\|A x\|<\|b\|$. 
\end{prop}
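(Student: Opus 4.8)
The plan is to pass to the transform domain by setting $u_k := A x_k$, so that, since $A$ is isometric with $\px = AA^*$ and $\py u = b\odot \sgn(u)$, the AP map \eqref{papf} becomes the classical alternating projection $u_{k+1} = \px\py u_k$, with $u_k \in X$ for all $k$. Boundedness is then immediate: because $|\sgn(\cdot)|\equiv 1$ componentwise we have $\|\py u\| = \|b\|$ for every $u$, and $\px$ is an orthogonal projection, hence nonexpansive, so $\|x_{k+1}\| = \|u_{k+1}\| = \|\px\py u_k\| \le \|\py u_k\| = \|b\|$ for all $k$. This bounds the whole sequence by $\|b\|$ and, by Bolzano--Weierstrass, guarantees that limit points exist.

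Next I would establish the Fej\'er-type monotonicity that drives the iterates towards a fixed point. Set $d_k := \dist(u_k, Y) = \||u_k| - b\|$. Since $u_k \in X$ and $u_{k+1} = \px(\py u_k)$ is the nearest point of the subspace $X$ to $\py u_k$, we get $\|u_{k+1} - \py u_k\| \le \|u_k - \py u_k\| = d_k$; and because $\py u_k \in Y$, this yields $d_{k+1} \le \|u_{k+1} - \py u_k\| \le d_k$, so $(d_k)$ is nonincreasing. The orthogonality $\py u_k - u_{k+1} \perp X \ni u_{k+1} - u_k$ gives the Pythagorean identity $d_k^2 = \|\py u_k - u_{k+1}\|^2 + \|u_{k+1} - u_k\|^2 \ge d_{k+1}^2 + \|u_{k+1}-u_k\|^2$, and summing the resulting telescoping bound $\|u_{k+1}-u_k\|^2 \le d_k^2 - d_{k+1}^2$ gives $\sum_k \|u_{k+1}-u_k\|^2 \le d_1^2 < \infty$, hence $\|u_{k+1}-u_k\| \to 0$. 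For any convergent subsequence $u_{k_j}\to u^*$ we then also have $u_{k_j+1}\to u^*$, and passing to the limit in $u_{k_j+1} = \px(b\odot\sgn(u_{k_j}))$ identifies $u^* = \px(b\odot\sgn(u^*))$; equivalently the corresponding limit $x^* = \lim x_{k_j}$ satisfies $x^* = A^*[b\odot\sgn(Ax^*)]$, a fixed point of $T$.

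The hard part will be justifying this last limit, because $\sgn$ is discontinuous precisely where a component of $u^*$ vanishes. Here I would invoke the hypotheses of Theorem \ref{unique1}/\eqref{unique2}: the continuously and independently distributed random mask phases give $b_{\min} = \min_j b_j > 0$ almost surely, so $Y$ is bounded away from the origin and, on this almost-sure event, the relevant limit has no vanishing coordinates, making $\sgn$ continuous at $u^*$. (More robustly, one extracts a further subsequence of the bounded vectors $b\odot\sgn(u_{k_j}) \in Y$ converging to some $w^* \in \py(u^*)$ and uses continuity of $\px$ to write $u^* = \px w^*$, i.e. a fixed point of the set-valued projection.) This is the only genuinely delicate point; the rest is soft.

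Finally, the norm criterion follows from the same nonexpansiveness together with its equality case. For a fixed point, $u := Ax = \px v$ with $v := b\odot\sgn(u) = \py u \in Y$ and $\|v\| = \|b\|$, so $\|Ax\| = \|\px v\| \le \|v\| = \|b\|$ \emph{always}, with equality if and only if $v \in X$, that is $\px v = v$. Thus if $\|Ax\| = \|b\|$ then $u = \px v = v$, whence $|Ax| = |u| = |v| = b$ almost surely (using $b_{\min}>0$ to fix the signs). The remaining assertion is the contrapositive combined with the always-valid bound $\|Ax\|\le\|b\|$: if $|Ax|\neq b$ then $\|Ax\|\neq\|b\|$, and since $\|Ax\|\le\|b\|$ the inequality must be strict, $\|Ax\| < \|b\|$.
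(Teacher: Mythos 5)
The paper itself does not prove Proposition~\ref{prop2.21}; it quotes it from \cite{AP-phasing}. Your route --- pass to the transform domain where AP reads $u_{k+1}=\px\py u_k$, get boundedness from $\|\py u\|=\|b\|$ and nonexpansiveness of $\px$, get asymptotic regularity from the Fej\'er/Pythagoras inequality $d_{k+1}^2+\|u_{k+1}-u_k\|^2\le d_k^2$ with $d_k=\|\,|u_k|-b\,\|$, and settle the norm criterion by the equality case of $\|\px v\|\le\|v\|$ --- is the standard argument for exactly this kind of statement and is, in structure, the argument of the cited source (there the norm criterion can equivalently be read off from the fixed-point identity $\|Ax\|^2=\langle |Ax|,b\rangle$ plus Cauchy--Schwarz; your projection-equality version is just as good, and in fact yields the implication $\|Ax\|=\|b\|\Rightarrow|Ax|=b$ deterministically, the randomness only ensuring $b>0$ so that the arbitrary convention $\sgn(0)=e^{\im\alpha}$ never plays a role).

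There is, however, one step that is wrong as written, and you only half-repair it. From $b_{\min}>0$ (which indeed holds almost surely) you \emph{cannot} conclude that a limit point $u^*$ has no vanishing coordinates: $b_{\min}>0$ constrains the data vector, not the iterates, and AP/AAR iterates can approach points of $X$ with vanishing components --- such degenerate points are precisely the ones blamed for stagnation and loss of continuity of $\py$ in the discussion following \eqref{fixed2'} and in \cite{Fie86}. So the sentence ``on this almost-sure event, the relevant limit has no vanishing coordinates'' is a non sequitur, and your parenthetical fallback must carry the proof: extract a further subsequence so that $b\odot\sgn(u_{k_j})\to w^*$; then $|w^*|=b$, $u^*=\px w^*$ by continuity of the linear map $\px$, and $w^*(j)=b(j)\sgn(u^*(j))$ at every $j$ with $u^*(j)\neq 0$, since $\sgn$ is continuous off the origin. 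This exhibits $u^*$ (equivalently $x^*=A^*u^*$) as a fixed point of the AP map for an admissible assignment of the arbitrary phase $e^{\im\alpha}$ that the paper allows $\sgn$ to take at vanishing entries --- which is the correct, and intended, meaning of ``fixed point'' here. Lead with that argument and drop the $b_{\min}$ claim; with that change the proof is complete and agrees with the cited one.
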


\subsection{Averaged Alternating Reflections (AAR)}
\label{sec:AAR} 

AAR is based on the following characterization of {\em convex} feasibility problems. 

Let \beqn
R_X=2P_X-I, &&R_Y=2P_Y-I.
\eeqn
 Then we can characterize the feasibility condition as 
\beqn
u\in X\cap Y\quad \mbox{if and only if }\quad u=R_YR_X u 
\eeqn
in the case of convex constraint sets $X $ and $Y$ \cite{Boyd17}. This  motivates the Peaceman-Rachford (PR) method: For $k=0,1, 2, \cdots$
\[
u_{k+1}= R_Y R_X y_{k}. 
\]
AAR is the {\em averaged} version of PR: For $k=0,1, 2, \cdots$
\beq
\label{dr}
u_{k+1}= \half u_k+\half R_Y R_X u_{k},
\eeq
hence the name {\em Averaged Alternating Reflections} (AAR). 
With a different variable $v_k:=R_X u_k$, we see that AAR \eqref{dr} is equivalent to
\beq
v_{k+1}=\half v_k+ \half R_X R_Y v_k.\label{drx}
\eeq
In other words, the order of applying $R_x$ and $R_Y$ does not matter. 

A standard result for AAR in the convex case is this.

\begin{prop} \cite{BCL04} Suppose $X$ and $Y$ are closed and convex sets of a finite-dimensional vector space $E$. Let $\{u_k\}$ be an AAR-iterated sequence for any $u_1\in E$. Then one of the following alternatives holds:

\noindent (i) $X\cap Y\neq \emptyset$ and $(u_k)$ converges to a point $u$ such that $P_X u\in X\cap Y$;\\
(ii) $X\cap Y= \emptyset$ and $\|u_k\|\to\infty$. 
\label{prop0}
\end{prop}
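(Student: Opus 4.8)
The plan is to recognize the AAR recursion \eqref{dr} as the Krasnosel'skii--Mann iteration of a firmly nonexpansive operator and then branch according to its fixed-point set. Set $T:=\half\big(I+R_Y R_X\big)$, so that \eqref{dr} reads $u_{k+1}=Tu_k$. Because $X,Y$ are closed and convex, $P_X$ and $P_Y$ are firmly nonexpansive; hence $R_X=2P_X-I$ and $R_Y=2P_Y-I$ are nonexpansive, their composition $R_YR_X$ is nonexpansive, and its $\half$-average $T$ is firmly nonexpansive. Two consequences are recorded for later use: $\|u_{k+1}-u_k\|=\|Tu_k-Tu_{k-1}\|\le\|u_k-u_{k-1}\|$ is nonincreasing, and whenever $\operatorname{Fix}(T)\neq\emptyset$ the telescoping estimate $\|u_{k+1}-z\|^2\le\|u_k-z\|^2-\|u_k-Tu_k\|^2$ (for any fixed point $z$) shows $\{u_k\}$ is Fej\'er-monotone with respect to $\operatorname{Fix}(T)$ and asymptotically regular, $u_k-Tu_k\to0$.

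Next I would pin down the fixed points and match them to the stated dichotomy. Writing out $u=R_YR_Xu$ and setting $w=R_Xu$, a one-line rearrangement gives $P_Yw=P_Xu$; since $P_Xu\in X$ and $P_Yw\in Y$ coincide, this common point lies in $X\cap Y$. Conversely, every $z\in X\cap Y$ satisfies $P_Xz=P_Yz=z$, hence $R_YR_Xz=z$. Therefore $\operatorname{Fix}(T)\neq\emptyset\iff X\cap Y\neq\emptyset$, and when a fixed point $u$ exists its shadow $P_Xu$ belongs to $X\cap Y$. This identifies alternative (i) with $\operatorname{Fix}(T)\neq\emptyset$ and alternative (ii) with $\operatorname{Fix}(T)=\emptyset$.

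For alternative (i) I would combine the facts from the first paragraph. Fej\'er monotonicity makes $\{u_k\}$ bounded; since $\dim E<\infty$, there is a cluster point $\bar u$, and asymptotic regularity together with continuity of $I-T$ forces $\bar u-T\bar u=0$, i.e. $\bar u\in\operatorname{Fix}(T)$. A Fej\'er-monotone sequence possessing a cluster point inside the reference set converges to it, so $u_k\to u\in\operatorname{Fix}(T)$, and by the previous paragraph $P_Xu\in X\cap Y$.

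Alternative (ii) is where I expect the real work. Here $\operatorname{Fix}(T)=\emptyset$ and I must upgrade this to $\|u_k\|\to\infty$. The governing object is the minimal displacement (gap) vector $v$, the element of least norm in $\overline{\operatorname{ran}(I-T)}\subseteq\overline{X-Y}$, since $(I-T)u=P_Xu-P_YR_Xu\in X-Y$; for nonexpansive iterations one has $u_k-u_{k+1}\to v$ and $\|u_{k+1}-u_k\|\downarrow\|v\|$. If $v\neq0$ the displacements tend to a fixed nonzero vector, the orbit drifts essentially linearly, and $\|u_k\|\to\infty$. The delicate subcase is $v=0$ (disjoint sets of zero gap): then $u_k-Tu_k\to0$, so boundedness of $\{u_k\}$ would yield a convergent subsequence whose limit is a fixed point by continuity of $I-T$, contradicting $\operatorname{Fix}(T)=\emptyset$; promoting this unboundedness to the full statement $\|u_k\|\to\infty$, using the monotonicity of $\|u_{k+1}-u_k\|$ and the behavior of the shadow sequence $P_Xu_k$, is the technical heart of the argument and the point at which I would lean on the detailed analysis of~\cite{BCL04}.
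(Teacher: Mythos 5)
The paper gives no proof of this proposition at all — it is quoted verbatim from \cite{BCL04} — so the only meaningful comparison is with that reference, and your argument follows essentially the same route as the literature: firm nonexpansiveness of $T=\tfrac12(I+R_YR_X)$, Fej\'er monotonicity in the consistent case, and the minimal-displacement (gap) vector $v$ in the inconsistent case. Your outline is correct: the identification $\operatorname{Fix}(T)=\operatorname{Fix}(R_YR_X)$ together with $P_Xu=P_YR_Xu\in X\cap Y$, the Fej\'er argument in finite dimensions for alternative (i), and the dichotomy on $v$ are all sound, granted the Pazy/Baillon--Bruck--Reich theorem that $(I-T)u_k\to v$; note that this theorem requires averagedness, not mere nonexpansiveness (rotations are the standard counterexample), but $T$ is $\tfrac12$-averaged, so its use is legitimate. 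The one substantive remark is that the step you defer to \cite{BCL04} as ``the technical heart'' is not actually a gap: the negation of $\|u_k\|\to\infty$ is precisely the existence of a bounded subsequence $(u_{k_j})$, which in finite dimensions has a convergent sub-subsequence; since $u_k-Tu_k\to v=0$ holds along the whole sequence, the limit of that sub-subsequence is a fixed point of $T$, contradicting $\operatorname{Fix}(T)=\emptyset$. In other words, the compactness-plus-asymptotic-regularity argument you already wrote down, applied to subsequences rather than to the full sequence, yields $\|u_k\|\to\infty$ directly. (In the subcase $v\neq 0$ nothing of this is needed: $u_k-u_{k+1}\to v$ gives $u_k/k\to -v$ by Ces\`aro averaging, so $\|u_k\|$ grows linearly.) Thus your proof is complete up to the standard minimal-displacement theorem, with no further appeal to the detailed analysis of \cite{BCL04} required.
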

In alternative (i), the limit point $u$ is a fixed point of the AAR map \eqref{dr}, which is necessarily in $X\cap Y$; in alternative (ii) the feasibility problem is inconsistent, resulting in divergent AAR iterated sequences, a major drawback of AAR {since the inconsistent case is prevalent with noisy data because of the higher dimension of data compared to the object}. 

Accordingly,  the alternative (i) in Proposition \ref{prop0} means
that if a convex feasibility problem is consistent then   every AAR iterated sequence converges to a generalized solution and hence every fixed point is a generalized solution.

We begin with showing that AAR can be viewed as an ADMM method
with  the indicator function $\II_Y $ of the set $Y=\{z\in \IC^N: |z|=b\}$ as
the loss function.

AAR for phase retrieval can be viewed as relaxation of the linear constraint of $X$
by alternately minimizing 
 the augmented Lagrangian  function 
 \beq\label{AL'}
 \cL(z,x,\lamb)&=& \II_Y (z)+\lamb^*(z- A x)+{1\over 2} \|z-Ax\|^2
 \eeq
 in the order 
 \beq
 \label{350} z_{k+1}&=& \arg\min_z  \cL(z,x_k,\lamb_k)= P_Y \lt[A x_k-\lamb_k\rt]\\
\label{351} x_{k+1}&=&\arg\min_{\nu} \cL(z_{k+1},x,\lamb_k) = A^+(z_{k+1}+\lamb_k)\\
  \label{352} \lamb_{k+1}&=& \lamb_k+z_{k+1}-Ax_{k+1}.
 \eeq
Let  $u_{k}:= z_k+\lamb_{k-1}$  and 
 we have from \eqref{352} 
 \beqn
 \lamb_{k}&=& u_{k}-Ax_{k}\\
 &=& u_{k}-P_X u_{k}\nn
 \eeqn
 and hence
 \beqn
 u_{k+1}&=& P_Y (A x_k-\lamb_k)+\lamb_k\\
 &=& P_Y (P_X u_k-\lamb_k)+\lamb_k\\
 &=& P_Y R_X u_k+u_k-P_X u_k\\
 &=& \half u_k+\half R_Y R_X u_k
 \eeqn
 which is AAR \eqref{dr}.

As proved in  \cite{FDR}, when  uniqueness holds, the fixed point set of the AAR map \eqref{dr} is exactly the continuum set 
\beq
\label{fixed}
\{u=e^{\im \alpha}A\x-z: P_Xz=0,\sgn(u)=\alpha+\sgn(A\x), z\in \IC^N, \alpha\in \IR\}. 
\eeq
In \eqref{fixed}, the phase relation $\sgn(u)=\alpha+\sgn(A\x)$ implies that
$z=\eta\odot \sgn(u),\eta\in \IR^N, b+\eta\ge 0.$ So the set \eqref{fixed} can be written as
\beq
\label{fixed2}
\{e^{\im \alpha} (b-\eta)\odot \sgn(A\x):P_X(\eta\odot\sgn(A\x))=0, b+\eta\ge 0,\eta\in \IR^N,\alpha\in \IR\}, 
\eeq
which is an $(N-2n^2)$ real-dimensional set,  a much larger set than the circle $\{e^{\im \alpha} A\x: \alpha \in \IR\}$ for a given $f$. 
 On the other hand, the fixed point set \eqref{fixed2} is
 $N$-dimension lower than  
 the set  \eqref{gen} of generalized solutions {and projected (by $P_X$) onto the circle of true solution $\{e^{\im \alpha} A\x:\alpha\in \IR\}$}. 
 
 A more intuitive characterization of the fixed points can be obtained by applying $R_X$ to the set \eqref{fixed2}. Since
 \[
 R_X [e^{\im \alpha} (b-\eta)\odot \sgn(A\x)] = e^{\im \alpha} (b+\eta)\odot \sgn(A\x)
 \]
 amounting to the sign change in front of $\eta$, 
 the set \eqref{fixed2} under the map $R_X$ is mapped to
 \beq
\label{fixed2'}
\{e^{\im \alpha} (b+\eta)\odot \sgn(A\x):P_X(\eta\odot\sgn(A\x))=0,\,\, b+\eta\ge 0,\,\,\eta\in \IR^N,\alpha\in \IR\}. 
\eeq
The set \eqref{fixed2'} is the fixed point set  of the alternative form of AAR:
\beq
v_{k+1}&=& \half x_k+\half R_X R_Y v_k \label{aar2}
\eeq
in terms of  $v_k:=R_X u_k$. The expression \eqref{fixed2'} says that the fixed points of \eqref{aar2} are generalized solutions with the ``correct" Fourier phase.

However, the boundary points of the fixed point set \eqref{fixed2'} are degenerate in the sense that they have vanishing components, i.e. $|v|(j)=(b+\eta)(j)=0$ for some $j$ and   can slow down convergence \cite{Fie86}.  Such points are points of discontinuity of the AAR map \eqref{aar2}
because they are points of discontinuity of $P_Y=b\odot\sgn(\cdot)$. 
 Indeed, even though AAR converges linearly  
in the vicinity of the true solution, numerical evidence suggests that
globally (starting with a random initial guess) AAR  converges sub-linearly. 
Due to the non-uniformity of convergence, the additional step of applying $P_X$ (Proposition \ref{prop0}(i))  at the ``right timing" of the iterated process can jumpstart the geometric convergence regime   \cite{FDR}.

As noted in Section \ref{sec:amb1}, with a uniform mask, noiseless data and the real-valued prior, all the ambiguities in \eqref{12'} are global
minima of $L$ in \eqref{Gaussian} and fixed points of both AP and AAR. Figure \ref{fig:ui} demonstrates how detrimental these ambiguities are to numerical reconstruction.

\subsection{Douglas-Rachford Splitting (DRS)}

AAR \eqref{dr} is often written in the following form
\beq\label{aar}
u_{k+1}= u_k+P_YR_Xu_k-P_Xu_k  
\eeq
which is 
equivalent to the 3-step iteration
\beq
\label{dr2}
v_{k} &=& P_Xu_k; \\
w_{k} &=& P_Y (2v_{k}-u_k)=P_YR_X u_k\\
u_{k+1}&= & u_k+ w_{k}-v_{k}\label{dr2'}
\eeq

AAR can be modified in various ways by the powerful method of 
Douglas-Rachford splitting  (DRS) which is simply an application of the 3-step procedure \eqref{dr2}-\eqref{dr2'} to
 proximal maps. 

Proximal maps are generalization of projections. The proximal map relative to a function $f$
is defined by
\[
\prox_{f}(u):= \mathop{\text{argmin}}\limits_{x} {f}(x)+\frac{1}{2}\| x-u\|^2. 
\]
Projections $P_X$ and $P_Y$ are proximal maps relative to $\II_X$ and $\II_Y$, the indicator functions
of $X$ and $Y$, respectively.

By choosing other proxy functions than $\II_X$ and $\II_Y$, we may obtain different DRS methods that have
more desirable properties than AAR. 

\subsection{Convergence rate}\label{ss:conv}
Next we recall the local geometric convergence property of AP and AAR with convergence rate expressed in terms
of $\lamb_2$, the second largest singular value of $\cB$.

The {Jacobians} of the AP and AAR maps are given, respectively, by
\beqn
\partial {T} (\xi)&=&\im B^*\Im(B \xi),\quad\xi\in\IC^{n^2}
\eeqn
and
\begin{equation} \partial \Gamma (\zeta)=  (I-BB^*) \zeta + \im (2 B B^*-I)\; \diag\lt[{b\over |\zeta|}\rt]\Im(\zeta),\quad\zeta\in\IC^N. 
\nn
 \end{equation}
 Note that $\partial\Gamma$ is a {\em real}, but {\em not} {\em complex},  linear map since
$\partial \Gamma (c\zeta)\neq c\partial \Gamma (\zeta),  c\in \IC$ in general.

  \begin{thm}\label{cor5.2} \cite{FDR}, \cite{DRAP-ptycho},\cite{AP-phasing}
 The local geometric convergence rate of AAR and AP is 
 $\lamb_2$ and $\lamb_2^2$, respectively, where $\lamb_2$ is the second largest singular value of $\cB$ in \eqref{B'}.  

 \end{thm}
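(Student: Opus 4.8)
The strategy is the standard linearization argument for a locally attracting fixed point: I identify $e^{\im\alpha}A\x$ as a fixed point of each iteration map, compute the spectrum of its Jacobian there, quotient out the neutral direction(s) forced by the global phase (and, for AAR, by the entire fixed-point continuum), and read off the contraction rate as the largest remaining eigenvalue modulus, which is then promoted to genuine local geometric convergence by Ostrowski's theorem on contractive fixed points. Throughout I use the real identification $\IC^m\cong\IR^{2m}$, $\zeta\mapsto(\Re\zeta,\Im\zeta)$, under which multiplication by $\im$ is orthogonal, and I invoke Theorem~\ref{gap}, which supplies $\lambda_2<1$ almost surely together with the fact that the only unit singular direction of $\mathcal{B}$ is the phase direction $\im\x$.

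For AP I would work with the Jacobian $\partial T(\xi)=\im B^*\Im(B\xi)$ on $\IC^{n^2}$. A direct computation gives $\Re\langle\eta,\partial T\xi\rangle=\langle\Im(B\xi),\Im(B\eta)\rangle$, so $\partial T$ is self-adjoint for the real inner product and its quadratic form is $\|\Im(B\xi)\|^2$. From the definition of $\mathcal{B}$ in \eqref{B'} one has $\|\Im(B\xi)\|^2=\|\mathcal{B}\,w(\im\xi)\|^2$, where $w(\cdot)$ denotes the real coordinate vector; since $\im\cdot$ is orthogonal, the eigenvalues of $\partial T$ are exactly the squared singular values $\lambda_k^2$ of $\mathcal{B}$. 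The top eigenvalue $\lambda_1^2=1$ is attained along the neutral phase direction $\im\x$ (Theorem~\ref{gap}), and on its orthogonal complement the spectral radius is $\lambda_2^2<1$. Because the solution is isolated modulo the global phase circle, the linearization theorem yields local geometric convergence of AP at rate $\lambda_2^2$.

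For AAR I would evaluate $\partial\Gamma(\zeta)=(I-BB^*)\zeta+\im(2BB^*-I)\diag[b/|\zeta|]\Im(\zeta)$ at $\zeta=e^{\im\alpha}A\x$. There $|\zeta|=b$, so $\diag[b/|\zeta|]=I$, and writing $P=BB^*$ for the orthogonal projection onto $X$ and noting $\im\Im(\zeta)=P_{\mathbb{I}}\zeta$ is the projection onto the purely-imaginary subspace $\mathbb{I}$, the Jacobian reduces to $\zeta\mapsto(I-P)\zeta+R_X P_{\mathbb{I}}\zeta$, i.e.\ exactly the averaged reflection operator $\tfrac12\bigl(I+R_X R_{\mathbb{I}}\bigr)$ for the two subspaces $X$ and $\mathbb{I}$. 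The spectrum of such a two-subspace operator is controlled by the principal angles $\theta_k$ between $X$ and $\mathbb{I}$: each $\theta_k\in(0,\pi/2)$ contributes a $2\times2$ rotation block with eigenvalues $\cos\theta_k\,e^{\pm\im\theta_k}$, of modulus $\cos\theta_k$. The same stationary-value computation that produces $\lambda_k$ shows $\cos\theta_k=\lambda_k$, because the extremal ratios $\|\Im(w)\|/\|w\|$ over $w\in X$ are precisely the singular values of $\mathcal{B}$. Under uniqueness only $\theta_1=0$ (accounting for the modulus-one eigenvalues along the phase circle and the fixed-point manifold), so the largest eigenvalue modulus strictly below $1$ is $\lambda_2$, giving AAR local geometric convergence at rate $\lambda_2$.

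The main obstacle is the AAR case. Unlike AP, the fixed-point set is a full $(N-2n^2)$-dimensional continuum, so the Jacobian carries a high-dimensional unit eigenspace, and one must establish convergence \emph{transverse} to this manifold rather than to an isolated point; moreover the governing eigenvalues $\lambda_k e^{\pm\im\theta_k}$ are genuinely complex, producing a spiralling approach whose geometric rate is nonetheless $|\lambda_k|=\lambda_k$. Making the two-subspace reduction rigorous---correctly matching the unit eigenspace with the tangent space of the fixed-point manifold \eqref{fixed2'}, excluding spurious unit eigenvalues via the spectral gap of Theorem~\ref{gap}, and upgrading the linearized rate to genuine local convergence despite the degenerate boundary points of the fixed-point set where $P_Y$ is discontinuous---is the delicate step. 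The AP case, where the fixed point is isolated modulo phase and the Jacobian is self-adjoint, is by comparison routine.
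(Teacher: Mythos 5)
Your proposal is correct and takes essentially the same route as the paper and the references it cites for this theorem: linearize AP and AAR at the solution, identify the AP Jacobian $\partial T(\xi)=\im B^*\Im(B\xi)$ as a real-self-adjoint operator orthogonally similar to $\mathcal{B}^T\mathcal{B}$ (spectrum $\{\lambda_k^2\}$, unit eigenvalue along $\im\x$), identify the AAR Jacobian at the fixed point as the two-subspace averaged-reflection operator $\tfrac12(I+R_XR_{\mathbb{I}})$ whose nontrivial eigenvalues $\lambda_k e^{\pm\im\theta_k}$ have moduli $\cos\theta_k=\lambda_k$, and then quotient out the neutral directions (the phase circle, and for AAR the tangent space of the fixed-point continuum \eqref{fixed2'}) to read off the rates $\lambda_2^2$ and $\lambda_2$. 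The survey itself only records the Jacobian formulas and defers the spectral and transversality details to \cite{AP-phasing}, \cite{FDR} and \cite{DRAP-ptycho}; your plan, including the honest flagging of the genuinely delicate AAR issues (matching the unit eigenspace to the fixed-point manifold and handling the discontinuity of $P_Y$ at vanishing components), reconstructs exactly how those proofs proceed.
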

 
 As pointed out above, AAR has the true solution as the unique fixed point in the object domain  while AP has a better convergence rate than DR {(since $\lamb_2^2<\lamb_2$)}. A reasonable  way to combine their
strengths is to use AAR as the initialization method for AP.

With a carefully chosen parameter $f$ ($=6/(5\pi)$), the performance of 
 a Fresnel mask (Figure \ref{fig:different-q}(b)) is only slightly inferior to that of a random mask (Figure \ref{fig:different-q}(a)). 
 Figure \ref{fig:different-q} also demonstrates different convergence rates of AP with various $q$. 

\begin{figure}
\centering
 \subfigure[random mask]{\includegraphics[width=5.5cm]{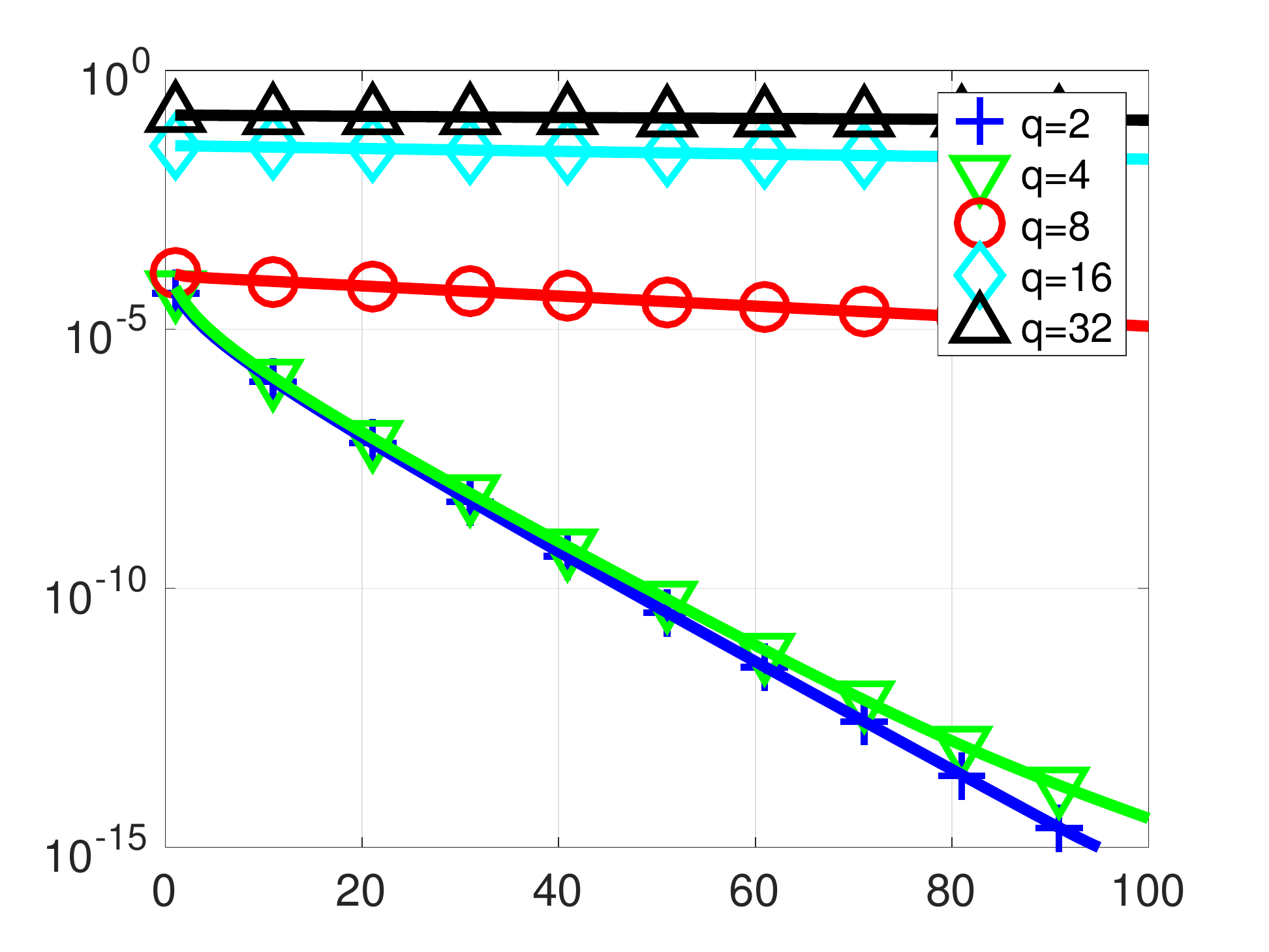}}
  \subfigure[Fresnel mask with ${f}={6\over 5\pi}$]{\includegraphics[width=5.5cm]{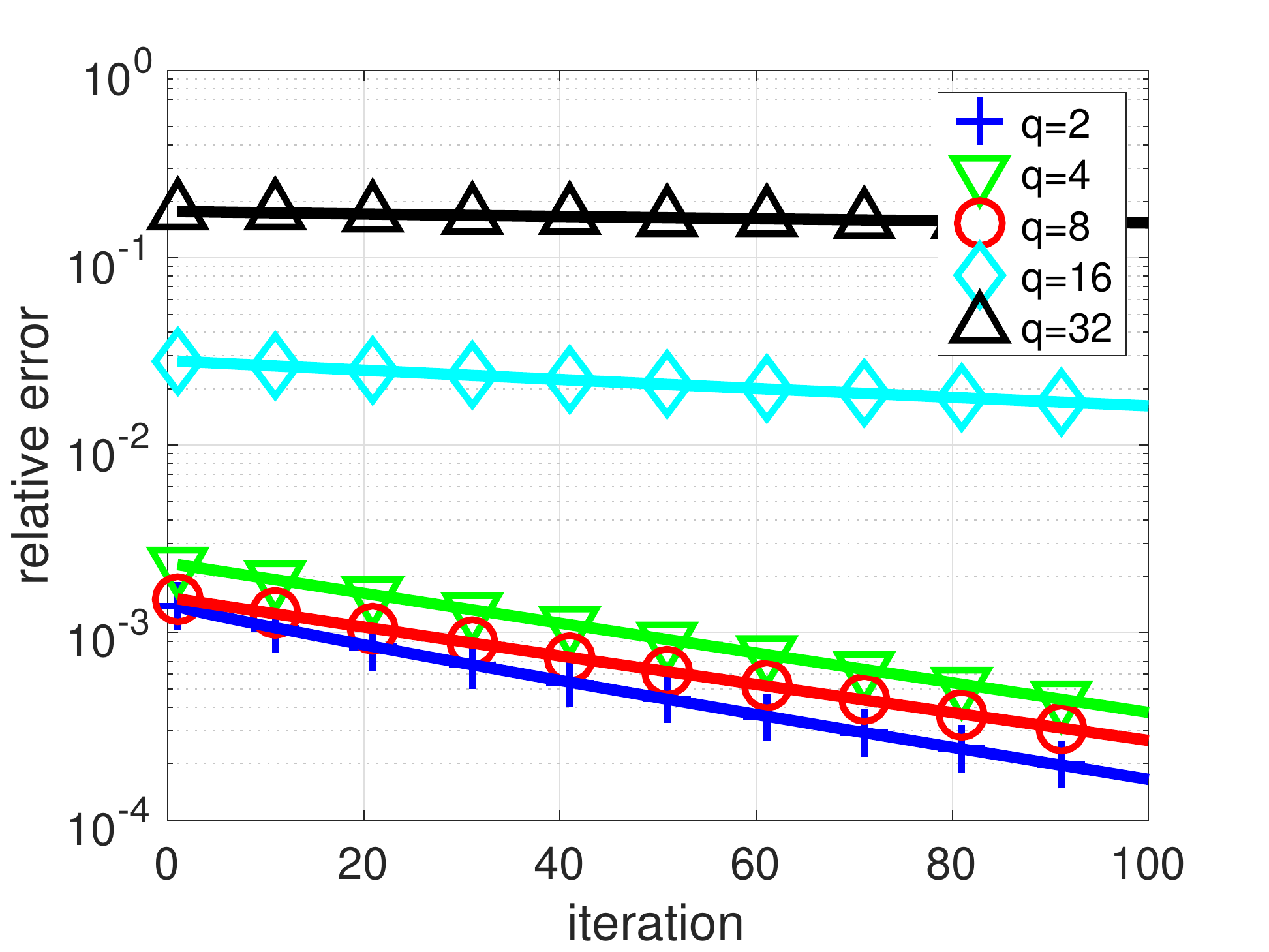}}\hspace{0.5cm}
         \subfigure[${f}={6\over 5\pi}\approx 0.38$]{\includegraphics[width=4cm]{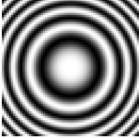}}
  \caption{ RE on the semi-log scale  for the $128\times 128 $ RPP of phase range $[0,2\pi]$ vs 100 AP iterations
  after initialization given by 300 AAR iterations with various $q$.
  }  
  \label{fig:different-q}
\end{figure}

\subsection{Fourier versus object domain formulation}\label{sec:Fourier-object}

It is important to note that due to the rectangular nature (more rows than columns) of the measurement matrix $A$, 
the following {\em object domain} version is a different algorithm from AAR discussed above:
\beq
\label{drxx}
x_{k+1}
&=& x_k+A^+ R_Y ( A x_k)-A^+P_Y(Ax_k)
 \eeq
 which resembles { \eqref{aar}} but operates on the object  domain  instead of the transform domain. 
 Indeed, as demonstrated in \cite{FDR}, the object domain version \eqref{drxx} significantly underperforms the Fourier domain AAR. 

As remarked earlier, this problem can be rectified by zero-padding and embedding the original object vector into $\IC^N$ and explicitly accounting
for this additional support constraint. Let $P_S$ denote the projection from $\IC^N$ onto the zero-padded subspace and let $\tilde A$ be an
invertible  extension of $A$ to $\IC^N$.  Then it is not hard to
see  that
the ODR map
\beqn
G(x)&=& x+P_S\tilde A^{-1} R_Y \tilde A x-\tilde A^{-1} P_Y\tilde A x
\eeqn
satisfies 
\[
\tilde A G \tilde A^{-1} (y)= y+\tilde A P_S\tilde A^{-1}R_Y y-P_Y y
\]
which is equivalent to {\eqref{aar}}  once we recognize that $P_X= \tilde A P_S\tilde A^{-1}$. 

In terms of the enlarged object space $\IC^N$,  Fienup's well-known Hybrid-Input-Output (HIO) algorithm can be expressed as  
\beqn
 x_{k+1}&=& \half \tilde A^{-1}\lt[R_X \lt(R_Y+(\beta-1)P_Y\rt)+I+(1-\beta) P_Y\rt]\tilde A x_k
\eeqn
 \cite{Fie82}. With  $v_k=\tilde A x_k,$ we can also express HIO in the Fourier domain
 \beq
 v_{k+1}&=& \half \lt[R_X \lt(R_Y+(\beta-1)P_Y\rt)+I+(1-\beta) P_Y\rt] v_k.\label{hio2}
\eeq
For $\beta=1$, HIO \eqref{hio2} is exactly AAR \eqref{drx}. 

It is worth pointing out again that the lifting from $\IC^{n^2}$ to $\IC^N$ is a key to the success of HIO over AP {\eqref{papf}, which is an object-domain scheme}. In the optics literature, however, the measurement matrix is usually constructed as a square matrix by zero-padding the
object vector with sufficiently large dimensions (see e.g. \cite{Miao} \cite{Miao2}). Zero-padding, of course, results in an additional
support constraint that must be accounted for explicitly.

\subsection{Wirtinger Flow}\label{ss:WF}

We already mentioned that  the  AP map \eqref{papf} is  a gradient descent  for the loss function \eqref{Gaussian}.  
In a nutshell, Wirtinger Flow is a gradient descent  algorithm with the loss function \eqref{Gaussian2}
proposed by~\cite{candes2015phase} which establishes a basin of attraction at $\x$ of radius $O(n^{-1/2})$ for a sufficiently small step size.  

Unlike many other non-convex methods, Wirtinger Flow (and many of its modifications) comes with a rigorous theoretical framework that provides explicit performance guarantees in terms of required number of measurements, rate of convergence to the true solution, and robustness bounds.
The Wirtinger Flow approach consists of two components: 
\begin{itemize}
\setlength{\itemsep}{-0.5ex}
\item[(i)] a carefully constructed initialization based on a spectral method related to the PhaseLift framework;
\item[(ii)]  starting from this initial guess, applying iteratively a gradient descent type update. 
\end{itemize}
The resulting algorithm is computationally efficient and, remarkably, provides rigorous guarantees under which it will recover the true solution. 
We describe the Wirtinger Flow approach in more detail.  We consider the non-convex problem
\[
\min_z \,\, L(z):= \frac{1}{2N}  \sum_{k=1}^N \left( |\langle a_k, z \rangle |^2 - y_k \right)^2, \qquad z \in \CC^n.
\]
The gradient of $L(z)$  is calculated via the Wirtinger {gradient \eqref{3.6'}}
\[
 \nabla L(z_j) =   \frac{1}{N} \sum_{k=1}^N ( |\langle a_k, z \rangle |^2 - y_k  )  \langle a_k,z\rangle a_k .
 \]
Starting from some initial guess $z_0$, we compute
\begin{equation}\label{wfgrad} 
z_{j+1} =    z_j - \frac{\tau_j}{\|z_0\|_2^2}  \nabla L(z_j),
\end{equation}
where $\tau_j>0$ is a stepsize (learning rate). {Note that the Wirtinger flow, like AP \eqref{papf}, is an object-domain scheme.}

The initialization of $z_0$ is computed via spectral initialization discussed in more detail in Section~\ref{ss:spectral}. We set 
$$\lambda:= n \frac{\sum_j n_j}{\sum_k \|a_k\|_2^2},$$
and let $z_0$ be the principal eigenvector of the matrix 
\begin{equation}\nn
Y = \frac{1}{N} \sum_{k=1}^N y_k a_k a_k^\ast, 
\end{equation}
where $z_0$ is normalized such that $\|z_0\|_2^2 = \lambda.$

\begin{defi}\label{def:dist}
Let $x \in \CC^n$ be any solution to~\eqref{eq:data}. For each $z \in \CC^n$, define
$$
\dist(z,x) = \min_{\phi \in [0,2\pi)} \|z - e^{i\phi} x\|_2.
$$
\end{defi}

\begin{thm}\cite{candes2015phase}
\label{th:WT}
Assume that the measurement vectors $a_k \in \CC^n$ satisfy $a_k \overset{\text{i.i.d.}}{\sim} {\mathcal N}(0,I/2)+i{\mathcal N}(0,I/2)$.
Let $\xo \in \CC^n$ and  $y = \{\langle a_k, \xo\rangle |^2\}_{k=1}^N$ with $N \ge c_0 n \log n$, where
$c_0$ is a sufficiently large constant. Then the Wirtinger Fow initial estimate $z_0$ normalized such that $ \| z_0\|_2 =  m^{-1} \sum_k y_k$, obeys
\begin{equation}\label{zinit}
\dist(z_0,\xo) \le \frac{1}{8} \|\xo\|_2,
\end{equation}
with probability at least  $1-10e^{-\gamma n} -8/n^2$, where $\gamma$ is a  fixed constant. Further, choose a constant stepsize $\tau_j = \tau$
for all $j=1,2,\dots$, and assume $\tau \le c_1/n$ for some  fixed constant $c_1$. Then with high probability
starting from any initial solution $z_0$ obeying~\eqref{zinit}, we have
$$
\dist(z_j,\xo) \le \frac{1}{8} \left(1- \frac{\tau}{4}\right)^{j/2} \|\xo\|_2.
$$
\end{thm}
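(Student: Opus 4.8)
The plan is to prove the two assertions separately: first the spectral initialization bound \eqref{zinit}, and then the geometric convergence starting from any point in the resulting basin of attraction.

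For the initialization I would begin by computing the population version of the spectral matrix $Y$. Using the fourth-moment (Wick/Isserlis) formula for circularly-symmetric complex Gaussian vectors, one finds
\begin{equation*}
\IE[Y] = \IE\big[|\langle a,\xo\rangle|^2\, a a^*\big] = \|\xo\|_2^2\, I + \xo\xo^*.
\end{equation*}
Its leading eigenvector is $\xo/\|\xo\|_2$ with eigenvalue $2\|\xo\|_2^2$, while every other eigenvalue equals $\|\xo\|_2^2$; hence $\IE[Y]$ has a spectral gap of exactly $\|\xo\|_2^2$. The core of this step is to show that $Y$ concentrates around $\IE[Y]$ in operator norm once $N\ge c_0 n\log n$. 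Since the summands $y_k a_k a_k^*$ are heavy-tailed (a product of two $\chi^2$-type factors, only sub-exponential), I would not apply matrix Bernstein directly; instead I would truncate the summands on the event $\{|\langle a_k,\xo\rangle|\le C\|\xo\|_2\log n\}$, bound the tail contribution, and control the supremum of the centered truncated quadratic form over an $\eps$-net of the unit sphere via a union bound over the net of cardinality $\exp(O(n))$. Once $\|Y-\IE[Y]\|\le \tfrac{1}{100}\|\xo\|_2^2$ is in hand, a Davis--Kahan ($\sin\theta$) perturbation argument converts the spectral gap into closeness of the leading eigenvector of $Y$ to $\xo/\|\xo\|_2$; combining this with the fact that the prescribed normalization forces $\|z_0\|_2^2$ to concentrate at $\|\xo\|_2^2$ (since $N^{-1}\sum_k y_k\to\|\xo\|_2^2$) and optimizing over the global phase yields \eqref{zinit}.

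For the local convergence the key is to establish a \emph{regularity condition} on the basin $\cN:=\{z:\dist(z,\xo)\le \tfrac14\|\xo\|_2\}$: there exist constants $\alpha,\beta>0$ so that, writing $e(z):=z-e^{\im\phi(z)}\xo$ for the optimally phase-aligned error,
\begin{equation*}
\Re\big\langle \nabla L(z),\, e(z)\big\rangle \;\ge\; \tfrac{1}{\alpha}\,\dist^2(z,\xo)+\tfrac{1}{\beta}\,\|\nabla L(z)\|_2^2,\qquad \forall\, z\in\cN.
\end{equation*}
Granting this, the recursion is routine: expanding
\begin{equation*}
\dist^2(z_{j+1},\xo)\le \big\|z_j-\tau\nabla L(z_j)-e^{\im\phi(z_j)}\xo\big\|_2^2
\end{equation*}
and inserting the regularity condition gives
\begin{equation*}
\dist^2(z_{j+1},\xo)\le \big(1-\tfrac{2\tau}{\alpha}\big)\dist^2(z_j,\xo)+\big(\tau^2-\tfrac{2\tau}{\beta}\big)\|\nabla L(z_j)\|_2^2,
\end{equation*}
so for $\tau\le 2/\beta$ the second term is nonpositive and $\dist(z_j,\xo)$ contracts geometrically; the factor $\|z_0\|_2^{-2}\approx\|\xo\|_2^{-2}$ built into the Wirtinger step and the matching of $\alpha$ then reproduce the stated rate $(1-\tau/4)^{j/2}$. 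One also verifies inductively that the iterates never leave $\cN$, since $\dist$ is nonincreasing. The substance lies in proving the regularity condition uniformly over $\cN$: it splits into a local curvature (lower) bound and a local smoothness (upper) bound on the cubic Wirtinger gradient of the quartic loss $L$, each reducing to showing that certain third- and fourth-order empirical forms in the $a_k$ are uniformly close to their expectations across all $z\in\cN$.

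I expect that uniform concentration to be the principal obstacle. Unlike the initialization, where concentration is needed only at the single population matrix $\IE[Y]$, here one must control a random cubic form simultaneously over the continuum $\cN$, so the net argument has to be combined with a Lipschitz-in-$z$ estimate for the gradient in order to pass from the net to all of $\cN$, again consuming the $N\ge c_0 n\log n$ budget. A second subtlety is the global-phase degeneracy: the curvature vanishes in the direction $\im\xo$ (consistent with the zero eigenspace identified in Theorem~\ref{gap}), so the alignment $\phi(z)$ must be tracked throughout and the lower curvature bound can be asserted only transverse to this direction---precisely what the $\dist$ metric and the aligned error $e(z)$ encode.
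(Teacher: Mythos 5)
Your proposal cannot be checked against a proof in this paper, because the survey states Theorem~\ref{th:WT} without proof, citing \cite{candes2015phase}; judged against that original source, your outline reproduces its argument essentially step for step: the population matrix $\EE[Y]=\|\xo\|_2^2 I+\xo\xo^*$ with its $\|\xo\|_2^2$ spectral gap, concentration of $Y$ via truncation of the heavy-tailed summands plus a net argument (matrix Bernstein indeed fails directly, though the quadratic form is a product of two sub-exponential factors and hence \emph{heavier} than sub-exponential, not ``only sub-exponential''), and an eigenvector perturbation bound for the initialization; then the regularity condition $\Re\langle\nabla L(z),e(z)\rangle\ge\alpha^{-1}\dist^2(z,\xo)+\beta^{-1}\|\nabla L(z)\|_2^2$ on the basin (their Condition 7.9), the one-line descent recursion valid for $\tau\le 2/\beta$, and the induction keeping iterates inside the basin, with $\alpha=8$ yielding the $(1-\tau/4)^{j/2}$ rate and $\beta\sim n$ explaining the $c_1/n$ step-size restriction. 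This is the same approach as the cited proof, with the genuinely hard technical content (uniform concentration of the third/fourth-order empirical forms over the basin) correctly identified but, as you acknowledge, not executed.
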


\medskip
A modification of this approach, called Truncated Wirtinger Flow~\cite{chen2017solving}, proposes a more adaptive gradient flow, both at the initialization step and during iterations. This modification seeks to reduce the variability of the iterations by introducing three additional  control parameters \cite{chen2017solving}. 

Various other modifications of Wirtinger Flow have been derived, see e.g.~\cite{TAF18,tu2015low,cai2016optimal}.
While it is possible to obtain global convergence for such gradient descent schemes with random initialization~\cite{chen2019gradient}, the price is a larger number of measurements. See Section~\ref{s:init} for a detailed discussion and comparison of various initializers combined with Wirtinger Flow.

The general idea behind the Wirtinger Flow of solving a non-convex method provably by a careful initialization followed by a properly chosen gradient descent algorithm has inspired research in other areas, where rigorous global convergence results for gradient descent type algorithms have been established (often for the first time). This includes blind deconvolution~\cite{li2019rapid,ma2018implicit}, blind demixing~\cite{ling2019regularized,jung2017blind}, and matrix completion~\cite{sun2016guaranteed}.

\subsection{Alternating Direction Method of Multipliers (ADMM)}

Alternating Direction Method of Multipliers (ADMM)
is a powerful method for solving 
the joint  optimization problem:
\begin{equation}
\min\limits_{u}K(u)+ L(u)
\label{DRS}
\end{equation}
where the loss functions $ L$ and $K$ represent the data constraint $Y$ and
the object constraint $X$, respectively. 

Douglas-Rachford splitting  (DRS) is another effective method for the joint optimization problem \eqref{DRS}
with a linear constraint. 
For convex optimization, DR splitting applied to the primal problem  is equivalent to ADMM applied to the Fenchel dual problem \cite{Gabay}.
For nonconvex optimization such as \eqref{DRS} there is no clear relation between the two in general. 

However, for phase retrieval,
DRS and ADMM are essentially equivalent to each other \cite{DRS-ptycho}. 
So our subsequent presentation will mostly focus on ADMM.

ADMM seeks to minimize the augmented Lagrangian function 
\beq
\label{AL2}
\cL(y,z)=K(y)+L(z)+\lamb^*(z-y)+{\rho\over 2}\|z-y\|^2
\eeq
 alternatively  as
\beq
\label{700'}y_{k+1}&=&\arg\min_x\cL(y,z_k,\lamb_k)\\
\label{701'}z_{k+1}&=& \arg\min_z\cL(y_{k+1}, z, \lamb_k)
\eeq
or
\beq
\label{700'-2}z_{k+1}&=&\arg\min_x\cL(y_k,z,\lamb_k)\\
\label{701'-2}y_{k+1}&=& \arg\min_z\cL(y, z_{k+1}, \lamb_k)
\eeq
and then update the multiplier by the gradient ascent 
\beq
\nn\lamb_{k+1}=\lamb_k+\rho (z_{k+1}-y_{k+1}). 
\eeq

\subsection{Noise-aware ADMM}\label{sec:noisy-admm}

We apply ADMM  to the augmented Lagrangian $\cL$ \eqref{AL2}
with $K=\II_X$ (the indicator function of the set $X$) and $L$  given by the Poisson \eqref{Poisson} or Gaussian \eqref{Gaussian} loss function. 

Consider \eqref{700'-2}-\eqref{701'-2} and let
\[
u_k:=z_k+\lamb_{k-1}/\rho.
\]
Then  we have
\beq
\label{700"}z_{k+1}&=&\prox_{L/\rho}(y_k-\lamb_k/\rho)\\
\label{701"}y_{k+1}&=&\prox_{K/\rho}(z_{k+1}+\lamb_k/\rho)=AA^*(z_{k+1}+\lamb_k/\rho)\\
\label{702''}\lamb_{k+1}&=&\lamb_k+\rho (z_{k+1}-y_{k+1}). 
\eeq

We have from \eqref{702''} that
\beqn
u_{k+1}=y_{k+1}+\lamb_{k+1}/\rho.
\eeqn
By \eqref{701"}, we also have
\[
y_{k+1}=P_X (z_{k+1}+\lamb_k/\rho)=P_X u_{k+1}
\]
and 
\beqn
y_k - \lamb_k/\rho= 2y_{k}-u_k=R_X u_{k}. 
\eeqn
So  \eqref{700"} becomes
\beq\nn
z_{k+1}=\prox_{L/\rho}(R_X u_k).
\eeq
Note also that by \eqref{702''}
\beqn
u_k-P_X u_k=\lamb_k/\rho 
\eeqn
and hence
\beqn
u_{k+1}=z_{k+1}+\lamb_k/\rho=u_k-P_X u_k+\prox_{L/\rho}(R_X u_k). 
\eeqn

For the Gaussian loss function \eqref{Gaussian}, the proximal map $\prox_{L/\rho}$  can be calculated exactly
\beqn
\prox_{L/\rho}(u) &=& \frac{1}{\rho+1}b\odot\sgn{(u)}+\frac{\rho}{\rho+1}u\\
&=& \frac{1}{\rho+1}(b+\rho|u|)\odot\sgn{(u)}. 
\eeqn
The resulting iterative scheme is given by 
 \beq\label{G1}
u_{k+1} 
&=& {1\over \rho+1} u_k+{\rho-1\over \rho+1} P_X u_k+\frac{1}{\rho+1}b\odot \sgn \big(R_X u_k\big)\nn\\
&:=&\Gamma(u_k). 
\eeq 

Like AAR, \eqref{G1} can also be derived by the DRS method 
\beqn
v_{l} &= &\prox_{K/\rho}(u_l) =AA^*(u_l); \\
w_{l} &= &\prox_{ L/\rho}(2v_{l}-u_l)\\
u_{l+1}&=& u_l+ w_{l}-v_{l}
\eeqn
instead of  \eqref{dr2}-\eqref{dr2'}.  
For the Gaussian loss function \eqref{Gaussian}, the proximal map $\prox_{ L/\rho}$ is
\beqn
\prox_{L/\rho}(u) &=& \frac{1}{\rho+1}b\odot\sgn{(u)}+\frac{\rho}{\rho+1}u\\
&=& \frac{1}{\rho+1}(b+\rho|u|)\odot\sgn{(u)}, 
\eeqn
an averaged projection with the relaxation parameter $\rho$. 
With this, $\{u_k\}$ satisfy eq. \eqref{G1}. Following \cite{DRS-ptycho}, we refer to \eqref{G1} as the  {\em Gaussian-DRS} map.

For the Poisson case the DRS map has a more complicated form 
\beq	\label{P1}
		\lefteqn{u_{k+1}}\\
		& =&
		\half u_k-{1\over \rho+2} R_X u_k+
		      \frac{\rho}{2(\rho+2)}\lt[|R_Xu_{k}|^2+\frac{8(2+\rho)}{\rho^2}b^2\rt]^{1/2}\odot \sgn{\Big(R_X u_{k}\Big)}\nn\\
		      &:=&\Pi(u_k)\nn
	\eeq
	where $b^2$ is the vector with component $b^2(j)=(b(j))^2$ for all $j$. 
	
Note that $\Gamma(u)$  and $\Pi(u)$ are continuous except where $R_X u$ vanishes but $b$ does not due to
arbitrariness of  the value of the $\sgn$ function at zero.

\subsection{Fixed points} 

{With the proximal relaxation in \eqref{G1}, we can ascertain desirable properties that are either false or unproven for AAR.}

By definition, all  fixed points $u$  satisfy the equation 
\beq\nn
u&=& \Gamma(u)
\eeq
and hence after some algebra
\beq
\nn P_X u+\rho  P^\perp_X u=b\odot \sgn(R_X u)
\eeq 
which in terms of $v=R_X u$ becomes 
\beq
 \label{sa'} P_X v-\rho  P^\perp_X v=b\odot \sgn(v). 
 \eeq

{The following demonstrates the advantage of Gaussian-DRS in
avoiding   the divergence behavior of AAR (as stated in Proposition \ref{prop0} (ii) for the convex case)
when the feasibility problem is inconsistent and has no (generalized or regular) solution.

\begin{thm}\label{thm:bounded}\cite{DRS-ptycho}
Let $u_{k+1}:=\Gamma (u_k), \,\,k\in \IN$. Then,  for $\rho>0$, $\{u_k\}$ is a bounded sequence satisfying 
\beq\nn
\limsup_{k\to\infty} \|u_k\|\le {\|b\|\over \min\{\rho, 1\}}&\mbox{for}& \rho>0. 
\eeq

Moreover, if $u$ is a fixed point, then
 \beq\nn
\|u\| <\|b\|&\mbox{for}& \rho>1
\eeq
and
\beq\nn
\|b\|< \|u\|\le  \|b\|/\rho &\mbox{for}& \rho\in (0,1)
\eeq
unless  $P_X u=u$, in which case $u$ is a regular solution. 
On the other hand, for the particular value $\rho=1$, $\|u\|=\|b\|$ for any fixed point $u$.  
\end{thm}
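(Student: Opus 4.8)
The plan is to treat the two assertions separately, since both reduce to elementary estimates once the map \eqref{G1} is written in a form that isolates its only nonlinearity. Throughout I will use the orthogonal decomposition $u = P_X u + P_X^{\perp} u$ with $P_X^{\perp} = I - P_X$, together with the single observation that $w(u) := b\odot\sgn(R_X u)$ satisfies $\|w(u)\| = \|b\|$ for every $u$, because each component of $\sgn(\cdot)$ has modulus one. In particular the arbitrary phase assigned where $R_X u$ vanishes never enters the norm, so the discontinuity of $\Gamma$ noted above causes no difficulty here.

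For the boundedness claim I would first rewrite $\Gamma(u) = (\rho+1)^{-1}\bigl(M u + w(u)\bigr)$, where $M := \rho P_X + P_X^{\perp} = I + (\rho-1)P_X$. Since $M$ acts as the scalar $\rho$ on $X$ and as $1$ on $X^{\perp}$, it is self-adjoint with operator norm $\|M\| = \max\{\rho,1\}$. Combining this with $\|w(u)\| = \|b\|$ yields the scalar recursion
\[
\|u_{k+1}\| \;\le\; \frac{\max\{\rho,1\}}{\rho+1}\,\|u_k\| + \frac{\|b\|}{\rho+1}.
\]
The contraction factor $\max\{\rho,1\}/(\rho+1)$ is strictly less than $1$ for every $\rho>0$, so the standard argument for a recursion $a_{k+1}\le c\,a_k + d$ with $0<c<1$ gives both boundedness of $\{u_k\}$ and $\limsup_k\|u_k\| \le \|b\|/\bigl(\rho+1-\max\{\rho,1\}\bigr)$. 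The stated constant then follows from the identity $\rho+1-\max\{\rho,1\} = \min\{\rho,1\}$, verified by the two cases $\rho\ge 1$ and $\rho<1$.

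For the fixed-point norms I would start from the fixed-point equation, obtained by setting $u = \Gamma(u)$ in \eqref{G1} (equivalently, the pre-substitution form of \eqref{sa'}), namely $P_X u + \rho P_X^{\perp} u = w(u)$. Taking squared norms and using the orthogonality of $P_X u$ and $P_X^{\perp}u$ together with $\|w(u)\| = \|b\|$ produces the single identity $\|P_X u\|^2 + \rho^2\|P_X^{\perp}u\|^2 = \|b\|^2$. Writing $p = \|P_X u\|^2$ and $q = \|P_X^{\perp}u\|^2$, so that $\|u\|^2 = p+q$ while $\|b\|^2 = p + \rho^2 q$, all three regimes fall out at once: $\|u\|^2 = \|b\|^2 - (\rho^2-1)q$ is $<\|b\|^2$ when $\rho>1$ and $>\|b\|^2$ when $\rho<1$, with equality exactly when $q=0$, i.e. $P_X u = u$, a regular solution; the case $\rho=1$ gives $\|u\| = \|b\|$; and the upper bound $\|u\|\le\|b\|/\rho$ for $\rho\in(0,1)$ follows from $\rho^2\|u\|^2 = \rho^2 p + \rho^2 q \le p + \rho^2 q = \|b\|^2$.

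I expect no serious obstacle: the content is linear algebra plus one scalar recursion. The only point requiring care is the sharp constant in the $\limsup$, where bounding the two components $P_X u_k$ and $P_X^{\perp}u_k$ separately and recombining via the Pythagorean identity overshoots, producing $\|b\|\sqrt{1+\rho^{-2}}$ rather than $\|b\|/\min\{\rho,1\}$. The remedy, and the step I would be most careful to present cleanly, is to keep $M u + w(u)$ intact and bound it through the operator norm $\|M\| = \max\{\rho,1\}$ instead of splitting into components.
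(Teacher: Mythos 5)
Your proof is correct, and it is worth noting that the paper itself does not prove this theorem: it is quoted from \cite{DRS-ptycho}, with only the fixed-point identity $P_X u+\rho P_X^\perp u=b\odot\sgn(R_X u)$ (the pre-substitution form of \eqref{sa'}) derived in the text. Your two steps are exactly the techniques the paper deploys on the neighboring results: the scalar recursion with contraction factor $\max\{\rho,1\}/(\rho+1)$ is the same device used in the paper's proof of the RAAR bound \eqref{bound2} ($\|u_{k+1}\|\le\beta\|u_k\|+\|b\|$, then iterate) and in the perturbative Poisson-DRS estimate \eqref{881}, while the Pythagorean bookkeeping $p+\rho^2 q=\|b\|^2$, $\|u\|^2=p+q$ disposes of all the fixed-point cases at once, in the same spirit as the paper's derivation of \eqref{bound2'} from \eqref{411}. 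Two small points deserve explicit mention in a write-up: first, when $q=0$ the fixed-point equation gives $u=P_Xu=b\odot\sgn(u)$, hence $|u|=b$, which is why $P_Xu=u$ forces $u$ to be a regular solution; second, your observation that $\|w(u)\|=\|b\|$ holds regardless of the phase convention where $R_Xu$ vanishes is what makes the argument immune to the discontinuity of $\Gamma$, and your remark that keeping $M=\rho P_X+P_X^\perp$ intact (rather than bounding the two components separately) is needed to obtain the sharp constant $\|b\|/\min\{\rho,1\}$ is accurate.
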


{
The next result says that all attracting points are regular solutions and hence one need not worry about numerical stagnation. }

 \begin{thm}\cite{DRS-ptycho} Let $\rho\ge 1$. 
Let  $u$ be a fixed point such that  $R_X u$ has no vanishing components. Suppose that the {Jacobian} $J$  of
Gaussian-DRS satisfies
 \beq
 \nn
\|J(\eta)\| \le \|\eta\|,\quad\forall \eta\in \IC^N. 
   \eeq
Then
\beq
\nn
 u=P_X u=b\odot\sgn(R_X u),
\eeq
implying $u$ is a regular solution. 
\label{thm:stable}
\end{thm}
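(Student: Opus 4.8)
The plan is to differentiate the Gaussian-DRS map $\Gamma$ of \eqref{G1} at the fixed point $u$ and to show that non-expansiveness of its Jacobian is incompatible with $P_X^\perp u\neq 0$. Since $v:=R_X u$ has no vanishing components, $\sgn$ is differentiable at $v$ with componentwise Wirtinger derivative $\delta\mapsto \im\,\sgn(v)\odot|v|^{-1}\odot\Im(\overline{\sgn(v)}\odot\delta)$. Writing $s:=\sgn(v)$, $d:=b/|v|>0$ and using $R_X=2P_X-I$, the (real-linear) Jacobian becomes
\[
J(\eta)=L\eta+\frac{1}{\rho+1}\,\im\,s\odot d\odot\Im\big(\bar s\odot R_X\eta\big),\quad L:=\frac{\rho}{\rho+1}P_X+\frac{1}{\rho+1}P_X^\perp,\ \ \eta\in\IC^N.
\]
As a first sanity check I would verify, via the fixed-point identity \eqref{sa'}, that $J(\im u)=\im u$: this is the tangent to the global-phase circle $e^{\im\alpha}u$, it already saturates $\|J(\im u)\|=\|u\|$, so any additional rigidity must come from the remaining directions.

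Next I would read off the fixed-point structure in the frame rotated by $\bar s$. Splitting \eqref{sa'} into real and imaginary parts componentwise and invoking $\rho\ge 1$ forces $\Im(\bar s\odot P_X v)=\Im(\bar s\odot P_X^\perp v)=0$, hence $P_X v=s\odot\alpha$ and $P_X^\perp v=s\odot\gamma$ for real vectors $\alpha,\gamma$ with $\alpha+\gamma=|v|$ and $\alpha-\rho\gamma=b$. These give $d_j=(\alpha_j-\rho\gamma_j)/(\alpha_j+\gamma_j)$ and the key dichotomy $d\equiv 1\iff\gamma=0\iff P_X^\perp u=0$ (using $P_X^\perp v=-P_X^\perp u$); thus ``regular solution'' is exactly the statement $\gamma=0$.

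The crux is to turn $\|J\eta\|\le\|\eta\|$ into a scalar obstruction to $\gamma\neq 0$. Decomposing $\eta=a+c$ with $a=P_X\eta$, $c=P_X^\perp\eta$, and setting $p:=\Im(\bar s\odot a)$, $q:=\Im(\bar s\odot c)$, one checks that $L$ and the phase term act diagonally in the $X\oplus X^\perp$ splitting and that non-expansiveness reduces to the requirement that, for all admissible $\eta$,
\[
(2\rho+1)\|a\|^2+\rho(\rho+2)\|c\|^2\ \ge\ \sum_j d_j(p_j-q_j)\big(d_j(p_j-q_j)+2(\rho p_j+q_j)\big).
\]
Pure $X^\perp$-directions are harmless (there $L$ contracts by $\tfrac{1}{\rho+1}$ and the inequality holds with room to spare), so I would test along directions coupling $P_X^\perp v$ with a companion $a\in X$ whose rotated imaginary part $p$ is concentrated where $\gamma_j\neq 0$. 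A componentwise analysis of the resulting quadratic in $p$ shows its leading coefficient equals $d_j^2+2\rho d_j-(2\rho+1)$, which is positive precisely when $d_j>1$, i.e. exactly where $\gamma_j<0$; such a test direction then makes the right-hand side exceed the left, contradicting non-expansiveness. Hence $\gamma=0$, i.e. $u=P_X u$; substituting $u=P_X u$ (so $R_X u=u$) into \eqref{sa'} collapses it to $u=b\odot\sgn(u)$, giving $|u|=b$ and $u=P_X u=b\odot\sgn(R_X u)$, a regular solution.

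The main obstacle is the coupling constraint $a\in X$ in the crux step: the Jacobian is genuinely non-normal, pure $X^\perp$ perturbations always contract, and expansion can only be produced by a companion direction inside the generic subspace $X=\range(A)$ with controlled norm and a prescribed rotated imaginary part. Making this selection rigorous---uniformly over every non-solution fixed point, and in particular covering the case in which all nonzero $\gamma_j$ share one sign (so that all $d_j\le 1$)---is where I expect the genericity of the random mask to enter, presumably through the spectral-gap mechanism of Theorem \ref{gap} (the strict bound $\lambda_2<1$ for $\cB$ in \eqref{B'}), which already encodes that no nontrivial direction other than $\im\x$ can be neutral for the underlying Gaussian loss.
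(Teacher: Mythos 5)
Your setup is sound and goes in the right direction: the Jacobian formula, the sanity check $J(\im u)=\im u$, the polar form $P_Xv=s\odot\alpha$, $P_X^\perp v=s\odot\gamma$ with $\alpha+\gamma=|v|>0$ and $\alpha-\rho\gamma=b$, and your master inequality
\[
(2\rho+1)\|a\|^2+\rho(\rho+2)\|c\|^2\ \ge\ \sum_j d_j(p_j-q_j)\bigl(d_j(p_j-q_j)+2(\rho p_j+q_j)\bigr)
\]
are all correct (I checked the cross term). Two side remarks are off but harmless: the vanishing of $\Im(\bar s\odot P_Xv)$ and $\Im(\bar s\odot P_X^\perp v)$ needs only $\rho\neq -1$, not $\rho\ge 1$; and pure $X^\perp$-directions are not automatically ``harmless'' (the inequality there requires $d_j\le\rho+2$). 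The genuine gap is the crux step, which you leave open: you never exhibit an admissible direction that violates non-expansiveness when $\gamma\neq 0$. The route you sketch---pick $a\in X$ with $p=\Im(\bar s\odot a)$ concentrated on $\{j:\,d_j>1\}$---cannot be completed as stated, since nothing guarantees the fixed subspace $X=\range(A)$ contains such a vector; and your fallback to genericity via Theorem \ref{gap} is a red herring, because that theorem concerns the matrix $\cB$ of \eqref{B'} evaluated at the \emph{true} solution $A\x$, whereas here you must argue at an arbitrary non-solution fixed point $u$, about which it says nothing.

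The missing ideas are elementary and deterministic; no randomness is needed. First, you never invoke the orthogonality $\langle P_Xv, P_X^\perp v\rangle=0$, i.e.\ $\sum_j\alpha_j\gamma_j=0$; together with $b_j=\alpha_j-\rho\gamma_j>0$ this already kills your worried case (if all nonzero $\gamma_j$ were positive, then $\alpha_j\gamma_j>\rho\gamma_j^2$ would make the sum positive). Second, the right test direction is not a concentrated pure-$X$ vector but $\eta=\im v=\im R_Xu$ itself, which is admissible by construction: $a=\im P_Xv=\im\,s\odot\alpha\in X$, $c=\im P_X^\perp v=\im\,s\odot\gamma\in X^\perp$, so $p=\alpha$, $q=\gamma$. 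Using $d_j(\alpha_j+\gamma_j)=\alpha_j-\rho\gamma_j$ one finds componentwise $J(\im v)_j=\im\,s_j\,(\alpha_j^2+\gamma_j^2)/(\alpha_j+\gamma_j)$, and, after invoking $\sum_j\alpha_j\gamma_j=0$, the condition $\|J(\im v)\|\le\|\im v\|$ reduces to $\sum_j w_j\,\alpha_j\gamma_j\ge 0$ with $w_j=(\alpha_j^2+\gamma_j^2)/(\alpha_j+\gamma_j)^2$. Since $w_j<1$ where $\alpha_j\gamma_j>0$ and $w_j>1$ where $\alpha_j\gamma_j<0$, we get $w_j\alpha_j\gamma_j\le\alpha_j\gamma_j$ componentwise, hence $\sum_j w_j\alpha_j\gamma_j\le\sum_j\alpha_j\gamma_j=0$, with equality iff $\alpha_j\gamma_j=0$ for every $j$. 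So non-expansiveness at this single direction forces $\alpha\odot\gamma=0$; and $\alpha_j=0$ with $\gamma_j\neq 0$ is impossible, since it would give $|v_j|=\gamma_j>0$ yet $b_j=-\rho\gamma_j<0$, contradicting $b>0$. Hence $\gamma=0$, i.e.\ $u=P_Xu$, and \eqref{sa'} collapses to $u=b\odot\sgn(R_Xu)$, as claimed. Note this argument uses only $\rho>0$; the hypothesis $\rho\ge 1$ is not what your construction was missing.
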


{
The indirect implication of Theorem \ref{thm:stable} is noteworthy: In the inconsistent case (such as with noisy measurements prohibiting the existence of a regular solution), convergence is  impossible since all fixed points are locally repelling in some directions. 
The outlook, however,  need not be pessimistic:  A good iterative scheme need not converge in the traditional sense as long as it produces a good outcome when properly terminated, i.e. its iterates stay in the true solution's  vicinity of size comparable to the noise level. 
In this connection, let us  recall the previous observation that in the inconsistent case the true solution is probably not a stationary point of the loss function. Hence a convergent iterative scheme to a stationary point may not a good idea.  
The fact that Gaussian-DRS performs well in noisy blind ptychography (Figure \ref{fig:noise}(b)) with an error amplification factor of about 1/2 dispels much of the pessimism. 
}

The next result says that for any $\rho\ge 0$,  all regular solutions are
{indeed attracting} fixed points.
 
 \begin{thm}\cite{DRS-ptycho}\label{thm:stable2} Let $\rho\ge 0$. 
Let $u$ be  a nonvanishing regular solution. 
Then the {Jacobian} $J$  of 
Gaussian-DRS is nonexpansive: 
\beq
\nn
\|J(\eta)\| \le \|\eta\|,\quad \forall \eta \in \IC^N. 
\eeq

\end{thm}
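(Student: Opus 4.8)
The plan is to exploit the Douglas--Rachford structure of the Gaussian-DRS map and to differentiate the soft projection $\prox_{L/\rho}$ exactly at a nonvanishing regular solution, where the crucial cancellation $b/|R_X u|=\mathbf{1}$ takes place.

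First I would rewrite $\Gamma$ in reflection form. Setting $R_L:=2\prox_{L/\rho}-I$ and recalling $P_X^\perp=\frac12(I-R_X)$, the identity $\Gamma(u)=P_X^\perp u+\prox_{L/\rho}(R_X u)$ rearranges in one line to $\Gamma=\frac12\bigl(I+R_L R_X\bigr)$, the standard averaged form of DRS. Since $R_X=2P_X-I$ with $P_X$ the orthogonal projection onto $X$, the operator $R_X$ is a real-orthogonal involution: $R_X^*=R_X$, $R_X^2=I$, hence $\|R_X\|=1$. Differentiating then gives $J=\frac12\bigl(I+M R_X\bigr)$ with $M:=\partial R_L=2Q-I$, where $Q$ is the Jacobian of $\prox_{L/\rho}$ evaluated at $R_X u$.

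Second, I would compute $Q$. Because $u$ is a regular solution we have $P_X u=u$, so $R_X u=u$ and $|R_X u|=|u|=b$; since $u$ is nonvanishing, $b>0$ componentwise and $\sgn$ is differentiable there. The Wirtinger derivative of the phase map is $d\sgn(w)[\delta]=\frac{1}{|w|}D_w\delta$, where $D_w\delta=\delta-\sgn(w)\odot\Re\bigl(\overline{\sgn(w)}\odot\delta\bigr)$ is the componentwise orthogonal projection onto the tangent (imaginary) direction at each $w_j$. Substituting into $\prox_{L/\rho}(w)=\frac{1}{\rho+1}b\odot\sgn(w)+\frac{\rho}{\rho+1}w$ and using $b/|R_X u|=\mathbf{1}$ collapses the scaling factor, yielding $Q=\frac{1}{\rho+1}(\rho I+D)$ with $D:=D_u$. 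This cancellation is precisely where the regular-solution hypothesis enters; for a non-regular point the factor $b/|w|$ may exceed one and the argument breaks.

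Third, I would bound the spectrum. As an orthogonal projection, $D$ is real-self-adjoint with eigenvalues in $\{0,1\}$, so $Q$ is real-self-adjoint with eigenvalues $\frac{\rho}{\rho+1}$ and $1$, whence $M=2Q-I$ is real-self-adjoint with eigenvalues $\frac{\rho-1}{\rho+1}$ and $1$; for every $\rho\ge0$ both lie in $[-1,1]$, so $\|M\|\le1$. Combining $\|M\|\le1$ with $\|R_X\|=1$ gives $\|M R_X\|\le1$, and the triangle inequality applied to $J=\frac12(I+MR_X)$ yields $\|J\eta\|\le\frac12\|\eta\|+\frac12\|MR_X\eta\|\le\|\eta\|$ for all $\eta\in\IC^N$, which is the claim. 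The main obstacle is the careful real-linear (Wirtinger) differentiation of $\sgn$ together with the recognition that the resulting tangential operator is an orthogonal projection; once those facts are in place, along with the $b/|u|=\mathbf{1}$ cancellation, nonexpansiveness follows immediately from the product-of-contractions structure of $J$.
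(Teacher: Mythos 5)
Your proof is correct and essentially self-contained, which matters here because the survey states Theorem \ref{thm:stable2} without proof, deferring entirely to \cite{DRS-ptycho}. Your route is also genuinely different in style from the machinery the survey (and its source) builds around this result. You exploit the averaged-operator structure: $\Gamma=\tfrac12(I+R_LR_X)$, hence $J=\tfrac12(I+MR_X)$; at a nonvanishing regular solution the factor $b/|R_Xu|$ collapses to the all-ones vector, so $Q=\tfrac{1}{\rho+1}(\rho I+D)$ with $D$ a componentwise real-orthogonal projection onto the tangential (imaginary) directions, whence $M=2Q-I$ is real-self-adjoint with spectrum $\{1,\tfrac{\rho-1}{\rho+1}\}\subset[-1,1]$ for every $\rho\ge 0$, and nonexpansiveness follows from $\|M\|\le 1$, $\|R_X\eta\|=\|\eta\|$ and the triangle inequality. (I checked your algebra: expanding $\tfrac12(I+MR_X)$ with $R_X=2P_X-I$ reproduces exactly the direct differentiation of \eqref{G1}, namely $J=\tfrac{1}{\rho+1}I+\tfrac{\rho-1}{\rho+1}P_X+\tfrac{1}{\rho+1}DR_X$.) By contrast, the framework surrounding this theorem in the survey represents such Jacobians through the isometry $B=\diag[\sgn(\overline{A\x})]A$ and the real matrix $\cB$ of \eqref{B'}, and \cite{DRS-ptycho} carries out a full singular-value analysis of $J$ in that representation; that computation is heavier, but it buys the quantitative content recorded in Theorem \ref{cor1}: the leading singular value is exactly $1$, the second largest is an explicit function of $\rho$ and $\lamb_2$, and one reads off the optimal $\rho_*$ and the local rate \eqref{450}. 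Your soft argument yields only the qualitative bound $\|J(\eta)\|\le\|\eta\|$, but it does so for all $\rho\ge 0$ with minimal machinery and correctly isolates where each hypothesis enters: regularity forces $R_Xu=u$ and $|u|=b$ (the cancellation), and nonvanishing makes $\sgn$ differentiable at $R_Xu$. Your closing remark is also on target: at a fixed point that is not a regular solution, components with $b(j)/|R_Xu(j)|>1$ create expansive directions, which is precisely why Theorem \ref{thm:stable} runs in the converse direction and requires $\rho\ge 1$.
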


Finally {we are able to pinpoint  the parameter corresponding to
the optimal rate of convergence}.

   \begin{thm}\cite{DRS-ptycho}\label{cor1} The  leading singular value of the {Jacobian} of Gaussian-DRS is  1 and the second largest singular
   value is strictly less than 1.  Moreover the second largest singular value as a function of the parameter $\rho$ is increasing 
   over $[\rho_*,\infty)$ and decreasing over $[0,\rho_*]$
   achieving
   the global minimum 
 \beq
 \label{450}
 {\lamb_2\over \sqrt{1+\rho_* }}\quad  \mbox{at}\quad  \rho_*=2\lamb_2\sqrt{1-\lamb_2^2}\in [0,1]
 \eeq
 where $\lamb_2$ is the second largest singular value of $\cB$ in {\eqref{B'}}. 
 
 Moreover, for $\rho=1$, the local convergence rate is $\lamb_2^2$ the same as AP.
   
  \end{thm}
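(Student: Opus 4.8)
The plan is to linearise the Gaussian-DRS map $\Gamma$ in \eqref{G1} at a regular solution $u_*=e^{\im\alpha}A\x$ and to read off the spectrum of its Jacobian from the singular value decomposition of $\cB$. Since $R_Xu_*=u_*$ and $|u_*|=b$ there, the only nonlinear term is $b\odot\sgn(R_Xu)$, whose differential I compute from the elementary identity $d\,\sgn(v)[\delta]=\im\,\sgn(v)\odot|v|^{-1}\odot\Im(\overline{\sgn(v)}\odot\delta)$. Evaluating at $v=u_*$ collapses the factor $b\odot|v|^{-1}$ to $1$ and yields, with $S:=\diag[\sgn(A\x)]$,
\[
\partial\Gamma(u_*)=\frac{1}{\rho+1}\bigl(\rho\,\px+\pxp+T\rx\bigr),\qquad T\zeta:=\im\,S\,\Im(S^*\zeta),
\]
where $T$ is recognised as the real-orthogonal projection onto the ``perpendicular-phase'' subspace $\mathcal{I}:=\{\im\,\sgn(A\x)\odot c:\,c\in\IR^N\}$. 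Thus the Jacobian is built from the two real-orthogonal projections $\px,T$ and the reflection $\rx=2\px-I$.

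Next I would invoke the two-subspace (CS/Halmos) decomposition for the pair $(\px,T)$: the space splits orthogonally into the corner spaces $X\cap\mathcal{I}$, $X\cap\mathcal{I}^\perp$, $X^\perp\cap\mathcal{I}$, $X^\perp\cap\mathcal{I}^\perp$ and a generic part that is a direct sum of joint-invariant $2$-dimensional blocks, on each of which $\px=\mathrm{diag}(1,0)$ while $T$ projects onto a line at principal angle $\psi$ with $X$. The link to $\cB$ is the identity $\|T(A\zeta)\|=\|\Im(B\zeta)\|=\|\cB V(\im\zeta)\|$ for $A\zeta\in X$ (with $V(\zeta)=[\Re\zeta;\Im\zeta]$ and $B=S^*A$), which gives $\cos\psi_k=\lamb_k$, the singular values of $\cB$. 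On the $\lamb=\cos\psi$ block,
\[
\partial\Gamma(u_*)\big|_{\text{block}}=\frac{1}{\rho+1}\begin{pmatrix}\rho+\lamb^2 & -\lamb\mu\\ \lamb\mu & \lamb^2\end{pmatrix},\qquad \mu=\sqrt{1-\lamb^2},
\]
with determinant $\lamb^2/(\rho+1)$ and trace $(\rho+2\lamb^2)/(\rho+1)$, while the corner spaces contribute the scalars $1$ (on $X\cap\mathcal{I}$, the global-phase direction $\im\x$), $\rho/(\rho+1)$, $1/(\rho+1)$ and $0$. By the nonexpansiveness of Theorem~\ref{thm:stable2} the top singular value is $\le1$; it is attained by the central scalar $1$, and since every $\lamb_k<1$ for $k\ge2$ and $\rho>0$ all other blocks are strict contractions, so the second largest singular value is $<1$.

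For the $\rho$-dependence I would analyse the $\lamb_2$-block, whose eigenvalues are $\frac{\rho+2\lamb_2^2\pm\sqrt{\rho^2-4\lamb_2^2(1-\lamb_2^2)}}{2(\rho+1)}$. For $\rho\le\rho_*:=2\lamb_2\sqrt{1-\lamb_2^2}$ they are complex with modulus $\sqrt{\det}=\lamb_2/\sqrt{1+\rho}$, a strictly decreasing function of $\rho$; for $\rho\ge\rho_*$ they are real and the larger branch is increasing; the two branches meet continuously at $\rho_*$ with common value $\lamb_2/\sqrt{1+\rho_*}$ (using $\lamb_2+\mu_2=\sqrt{1+2\lamb_2\mu_2}$), which is thus the global minimum. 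Setting $\rho=1$ and using $\sqrt{1-4\lamb_2^2(1-\lamb_2^2)}=|1-2\lamb_2^2|$ returns the larger eigenvalue $\lamb_2^2$, matching AP. A point I must secure first is that this block really dictates the off-centre rate: the complementary pairing $\lamb_k^2+\lamb_{2n^2+1-k}^2=1$ of the singular values of $\cB$ (a consequence of the block form \eqref{B'}) together with the ordering forces $\lamb_2\ge1/\sqrt2$, so $\lamb_2^2\ge1/2$ dominates the corner rates $\rho/(\rho+1),\,1/(\rho+1)$ and the partner (small-$\lamb$) blocks, and also guarantees $\rho_*\in[0,1]$.

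The main obstacle I anticipate is precisely this dominance bookkeeping combined with the non-normality of the real-linear Jacobian. Because $\partial\Gamma$ is $\IR$-linear but not $\IC$-linear, the critical block is defective at $\rho=\rho_*$, so the convergence rate must be read from the spectral radius (eigenvalue modulus) rather than from a genuine singular value, and one must verify that no corner or partner block overtakes the $\lamb_2$-block over the whole range of $\rho$ — which is exactly where the inequality $\lamb_2\ge1/\sqrt2$ does the decisive work.
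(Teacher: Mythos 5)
Your core computation is right, and it is essentially the analysis that underlies the quoted result (the survey states this theorem without proof, citing \cite{DRS-ptycho}; the apparatus it does set up --- the matrix $\cB$, the Jacobian formulas of Section \ref{ss:conv}, the pairing $\lamb_k^2+\lamb_{2n^2+1-k}^2=1$ forcing $\lamb_2^2\in[1/2,1]$, and the AM--GM remark after the theorem --- is exactly what your reduction reproduces). The Jacobian $\frac{1}{\rho+1}\bigl(\rho\,\px+\pxp+T\rx\bigr)$ is correct; so are the identification of the principal-angle cosines between $X$ and $\mathcal{I}$ with the singular values of $\cB$, the $2\times 2$ block with trace $(\rho+2\lamb^2)/(\rho+1)$ and determinant $\lamb^2/(\rho+1)$, the complex/real dichotomy at $\rho_*=2\lamb_2\sqrt{1-\lamb_2^2}$, the continuity identity $\lamb_2+\mu_2=\sqrt{1+2\lamb_2\mu_2}$, the monotonicity on either side of $\rho_*$, and the $\rho=1$ evaluation via $\sqrt{1-4\lamb_2^2(1-\lamb_2^2)}=|1-2\lamb_2^2|$ together with $\lamb_2^2\ge\tfrac12$. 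Your caveat that the theorem's ``singular values'' must be read as eigenvalue moduli of the non-normal real-linear Jacobian is also well taken: at $\rho=1$ the literal top singular value of the $\lamb_2$-block is $\bigl((1+4\lamb_2^2+\sqrt{1+8\lamb_2^2})/8\bigr)^{1/2}\neq\lamb_2^2$.

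The genuine gap is the final dominance claim, and the inequality $\lamb_2\ge 1/\sqrt2$ does \emph{not} close it. The corner $X^\perp\cap\mathcal{I}^\perp$ is not negligible: its real dimension is at least $2N-\dim_{\IR}X-\dim_{\IR}\mathcal{I}=N-2n^2>0$ (the paper requires $N>2n^2$, and in the CDP setups $N\approx 4n^2$ or $8n^2$), and on it the Jacobian acts as the scalar $\frac{1}{1+\rho}$. In the complex regime one has $\frac{1}{1+\rho}>\frac{\lamb_2}{\sqrt{1+\rho}}$ precisely when $\rho<\mu_2^2/\lamb_2^2$ with $\mu_2=\sqrt{1-\lamb_2^2}$, and this interval is nonempty whenever $\lamb_2<1$: for $\lamb_2^2=\tfrac12$ and $\rho=\tfrac12$ the corner rate is $\tfrac23$ while the block modulus is $1/\sqrt3\approx 0.577$. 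So ``no corner overtakes the $\lamb_2$-block over the whole range of $\rho$'' is false; $\lamb_2^2\ge\tfrac12$ only disposes of $\frac{\rho}{1+\rho}$, of the partner blocks (after a case analysis in the real regime), and of the $\rho=1$ comparison. Two patches rescue your outline: (i) observe that $X^\perp\cap\mathcal{I}$ and $X^\perp\cap\mathcal{I}^\perp$ are Jacobian-invariant subspaces contained in $\ker A^+$, hence never enter the object-domain error $A^+(u_k-e^{\im\alpha}A\x)$; restricted accordingly, the rate equals the $\lamb_2$-block radius $r_2(\rho)$ for every $\rho$, since $r_2(\rho)\ge\frac{\rho}{1+\rho}$ does follow from $\lamb_2^2\ge\tfrac12$; or (ii) keep the full Jacobian and note that on $[0,\rho_*]$ both competitors $\frac{1}{1+\rho}$ and $\frac{\lamb_2}{\sqrt{1+\rho}}$ are decreasing, and that the crossover point $\mu_2^2/\lamb_2^2$ lies below $\rho_*$ (equivalent to $\mu_2\le 2\lamb_2^3$, which follows from $\lamb_2^2\ge\tfrac12$), so the monotonicity statement, the minimizer $\rho_*$, the minimum value $\lamb_2/\sqrt{1+\rho_*}$, and the $\rho=1$ rate $\max(\tfrac12,\lamb_2^2)=\lamb_2^2$ all survive. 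Without one of these corrections your proof passes through a false intermediate statement.
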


 By arithmetic-geometric-mean inequality,
 \[
 \rho_*\le 2 \times \half \sqrt{\lamb_2^2+1-\lamb_2^2}=1
 \]
 where the equality holds only when $\lamb^2_2=1/{2}$. 
 
 As $\lamb^2_2$ tends to 1, $\rho_*$ tends to 0 and as $\lamb^2_2$ tends to $\half$, $\rho_*$ tends to 1.
 Recall that $\lamb_2^2+\lamb^2_{2n^2-1}=1$ and hence $ [1/2,1]$ is the proper range of $\lamb^2_2$.

\subsection{Perturbation analysis for Poisson-DRS}
The full analysis of the  Poisson-DRS \eqref{P1} is more challenging. Instead, we give a
perturbative derivation of  analogous result to Theorem \ref{thm:bounded} for the Poisson-DRS
with small positive $\rho$.

For small $\rho$, by keeping only the terms up to $\cO(\rho)$ we obtain
the perturbed DRS:
\beqn
u_{k+1}=\half u_k-\half(1-{\rho\over 2}) R_X u_k +P_Y R_X u_k.
\eeqn

Writing 
\[
I=P_X+P^\perp_X\quad\mbox{and}\quad R_X=P_X-P^\perp_X, 
\]
we then have the estimates
\beqn
 \|u_{k+1}\|
&\le& \|{\rho\over 4} P_X u_k+(1-{\rho\over 4})P^\perp_X u_k\|+\|P_Y R_X u_k\|\nn\\
&\le &(1-{\rho\over 4})\|u_k\|+\|b\| 
\eeqn
since $\rho$ is small. 
Iterating this bound, we obtain
\beqn
\|u_{k+1}\|\le (1-{\rho\over 4})^k\|u_1\|+\|b\|\sum_{j=0}^{k-1}(1-{\rho\over 4})^j
\eeqn
and hence
\beq
\label{881}
\limsup_{k\to\infty} \|u_k\|\le {4\over \rho} \|b\|. 
\eeq
Note that the small $\rho$ limit and the Poisson-to-Gaussian limit
do not commune, resulting in a different constant in \eqref{881} from Theorem \ref{thm:bounded}. 

\subsection{Noise-agnostic  method}\label{sec:agn}

In addition to  AAR, the {Relaxed Averaged Alternating Reflections (RAAR)} is another noise-agnostic method
which is formulated as the non-convex optimization problem
\beq
\min \|P_X^\perp z\|^2,\quad \mbox{subject to}\quad |z|=b\label{711}
\eeq
or equivalently \eqref{DRS} with the loss functions
\beq
\label{712}
K(y) =\half \|P_X^\perp y\|^2,\quad L(z)=\II_{b}(z)
\eeq
where   the hard constraint represented by the indicator function $\II_{b}$ of the set $\{z\in \IC^N: |z|=b\}$ is oblivious to the measurement noise while the choice of $K$ represents a relaxation of
the object domain constraint. 

If  the noisy phase retrieval problem is consistent, then the minimum value of \eqref{711} is zero and the minimizer is a regular solution 
(corresponding to the noisy data $b$). If the noisy problem is inconsistent, then the minimum value
of \eqref{711} is unknown and the minimizer  $z_*$ is the  generalized solution with the least  inconsistent component. In this case we can use $P_X z_*$ as the reconstruction. 

Let us apply  ADMM to the augmented Lagrangian function
\beq
\cL_\gamma(y,z,\lamb):= K(y) +L(z) +\lamb^*(z-y)+{\gamma\over 2}\|z-y\|^2\nn
\eeq
with $K$ and $L$ given in \eqref{712} in the order
\beq
\label{700}y_{k+1}&=&\arg\min_y\cL_\gamma(y,z_k,\lamb_k)\\
\label{701}z_{k+1}&=& \arg\min_{|z|=b}\cL_\gamma(y_{k+1}, z, \lamb_k)\\
\label{702}\lamb_{k+1}&=&\lamb_k+\gamma (z_{k+1}-y_{k+1}). 
\eeq

Solving \eqref{700} we have
\beq
y_{k+1}=\lt(I+P_X^\perp/\gamma\rt)^{-1}(z_k+\lamb_k/\gamma)=\lt(I-\beta P^\perp_X\rt) (z_k+\lamb_k/\gamma)\label{703}
\eeq
where
\beq
\label{beta}
 \beta:={1\over 1+\gamma}<1.
 \eeq

Likewise, solving \eqref{701}  we obtain
\beqn
z_{k+1}=P_Y u_{k+1},\quad u_{k+1}:= y_{k+1}-\lamb_k/\gamma
\eeqn
and hence by \eqref{702}, \eqref{703}
\beqn
u_{k+1}=(I-\beta \pxp) (\py u_k+\lamb_k/\gamma)-\lamb_k/\gamma.
\eeqn
On the other hand, we can rewrite \eqref{702} as
\[
\lamb_{k}/\gamma=z_{k}-u_{k}=P_Y u_{k}-u_{k}
\]
and hence
\beqn
u_{k+1}&=&(I-\beta \pxp) \py u_k-\beta\pxp \lamb_k/\gamma\\
&=& (I-\beta \pxp) \py u_k+\beta\pxp (I-\py) u_k\nn
\eeqn
which after reorganization becomes 
\beq
\label{raar}
u_{k+1}={T}_\beta (u_k):=\beta \lt(\half I+\half R_XR_Y\rt)u_k+(1-\beta)P_Y u_k.
\eeq

The scheme \eqref{raar} {resembles} the  RAAR method
first proposed in  \cite{Luke}, \cite{Luke2} {and  formulated in the object domain} from  a different perspective.  RAAR becomes AAR for $\beta=1$ (obviously) and AP for $\beta=\half$ (after some algebra). 

Let us  demonstrate again  that properly formulated DRS method can also lead to RAAR. 
Let us apply \eqref{dr2}-\eqref{dr2'} to
\eqref{DRS}  in the order
\beq
\label{715}
z_{k+1}&=&\prox_{L/\gamma} (u_k)= P_Y u_k\\
\label{716}y_{k+1}&=& \prox_{K/\gamma} (2z_{k+1}-u_k)= (I-\beta \pxp)(2 \py u_k-u_k)\\
u_{k+1}&=& u_k+y_{k+1}-z_{k+1}.\label{717'}
\eeq
Substituting \eqref{715} and \eqref{716} into \eqref{717'} we obtain after straightforward algebra 
the RAAR map \eqref{raar}.

With the splitting $I$ and $R_X$ as 
\[
I=P_X +P^\perp_X \quad\mbox{and}\quad R_X =P_X-P^\perp_X, 
\]
the fixed point equation $u={T}_\beta(u)$ becomes
\beqn
P_X u+P_X^\perp u=\beta P_X^\perp u+\lt[P_X+(1-2\beta)P_X^\perp\rt] P_Y u
\eeqn
from which it follows that 
\beq
\px u= \px\py u, && \pxp u=\lt({1-2\beta\over 1-\beta}\rt) \pxp\py u. \nn
\eeq
and hence
\beq
\px u-\lt({1-\beta\over 2\beta-1}\rt) \pxp u&=& \px\py u+\pxp\py u=\py u.\label{411}
\eeq
If the fixed point satisfies  $\pxp u=0$, then \eqref{411} implies  
\[
u=\px u=\py u=b\odot\sgn(u)
\]
i.e. $u$ is a regular solution. 

Notably \eqref{411} is exactly the RAAR fixed point equation \eqref{sa'} with the corresponding parameter
\beq
\rho&=& {1-\beta\over 2\beta-1} \in [0,\infty) \label{beta2}
\eeq
which tends to 0 and $\infty$ as $\beta$ tends to 1 and $\half$, respectively.

Local geometric convergence of RAAR has been proved in \cite{raar17}. Moreover, 
like Theorem \ref{thm:bounded} RAAR possesses the desirable property that every RAAR sequence is explicitly 
bounded  in terms of $\beta$ as follows. 
\begin{thm}Let $\{u_k\}$ be an RAAR-iterated sequence. Then
\beq\label{bound2}
\limsup_{k\to\infty} \|u_k\|\le {\|b\|\over 1-\beta}.
\eeq

Let $u$ be an RAAR fixed point. Then
\beq
\label{bound2'} \|u\|
&\le&\|b\| \times \lt\{\begin{matrix}
{2\beta-1\over 1-\beta}&&\mbox{for $\beta\in [2/3, 1)$}\\
1&&\mbox{for $\beta \in [1/2, 2/3]$}
\end{matrix}\rt.
\eeq

\end{thm}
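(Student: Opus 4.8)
The plan is to reduce the map $T_\beta$ in \eqref{raar} to an explicit pair of decoupled recursions on the orthogonal components $\px u_k$ and $\pxp u_k$, exploiting that $\px$ is an orthogonal linear projection (so $\rx$ is an isometry and $\px,\pxp$ are nonexpansive) and that $\py u = b\odot\sgn(u)$ always has norm exactly $\|b\|$. First I would set $z_k := \py u_k$, so that $R_Y u_k = 2z_k - u_k$, substitute into \eqref{raar}, and use $\rx u_k = 2\px u_k - u_k$ together with $\rx z_k = \px z_k - \pxp z_k$ to collapse the map to
\[
T_\beta(u_k) = \px z_k + \beta\,\pxp u_k + (1-2\beta)\,\pxp z_k .
\]
Projecting onto $X$ and $X^\perp$ then yields the two recursions
\[
\px u_{k+1} = \px z_k, \qquad \pxp u_{k+1} = \beta\,\pxp u_k + (1-2\beta)\,\pxp z_k .
\]

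For the $\limsup$ bound I would argue componentwise. Since $\|z_k\|=\|b\|$, the first recursion gives $\|\px u_{k+1}\|\le\|b\|$ for every $k$, while the second, using $|1-2\beta|=2\beta-1$ on the RAAR range $\beta\ge\half$ and $\|\pxp z_k\|\le\|z_k\|=\|b\|$, gives the contraction-plus-constant inequality $\|\pxp u_{k+1}\|\le \beta\|\pxp u_k\|+(2\beta-1)\|b\|$. Iterating this geometric bound produces $\limsup_k\|\pxp u_k\|\le (2\beta-1)\|b\|/(1-\beta)$. Combining the two orthogonal components by Pythagoras, $\limsup_k\|u_k\|^2$ is at most $\|b\|^2\bigl[\,1+((2\beta-1)/(1-\beta))^2\,\bigr]$, and the advertised bound $\|b\|/(1-\beta)$ follows once one verifies the elementary scalar inequality $(1-\beta)^2+(2\beta-1)^2\le 1$, which holds precisely because it is equivalent to $(5\beta-1)(\beta-1)\le 0$ on the relevant interval.

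For the fixed-point bound I would drop the subscripts and impose equilibrium in the same two recursions. The second becomes $(1-\beta)\pxp u = (1-2\beta)\pxp z$, so $\pxp u = \tfrac{1-2\beta}{1-\beta}\,\pxp z$ and $\px u = \px z$, where $z=\py u$ with $\|z\|=\|b\|$. Writing $\kappa:=(2\beta-1)/(1-\beta)$ and $s:=\|\pxp z\|^2\in[0,\|b\|^2]$, orthogonality gives $\|u\|^2 = (\|b\|^2-s)+\kappa^2 s = \|b\|^2 + (\kappa^2-1)s$. Maximizing this affine-in-$s$ quantity over $s\in[0,\|b\|^2]$ splits into two cases by the sign of $\kappa^2-1$: since $\kappa\ge 1\iff\beta\ge 2/3$, the maximum is attained at $s=\|b\|^2$ for $\beta\in[2/3,1)$, giving $\|u\|\le\kappa\|b\|=\tfrac{2\beta-1}{1-\beta}\|b\|$, and at $s=0$ for $\beta\in[\half,2/3]$, giving $\|u\|\le\|b\|$; the two expressions agree at $\beta=2/3$, where $\kappa=1$.

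The main obstacle is the bookkeeping in the very first step: the simplification of $T_\beta$ into the clean decoupled form is where all the geometric structure is used, and one must be careful that $\py$ is \emph{not} nonexpansive (it projects onto the nonconvex torus $Y$), so only the exact identity $\|\py u\|=\|b\|$ and the nonexpansiveness of the \emph{linear} projections $\px,\pxp$ may be invoked. Once the decoupled recursions are in hand, both bounds are routine; the only slightly nonobvious ingredient is the scalar inequality $(1-\beta)^2+(2\beta-1)^2\le 1$ that collapses the two-component $\limsup$ estimate into the single bound $\|b\|/(1-\beta)$.
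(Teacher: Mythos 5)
Your proposal is correct, and the fixed-point half is essentially the paper's own argument: the paper derives exactly your equilibrium relations ($P_X u = P_X P_Y u$ and $P_X^\perp u = \tfrac{1-2\beta}{1-\beta}P_X^\perp P_Y u$, its equation (5.48)) and then bounds $\|u\|\le \max\bigl(\tfrac{2\beta-1}{1-\beta},1\bigr)\|P_Y u\|$ by orthogonality, which is the same case split you obtain by maximizing the affine function of $s=\|P_X^\perp P_Y u\|^2$. Where you genuinely diverge is the $\limsup$ bound. The paper never decouples the iteration: it groups the map as $u_{k+1}=\beta P_X^\perp u_k + \bigl[P_X + (1-2\beta)P_X^\perp\bigr]P_Y u_k$, notes that the bracketed term has norm at most $\|P_Y u_k\|=\|b\|$ (since $|1-2\beta|\le 1$), and iterates the single scalar recursion $\|u_{k+1}\|\le\beta\|u_k\|+\|b\|$ to get $\|b\|/(1-\beta)$ directly. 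Your componentwise version — $\|P_X u_{k+1}\|\le\|b\|$ and $\|P_X^\perp u_{k+1}\|\le\beta\|P_X^\perp u_k\|+(2\beta-1)\|b\|$ — costs you the extra scalar verification $(1-\beta)^2+(2\beta-1)^2\le 1$ (which is correct: it factors as $(5\beta-1)(\beta-1)\le 0$ on $[\tfrac12,1)$), but it buys a strictly sharper intermediate estimate,
\begin{equation*}
\limsup_{k\to\infty}\|u_k\| \;\le\; \|b\|\sqrt{1+\Bigl(\tfrac{2\beta-1}{1-\beta}\Bigr)^2}\;\le\;\frac{\|b\|}{1-\beta},
\end{equation*}
which in particular makes transparent that the transient bound and the fixed-point bound come from the same two-component mechanism, with the only difference being whether the $P_X^\perp$ component has equilibrated. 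Your cautionary remark that only $\|P_Y u\|=\|b\|$ (and not nonexpansiveness of $P_Y$) may be used is exactly the right scruple, and both proofs respect it.
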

\begin{proof}
For $\beta\in [\half, 1)$, $2\beta-1\in [0,1)$ and hence we have
\beqn
\|u_{k+1}\|&\le& \beta \|u_k\|+\|P_Yu_k\|\\
&=& \beta\|u_k\|+\|b\|. 
\eeqn
Iterating the above equation, we obtain
\beqn
\|u_{k+1}\|&\le& \beta^k\|u_1\|+\|b\|\sum_{j=0}^{k-1} \beta^j
\eeqn
and conclude \eqref{bound2}.

From \eqref{411} it follows that 
\beqn
\|u\|&\le& \max \lt({2\beta-1\over 1-\beta}, 1\rt)\|\py u\|\\
&\le& \max \lt({2\beta-1\over 1-\beta}, 1\rt)\|b\|
\eeqn
and hence \eqref{bound2'}.
\end{proof}

\subsection{Optimal parameter}
We briefly explore the optimal parameter for Gaussian-DRS \eqref{G1} in view of the optimal convergence rate \eqref{450}. 

Our test image  is 256-by-256 Cameraman+ $\im$ Barbara (CiB). 

We use three baseline algorithms as benchmark. The first two are AAR and RAAR. 
The third is Gaussian-DRS with $\rho=1$:  
\beq
\Gamma_1(u)&=&\half u+\half P_YR_X u\label{Gdrs}
\eeq
given the basic guarantee that for $\rho\ge 0$ the regular solutions are   attracting (Theorem \ref{thm:stable2}), that  for the range $\rho\ge 1$ no fixed points other than the regular solution(s) are locally attracting (Theorem \ref{thm:stable}) and that 
Gaussian-DRS with $\rho=1$ produces the best convergence rate for any $\rho \ge 1$ (Corollary \eqref{cor1}).
 The contrast between \eqref{Gdrs}  and AAR \eqref{dr} is noteworthy. 
The simplicity of the form \eqref{Gdrs} suggests the name {\em Averaged Projection Reflection} (APR) algorithm. 

\label{sec:num2}
\begin{figure}[t]
\centering
\subfigure[]{\includegraphics[width=5cm]{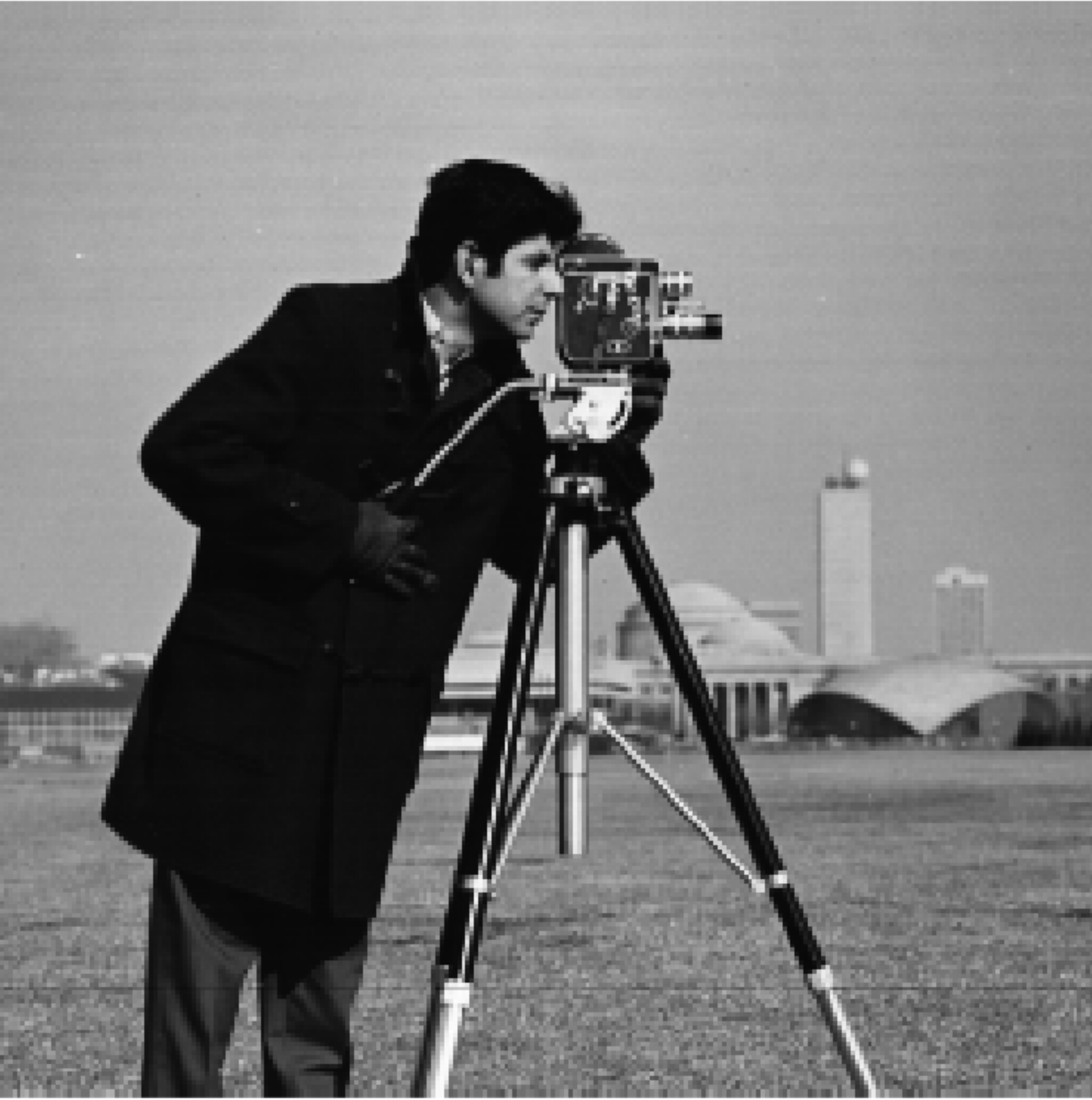}}\hspace{1cm}
\subfigure[]{\includegraphics[width=5cm]{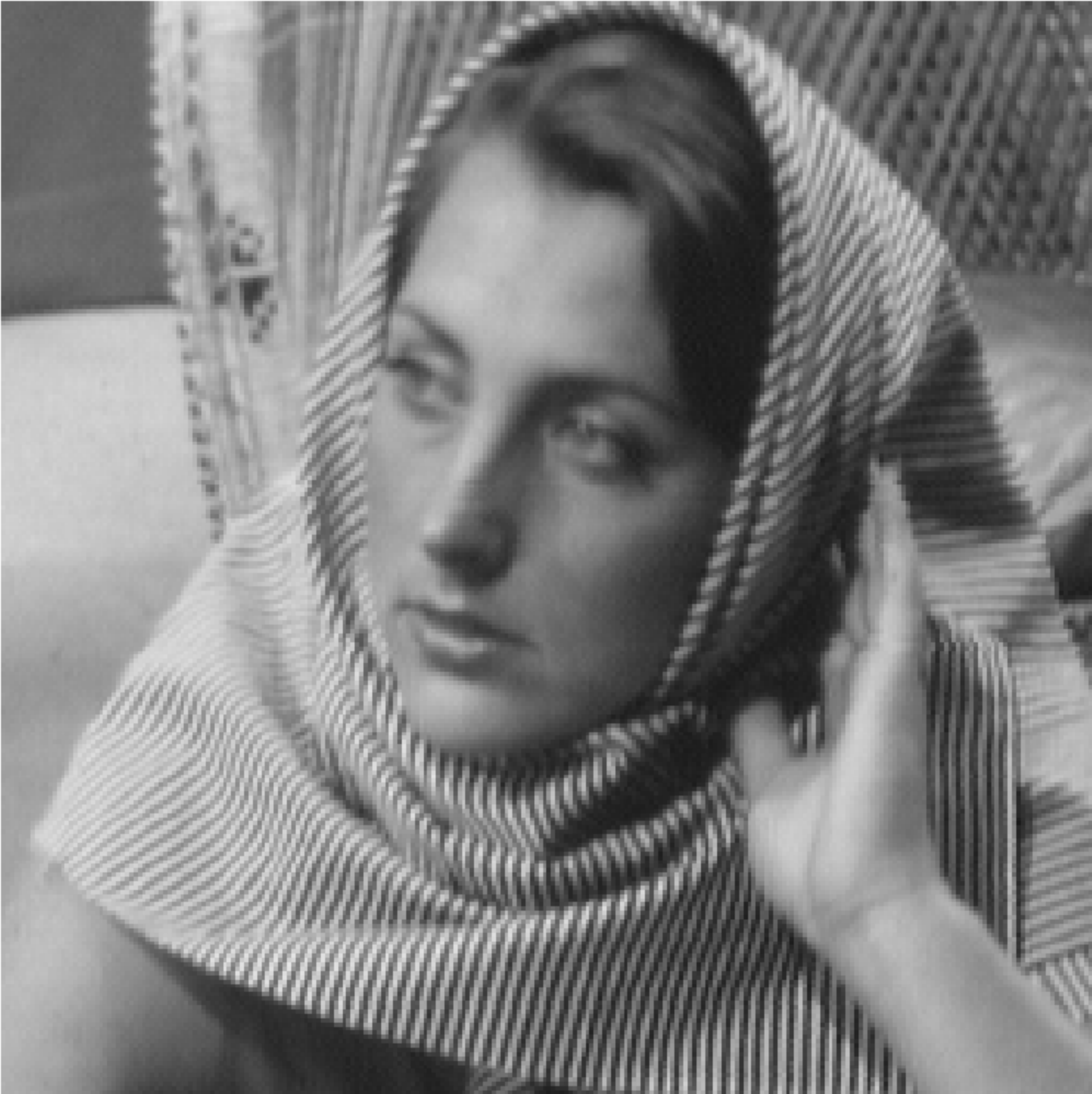}}
  \caption{(a) The real part and (b) the imaginary part of the test image $256\times 256$ CiB.}
  \label{fig:image}
\end{figure}%

\begin{figure}
\centering
\subfigure[$\rho=1.1,\beta=0.9$]{\includegraphics[width=6cm]{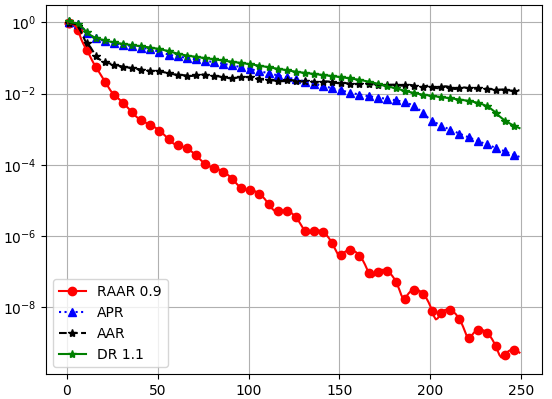}}
\subfigure[$\rho=0.5,\beta=0.9$]{\includegraphics[width=6cm]{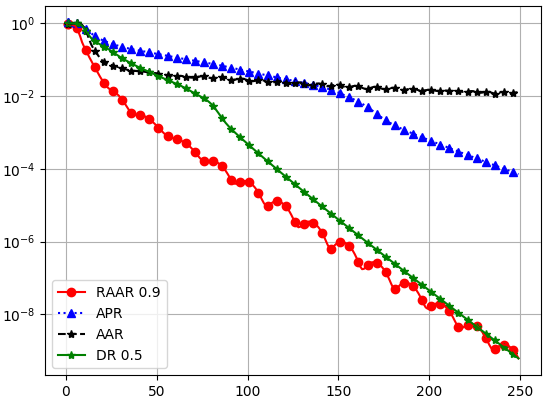}}
\subfigure[$\rho=0.3,\beta=0.9$]{\includegraphics[width=6cm]{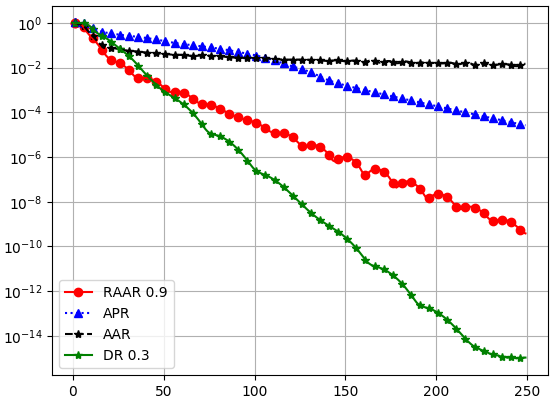}}
\subfigure[$\rho=0.1,\beta=0.9$]{\includegraphics[width=6cm]{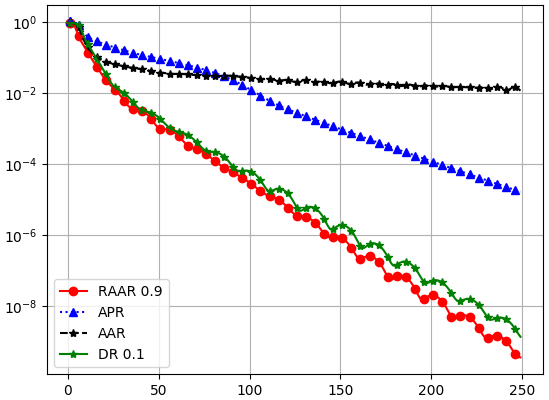}}
 \caption{Reconstruction (relative) error vs. iteration by various methods indicated in the legend with random initialization. 
 The straight-line feature (in all but AAR) in the semi-log plot indicates geometric convergence. 
 }
 \label{fig:raar}
 \end{figure}

According to \cite{raar17} the optimal $\beta$ is usually between 0.8 and 0.9,  corresponding to $\rho=0.125$ and $0.333$ according to \eqref{beta2}. We set  $\beta=0.9$ in Figure \ref{fig:raar}. 

 In the experiments, we consider the setting of non-ptychographic phase retrieval with two coded diffraction patterns, one is the plane wave ($\mu=1$) and the other is $\mu=\exp(\im \theta)$ where $ \theta$  is independent  and uniformly distributed over $[0,2\pi)$. Theory of uniqueness of solution, up to a constant phase factor, is given in \cite{unique}.

 Figure \ref{fig:raar} shows the relative error (modulo a constant phase factor) versus iteration of RAAR ($\beta=0.9$ round-bullet solid line), APR (blue-triangle dotted line), AAR
 (black-star dashed line) and Gaussian-DRS with (a) $\rho=1.1, $ (b) $\rho=0.5,$ (c) $\rho=0.3$ and (d) $\rho=0.1$. 
 Note that the AAR, APR and RAAR lines vary slightly across different plots because of random initialization. 
 
 The straight-line feature (in all but AAR) in the semi-log plot indicates global geometric convergence. 
 The case with AAR is less clear in Figure \ref{fig:raar}. But it has been shown that the AAR sequence converges geometrically near the true object  (after applying $A^+$) but converges in power-law ($ \sim k^{-\alpha}$ with $\alpha\in [1,2]$) from random initialization \cite{FDR}.

 Figure \ref{fig:raar} shows  that  APR outperforms  AAR but underperforms RAAR. By decreasing $\rho$ to either $0.5$ or $0.1$, the performance of Gaussian-DRS closely matches that of RAAR. The optimal parameter
 appears to lie between  $0.1$ and $0.5$. For example, with $\rho=0.3,$ Gaussian-DRS significantly outperforms RAAR.  
The oscillatory behavior of Gaussian-DRS in (d) is due to the dominant complex eigenvalue of $J$.

\section{Initialization strategies}\label{s:init}

Initialization is an important part of non-convex optimization to avoid local minima. 
Good initialization can also help to reduce the number of iterations of iterative solvers for convex optimization problems.
A simple idea for effective initialization is to first capture basic features of the original object. 
There are three tasks we want a good initializer to fulfill: (i)  it should ensure that the algorithm converges to the correct solution; (ii) it should reduce the number of iterations; and (iii) it should be inexpensive to compute. Naturally, there will be a trade-off between achieving the first two tasks  and task~(iii).

\subsection{Spectral initialization}\label{ss:spectral}

Spectral initialization~\cite{candes2015phase} has become a popular means in phase retrieval, bilinear compressive sensing, matrix completion, and related areas. In a nutshell, one chooses the leading eigenvector of the positive semidefinite Hermitian matrix 
\begin{equation}\label{bmatrix}
Y:= \sum_k y_k a_k a_k^{\ast} = A^{\ast} \diag(y) A
\end{equation}
as initializer.  The leading eigenvector of $Y$
can be computed efficiently via the power method by repeatedly applying $A$, entrywise multiplication by $y$ and $A^{\ast}$.

To give an intuitive explanation for this choice, consider the case in which the measurement vectors $a_k$ are i.i.d.\ ${\mathcal N}(0,I_n)$.
Let $x$ be a solution to~eqref{eq:data} so that $y_k =  |\langle x, a_k \rangle |^2$ for $k=1,\dots,N$.
In the Gaussian model, a simple moment calculation gives
\begin{equation}\nn
\EE\left[ \frac{1}{N} \sum_{k=1}^m y_k a_k a_k^{\ast} \right] = I_n + 2xx^{\ast}.
\end{equation}
By the strong law of large numbers, the matrix $Y = \sum_k y_k a_k a_k^{\ast}$  converges to the right-hand side as the number of samples goes to infinity. Since any leading eigenvector of $I_n + 2xx^{\ast}$ is of the form  $\lambda x$ for some 
$\lambda \in \IR$,  it follows that if we had infinitely many samples, this spectral initialization would recover $x$ exactly (up to a usual global phase factor). Moreover, the ratio between the top two eigenvalues of   $I_n + 2xx^{\ast}$ is $1+2 \|x\|_2^2$, which means these eigenvalues are
well separated unless $\|x\|_2$ is very small. This in turn implies that the power method would converge fast.
For a finite amount of measurements, the leading eigenvector of $Y$ will of course not recover $x$ exactly, but with the power of {\em concentration of measure} on our side, we can hope that the resulting (properly normalized) eigenvector will serve as good initial guess to the true solution.
This is made precise in connection with Wirtinger Flow in Theorem~\ref{th:WT}.

There is a nice connection between the spectral initialization and the PhaseLift approach, which will become evident in
Section~\ref{s:convex}.

\subsection{Null initialization}\label{ss:null}

Another approach to construct an effective initializer proceeds by  choosing  a threshold  for separating the ``weak"
 signals from the ``strong" signals. The classification of signals into the class of weak signals and the class of
 strong signals is a basic feature of the data.
 
Let $I\subset \{1,\cdots,N\}$ be the support set of the weak signals and $I_c$ its complement such that $b(i)\leq b(j)$ for all $i\in I, j\in I_c$.  In other words, $\{b(i): i\in I_c\}$ are the strong signals. Denote  the sub-row  matrices  consisting of $\{a_i\}_{i\in I} $ and $\{a_j\}_{j\in I_c}$  by $A_I$ and $A_{I_c}$, respectively. Let $b_I=|A_I\x|$ and $b_{I_c}=|A_{I_c}\x|$. We always assume $|I|\ge n$ so that $A_I$ has a trivial null space and hence preserves the information of $\x$.

The significance of the weak signal support $I$ lies in the fact that $I$ contains the
best loci  to ``linearize" the problem since 
$A^*_I \x$ is small. We then  initialize the object estimate  by the
 {\em ground state} of the sub-row matrix $A_I$, i.e. 
the following variational principle
  \beq
\label{nul'}
x_{\rm null}\in \hbox{\rm arg}\min\lt\{\|A_I x\|^2: x\in \IC^{n}, {\|x\|=\|b\|}\rt\}
 \eeq
 which by the isometric property of $A$ is equivalent to
   \beq
\label{strong}
x_{\rm null}\in \hbox{\rm arg}\max\lt\{\|A_{I_c} x\|^2: x\in \IC^{n}, {\|x\|=\|b\|}\rt\}. 
 \eeq
Note that \eqref{strong} can be solved by the power method for finding the leading singular value.
The resulting initial estimate $x_{\rm null}$ is called the null vector \cite{null}, \cite{AP-phasing} (see \cite{TAF18} for
the similar idea for real-valued Gaussian matrices). 

In the case of non-blind ptychography,  for each diffraction pattern $k$, the ``weak signals" are those less than some chosen threshold $\tau_k$ and we collect  the corresponding indices in the set $I_k$. Let $I=\cup_k I_k$. 
We then  initialize the object estimate  by  the variational principle \eqref{nul'} or \eqref{strong}.

A key question then is how to choose the threshold for separating weak from strong signals?
The following performance guarantee provides a guideline for choosing the threshold.

\begin{thm}\label{thm:null} \cite{null}
 Let $A$ be an $N\times n$ i.i.d.\ complex Gaussian matrix and 
let 
  \beq
\label{nul3}
 \xi_{\rm null}\in \hbox{\rm arg}\min\lt\{\|A_I x\|^2: x\in \IC^{n}, {\|x\|=\|\x\|}\rt\}. 
 \eeq

Let  $ \ep:={|I|/N}<1,\quad {|I|>n}.
$
 Then for any $\x\in \IC^n$  the error bound 
\beq
\label{error}
\|\x\x^* - \xi_{\rm null} \xi_{\rm null}^*\|_{\rm F}/\|\x\|^2
&\le& c_0\sqrt{\ep} 
 \eeq
holds with probability at least $1-5\exp\left(-c_1{|I|^2/N}\rt) -  4\exp(-c_2 n)$.
Here $\|\cdot\|_{\rm F}$ denotes the Frobenius norm.
\end{thm}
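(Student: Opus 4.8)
The plan is to turn the variational definition of $\xi_{\rm null}$ into a single scalar inequality and then squeeze $\|w\|$, the component of $\xi_{\rm null}$ orthogonal to $\x$, down to size $\sqrt{\ep}$. By homogeneity I would first rescale so that $\|\x\|=1$; then $\xi_{\rm null}$ is a unit vector, and writing $\xi_{\rm null}=\alpha\x+w$ with $w\perp\x$ one has the exact identity $\|\x\x^*-\xi_{\rm null}\xi_{\rm null}^*\|_{\rm F}=\sqrt2\,\|w\|$, so the whole theorem reduces to the claim $\|w\|\le c_0'\sqrt{\ep}$. Since $\x$ is feasible for the minimization \eqref{nul3} (it has exactly the prescribed norm), optimality yields the one clean inequality $\|A_I\xi_{\rm null}\|\le\|A_I\x\|$, which drives everything that follows.

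Two probabilistic estimates feed into this inequality. First, an \emph{upper} bound on the right-hand side: $\|A_I\x\|^2=\sum_{i\in I}|\langle\x,a_i\rangle|^2$ is the sum of the $|I|$ smallest order statistics of the i.i.d.\ (scaled) exponential variables $|\langle\x,a_i\rangle|^2$. Fixing a threshold $t\asymp\ep$ and observing that $\#\{i:|\langle\x,a_i\rangle|^2\le t\}$ is binomial with mean $\gtrsim|I|$, a Chernoff bound shows that with high probability all selected values lie below $t$, whence $\|A_I\x\|^2\le|I|\,t\lesssim N\ep^2$; the resulting failure probability is at most $\exp(-c_1|I|^2/N)$. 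Second, a \emph{uniform lower} bound on the left-hand side restricted to $\x^\perp$: I would show $\|A_I w\|^2\ge c\,|I|\,\|w\|^2$ for all $w\perp\x$ at once. The crucial point is that $I$ depends on $A$ only through the projections $\langle\x,a_i\rangle$; conditioning on these (hence on $I$), the orthogonal parts $h_i$ of the rows $a_i$ are, for circularly symmetric complex Gaussians, i.i.d.\ standard Gaussians on $\x^\perp\cong\IC^{n-1}$ and independent of the selection. Thus $\{\langle w,a_i\rangle\}_{i\in I}=\{\langle w,h_i\rangle\}_{i\in I}$ is a genuine Gaussian matrix acting on $w$, and the bound is just a lower bound on its smallest singular value, $\sigma_{\min}\ge\sqrt{|I|}-\sqrt n - t$, valid with probability $1-\exp(-c_2 n)$ and giving $\|A_Iw\|^2\ge c|I|\|w\|^2$ once $|I|$ exceeds $n$ by a constant factor.

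Combining the two via the reverse triangle inequality applied to $A_I\xi_{\rm null}=\alpha A_I\x+A_Iw$ (using $|\alpha|\le1$), optimality gives
\beqn
\sqrt{c|I|}\,\|w\|-\|A_I\x\|\le\|A_I\xi_{\rm null}\|\le\|A_I\x\|,
\eeqn
so that $\sqrt{c|I|}\,\|w\|\le 2\|A_I\x\|\lesssim\sqrt{N}\,\ep$; since $|I|=\ep N$ this is precisely $\|w\|\lesssim\sqrt{\ep}$. Undoing the normalization and using $\|\x\x^*-\xi_{\rm null}\xi_{\rm null}^*\|_{\rm F}=\sqrt2\,\|w\|$ yields the stated Frobenius bound $c_0\sqrt\ep$, and a union bound over the two failure events produces the probability $1-5\exp(-c_1|I|^2/N)-4\exp(-c_2 n)$, the constants $5$ and $4$ arising from the several one-sided deviation events in the order-statistic and singular-value estimates.

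The main obstacle is the uniform lower bound of the second step, because of the circular dependence of both the minimizer $\xi_{\rm null}$ \emph{and} the set $I$ on the same matrix $A$. The conditioning trick—that for complex Gaussian rows the $\x$-projection (which alone determines $I$) is independent of the orthogonal component—is exactly what decouples the selection from the directions used to lower-bound $\|A_Iw\|$, reducing that step to a textbook smallest-singular-value estimate. The only other point requiring care is extracting the correct exponent $|I|^2/N$ in the order-statistic bound rather than the cruder $|I|$, which one gets either by the binomial/Chernoff estimate above (noting $\exp(-c|I|)\le\exp(-c|I|^2/N)$ since $\ep\le1$) or by a direct Bernstein concentration of the truncated sum.
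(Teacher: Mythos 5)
The survey itself contains no proof of Theorem~\ref{thm:null}: it is quoted from the reference \cite{null}, so your proposal can only be measured against that paper's argument, which it reconstructs in all essentials. The individual steps are sound: the identity $\|x_*x_*^*-\xi_{\rm null}\xi_{\rm null}^*\|_{\rm F}=\sqrt{2}\,\|w\|$ after normalizing $\|x_*\|=1$; the use of $x_*$ as a feasible competitor; the order-statistics/Chernoff bound $\|A_Ix_*\|^2\lesssim N\ep^2$ with failure probability $e^{-c|I|}\le e^{-c|I|^2/N}$; and, decisively, the conditioning step: $I$ is determined by the inner products $\{\langle x_*,a_i\rangle\}$ alone, and for circularly symmetric complex Gaussian rows these are independent of the components of the $a_i$ orthogonal to $x_*$, so that the block of $A_I$ acting on $x_*^{\perp}$ is an honest i.i.d.\ Gaussian matrix to which a smallest-singular-value estimate applies.

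The one substantive discrepancy is the hypothesis. Your uniform lower bound $\|A_Iw\|^2\ge c\,|I|\,\|w\|^2$ requires $\sqrt{|I|}-\sqrt{n}\ge c'\sqrt{|I|}$, i.e.\ $|I|\ge Cn$ for a constant $C>1$, whereas the theorem is stated under the bare condition $|I|>n$. You flag this honestly, and it is worth recording that no argument can remove it: when $|I|-n=o(n)$ the conditioned Gaussian block $H_I$ is nearly square, and its smallest singular value collapses to the hard-edge scale $O_P(|I|^{-1/2})$ rather than behaving like $\sqrt{|I|}-\sqrt{n}$. Concretely, take $|I|=n+1$ and $N=n^{3/2}$, so $\ep\approx n^{-1/2}$. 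Then $\|A_Ix_*\|^2\asymp|I|^2/N\asymp\sqrt{n}$ while $\min\{\|A_Iw\|^2:\|w\|=1,\,w\perp x_*\}=O_P(n^{-1})$; writing $A_I\xi=\alpha A_Ix_*+H_Iw$ and projecting onto the (two-complex-dimensional) orthogonal complement of the range of $H_I$ shows that any near-minimizer satisfies $|\alpha|^2\lesssim \sigma_{\min}(H_I)^2\,n/\|A_Ix_*\|^2=O_P(n^{-1/2})$. Hence the null vector is nearly orthogonal to $x_*$ with probability tending to one, even though $c_0\sqrt{\ep}\to 0$ and the advertised success probability $1-5e^{-c_1\sqrt{n}}-4e^{-c_2n}$ also tends to one. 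Read literally, the statement therefore fails in that regime; the hypothesis has to be understood as $|I|\ge Cn$ (exactly what your proof needs), or else the constants $c_0,c_1$ must be such that the claim is vacuous there. With that reading your proof is complete, and in the only way the survey uses the theorem, namely $|I|=\lceil n^{1-\alpha}N^{\alpha}\rceil$ with $N\gtrsim n\log n$, one has $|I|/n\to\infty$, so the strengthened hypothesis is harmless.
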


 By Theorem~\ref{thm:null}, we have that,   for $N=Cn\ln n$ and $|I|=Cn, C>1$, 
\[
\|\x\|^{-2}\|\x\x^* - \xi_{\rm null} \xi_{\rm null}^*\|_{\rm F}\le {c\over \sqrt{\ln n}}
\]
with probability exponentially (in $n$) close to one, implying crude reconstruction from
one-bit intensity measurement is easy. 
Theorem~\ref{thm:null} also gives a simple guideline
\[
n< |I|\ll N\ll |I|^2
\]
 for the choice of $|I|$ (and hence the intensity threshold)
to achieve a small $\ep$ with high probability. 
In particular, the choice 
\beq
\label{16'}
|I|=\lceil n^{1-\alpha}N^\alpha\rceil=\lceil n \delta^{\alpha}\rceil,\quad \alpha\in [0.5, 1)
\eeq
yields   the (relative) error  bound $\cO(\delta^{(\alpha-1)/2})$,
with probability exponentially (in $n$) close to 1, achieving 
the asymptotic minimum at $\alpha=1/2$ (the geometric mean rule). 
The geometric mean rule will be used in the numerical experiments below. 

Given the wide range of effective thresholds, the null vector is robust because the noise tends to mess up primarily the indices near
the threshold and can be compensated by choosing a smaller 
$I$, unspoiled by noise and thus satisfying the error bound \eqref{error}.

For null vector initialization with a non-isometric matrix such as the Gaussian random matrix in Theorem \ref{thm:null}, 
it is better to first perform QR factorization of $A$, instead of computing \eqref{nul3}, as follows.  

For a full rank $A\in \IC^{N\times n}$, let $A=QR$ be  the QR-decomposition of $A$ where
$Q$ is isometric and $R$ is an invertible upper-triangular square matrix.  
Let $Q_I$ and $Q_{I_c}$ be the sub-row  matrices  of $Q$  corresponding to the index sets $I$ and $I_c$, respectively.  Clearly, $A_I=Q_IR$ and $A_{I_c}=Q_{I_c}R$. 

Let $z_0=R\x$. Since $b_I=|Q_Iz_0|$ is small, the rows of $Q_I$ are nearly orthogonal  to $z_0$. 
A first approximation can be obtained from  $x_{\rm null}=R^{-1} z_{\rm null}$
where 
\beqn
\label{nul}
 z_{\rm null}\in \hbox{\rm arg}\min\lt\{\|Q_I z\|^2: z\in \IC^n, {\|z\|=\|b\|}\rt\}.  
 \eeqn
In view of the isometry property \beqn
 \label{isom'}
\|z\|^2= \|Q_I z \|^2+\|Q_{I_c} z \|^2=\|b\|^2
\eeqn
minimizing $ \|Q_I z \|^2$ is equivalent to maximizing $\|Q_{I_c}z\|^2$
over $\{z:\|z\|=\|b\|\}$. This leads to the alternative variational principle
  \beq
\label{strong2}
x_{\rm null}\in \hbox{\rm arg}\max\lt\{\|A_{I_c} x\|^2: x\in \IC^n, {\|Rx\|=\|b\|}\rt\}
 \eeq
solvable by 
 the power method.

The initial estimate $ \xi_{\rm null}$ in \eqref{nul3}  is close  to $x_{\rm null}$ in \eqref{strong2} when the oversampling ratio $\delta=N/n$ of the i.i.d.\ Gaussian matrix is large
or  when the measurement matrix is isometric ($R=I$) as for the coded Fourier matrix. 
 Numerical experiments show that
 $ \xi_{\rm null}$  is close to $x_{\rm null}$
 for $\delta\ge 8$. But for $\delta=4$, $x_{\rm null}$ is a significantly better approximation than $ \xi_{\rm null}$. 
 Note that  $\delta=4$ is near the threshold of having an injective 
intensity map: $x\longrightarrow |A x|^2$ for  a {\em generic} (i.e. random) $A$ \cite{Balan1}.

\subsection{Optimal pre-processing}\label{ss:optimal}

In both null and spectral initializations,  the  estimate $\xh$ is given by the principal eigenvector of a suitable {\em positive-definite} matrix
constructed from $A$ and $b$.
In the case
of spectral initialization,  an asymptotically exact recovery is guaranteed; in the case of null initialization, a non-asymptotic error bound exists and guarantees asymptotically exact recovery.

Contrary to these, the weak recovery problem of finding an estimate $\xh$ that has a positive correlation with $\x$:
\beq
\liminf_{N\to\infty} \IE\lt\{{|\xh^*\x|\over \|\x\|\|\xh\|}\rt\}> \ep\quad\mbox{for some}\quad \ep>0,\label{weak-rec}
\eeq
is analyzed in \cite{Mont},\cite{Lu17,luo2019optimal}. The fundamental interest with the weak recovery problem lies in the phase transition phenomenon
stated below. 
\begin{thm}\label{thm:Mont}
Let $\x$ be uniformly distributed on the $n$-dimensional complex sphere with radius $\sqrt{n}$ and let 
the rows of $A\in \IC^{N\times n}$ be i.i.d. complex circularly symmetric Gaussian vectors of covariance $I_n/n$. 
Let \beq
\label{MM'}
\yeta=|A\x|^2+\eta
\eeq
 where $\eta$ is real-valued Gaussian vector of covariance $\sigma^2 I_N$ and let
$N,n\to\infty$ with $N/n\to \delta\in (0,\infty)$.  

\begin{itemize}
\item For $\delta<1$, no algorithm can provide non-trivial estimates on $\x$;
\item For $\delta>1$, there exists $\sigma_0(\delta)>0$ and a spectral algorithm that returns an estimate $\xh$ satisfying \eqref{weak-rec} for
any $\sigma\in [0,\sigma_0(\delta)]$. 
\end{itemize}
\end{thm}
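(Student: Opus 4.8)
The plan is to treat the two regimes separately, since the first is an information-theoretic impossibility statement and the second requires constructing and analyzing an explicit spectral estimator.

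For the impossibility claim when $\delta<1$, I would compare the \emph{planted} ensemble, in which $\yeta$ is generated from a genuine $\x$ through \eqref{MM'}, against a \emph{null} ensemble in which the direction of $\x$ is erased (each $|\langle a_k,\x\rangle|^2$ replaced by an independent variable with the same marginal law). Since $\langle a_k,\x\rangle\sim\mathcal{CN}(0,1)$ whenever $\|\x\|=\sqrt n$, the one-dimensional marginals of the two ensembles coincide, so all directional information lives in the joint dependence across the $N$ coordinates. I would then control a second-moment statistic — equivalently the $\chi^2$-divergence or low-degree likelihood ratio — between the two ensembles and show it stays bounded as $n\to\infty$ precisely when $\delta<1$. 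Such a bound yields mutual contiguity, under which no measurable function of $(\yeta,A)$ can correlate with the erased direction, so \eqref{weak-rec} fails for every estimator; the same threshold is visible from the replica-symmetric mutual-information formula for this generalized linear model.

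For the achievability claim when $\delta>1$, I would exhibit the spectral estimator $\xh=$ leading eigenvector of
\[
M=\frac1N\sum_{k=1}^N \cT(\yeta_k)\,a_k a_k^*,
\]
where $\cT$ is a bounded preprocessing function to be optimized — precisely the ``optimal pre-processing'' of Section~\ref{ss:optimal}. Writing $z_k=\langle a_k,\x\rangle$, the matrix $M$ splits into a rank-one signal term $\propto\x\x^*$ and an isotropic bulk term, i.e.\ a rank-one deformation of a weighted sample-covariance matrix. The analysis reduces to a BBP-type phase transition: from the Stieltjes transform of the limiting spectral law of the bulk I would derive the fixed-point equation locating a candidate outlier eigenvalue, and show that for a suitable $\cT$ the outlier detaches from the bulk edge — and its eigenvector acquires macroscopic overlap with $\x$ — exactly when $\delta>1$ and $\sigma\le\sigma_0(\delta)$. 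Here $\sigma_0(\delta)$ is the largest noise level for which the preprocessed signal still drives the outlier past the edge, and optimizing over $\cT$ (the variational problem of \cite{luo2019optimal}) aligns the spectral threshold with the information-theoretic one at $\delta=1$.

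The main obstacle is the random-matrix analysis behind achievability: identifying the limiting spectral distribution of the weighted sample-covariance bulk, proving the candidate eigenvalue genuinely separates rather than sticking to the edge, and converting separation into a quantitative eigenvector-overlap lower bound. This needs the deformed-Wishart toolkit — subordination functions, sharp control of the edge behaviour of the Stieltjes transform, and a local law ruling out spurious outliers — together with the optimization over $\cT$; these form the technical heart of \cite{Mont,Lu17,luo2019optimal}, whereas the marginal computation for $z_k$ and the reduction to a spiked model are comparatively routine.
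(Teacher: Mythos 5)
The paper itself gives no proof of this theorem: it is quoted verbatim from the cited source \cite{Mont} (with the spectral refinements of \cite{Lu17,luo2019optimal}), so there is no in-paper argument to compare against. Your sketch, however, faithfully reconstructs the strategy of that original proof: the converse for $\delta<1$ is indeed proved there by a second-moment/contiguity argument comparing the planted ensemble to a null ensemble with the direction of $\x$ erased, and the achievability for $\delta>1$ is proved by analyzing the spectral estimator $\xh$ built from $\frac{1}{N}\sum_k \cT(\yeta_k)a_k a_k^*$ as a rank-one deformation of a weighted sample-covariance matrix, locating the outlier eigenvalue via the Stieltjes transform and a BBP-type transition, and then optimizing over the preprocessing $\cT$ to push the spectral threshold down to the information-theoretic one at $\delta=1$.

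One technical caveat in your converse: the plain $\chi^2$-divergence (unconditional second moment of the likelihood ratio) in such planted generalized linear models is typically blown up by rare large-overlap or heavy-tailed events, and does \emph{not} stay bounded all the way to the true threshold; the argument in \cite{Mont} is a \emph{conditional} (truncated) second-moment method, where one restricts to a high-probability event before computing the second moment. That truncation is the technical heart of the impossibility half, and your sketch should flag it rather than assert boundedness of the raw $\chi^2$ "precisely when $\delta<1$". Also, the low-degree likelihood ratio you mention as an equivalent alternative would only yield a computational (not information-theoretic) barrier, so it cannot substitute for the contiguity argument in proving that \emph{no} algorithm achieves \eqref{weak-rec}. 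With those corrections, your outline matches the known proof; the remaining work (local laws, edge control, eigenvector overlap) is exactly the random-matrix machinery you identify.
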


Like spectral initialization, weak recovery theory considers spectral algorithm of computing the principal eigenvalue of 
$A^* T A$ where $T$ is a pre-processing diagonal matrix. 
An important discovery of \cite{Mont} is that by removing the positivity assumption
$T>0$ and allowing negative values, an explicit recipe for $T$ is given and shown to be optimal in the sense that it provides the smallest possible threshold $\delta_u$ for the signal model \eqref{MM'}. Specifically, with vanishing noise $\sigma\to 0$, the threshold $\delta_u$ tends
to 1 as 
\beqn
\delta_u(\sigma^2)=1+\sigma^2+ o(\sigma^2)
\eeqn
 and
the optimal function is given by 
\beq
T_{\rm op}(\yeta,\delta)&=& {\yeta_+-1\over \yeta_++\sqrt{\delta}-1},\quad \yeta_+=\max(0,\yeta)\label{optimal}
\eeq
which has a large negative part for small $\yeta$ \cite{Mont}. This counterintuitive feature tends to slow down convergence of the power
 method as the principal eigenvalue of $A^*TA$ may not have the largest modulus, see \cite{Mont} for more details.

\subsection{Random initialization}
\label{ss:random}

While the aforementioned initializations are computationally quite efficient, one may wonder if such carefully designed initialization
is even necessary for achieving convergence for non-convex algorithms or to reduce the number of iterations for iterative solvers of convex approaches.  In particular, random initialization has been proposed as a cheap alternative to the more costly initialization strategies described above. In this case we simply construct a random signal in $\CC^n$, for instance with i.i.d.\ entries chosen from  ${\mathcal N}(0,I_n)$, and use it as initialization.

For non-convex solvers, we clearly cannot expect in general that starting the iterations at an arbitrary point will work, since we may get stuck in a saddle point or some local minimum. But if the optimization landscape is benign enough, it may be that there are no undesirable local extrema or that they can be easily avoided. A very thorough study of the optimization landscape of phase retrieval has been conducted in~\cite{sun2018geometric,chen2019gradient,Mont}.  

For instance, it has been shown  in~\cite{chen2019gradient} that for Gaussian measurements, gradient descent combined with random initialization will converge to the true solution and at a favorable rate of convergence, assuming that the number of measurements satisfies $N \gtrsim n \polylog N$. This result may suggest that random initialization is just fine and there is no need for more advanced initializations. 
The precise theoretical condition for  $N$ is $N \gtrsim n \log^{13} N$.  This large exponent in the log-factor becomes negligible if $n$ is in the order of at least, say, $10^{25}$, which makes this result somewhat less compelling from a theoretical viewpoint. However,  it is likely that this large exponent can be attributed to technical challenges in the proof and in truth it is actually much smaller. This is also suggested by the numerical simulations conducted in Section~\ref{ss:compinit}.

\subsection{Comparison of initializations}
\label{ss:compinit}

We conduct an empirical study by comparing  the effectiveness of different initializations. 

First we present experiments comparing the performance of the null initialization and the optimal pre-processing
 methods for noiseless as well as noisy data, see Figure \ref{fig:init1} and \ref{fig:init2}. While the optimal pre-processing function has no adjustable parameter,
we use the default threshold $|I|=\sqrt{Nn}$ for the null initialization ($\alpha=\half$ in \eqref{16'}). 

In the noisy case, we consider the complex Gaussian noise model \eqref{ray} which sits between the Poisson noise and the thermal noise in some sense. The nature of
 noise is unimportant for the comparison but the level of noise is. We consider three different levels of noise (0\%, 10\% and 20\%)
 as measured by the noise-to-signal ratio (NSR) defined as
\beq
\label{nsr}
{\rm NSR}= \frac{\|b-|A\x|\|}{\|A\x\|}. 
\eeq
 Because the noise dimension $N$ is larger than that of the object dimension,
the feasibility problem is  inconsistent with high probability.

   \begin{figure}[t]
\centering
\includegraphics[width=15cm]{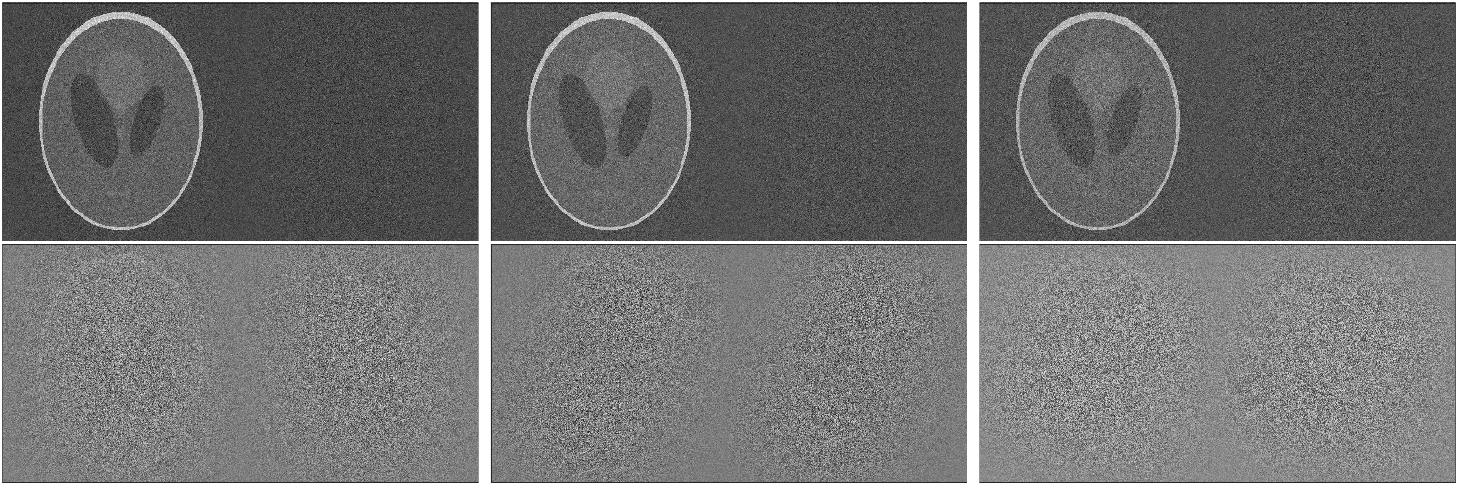}
\caption{Initialization for RPP with 2 OCDPs at NSR 0\% (left), 10\% (middle) and 20\% (right). Each panel shows 
$|\Re[\overline{x}\odot\sgn(\x)]|$ (left half) and $ |\Im[\overline{x}\odot\sgn(\x)]|$ (right half) where $x=x_{\rm null}$ (top row), or $x_{\rm op}$ (bottom row).}
\label{fig:init1}
\end{figure}

 Figure \ref{fig:init1} shows the results with 2 oversampled randomly coded diffraction patterns (OCDPs). 
 Hence $\delta=8$ for the optimal pre-processing function \eqref{optimal} and the outcome is denoted by 
 $x_{\rm op}$. We see that $x_{\rm null}$ significantly outperforms
 $x_{\rm op}$, consistent with the relative errors shown in the following table: 
 
 \begin{center}
 \begin{tabular}[c]{||c||c|c|c||}
 \hline
 {\rm 2 OCDPs @ NSR}& 0\%& 10\%&20\%\\
 \hline
 $x_{\rm null}$ &0.6531&0.6943&0.8146\\
 $x_{\rm op}$ &1.3636&1.3952&1.3889 \\
 \hline
 \end{tabular}
 \end{center}
 Here the optimal pre-processing method returns an essentially random output all noise levels.
This is somewhat surprising since the null vector uses only 1-bit information (the threshold) compared to
the optimal pre-processing function \eqref{optimal} which uses the full information of the
signals.

 \begin{figure}[h]
\centering
\includegraphics[width=15cm]{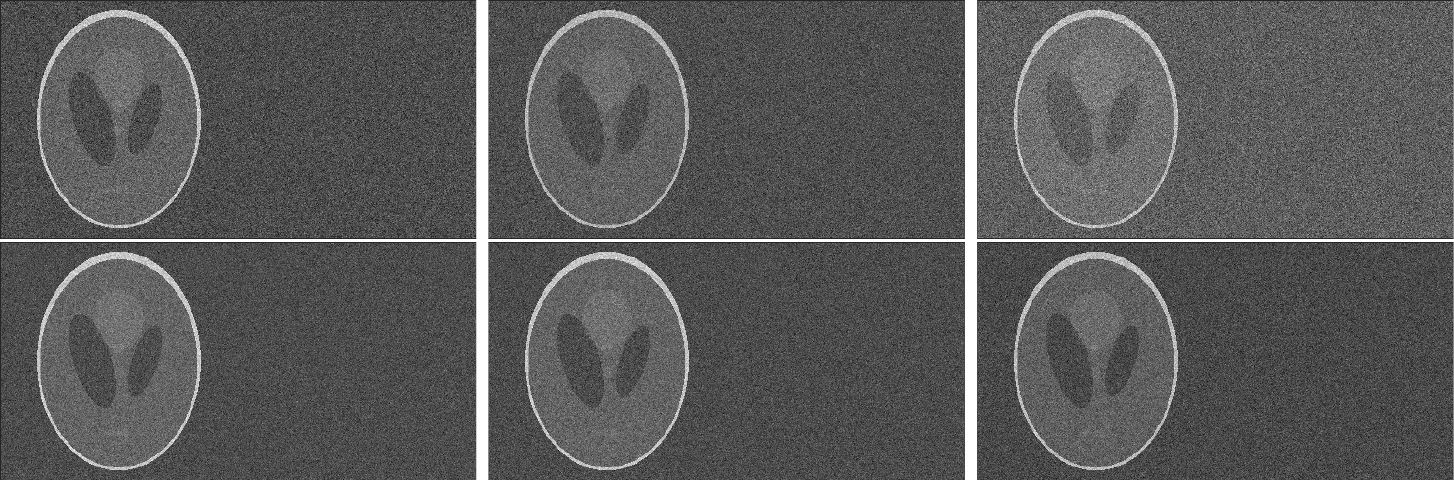}
\caption{Initialization for RPP with 4 CDPs  at NSR 0\% (left), 10\% (middle) and 20\% (right). Each panel shows $|\Re[\overline{x}\odot\sgn(\x)]|$ (left half) and $ |\Im[\overline{x}\odot\sgn(\x)]|$ (right half) where $x=x_{\rm null}$ (top row), or $x_{\rm op}$ (bottom row).}
\label{fig:init2}
\end{figure}

 On the other hand, with 4 randomly coded diffraction patterns (CDPs) that are not oversampled
 ($\delta=4 $ for \eqref{optimal}),  $x_{\rm op}$ outperforms $x_{\rm null}$ especially at large NSR, see Figure \ref{fig:init2}
for the visual effect and the following table for relative errors of initialization:
 
 \begin{center}
 \begin{tabular}[c]{||c||c|c|c||}
 \hline
 {\rm 4 CDPs @ NSR}& 0\%& 10\%&20\%\\
 \hline
 $x_{\rm null}$ & 0.7374&0.7761& 0.8991\\
 $x_{\rm op}$ & 0.6269& 0.6437& 0.6888\\
 \hline
 \end{tabular}
 \end{center}
The important lesson here is that the null vector and the optimal pre-processing function make use
of differently
sampled CDPs in different ways: the oversampled CDPs favor the former while the standard CDPs
favor the latter. { In particular, the optimal spectral method \eqref{optimal} is optimized for {\em independent } measurements and
does not perform well with highly correlated data in oversampled CDPs (Figure \ref{fig:init1}). As pointed out by \cite{Mont}, the performance
of \eqref{optimal} can often be improved by manually setting $\delta$ very close to 1. }

What follows are more simulations with higher number of CDPs that are not oversampled, for various initialization methods. 
We analyze their performance with respect to three different aspects: (i) number of measurements; (ii) number if iterations, (iii) overall runtime.
The initializers under comparison are the standard spectral initializer,  the truncated spectral initializer introduced in~\cite{chen2017solving}, the optimal spectral initializer, the null initializer (sometimes also referred to as ``orthogonality-promoting'' initializer), and random initialization. The computational complexity of constructing each of the first four initializers is roughly similar; they all require the computation of the leading eigenvector of a self-adjoint matrix associated with the measurement vectors $a_k$, which can be done efficiently with the power method (the matrix itself does not have be constructed explicitly). 

We choose a complex-valued Gaussian random signal of length $n=128$ as ground truth and obtain phaseless measurements with $k$ diffraction illuminations, where $k=3,\dots, 12$. Thus the number $N$ of phaseless measurements ranges from $3n$ to $12n$. The signal has no structural properties that we can take advantage of, e.g. we cannot exploit any support constraints.
We use the PhasePack toolbox~\cite{chandra2017phasepack}  with its default settings for this simulation, except for the threshold for the null initialization we use  $|I|=\lceil \sqrt{nN}\rceil$, as suggested by Theorem~\ref{thm:null}.

We run Wirtinger Flow with different initializations until the residual error is smaller than $10^{-4}$.  
For each $k=3,\dots,12$ and each fixed choice of signal and illuminations we repeat the experiment 100 times, and do so for 100 different random choices of signal and illuminations. For each $k$ the results are then averaged over these 10000 runs. For each number of illuminations, 
we compare the number of iterations as well as the overall runtime of the algorithm needed to achieve the desired residual error. 
We also compare the rate of successful recovery, where success is  (generously) defined as the case when the algorithm returns a solution with an relative $\ell_2$ error less than 0.1. A success rate of 1 means that the algorithm succeeded in all simulations for a fixed number of illuminations.

\begin{figure}
\begin{center}
\subfigure[]{
\includegraphics[width=60mm,height=40mm]{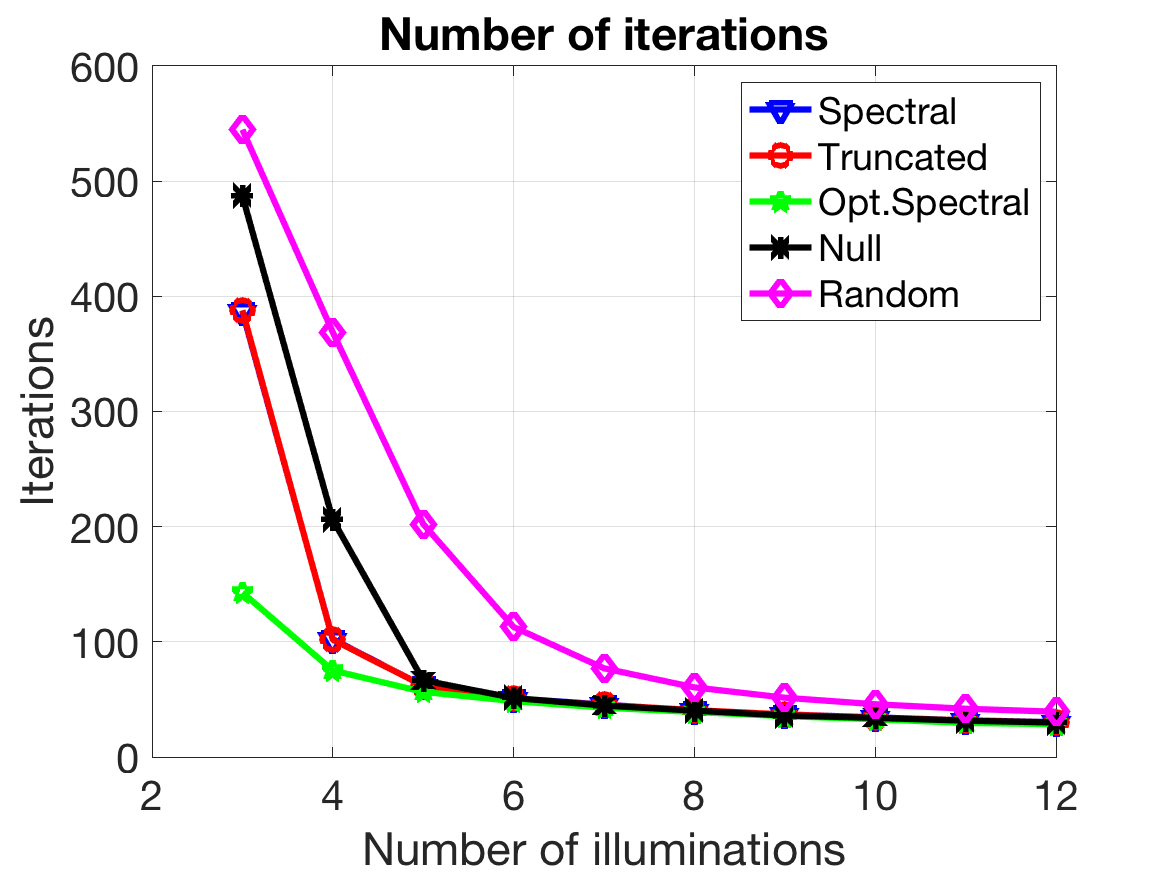}}
\subfigure[]{
\includegraphics[width=60mm,height=40mm]{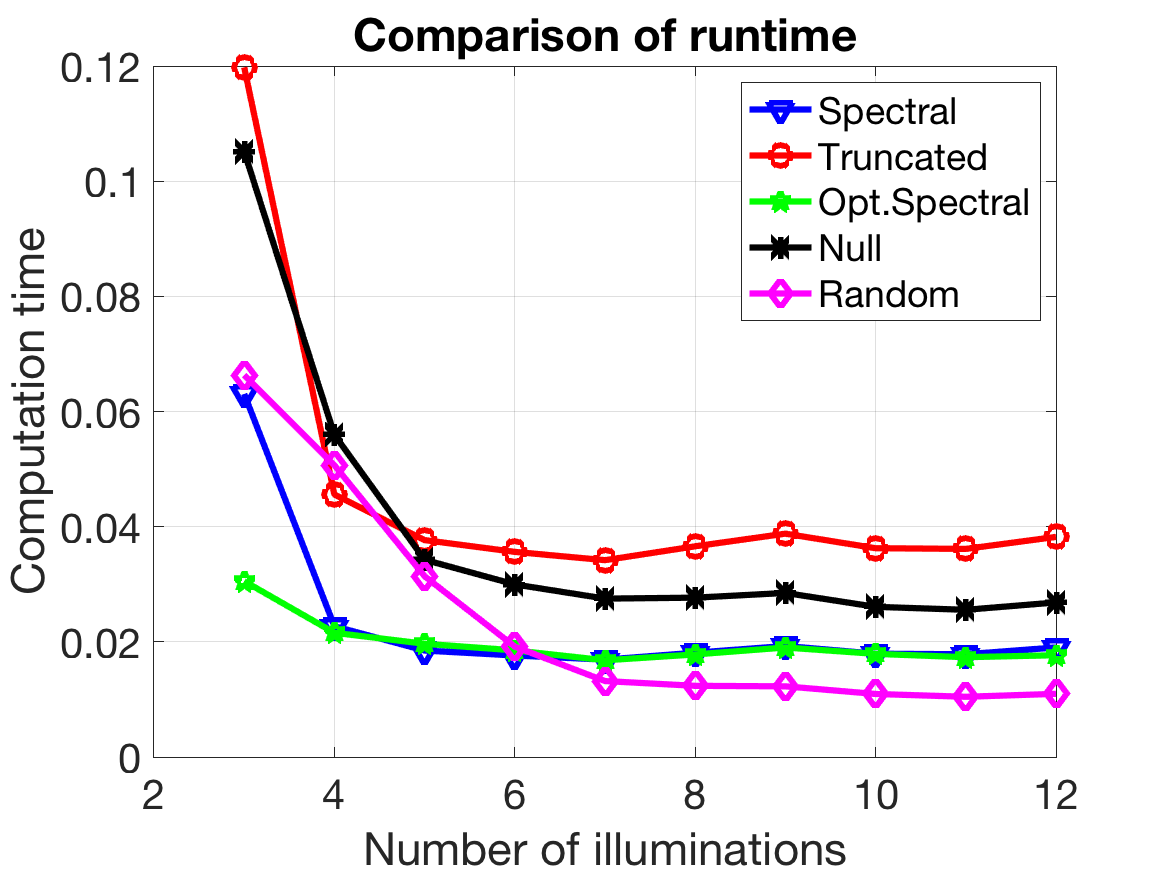}}
\caption{The initializers under comparison are the standard-,  the truncated-, and the optimal spectral initializer, the ``orthonality-promoting'' initializer, and random initalization. We run Wirtinger Flow with different initializations and compare (a) the number of iterations, (b) the total computation time needed for Wirtinger Flow to achieve a residual error less than $10^{-4}$.}
\label{fig:compinit1}
\end{center}
\end{figure}

\begin{figure}
\begin{center}
\includegraphics[width=60mm,height=40mm]{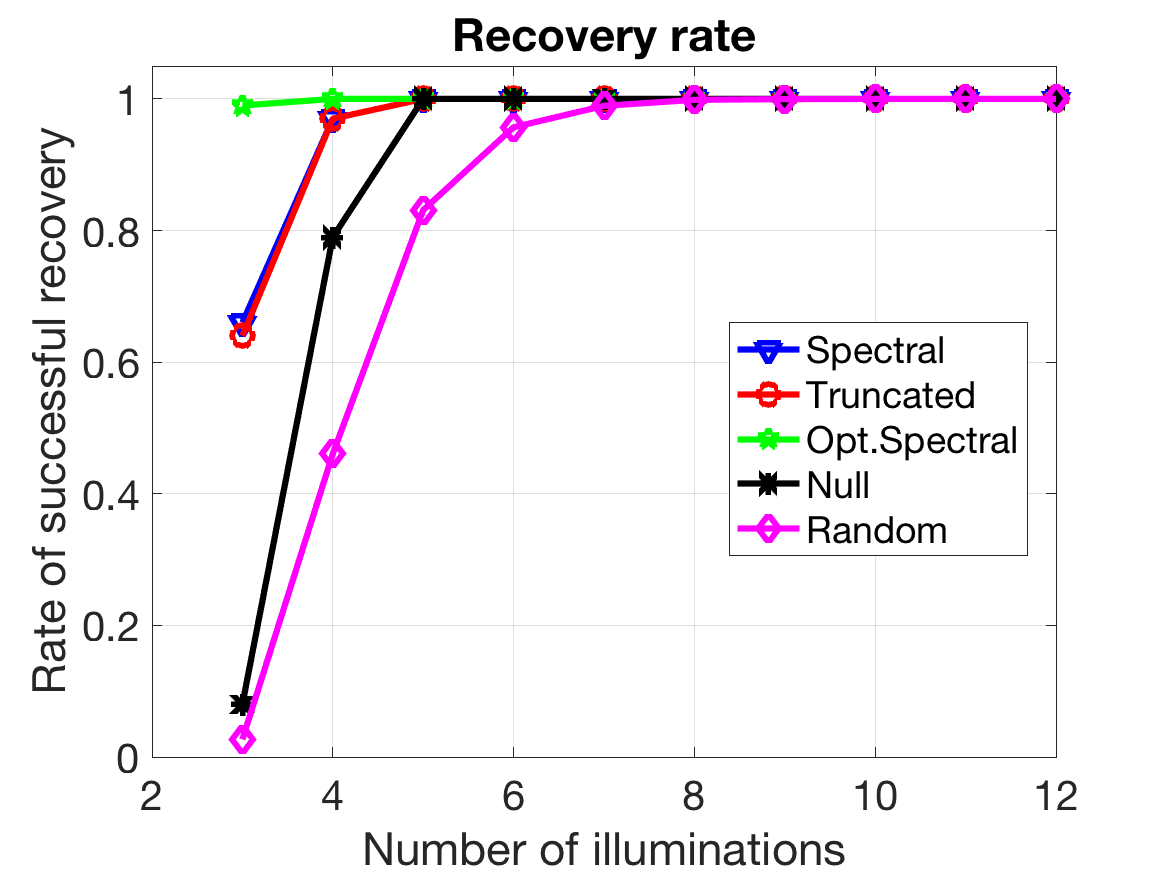}
\caption{Same setup as in Figure~\ref{fig:compinit1}. We compare the success rate for Wirtinger Flow with different initializations. For this experiment, a ``successful recovery'' means that the algorithm returns a solution with a relative $\ell_2$ error less than 0.1. A success rate of 1 means that the algorithm succeeded in all simulations. The optimal spectral initialization clearly outperforms all other initializations when the number of measurements is small.}
\label{fig:compinit2}
\end{center}
\end{figure}

The most relevant and important case from a practical viewpoint is when the required number of illuminations is as small as possible, as this reduced the experimental burden.  The clear winner in this case is the optimal spectral initializer. When we use only three illuminations, it significantly outperforms all the other initializers.
In general, for the recovery of a complex-valued signal of length $n$ from phaseless measurements, we cannot expect that any method can succeed at a perfect rate when we use only $N=3n$ measurements,

The exact number of measurements {\em necessary} to make recovery of a signal $x \in \IR^n$ from phaseless measurements  at least theoretically possible (setting aside the existence of a feasible algorithm and issues of numerical stability) is $n \ge 2n-1$. For complex-valued signals the precise lower bound is still open. The asymptotic estimate $N =(4+o(1))n$ follows from~\cite{heinosaari2013quantum,BCE07}, see also~\cite{bandeira2014saving}. For dimensions $n=2^k =1$ it has been shown in~\cite{conca2015algebraic} that $N=4n-4$ is necessary\footnote{However, this is not true for all $n$. In~\cite{vinzant2015small} Vinzant  gave an example of a frame with $4n-5=11$ elements in $\CC^4$ which enables phase retrieval.}.
In general, for the recovery of a complex-valued signal of length $n$ from phaseless measurements is
$4n-4$, we cannot expect that any method can succeed at a perfect rate when we use only $N=3n$ measurements,

As the  number of illuminations increases, the difference becomes less pronounced which is in line with theoretical predictions.
For a moderate number of illuminations the random initializer performs as well as the others, at a lower computational cost. As expected the theory for random initialization (which involves the term $\log^{13} N$) is overly pessimistic. Nevertheless, in practice there can be a substantial difference in the experimental effort if we need to carry, say, six illuminations instead of just three illuminations. Hence, we conclude that 
{\em ``there is no free lunch with random initialization!''}

\section{Convex optimization}
\label{s:convex}

While phase retrieval is a non-convex optimization problem, it has become very popular in recent years to pursue convex relaxations of this problem. A major breakthrough in this context was the PhaseLift approach~\cite{CSV2013,CESV2013} which demonstrated that under fairly mild conditions the solution of a properly constructed semidefinite program coincides with the true solution of the original non-convex problem. This discovery has ignited a renewed interest in the phase retrieval problem.
We will describe the key idea of PhaseLift below.

\subsection{PhaseLift: Phase retrieval via matrix completion}\label{ss:phaselift}

As is well known, quadratic measurements can be lifted up and
interpreted as linear measurements about the rank-one matrix $X = x
x^*$. Indeed,
\begin{equation}\label{tracemeasurements}
|\langle a_k, x\rangle |^2 = \trace(x^* a_k a_k^* x) = \trace(a_k a_k^* x x^*).
\end{equation}
We write $\Hn$ for the Hilbert space of all $n \times n$ Hermitian matrices equipped
with the Hilbert-Schmidt inner product $\langle X,Y \rangle_{{\text{HS}}} := {\trace}(Y^\ast X)$
Now,
letting $\cA$ be the linear transformation
\begin{equation}
\label{linmap}
\begin{array}{lll}
  \Hn & \rightarrow & \IR^N\\
  X & \mapsto & \{a_k a_k^* X \}_{1 \le i \le N}
\end{array}
\end{equation}
which maps Hermitian matrices into real-valued vectors, one can
express the data collection $b_k = |\langle x, a_k\rangle |^2$ as
\begin{equation}
\nn
y = \mathcal{A}(x x^*).
\end{equation}
For reference, the adjoint operator $\cA^*$ maps real-valued inputs
into Hermitian matrices, and is given by
\[
\begin{array}{lll}
  \IR^{N} & \to & \mathcal{H}^{n \times n}\\
  z & \mapsto & \sum_i z_i \, a_k a_k^*.
\end{array}
\]
Moreover, we define $\cT_{x}$ to be the set of symmetric matrices of the
form
\begin{equation}
\nn
\cT_{x} = \{X = x z^* + z x^*: z \in \CC^n\}
\end{equation}
and denote $\cT_{x}^\perp$ by its orthogonal complement.  Note that
$X \in \cT_{x}^\perp$ if and only if both the column and row spaces
of $X$ are perpendicular to $x$.

Hence, the phase retrieval problem can be cast as the matrix recovery problem~\cite{CSV2013,CESV2013}
\begin{equation}
\nn
  \begin{array}{ll}
    \text{minimize}   & \quad \rank(X)\\
    \text{subject to} & \quad  \cA(X) = y \\
& \quad X \succeq 0.
\end{array}
\end{equation}
Indeed, we know that a rank-one solution exists so the optimal $X$ has
rank at most one. We then factorize the solution as $x x^*$ in order
to obtain solutions to the phase-retrieval problem. This gives $x$ up
to multiplication by a unit-normed scalar.

Rank minimization is in general NP hard, and we propose, instead,
solving a trace-norm relaxation. Although this is a fairly standard
relaxation in control \cite{Beck98,Mesbahi97}, the idea of casting the
phase retrieval problem as a trace-minimization problem over an
affine slice of the positive semidefinite cone is more recent\footnote{This idea was first proposed by one of the authors at a workshop
``Frames for the finite world: Sampling, coding and quantization'' at the American Institute of Mathematics in August 2008.}.
Formally, we suggest solving
\begin{equation}
\label{eq:tracemin}
 \begin{array}{ll}
    \text{minimize}   & \quad \trace(X)\\
    \text{subject to} & \quad  \cA(X) = y\\
& \quad X \succeq 0.
\end{array}
\end{equation}
If the solution has rank one, we factorize it as above to recover our
signal. This method which lifts up the problem of vector recovery from
quadratic constraints into that of recovering a rank-one matrix from
affine constraints via semidefinite programming is known under the
name of {\em PhaseLift} \cite{CSV2013,CESV2013}.

A sufficient (and nearly necessary) condition
for $x x^\ast$ to be the unique solution to~\eqref{eq:tracemin} is given by the following lemma.
\begin{lem}
\label{lemma:dual}
If for a given vector $x \in \CC^n$ the measurement mapping $\cA$ satisfies the following two conditions
\begin{itemize}
\item[(i)] the restriction of $\cA$ to $T$ is injective ($X \in T$ and $\cA(X) = 0 \Rightarrow X = 0$),
\item[(ii)] and there exists a {\em dual certificate} $Z$ in the range of
  $\cA^*$ obeying\footnote{The notation $A \prec B$ means that $B - A$ is positive definite.}
\begin{equation}
  \nn
  Z_T = x x^{\ast} \quad \text{and} \quad Z_{\Tp} \prec I_{\Tp}.
\end{equation}
\end{itemize}
then $X = x x^\ast$ is the only matrix in the feasible set of~\eqref{eq:tracemin}, i.e. $X$ is the unique solution of~\eqref{eq:tracemin}.
\end{lem}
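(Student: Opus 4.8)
The plan is to run the standard dual-certificate argument, showing that every feasible point other than $xx^\ast$ has strictly larger trace. Normalizing $\|x\|=1$ (so that the orthogonal projection of the identity onto $T$ is exactly $xx^\ast$), I would take an arbitrary feasible competitor and write it as $X = xx^\ast + H$, where $H$ is Hermitian, $\cA(H)=0$, and $xx^\ast + H \succeq 0$. The objective is then to prove $\trace(H)>0$ for every $H \neq 0$ of this form, which yields uniqueness.

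First I would extract two structural facts from feasibility. Letting $P^\perp = I - xx^\ast$ be the orthogonal projection onto $\mathrm{span}(x)^\perp$, I would verify that the orthogonal projection of any Hermitian $H$ onto $\Tp$ equals the compression $P^\perp H P^\perp$, and hence that $H_{\Tp} := P^\perp H P^\perp = P^\perp(xx^\ast+H)P^\perp = P^\perp X P^\perp \succeq 0$, using $P^\perp x = 0$ and $X \succeq 0$. So the $\Tp$-component of $H$ is positive semidefinite. Second, since $Z = \cA^\ast(\lambda)$ for some $\lambda$ and $\cA(H)=0$, I would note that $\langle Z,H\rangle = \langle \lambda, \cA(H)\rangle = 0$.

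Next I would use the certificate to rewrite the trace. Because $\langle Z, H\rangle = 0$, we have $\trace(H) = \langle I, H\rangle = \langle I - Z, H\rangle$, which decomposes along $T \oplus \Tp$ as $\langle (I-Z)_T, H_T\rangle + \langle (I-Z)_{\Tp}, H_{\Tp}\rangle$. The normalization gives $I_T = xx^\ast = Z_T$, so $(I-Z)_T = 0$ and the first term drops out, leaving $\trace(H) = \langle I_{\Tp} - Z_{\Tp}, H_{\Tp}\rangle$. Here the strict condition $Z_{\Tp} \prec I_{\Tp}$ makes $I_{\Tp} - Z_{\Tp}$ positive definite on $\Tp$; pairing it with $H_{\Tp} \succeq 0$ gives a nonnegative quantity that vanishes if and only if $H_{\Tp}=0$. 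To close, if $\trace(H)=0$ then $H_{\Tp}=0$, so $H \in T$, and the injectivity hypothesis (i) together with $\cA(H)=0$ forces $H=0$.

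The step I expect to be the main obstacle is the positive-semidefinite bookkeeping rather than any deep idea: one must confirm that orthogonal projection onto $\Tp$ really is the two-sided compression $P^\perp(\cdot)P^\perp$, and that the Hilbert--Schmidt pairing of a strictly positive-definite operator on $\Tp$ against a nonzero positive-semidefinite matrix is strictly positive. This last point is exactly where the \emph{strict} inequality $Z_{\Tp}\prec I_{\Tp}$ is indispensable: a non-strict certificate would only prove that $xx^\ast$ is \emph{a} minimizer, whereas strictness upgrades this to uniqueness.
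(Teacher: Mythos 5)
Your proof is correct and, notably, it is not the argument the paper gives — in fact it is more faithful to the hypotheses as stated than the paper's own proof. The paper never touches the trace objective at all: it aims at the stronger claim that the feasible set $\{X \succeq 0 : \cA(X)=y\}$ is the singleton $\{xx^\ast\}$, opening with the same two moves as yours (the compression $H_{\Tp}=P^\perp H P^\perp \succeq 0$ from feasibility, and $\langle Z,H\rangle=0$ from $Z\in\range(\cA^\ast)$, $H\in\ker\cA$), but then it kills $H_{\Tp}$ directly from $\langle Z,H\rangle=\langle Z_{\Tp},H_{\Tp}\rangle=0$. That step silently uses $Z_{T}=0$ together with a definite $Z_{\Tp}$, i.e.\ a differently normalized certificate than the one asserted in the statement ($Z_T=xx^\ast$, $Z_{\Tp}\prec I_{\Tp}$); under the stated normalization, $\langle Z,H\rangle=0$ only couples $x^\ast H x$ to $\langle Z_{\Tp},H_{\Tp}\rangle$ and cannot by itself force $H_{\Tp}=0$. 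One must then invoke the objective, exactly as you do via $\trace(H)=\langle I-Z,H\rangle=\langle I_{\Tp}-Z_{\Tp},H_{\Tp}\rangle\ge 0$. So your route is the classical PhaseLift dual-certificate argument matching the lemma's hypotheses, while the paper's proof matches a certificate its statement does not actually assume.

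Two caveats on your write-up. First, the identity $I_T=xx^\ast$ genuinely requires $\|x\|=1$ (in general $I_T=xx^\ast/\|x\|^2$), and this is not removable "without loss of generality": rescaling $x$ forces a rescaling of $Z$, and the condition $Z_{\Tp}\prec I_{\Tp}$ is not scale-invariant. State $\|x\|=1$ as a standing assumption — it is implicit in the lemma, since $Z_T=xx^\ast$ is only the natural certificate condition under that normalization. Second, what you establish is the "unique solution of \eqref{eq:tracemin}" clause of the conclusion, not the literal "only matrix in the feasible set" clause: your argument leaves open feasible points of strictly larger trace, which the minimization then discards. The stronger singleton statement (attributed in the surrounding text to later works) is what the paper's proof targets, but it requires the other certificate normalization.
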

The proof of Lemma~\ref{lemma:dual} follows from standard duality arguments in semidefinite programming.
\begin{proof}
Let $\tilde{X} = X+H$ be a matrix  in the feasible set of~\eqref{eq:tracemin}. We want to show that $H=0$.
By assumption $H \in \Hn$ and $H \in \null(\cA)$, hence we can express $H$ as $H  = H_\cT + H_\Tp$.
Since $\tilde X \prec 0$, it follows for all $z \in \CC^n$ with $\langle z,x\rangle = 0$ that
$$z^\ast \tilde{X} z = z^\ast (x x^{\ast} + H_\cT + H_\Tp)z = z^\ast  H_\Tp y \ge 0.$$
Because the range spaces of $H_\Tp$ and of $H^{\ast}_{\Tp}$ are contained in orthogonal complement
of $\overline{\text{span}}\{x\}$ this shows that $H_\Tp \prec 0$. Since $Z \in \range (\cA) = \null(\cA)^\perp$
it holds that $\langle H,Z \rangle = 0$ and because $Z_\cT = 0$, it follows that 
$\langle H,Z \rangle = \langle H_\Tp,Z_\Tp \rangle = 0$. 
But since $Z_\Tp \prec  0 $, this shows that $H_\Tp = 0$.  By injectivity of $\cA$ on $\cT$ we also have
$H_\cT = 0$, such that $H = 0$ and therefore $\tilde{X}  = X$.
\end{proof}


Asserting that the conditions of Lemma~\ref{lemma:dual} hold under reasonable conditions on the number of measurements is the real challenge here.  A careful strengthening of the injectivity property in~Lemma~\ref{lemma:dual} allows one to relax the properties of the dual certificate,  as in the approach pioneered in~\cite{gross2011recovering} for matrix completion. This observation is at the core of 
the proof of Theorem~\ref{theo:main} below. In a nutshell, the theorem states that under mild conditions  PhaseLift can recover $x$ exactly (up to a global phase factor) with high probability, provided that the
number of measurements is on the order of $n \log n$. 
\begin{thm}\cite{CSV2013}
\label{theo:main}
Consider an arbitrary signal $x$ in $\IR^n$ or $\CC^n$. Let the measurement vectors $a_k$  be sampled independently and uniformly at
random on the unit sphere, and suppose that the number of measurements obeys $N \ge c_0 \, n \log n$, where
$c_0$ is a sufficiently large constant. Then the solution to the trace-minimization program is exact
with high probability in the sense that \eqref{eq:tracemin} has a
unique solution obeying
\begin{equation}
  \nn
  \hat X = x x^\ast.
\end{equation}
This holds with probability at least $1 - 3e^{-\gamma \frac{m}{n}}$,
where $\gamma$ is a positive absolute constant.
\end{thm}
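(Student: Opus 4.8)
The plan is to verify a \emph{relaxed} version of the sufficient conditions in Lemma~\ref{lemma:dual}, and to show that these relaxed conditions hold with high probability over the random choice of $\{a_k\}$, following the golfing scheme pioneered in~\cite{gross2011recovering}. Lemma~\ref{lemma:dual} as stated requires an \emph{exact} dual certificate $Z\in\range(\cA^*)$ with $Z_\cT = xx^*$ and $Z_\Tp \prec I_\Tp$; building such an exact certificate directly is delicate. Instead I would first strengthen condition (i): rather than mere injectivity of $\cA$ on $\cT$, I would prove a quantitative local near-isometry, namely that with high probability, after an appropriate normalization,
\[
\tfrac{1}{2}\|X\|_{\rm F}^2 \;\le\; \tfrac{1}{N}\|\cA(X)\|^2 \;\le\; \tfrac{3}{2}\|X\|_{\rm F}^2 \qquad \text{for all } X\in \cT.
\]
Because $\dim_\IR \cT \approx 2n$ is small, this follows from a matrix-concentration argument (an $\varepsilon$-net over the unit ball of $\cT$ combined with a matrix Bernstein bound), and already requires only $N \gtrsim n$ measurements. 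The payoff is that this near-isometry lets one replace the exact certificate by an \emph{inexact} one: it suffices to exhibit $Z\in\range(\cA^*)$ with, say, $\|Z_\cT - xx^*\|_{\rm F}\le \tfrac14$ and $Z_\Tp \prec \tfrac12 I_\Tp$, after which the duality argument of Lemma~\ref{lemma:dual} carries through, the slack being absorbed by the near-isometry on $\cT$.

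Second, I would construct the inexact certificate by the golfing iteration. Partition the $N$ measurements into $L\approx \log n$ disjoint batches of size $m_\ell\approx N/L$, let $\cA_\ell$ be the partial operator using the $\ell$-th batch, and normalize so that $\EE[\tfrac{1}{m_\ell}\cA_\ell^*\cA_\ell]$ acts as the identity on $\cT$. Setting $Q_0 = xx^*$, define iteratively $Z_\ell = Z_{\ell-1} + \tfrac{1}{m_\ell}\cA_\ell^*\cA_\ell\, Q_{\ell-1}$ and $Q_\ell = xx^* - P_\cT Z_\ell$, where $P_\cT$ is the orthogonal projection onto $\cT$. A short computation gives $Q_\ell = P_\cT(\mathrm{Id} - \tfrac{1}{m_\ell}\cA_\ell^*\cA_\ell)P_\cT\, Q_{\ell-1}$, so independence across batches lets each step contract the tangent residual, $\|Q_\ell\|_{\rm F}\le \tfrac12\|Q_{\ell-1}\|_{\rm F}$. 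After $L\approx\log n$ rounds, $\|xx^* - (Z_L)_\cT\|_{\rm F} = \|Q_L\|_{\rm F}$ drops below the required threshold, and $Z=Z_L$ lies in $\range(\cA^*)$ by construction.

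The two families of estimates driving the argument are (a) the contraction $\|P_\cT(\mathrm{Id} - \tfrac{1}{m_\ell}\cA_\ell^*\cA_\ell)P_\cT\|_{\rm op}\le \tfrac12$ on $\cT$, and (b) the off-tangent growth bound $\|P_\Tp\, \tfrac{1}{m_\ell}\cA_\ell^*\cA_\ell\, Q_{\ell-1}\|_{\rm op}\lesssim \|Q_{\ell-1}\|_{\rm F}$, which, summed geometrically, keeps $\|Z_\Tp\|_{\rm op}\le \tfrac12$. Both are statements about sums of the independent random operators $X\mapsto\langle a_k a_k^*, X\rangle\, a_k a_k^*$, and controlling them---especially the operator-norm bound (b) on $Z_\Tp$---is the main obstacle. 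The difficulty is that, although $\|a_k a_k^*\|_{\rm op}=1$ on the sphere, the scalar weights of the form $|\langle a_k, w\rangle|^2$ have only polynomially bounded moments rather than sub-Gaussian tails, so a naive matrix Bernstein bound fails in the tails. I would handle this by truncating the rare indices where $|\langle a_k,\cdot\rangle|$ is atypically large and estimating the truncation error separately (equivalently, via a moment/selection argument for the extreme summands). It is precisely this tail control, together with the union bound over the $L\approx\log n$ golfing rounds, that forces the sample complexity from $n$ up to $N\gtrsim n\log n$, matching the statement of the theorem.
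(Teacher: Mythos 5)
Your proposal follows essentially the same route as the paper's proof of Theorem~\ref{theo:main} (which the survey attributes to~\cite{CSV2013} and sketches in the surrounding text): strengthen the injectivity condition~(i) of Lemma~\ref{lemma:dual} into a quantitative restricted isometry on $\cT$, which then permits an \emph{inexact} dual certificate built by the golfing scheme pioneered in~\cite{gross2011recovering}, with truncation of the heavy-tailed weights driving the $N \gtrsim n\log n$ sample complexity. The only cosmetic difference is that~\cite{CSV2013} phrases the near-isometry on $\cT$ in an $\ell_1$-flavored form precisely to soften the tail issues you flag, whereas you state an $\ell_2$ version; the architecture of the argument is otherwise identical.
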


Theorem~\ref{theo:main} can be extended to noisy measurements, see~\cite{CSV2013,hand2017phaselift}, demonstrating that PhaseLift is robust visavis noise.  In~\cite{ImprovedPL}, the condition $m ={\mathcal O}( n \log n)$ was further improved  to $m = {\mathcal O}( n)$. As noted
in~\cite{ImprovedPL,demanet2014stable} under the conditions of Lemma~\ref{lemma:dual} the feasible set of~\eqref{eq:tracemin} reduces to the single point $X = xx^{\ast}$. Thus, from a purely theoretical viewpoint, the trace minimization  in~\eqref{eq:tracemin} is actually not necessary, while from a numerical viewpoint, in particular in the case of noisy data, using the program~\eqref{eq:tracemin} still seems beneficial.   

We also note that the spectral initialization of Section~\ref{ss:spectral} has a natural interpretation in the PhaseLift framework. Comparing equation~\eqref{bmatrix} with the definition of $\cA$ in~\eqref{linmap}, it is evident that the spectral initializer is simply given by
the solution extracted from computing $\cA^{\ast} y$.

Although PhaseLift favors low-rank solutions,  in particular in the case of noisy data it is not guaranteed to find a rank-one solution.  Therefore, if our optimal solution $\hat{X}$ does not have exactly rank one, we extract the rank-one approximation 
$\hat{x} \hat{x}^*$ where $\hat{x}$ is an eigenvector associated with the largest eigenvalue of $\hat{X}$. In that case one can further improve the accuracy of the solution $\hat{x}$ by ``debiasing'' it. We replace $\hat{x}$ by its
rescaled version $s \hat{x}$ where $s = \sqrt{\sum_{k=1}^{n}  \hat{\lambda}_k}/\|\hat{x}\|_2$.  This corrects for the energy
leakage occurring when $\hat{X}$ is not exactly a rank-1 solution, which could cause the norm of $\hat{x}$ to be smaller than that of
the actual solution. Other corrections are of course possible. 

\begin{rmk}
For the numerical solution of~\eqref{eq:tracemin} it is not necessary to actually set up the matrix $X$ explicitly. Indeed, this fact is already described in detail in~\cite{CESV2013}. Yet, the misconception that the full matrix $X$ needs to be computed and stored  can sometimes be found in the non-mathematical literature~\cite{elser2018benchmark}.
\end{rmk}

Theorem~\ref{theo:main} serves as a benchmark result, but using Gaussian vectors as measurement vectors $a_k$ is not very realistic. For practical purposes, we prefer sets of measurement vectors that obey e.g.\ the coded diffraction structure illustrated in~Figure~\ref{fig:mask}. The extension of PhaseLift to such more realistic conditions was first shown in~\cite{candes2015phase}, where  a  result similar to~Theorem~\ref{theo:main} was proven to also holds for Fourier type measurements when ${\mathcal O}(\log^4 n)$  different specifically designed random masks are employed. Thus, compared to Theorem~\ref{theo:main}, the total number of measurements increases to $N=  {\mathcal O}(n \log^4 n)$. This result was improved in~\cite{gross2017improved}, where the number of measurements was reduced to $ {\mathcal O}(n \log^2 n)$.
Since the coded diffraction approach is both mathematically appealing and relevant in practice, we describe a typical setup that is also the basis of~\cite{candes2015phase,gross2017improved} in more detail below.

We assume that we collect the magnitudes of the discrete Fourier transform of a random modulation of the unknown signal $x$. 
Each such modulation pattern represents one mask and is modeled by a random diagonal matrix. 
Let $\{e_1,\dots, e_n\}$ denothe the standard basis of $\CC^n$. We define the $\ell$-th  (coded diffraction) mask via
$$D_\ell = \sum_{i=1}^n \eps_{\ell,i} e_i e_i^\ast,$$
where the $\eps_{\ell,i}$ are independent copies of a real-valued random variable $\eps$ which obeys
\begin{align}
\EE[\eps] & = \EE[\eps^3] = 0 \notag  \\
|\eps | & \le b \quad \text{almost surely for some $b > 0$},  \label{coded} \\
\EE[\eps^4] & = 2 \EE[\eps^2]^2. \notag
\end{align}
Denote 
$$f_k = \sum_{j=1}^n e^{2\pi \im jk/n} e_j.$$
Then the measurements captured via this coded diffraction approach can be written as
\begin{equation}\label{codedy}
y_{k,\ell} = | \langle f_k, D_\ell x \rangle |^2, \quad k=1,\dots,n, \,\, \ell=1,\dots, L.
\end{equation}

As shown in~\cite{gross2017improved}, condition~\eqref{coded} ensures that the measurement ensemble
forms a spherical 2-design, a concept that has been proposed in connection with phase retrieval in~\cite{BBC09,gross2015partial}.
As a particular choice in~\eqref{coded} we may select each modulation to correspond to a Rademacher vector with random erasures, i.e.,
\begin{equation*}
\eps \sim \begin{cases} \sqrt{2}  & \text{with prob.~$1/4$,} \\
				     0  & \text{with prob.~$1/2$,} \\
				     -\sqrt{2}  & \text{with prob.~$1/4$,}
				     \end{cases}
\end{equation*}
as suggested in~\cite{candes2015phase}.

In the case of such coded diffraction measurements the following theorem, proved in~\cite{gross2017improved}, guarantees the success of PhaseLift with high probability (see also~\cite{candes2015phase}).
\begin{thm}
Let $x \in \CC^n$ with $\|x\|_2 = 1$ and let $n\ge 3$ be an odd number.
Suppose that $N=nL$  Fourier measurements using $L$ independent
random diffraction patterns (as defined in~\eqref{coded} and~\eqref{codedy}) are gathered.
Then, with probability at least  $1 - e^{-\omega}$,  PhaseLift  endowed with the additional constraint $\trace(X) = 1$
recovers $x$ up to a global phase, provided that
$$L \ge C \omega \log^2 n.$$
Here, $\omega \ge 1$  is an arbitrary parameter and $C$ a dimension-independent constant that can
be explicitly bounded.
\end{thm}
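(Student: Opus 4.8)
The plan is to apply the dual-certificate characterization of Lemma~\ref{lemma:dual}: it suffices to verify (i) that $\cA$ is injective on the tangent space $\cT_x=\{xz^*+zx^*:z\in\CC^n\}$, and (ii) that an appropriate dual certificate $Z$ in the range of $\cA^*$ exists. Since we add the constraint $\trace(X)=1$, the second requirement can be relaxed to an \emph{approximate} dual certificate, in the spirit of the strengthened-injectivity strategy of~\cite{gross2011recovering}: it then suffices to find $Z$ with $\|P_{\cT_x}Z-xx^*\|_{\mathrm F}\le\tfrac14$ and $P_{\cT_x^\perp}Z\prec\tfrac12\,I_{\cT_x^\perp}$, provided $\cA$ restricted to $\cT_x$ is well conditioned. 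Thus the argument splits into a conditioning estimate and a certificate construction, both driven by concentration over the $L$ independent masks.

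First I would exploit that the modulation condition~\eqref{coded}---in particular the fourth-moment identity $\EE[\eps^4]=2\,\EE[\eps^2]^2$---makes the ensemble of lifted measurement operators a (weighted) spherical $2$-design. Its defining consequence is that the expected measurement map has the clean closed form $\EE[\cA^*\cA](X)=c_1\trace(X)I+c_2 X$ on $\Hn$, with explicit constants $c_1,c_2$, matching the Haar average of the corresponding degree-two moment; restricted to $\cT_x$ this operator is a well-conditioned (boundedly invertible) map. Averaging over the $L$ masks and invoking a matrix Bernstein inequality then yields $\|P_{\cT_x}(\cA^*\cA-\EE[\cA^*\cA])P_{\cT_x}\|\le\tfrac14$ once $L\gtrsim\omega\log n$, which delivers both the injectivity in~(i) and the local isometry needed to run the approximate-certificate step.

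Next I would construct the certificate by the \emph{golfing scheme}. Partition the $L$ masks into $t\sim\log n$ batches, each of size $\sim\log n$, set $Y_0=0$, and iterate $Y_j=Y_{j-1}+\lambda_j\,\cA_{(j)}^*\cA_{(j)}(xx^*-P_{\cT_x}Y_{j-1})$, where $\cA_{(j)}$ uses only batch $j$ and $\lambda_j$ is the appropriate normalization. Independence across batches lets me apply concentration batchwise: the $2$-design moment bounds force the tangent-space residual $\|xx^*-P_{\cT_x}Y_j\|_{\mathrm F}$ to contract by a fixed factor at each step, so that after $t\sim\log n$ batches it drops below $\tfrac14$, while a complementary family of operator-norm bounds keeps $\|P_{\cT_x^\perp}Y_j\|$ small and ultimately below $\tfrac12$. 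Setting $Z=Y_t$, multiplying the per-batch failure probabilities, and optimizing the batch size produce the stated threshold $L\ge C\omega\log^2 n$ and the success probability $1-e^{-\omega}$.

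The hard part will be the per-mask operator-norm concentration, because the coded-diffraction model is \emph{not} a collection of $N=nL$ independent measurements: within a single mask the $n$ Fourier vectors $\{f_k\}$ are deterministic and share a single random diagonal $D_\ell$, so the rank-one summands are strongly correlated and independence holds only across the $L$ masks. One must therefore bound the variance proxies and the uniform (``incoherence'') parameters at the level of a whole mask---for instance controlling $\big\|\sum_k (D_\ell f_k)(D_\ell f_k)^*\big\|$ and the action of $P_{\cT_x}$ and $P_{\cT_x^\perp}$ on these correlated terms---rather than for individual rank-one measurements. Carrying out this bookkeeping with the $2$-design identities, and tracking how the within-mask correlations inflate the effective variance, is precisely what turns the Gaussian $\log n$ bound into the $\log^2 n$ bound obtained here.
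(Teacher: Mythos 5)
Your proposal is essentially the proof from~\cite{gross2017improved}, which is precisely where the paper's own treatment points: the survey does not reproduce the argument but cites that reference, and its surrounding discussion---Lemma~\ref{lemma:dual}, the remark that a strengthened injectivity allows an approximate dual certificate in the spirit of~\cite{gross2011recovering}, and the observation that condition~\eqref{coded} makes the ensemble a spherical 2-design---is exactly the skeleton you flesh out with the golfing scheme and mask-level concentration. Your identification of the within-mask dependence (only the $L$ masks are independent, not the $N=nL$ rank-one measurements) as the technical crux behind the $\log^2 n$ factor is the right key point; the only cosmetic slip is that per-batch failure probabilities are combined by a union bound rather than multiplied, which changes nothing.
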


While the original PhaseLift approach works for multidimensional signals, there exist specific constructions of masks for the special case of 
one-dimensional signals that provide further improvements. For instance, in~\cite{Boche15}, the authors derive a deterministic, carefully designed set of $4n - 4$ measurement vectors and prove that a semidefinite program will successfully recover {\em generic} signals from the associated measurements. The authors accomplish this by showing that the conditions of Lemma~\ref{lemma:dual} hold on a dense subspace of $\CC^n$.
Another approach that combines the PhaseLift idea with the construction of a few specially designed one-dimensional masks can be found in~\cite{jaganathan2015phase}.


The PhaseCut method, proposed in~\cite{waldspurger2015phase}, casts the phase retrieval problem as an equality constrained quadratic program and then uses the famous MaxCut relaxation for this type of problem. Interestingly, while the PhaseCut and PhaseLift relaxations
are in general different, there is a striking equivalence between these two approaches, see~\cite{waldspurger2015phase}.

\medskip
Concerning the numerical solution of~\eqref{eq:tracemin}, there exists a wide array of fairly efficient numerical solvers, see e.g.~\cite{NesterovBook,toh1999sdpt3,monteiro1997primal}. The numerical algorithm to solve~\eqref{eq:tracemin} in the example illustrated in Figure~\ref{fig:goldballs} was implemented in Matlab using TFOCS~\cite{becker2011templates}. That implementation avoids setting up the matrix $X$ explicitly and only keeps an $n\times r$ matrix with $r \ll n$ in memory.
More custom-designed solvers have also been developed, see e.g.~\cite{huang2017solving}.

\subsection{Convex phase retrieval without lifting}

Despite its mathematical elegance, a significant drawback of PhaseLift is that its computational complexity is too high (even when $X$ is not set up explicitly) for large-scale problems. A different route to solve the phase retrieval problem via convex relaxation was pursued independently in~\cite{bahmani2016phase,goldstein2018phasemax}. Starting from our usual setup, assume we are given phaseless  measurements
\begin{equation}\label{phaseless2}
 |\langle a_k, x\rangle |^2 = y_k, \quad k=1,\dots,N.
 \end{equation}
We relax each measurement to an inequality
\begin{equation}
\label{slab}
   |\langle a_k, x\rangle | \le \sqrt{y_k}=b_k, \quad k=1,\dots,N.
\end{equation}
This creates a symmetric slab ${\mathcal S}_i$ of feasible solutions.
Collectively, these slabs describe a ``complex polytope'' ${\mathcal K}$ of feasible solutions. The target
signal $x$ is one of the extreme points of ${\mathcal K}$, as illustrated in Figure~\ref{fig:slab}.

\begin{figure}
\begin{center}
\includegraphics[width=50mm,height=50mm]{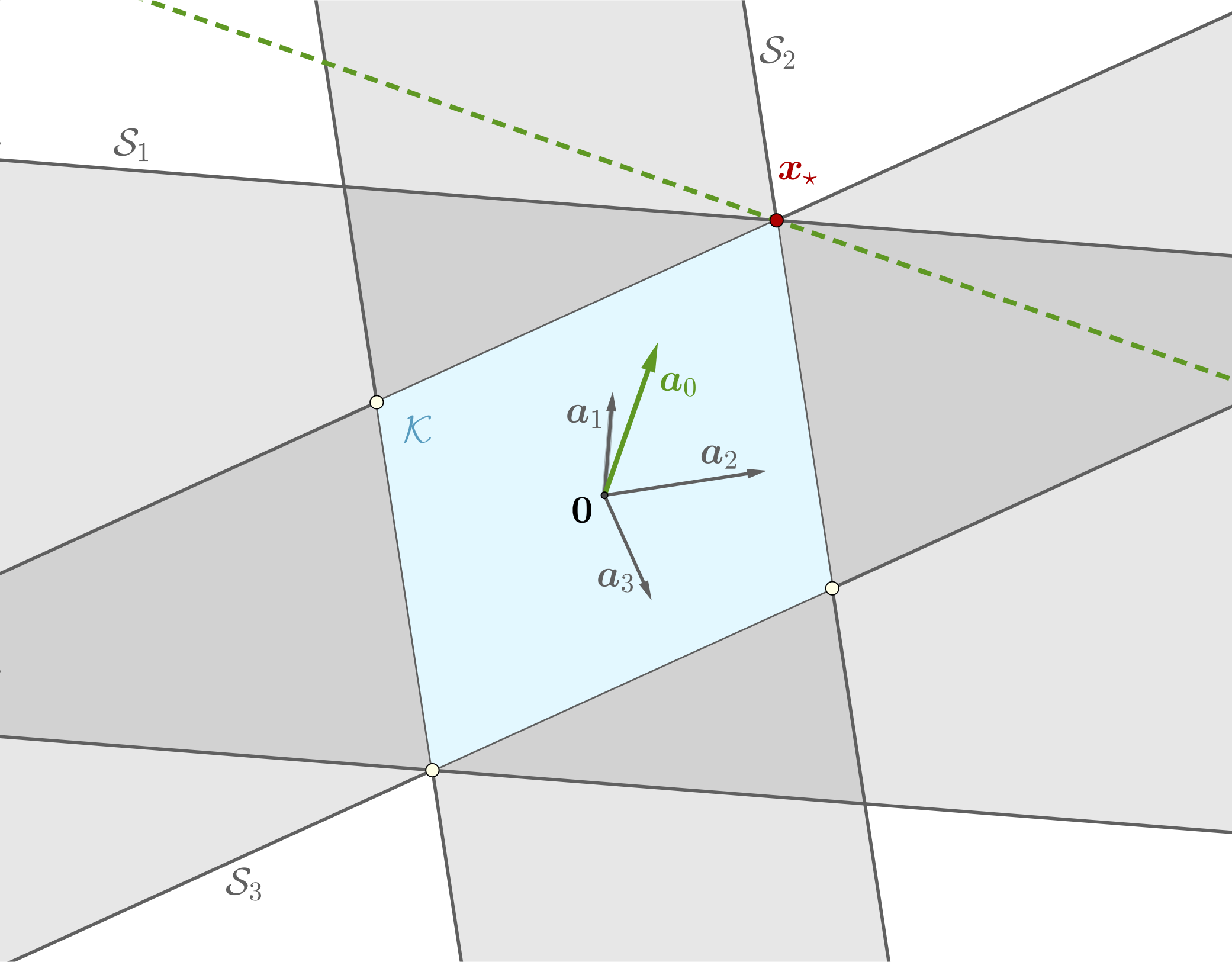}
\caption{The ``complex polytope'' of feasible solutions intersecting at $\xo=x_*$. Here, the role of the anchor vector $u$ is  played by $a_0$. Image courtesy of~\protect\cite{bahmani2016phase}.}
\label{fig:slab}
\end{center}
\end{figure}

How do we distinguish the desired solution $x$ from all the other extreme points of ${\mathcal K}$? The idea proposed in~\cite{bahmani2016phase,goldstein2018phasemax} is to use a (non-zero) ``anchor'' vector $u$ that is sufficiently close to $x$. 
Following~\cite{bahmani2016phase},
from a geometrical viewpoint, the idea is to find a hyperplane tangent to ${\mathcal K}$ at $x$ and the anchor vector $u$ acts as the normal for the desired tangent hyperplane see~Figure~\ref{fig:slab}; $u$ is required to have a non-vanishing correlation with $x$ in the sense that
\begin{equation}\label{slabcorr}
\frac{| \langle x,u \rangle |}{ \|u\|_2 \|x\|_2}  > \epsilon,
\end{equation}
for some  $\epsilon >0$. See also~\eqref{weak-rec} related to the optimal initializiation in Section~\ref{ss:optimal}.
The idea of~\cite{bahmani2016phase,goldstein2018phasemax} is now to recover $x$ by finding the vector that is most aligned with $u$ {\em and} satisfies the relaxed measurement constraints in~\eqref{slab}. 

This approach can be expressed as the following convex problem, dubbed {\em PhaseMax} in~\cite{goldstein2018phasemax}:
\begin{equation}
\label{eq:phasemax}
 \begin{array}{ll}
    \max\limits_{x}   & \quad \langle x,u \rangle \\
    \text{subject to} & \quad  b_k \le   |\langle a_k, x\rangle |^2 +\xi_k, \quad k=1,\dots,N.
\end{array}
\end{equation}
It is remarkable that this convex relaxation of the phase retrieval problem does not involve lifting and operates in the original parameter space.

Choosing an appropriate anchor vector $u$ is crucial, since $u$ must be sufficiently close to $x$. tIt has been shown in~\cite{bahmani2016phase} that under the assumptions of Theorem~\ref{theo:main}, the condition~\eqref{slabcorr} holds with probability at least $1-{\mathcal O}(n^{-2})$. The authors of~\cite{bahmani2016phase} then showed  that
the convex program in~\eqref{eq:phasemax}  can successfully recover the original signal from measurements of the form~\eqref{phaseless2}
under conditions similar to those in  Theorem~\ref{theo:main} (and under additional technical assumptions), and moreover that this recovery is robust in the presence of measurement noise. 
A slightly stronger result was proven in~\cite{hand2016elementary}. There, the authors established the following result:
\begin{thm}\cite{hand2016elementary}
Fix $x\in \IR^n$. Let $a_k$ be i.i.d~${\mathcal N}(0,I_{n})$  for $k = 1,\dots, N$. Let $|\langle a_k, x\rangle |^2 = y_k$. Assume that
$u \in \IR^n$ satisfies $\| u - x\|_2 \le 0.6 \|x\|_2$.  If $N \ge cn$, then with probability at least  $1-6e^{-\gamma N}$, $x$ is the
unique solution of the linear program PhaseMax. Here, $\gamma$ and $c$ are universal constants.
\end{thm}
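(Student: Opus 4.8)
\emph{Setup and reduction to a tangent cone.} The plan is to certify $x$ as the unique PhaseMax optimizer through the convex geometry of the feasible polytope, and then to verify the resulting geometric condition by concentration of measure. Normalizing $\|x\|_2=1$ and writing $b_k=|\langle a_k,x\rangle|$ and $s_k=\sgn\langle a_k,x\rangle$ (well defined almost surely), the point $x$ lies on the boundary hyperplane of every slab $\{z:|\langle a_k,z\rangle|\le b_k\}$, so at $x$ \emph{all} $N$ constraints are active. Since the feasible set is polyhedral, the cone of feasible directions is
\[
C=\{d\in\IR^n:\ s_k\langle a_k,d\rangle\le 0,\ k=1,\dots,N\},
\]
and for the linear objective $\langle u,\cdot\rangle$ a standard polyhedral argument shows that $x$ is the \emph{unique} maximizer if and only if $\langle u,d\rangle<0$ for every $d\in C\setminus\{0\}$. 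Establishing this inequality is the whole task.

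\emph{Deterministic step.} First I would split $d=\mu x+d^{\perp}$ with $d^{\perp}\perp x$, $\mu=\langle x,d\rangle$, and set $r=u-x$, so $\|r\|_2\le 0.6$. Then
\[
\langle u,d\rangle=\mu+\langle r,d\rangle\le \mu+0.6\sqrt{\mu^2+\|d^{\perp}\|_2^2},
\]
which is strictly negative exactly when $\mu<0$ and $|\mu|>\tfrac34\|d^{\perp}\|_2$. (The degenerate case $d^{\perp}=0$, $\mu<0$ gives $\langle u,d\rangle=\mu(1+\langle r,x\rangle)<0$ since $\|r\|_2<1$.) Hence it suffices to show that, with the stated probability, $C$ is \emph{pinched} into the thin cone $\{d:\langle x,d\rangle<-\tfrac34\|d^{\perp}\|_2\}$ around the ray spanned by $-x$; the threshold $\tfrac34$ is precisely what the hypothesis $\|u-x\|_2\le 0.6$ affords.

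\emph{Probabilistic core.} The key computation is $\IE[s_k a_k]=\sqrt{2/\pi}\,x$, so the constraint normals cluster around $x$ while their transverse fluctuations narrow $C$ as $N/n$ grows. Fixing a unit direction $d$ in the \emph{forbidden} set $R=\{d:\|d\|_2=1,\ \langle x,d\rangle\ge-\tfrac34\|d^{\perp}\|_2\}$, and writing $g=\langle a_k,x\rangle$ and $\nu=\langle a_k,d^{\perp}\rangle$ (independent Gaussians), one has $s_k\langle a_k,d\rangle=\mu|g|+\sgn(g)\,\nu$, whose positivity occurs with probability at least a constant $p_0>0$ uniformly over $R$ (the minimum being attained at the extreme ratio $\mu/\|d^{\perp}\|_2=-\tfrac34$). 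Thus, for a single $d\in R$, the event $d\in C$---that \emph{no} constraint is violated---has probability $(1-p_0)^N\le e^{-p_0 N}$.

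\emph{Uniformization and the main obstacle.} Finally I would promote this to a statement holding simultaneously over all $d\in R$: with probability $1-6e^{-\gamma N}$ the violation count $\#\{k:s_k\langle a_k,d\rangle>0\}$ exceeds $\tfrac12 p_0 N\ge 1$ for every $d\in R$, so $C\cap R=\{0\}$, which together with the deterministic step proves the theorem. The hard part is exactly this uniformization at the \emph{linear} rate $N\ge cn$. A crude $\varepsilon$-net of $R$ combined with the $\cO(\sqrt n)$ Lipschitz constant of $d\mapsto\max_k s_k\langle a_k,d\rangle$ forces $\varepsilon\asymp n^{-1/2}$ and hence a spurious $N\gtrsim n\log n$. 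Removing the logarithm---so that uniform control already holds for $N\asymp n$---requires exploiting that the process is indexed by halfspaces in $\IR^n$, whose Vapnik--Chervonenkis dimension is $\cO(n)$; a VC/chaining bound then yields the uniform deviation estimate at the optimal sampling rate, and this is the technical crux of~\cite{hand2016elementary}.
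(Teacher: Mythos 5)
Your reduction is the right one, and it is the same first step as in the cited source: since at $z=x$ every slab is active on the side $s_k\langle a_k,x\rangle=b_k$ (the opposite side has slack $2b_k$), the polyhedral feasible set makes $x$ the unique PhaseMax optimizer iff $\langle u,d\rangle<0$ for every nonzero $d$ in the cone $C=\{d:\ s_k\langle a_k,d\rangle\le 0,\ k=1,\dots,N\}$. Your deterministic step is also correct: with $\|u-x\|_2\le 0.6\|x\|_2$ it suffices that $C$ avoid the region $\{\langle x,d\rangle\ge-\tfrac34\|d^\perp\|_2\}$, and your computation of the worst-case single-constraint violation probability $p_0=\tfrac12-\tfrac1\pi\arctan(\tfrac34)>0$ on that region is right. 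Note that the survey itself states this theorem without proof, so the comparison is with the argument of \cite{hand2016elementary}.

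Where you diverge is the probabilistic core, and there your route is genuinely different from, and heavier than, the one in \cite{hand2016elementary}. You count violated constraints per direction and then uniformize the count, correctly noting that a naive net costs $N\gtrsim n\log n$ and proposing a VC bound to remove the log. That route can in fact be completed (with $N=Kn$ the VC growth function is $(eK)^n$, so the offending log is only $\log K$ and is absorbed into the universal constant $c$), but you leave it as a gesture and, more importantly, you misattribute it: the ``elementary'' proof of Hand and Voroninski uses no VC dimension, no chaining, and no per-direction counting. Their key observation is an identity available exactly on the cone $C$: if $s_k\langle a_k,d\rangle\le 0$ for all $k$, then $|\langle a_k,d\rangle|=-s_k\langle a_k,d\rangle$, hence
\[
\sum_{k=1}^N |\langle a_k,d\rangle| \;=\; -\Big\langle \sum_{k=1}^N s_k a_k,\; d\Big\rangle .
\]
Only two standard uniform concentration facts are then needed, both at linear sample complexity via constant-resolution nets: (i) $\big\|\tfrac1N\sum_k s_k a_k-\sqrt{2/\pi}\,x\big\|_2\le\delta$, and (ii) $\inf_{\|d\|_2=1}\tfrac1N\sum_k|\langle a_k,d\rangle|\ge\sqrt{2/\pi}\,(1-\delta)$, each holding with probability $1-e^{-\gamma N}$ once $N\ge C(\delta)\,n$. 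Feeding these into the identity gives $\langle x,d\rangle\le-(1-c\delta)\|d\|_2$ for \emph{every} $d\in C$, i.e.\ the cone is pinched into a tiny neighborhood of the ray through $-x$ --- far stronger than the $\tfrac34$-pinching you need --- after which $\langle u,d\rangle\le(-1+c\delta+0.6)\|d\|_2<0$ for small $\delta$. So: same reduction, different mechanism. Your approach needs empirical-process machinery to reach linear sample complexity; the reference's sign identity converts the ``for all $k$'' membership in $C$ into a single linear functional plus a norm, which is what makes the proof short and elementary. As written, your argument has a gap precisely at the step you defer to the reference, because the reference does not contain that argument.
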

Using  for instance the truncated spectral initialization proposed in~\cite{chen2017solving}, one can show that $\| u = x\|_2 \le 0.6 \|x\|_2$ holds 
with probability at least $1-e^{-\gamma N}$, provided that $N \ge c_0 n$.

In~\cite{dhifallah2017phase},  it was  shown that even  better signal recovery guarantees
can be achieved by iteratively applying PhaseMax. The resulting method is called PhaseLamp; the name derives 
from the fact that the algorithm is based on the idea of successive linearization and maximization over a polytope.

\medskip
Denote the $n \times N$ matrix $A =[a_1,\dots,a_N]$ and the  $N \times N$ diagonal matrix $B = \diag (b_1,\dots,b_N)$.
Then, as noticed in~\cite{goldstein2018phasemax} the following basis pursuit problem
\begin{equation}
\label{eq:bp}
 \begin{array}{ll}
    \min\limits_{z \in \CC^N}   & \quad \|z\|_1 \\
    \text{subject to} & \quad  u = A B^{-1}z.
\end{array}
\end{equation}
is dual to the convex program~\eqref{eq:phasemax}. Moreover, as pointed out in~\cite{goldstein2018phasemax}, as a
consequence, if PhaseMax succeeds, then the phases of the solution vector $z$ to~\eqref{eq:bp} are exactly the phases that
were lost in the measurement process in~\eqref{phaseless2}, that is
$$
\frac{z_k}{|z_k|} b_k  = \langle a_k, x\rangle, \quad k=1,\dots,N.
$$
These observation open up the possibility to utilize algorithms associated with basis pursuit for phase retrieval.

\bigskip
Yet another convex approach to phase retrieval has been proposed in~\cite{doelman2018solving}. There, the authors propose a sequence of convex relaxations, where the obtained convex problems are affine in the unknown signal $\xo$. No lifting is required in this approach. However, no theoretical conditions are provided (in terms of number of measurements or otherwise) that would ensure that the computed solution actually coincides with the true solution $\xo$.

\medskip

To illustrate the efficacy of the  approaches described in this section, we consider a stylized version of a setup one encounters in X-ray
crystallography or diffraction imaging.  The test image, shown in
Figure~\ref{fig:goldballs}(a) (magnitude), is a complex-valued image\footnote{Since the original image and the reconstruction are complex-valued, we only display the absolute value of each image.} of size
$256 \times 256$, whose pixel values correspond to the complex
transmission coefficients of a collection of gold balls at nanoscale embedded in a
medium (data courtesy of Stefano Marchesini from Lawrence Berkeley National Laboratory).

We demonstrate the recovery of the image shown in Figure~\ref{fig:goldballs}(a) from noiseless measurements via PhaseLift, PhaseMax, and PhaseLamp.
We use three coded diffraction illuminations, where the entries of the diffraction matrices  are either
$+1$ or $-1$ with equal probability. We use the TFOCS based implementation of PhaseLift from~\cite{CESV2013} with reweighting. 
For PhaseMax and PhaseLamp we use the implementations provided by PhasePack (cf.~\cite{chandra2017phasepack}) with the optimal spectral initializer and the default settings.
The reconstructions by PhaseLift and PhaseLamp, shown in Figure~\ref{fig:goldballs}(b) and Figure~\ref{fig:goldballs}(d) are visually indistinguishable from the original.  The reconstruction computed by PhaseMax, depicted in Figure~\ref{fig:goldballs}(c) is less accurate in this example.

\begin{figure}
\begin{center}
\subfigure[Original image]{
\includegraphics[width=5cm]{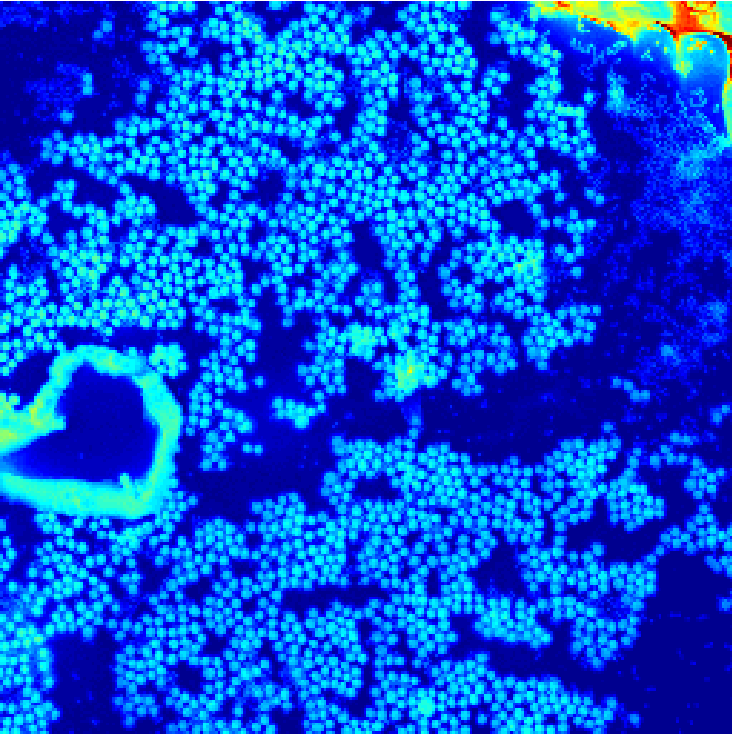}}
\qquad
\subfigure[Reconstruction via PhaseLift]{
\includegraphics[width=50mm,height=50mm]{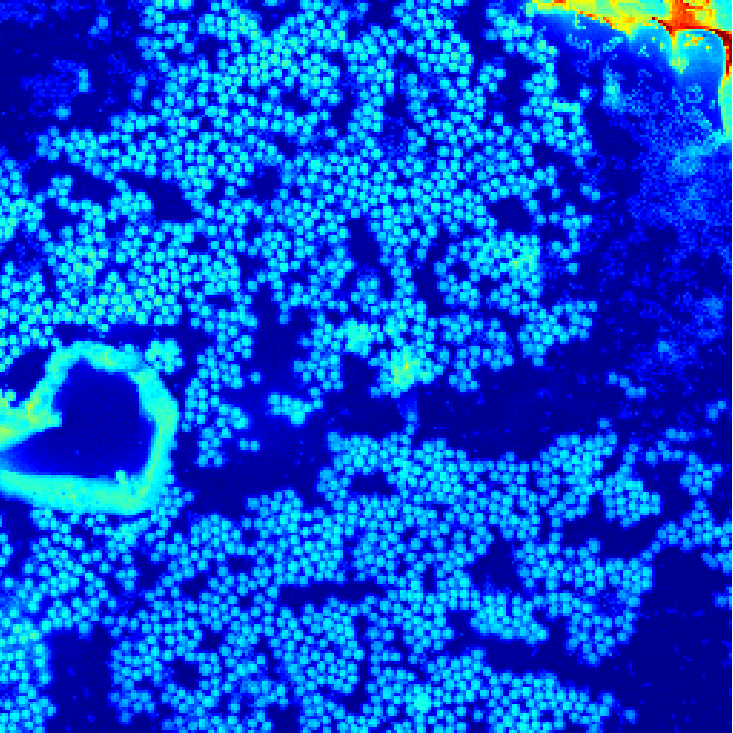}}\\
\smallskip
\subfigure[Reconstruction via PhaseMax]{
  \includegraphics[width=50mm,height=50mm]{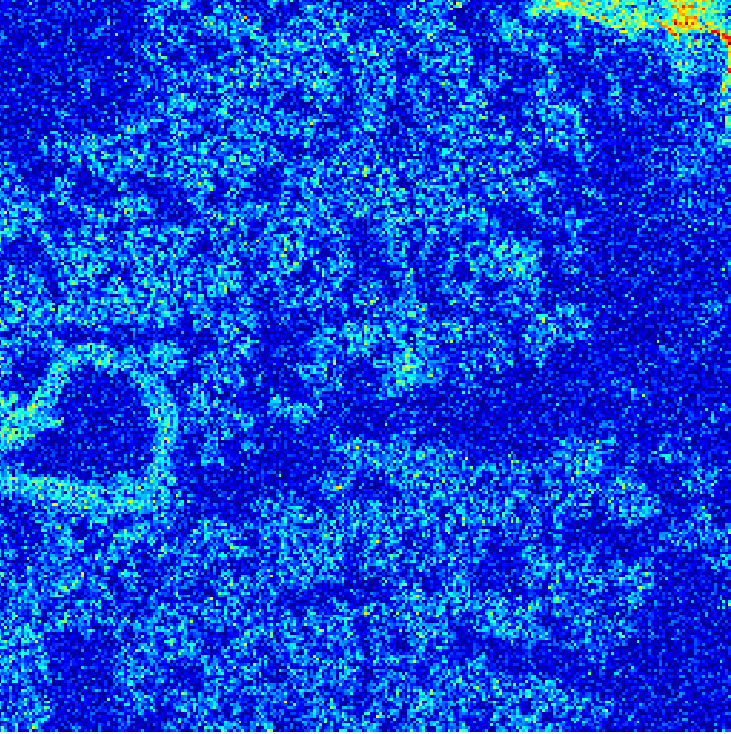}}
\qquad \subfigure[Reconstruction via PhaseLamp]{
\includegraphics[width=50mm,height=50mm]{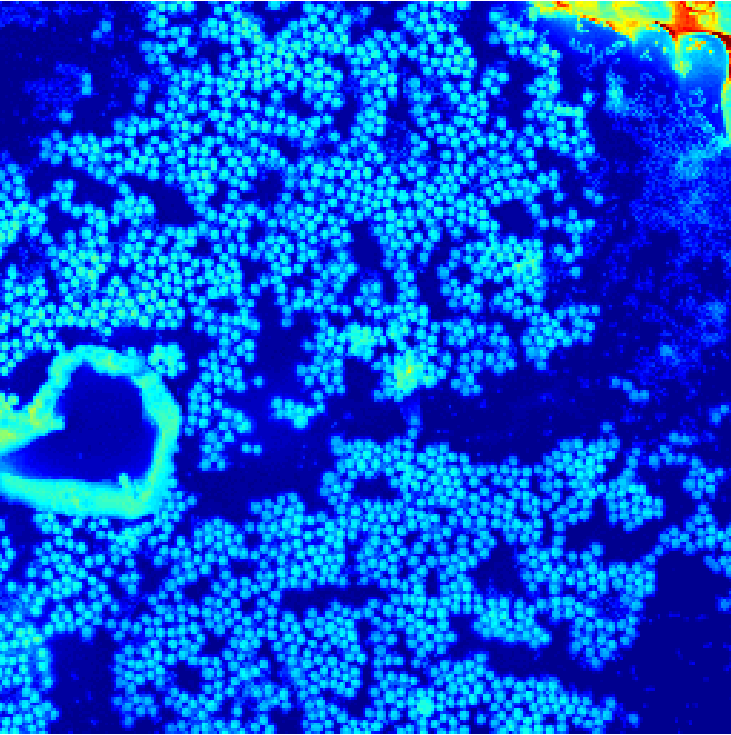}}
\caption{Original goldballs image and reconstructions via PhaseLift, PhaseMax, and PhaseLamp, using three coded diffraction illuminiations.}
\label{fig:goldballs}
\end{center}
\end{figure}

\medskip
Despite the ability of convex methods to recover signals from a small number of phaseless observations,  these methods have not found practical use yet. While there exist fast implementations of PhaseLift, in terms of computational efficiency it cannot compete with the nonconvex methods discussed in Section \ref{s:nonconvex}. The biggest impact PhaseLift has had on phase retrieval is on the one hand it triggered a broad and systematic study of numerical algorithms for phase retrieval, and on the other hand it ignited a sophisticated design of initializations for non-convex solvers.
Beyond phase retrieval, it ignited research in related areas, such as in bilinear compressive sensing~\cite{ling2015self}, including blind deconvolution~\cite{ahmed2013blind,li2016identifiability,krahmer2019complex} and blind demixing~\cite{ling2017blind}.
Moreover, the techniques behind PhaseLift and sparse recovery have influenced other areas directly related to phase retrieval, namely 
low-rank phase retrieval problems as they appear for instance in quantum tomography, as well as utilizing sparsity in phase retrieval.
We will discuss these topics in Sections~\ref{ss:lowrank} and~\ref{ss:sparsity} below.

\subsection{Low-rank phase retrieval problems}\label{ss:lowrank}

The phase retrieval problem has a natural generalization to recovering low-rank positive semidefinite matrices. Consider the problem of
recovering an unknown $n \times n$ rank-$r$ matrix $ \succeq 0$
from linear functionals of the form $y_k = \trace(A_k^{\ast} M)$ for $k=1,\dots,N$,
where $A$ is hermitian. By representing $M$  in factorized form, $M = X X^{\ast}, X \in \CC^{n\times r}$, we can express this problem
as the attempt to recover $ X \in \CC^{n\times r}$ from the measurements $y_k = \trace(A_k^{\ast} X X^{\ast})$, 
which, in light of~\eqref{tracemeasurements}, is a natural generalization of the phase retrieval problem. 

A particular instance of interest of this problem arises in {\em quantum  state tomography}, where one tries to
characterize the complete quantum state of a particle or particles through a series of measurements in different bases~\cite{paris2004quantum,haah2017sample}.
More precisely, we are concerned with the task of reconstructing a finite-dimensional quantum mechanical system which is fully characterized by its density operator $\rho$ -- an $n \times n$  positive semidefinite matrix with trace one. Estimating the density operator of an actual (finite dimensional) quantum system is an important task in quantum physics known as quantum state tomography. One is often interested in performing tomography for quantum systems that have certain structural properties.  One important structural property is {\em purity}. 
A pure quantum state of $n$ ions can be described by its $2^n \times 2^n$ rank-one density matrix.
A quantum state is almost pure if it is well approximated by a matrix of low rank $r$ with $r \ll n$. 

Assuming this structural property, quantum state tomography becomes a low-rank matrix recovery problem~\cite{gross2011recovering,recht2010guaranteed,kueng2017low,davenport2016overview}.  It is obvious that we can recover a general quantum state 
$\rho \in \CC^{n\times n}$ from $n(n-1)$ properly chosen measurements. But  if $\rho$ is low-rank, how many measurements are needed such that we can still recover $\rho$ in a numerical efficient manner? And what properties does measurement system have to satisfy?
An additional requirement is the fact that the measurement process has to be ``experimentally realizable'' and preferably in an efficient manner~\cite{kueng2017low}. Moreover, in a  real experiment, the measurements are noisy, and the true state is only approximately low-rank. Thus, any algorithm that aims to recover quantum states must be robust to these sources of error.

Many of the algorithms discussed in the previous sections can be extended with straightforward modifications to the generalized phase retrieval problem.  For example in~\cite{kueng2017low} it has been shown that  the PhaseLift results can be extended beyond the rank-one case: For Gaussian measurements the required number of measurements is $N \ge C nr$, which is analogous to the rank-one case. 

Perhaps more interestingly, and  similar in spirit to coded diffraction illuminations, there are certain structured measurement systems that are also realizable from an experimental viewpoint. 
For example, using the mathematically intriguing concept of Clifford orbits,  one can reconstruct  a rank-$r$ quantum state exactly in the noisefree case and robustly in the presence of noise if the measurement matrices are chosen independently and uniformly at random from the Clifford orbit, assuming the number of measurements satisfies $N \ge C r n  \log n$, see~\cite{kueng2016low}. Here, the noise can include  additive noise as well as ``model noise'' due to the state being not exactly of rank $r$.
It was shown in~\cite{kueng2016low} that  a similar result holds if we replace the measurement system by
approximate projective 4-designs (see~\cite{kueng2017low} for a precise definition). This line of research opens up beautiful connections to group theory, representation theory, and time-frequency analysis.

\medskip
{
We will demonstrate that the famous Zauner conjecture can  be expressed as a low-rank  phase retrieval problem. At the core of this conjecture is the problem of  finding a family of $n^2$ unit-length vectors $\{v_i\}_{i=1}^{n^2}$ in $\CC^n$ such that 
\begin{equation}\label{etf}
|\langle v_i , v_i' \rangle |^2 = \frac{1}{n+1}, \qquad \forall  i\neq i',
\end{equation}
see~\cite{zaunerquantendesigns}. Such a family constitutes an equiangular tight frame of maximal cardinality (since no more than $n^2$ lines in $\CC^n$ can be equiangular), also known as Grassmannian frame~\cite{SH03}. Equiangular tight frames play an important role in many applications, ranging from signal processing and communications to compressive sensing. In quantum physics~\cite{appleby2005symmetric} such a family of vectors is known as
symmetric informationally complete positive-operator-valued measure (SIC-POVM),~\cite{scott2010symmetric}.}

{
Zauner conjectured that for each $n=2,3,\dots,$ there exists a fiducial vector $v \in\CC^n$
such that the Weyl-Heisenberg (or Gabor) frame $\{T_j M_k v\}_{j,k=1}^n$ satisfies~\eqref{etf}. Moreover, Zauner conjectured that this fiducial vector $v\in\CC^n $ is  an eigenvector of a certain order-3 Clifford unitary ${\mathcal U}_{n}$. We refrain here from going into details about the Clifford group  and refer instead to~\cite{zaunerquantendesigns,appleby2005symmetric,fuchs2017sic}.  Putative fiducial vectors have been found (to machine precision) via computational techniques for every dimension $n$ up to 151, and for a handful of higher dimensions~\cite{fuchs2017sic}.
We also know analytic solutions for a few values of $n$, see e.g.~\cite{appleby2019tight,fuchs2017sic}.}

{
Note that
$ \langle T_j M_k x, T_{j'} M_{k'}x \rangle  =  e^{-2\pi \i (j-j')k'} \langle T_{j-j'} M_{k-k'} x, x \rangle $.
Hence, Zauner's conjecture can be expressed as solving the problem
\begin{equation}\label{zauner}
\text{Find $x \in {\mathcal U}_{n}$  \, s.t. \,\,} 
| \langle T_j M_k x, x \rangle |^2 = 
\begin{cases}  1 & \text{if $k=j=0$}, \\
                 \frac{1}{n+1} & \text{else.}
 \end{cases}
 \end{equation}
 This is a phase retrieval problem. Unfortunately, the unknown vector $x$ appears on both sides of the inner product.
Hence, while the measurement setup may seem similar to ptychography at first glance, the problem~\eqref{zauner} is actually more challenging. }

 {
To arrive at the promised low-rank  formulation, first note that the property $x \in {\mathcal U}_{n}$ can be expressed as $x=U_n z$, where $U_n$ is an $n \times d$ matrix and $z\in \CC^d$ with  $d = \lceil \frac{n+1}{3} \rceil$, see~\cite{scott2010symmetric}. 
Hence, for $x \in {\mathcal U}_{n}$ we obtain 
 $$ \langle T_j M_k x, x \rangle   =  \langle T_j M_k U_n z, U_n z \rangle  = \langle V_{jk} , Z \rangle_{\text{HS}},$$
 where $Z=z z^{\ast}$ and $V_{jk} = U_n^{\ast} T_j M_k U_n$  for $j,k=0,\dots,n-1$. 
 Thus, we arrive at our first low-rank phase retrieval version by rewriting~\eqref{zauner}  as 
  \begin{equation}\label{zaunerrank}
    \begin{array}{ll}
\text{Find}    & \quad Z \\
    \text{subject to} &  |  \langle V_{jk} , Z \rangle_{\text{HS}}  |^2 = 
       \begin{cases}  1 & \text{if $k=j=0$}, \\
                 \frac{1}{n+1} & \text{else,} 
 \end{cases} \\
 & \quad Z \succeq 0\\
& \quad \rank(Z) = 1.
\end{array}
 \end{equation}
In~\eqref{zaunerrank}  we have $n^2$ quadratic equations with about $(n/3)^2$ unknowns.
It is not difficult to devise a simple alternating projection algorithm  with random initialization to solve~\eqref{zauner} that works quite efficiently for $n <100$. However,  for larger $n$ the algorithm seems to get stuck in local minima. Maybe methods from {\em blind}  ptychography can  guide us to solve~\eqref{zauner} numerically for larger $n$.}

 {We can lift the equations in~\eqref{zaunerrank} up using tensors to arrive at our second low-rank scenario. More precisely,
 defining the tensors ${\mathcal V}_{jk} = V_{jk} \otimes V_{jk}$
 and the rank-one tensor ${\mathcal Z} = Z \otimes Z$, we can express~\eqref{zauner} as the  problem
 \begin{equation}\label{zaunertensor}
  \begin{array}{ll}
    \text{Find}   & \quad {\mathcal Z}\\
    \text{subject to} & \trace({\mathcal Z}{\mathcal V}_{jk}) =   \begin{cases}  1 & \text{if $k=j=0$}, \\
                 \frac{1}{n+1} & \text{else,} 
 \end{cases} \\
& \quad {\mathcal Z} \succeq 0\\
& \quad \rank({\mathcal Z}) = 1,
\end{array}
 \end{equation}
 with an appropriate interpretation of trace, positive-definiteness, and rank for tensors. While the equations in~\eqref{zaunertensor} are now linear, this simplification comes at the cost of substantially increasing the number of unknowns to $(n/3)^4$. Perhaps modifications of recent algorithms for low-rank tensor recovery (see e.g.~\cite{rauhut2017low}) can be utilized to solve~\eqref{zaunertensor} .}

\subsection{{Phase retrieval, sparsity and beyond}}\label{ss:sparsity}

Support constraints  have been popular in phase retrieval for a very long time as a means to make the problem well-posed or to make algorithms converge (faster) to the desired solution.  When imposing a support constraint, one usually one assumes that one knows (an upper bound of) the interval or region in which the object is non-zero. Such a constraint is easy to enforce numerically and it  has been discussed in detail in previous sections.

A more general form of support constraint is {\em sparsity}. In recent years the concept sparsity has been recognized as an enormously useful assumption in all kinds of inverse problems. When a signal is sparse, this means  that the signal has only relatively few non-zero coefficients in some (known) basis, but we do not know a priori the indices of these coefficients. For example, in case of the standard basis, this would mean that we know the signal is sparsely supported, but we do not know the locations of the non-zero entries. An illustrative example is depicted in Figure~\ref{fig:corn}. The simplest setting is when the basis in which the signal is represented sparsely is known in advance. When such a basis or dictionary is not given a priori, it may have to be learned from the measurements themselves~\cite{tillmann2016dolphin}.

When we assume sparsity we are no longer dealing with a linear subspace condition as is the case with ordinary support constraints, but with a non-linear subspace. 
Due to this fact, such a ``non-linear'' sparsity constraint is much harder to enforce than the case when the support of the signal is known a priori.

Owing to the theory of compressive sensing~\cite{CanTao06,Don06,FouRa13}  we now have a  thorough and quite broad theoretical and algorithmic understanding of how to exploit sparsity to either reduce the number of measurements and/or to improve the quality of the reconstructed signal. We call a signal $x \in \CC^n$ $s$-sparse if $x$ has at most $s$ non-zero entries and write $\|x\|_0=s$ in this case. The theory of compressive sensing tells us in a nutshell that under appropriate conditions of the sensing matrix $A \in \CC^{N\times n}$, an $s$-sparse signal
$x\in\CC^n$ can be recovered from the linear measurements $b=Ax$  via linear programming (with high probability) if $N \gtrsim s \log n$, see~\cite{FouRa13} for precise versions and many variations.

{\em Classical} compressive sensing assumes a linear data acquisition mode, where measurements are of the form
$\langle a_k, x\rangle$. Obviously, this data acquisition mode does fit the phase retrieval problem. Nevertheless, the tools and insights we have gained from compressive sensing can be adapted to some extent to the setting of quadratic measurements, i.e., for phase retrieval. 

The problem we want to address is: assume $\xo$ is a sparse signal, how can we utilize this prior knowledge effectively in the phase retrieval problem? For example, what are efficient ways to enforce sparsity in the numerical reconstruction,  or by how much can we reduce the number of phaseless measurements and still successfully recover $\xo$ with theoretical guarantees, and do so in a numerically robust manner?

There exists a plethora of methods to incorporate sparsity in phase retrieval. This includes convex approaches~\cite{ohlsson2011compressive,li2013sparse},  thresholding strategies~\cite{wang2017sparse,yuan2019phase}, greedy algorithms~\cite{shechtman2014gespar}, algebraic methods~\cite{beinert2017sparse} and tools from deep learning~\cite{hand2018phase,kim2019fourier}. In the following we briefly discuss a few selected techniques in more detail.

Following the paradigm of compressive sensing, it is natural to consider the following semidefinite program to recover a sparse signal $\xo$
from phaseless measurements. We denote $\|X\|_1: = \sum_{k,l} |X_{k,l}|$, and similar to using the trace-norm of a matrix $X$ as a convex surrogate of the rank of $X$, we use  $\|X\|_1$ as a convex surrogate of $\|X\|_0$. Hence, we are led to the following semidefinite program (SDP), cf.~\cite{ohlsson2011compressive,li2013sparse}:
\begin{equation}
\label{eq:tracesparse}
 \begin{array}{ll}
    \text{minimize}   & \quad \|X\|_1 + \lambda \trace(X)\\
    \text{subject to} & \quad  \cA(X) = y\\
& \quad X \succeq 0.
\end{array}
\end{equation}

In~\cite{li2013sparse} it is shown that for Gaussian measurement vectors, $N = {\mathcal O}(s^2 \log n)$ measurements are sufficient to recover an
$s$-sparse input from phaseless measurements using~\eqref{eq:tracesparse}.   Based on optimal sparse recovery results from compressive sensing using Gaussian matrices, one would hope that $N = {\mathcal O}(s \log n)$ should suffice. 
However,~\cite{li2013sparse} showed that the SDP 
in~\eqref{eq:tracesparse} cannot outperform this suboptimal sample complexity by direct $\ell_1$-penalization. 

It is conceptually easy to enforce some sparsity of the signal to be reconstructed in the algorithms based on alternating projections or gradient descent, described in Section~\ref{s:nonconvex}. One only needs to incorporate an additional greedy step or a thresholding step during each iteration. For example, for gradient descent we modify the update rule~\eqref{wfgrad} to 
\begin{equation}\nn
z_{j+1} =    {\mathcal T}_{\tau} \big( z_j - \frac{\mu_j}{\|z_0\|_2^2}  \nabla L(z_j) \big),
\end{equation}
where ${\mathcal T}_{\tau}(z)$ is a threshold operator that e.g.\ keeps the $\tau$ largest entries of $z$ and sets the other entries of $z$ to zero; or
alternatively, ${\mathcal T}_{\tau}$ leaves all values of $z$ above a certain threshold (indicated by $\tau$) unchanged, and sets all values of $z$
below this threshold to zero. We can also replace the latter hard thresholding procedure by some soft thresholding rule. Here, it is assumed that the signal is sparse in the standard basis, otherwise the thresholding procedure has to be applied in the suitable basis that yields a sparse representation, such as perhaps a wavelet basis (at the cost of applying additional forward and inverse transforms).

While such modifications are easy to carry numerically,  providing theoretical guarantees is significantly harder. For example, it has been shown  that sparse Wirtinger Flow~\cite{yuan2019phase} as well as truncated amplitude flow~\cite{wang2017sparse} succeed if the sampling complexity is at least ${\mathcal O}(s^2 \log n)$.
Applying a thresholded Wirtinger flow to a non-convex empirical risk minimization problem that is derived from the phase retrieval 
problem,~\cite{cai2016optimal} have established optimal  convergence rates for noisy sparse phase retrieval under sub-exponential noise.

Two-stage approaches have been proposed as well, where in the first stage the support of the signal is identified and in the second state the signal is recovered using the information from the first stage~\cite{iwen2017robust,jaganathan2017sparse}. For example, Jaganathan et al.\ propose such a two-sate scheme for the one-dimensional Fourier phase retrieval problem, consisting of (i)~identifying  the  locations  of  the  non-zero  components,  of  the  signal  using  a combinatorial algorithm, (ii)~identifying  the  signal  values in  the support using a convex algorithm.  This algorithm is shown experimentally to recover $s$-sparse signals from ${\mathcal O}(s^2)$ measurements, but the theoretical guarantees require a higher sample complexity.

An alternative approach to model signals with a small number of parameters  is proposed in~\cite{hand2018phase}, based on generative models.
In this work, the authors suppose that the signal of interest is in the range of a deep generative neural network $G: \IR^s \to \IR^n$, where
the generative model is a $d$-layer, fully-connected, feed forward neural network with random weights.
The authors introduce an empirical risk formulation and prove, assuming a range of technical conditions holds,  that this optimization problem has favorable global geometry for gradient methods, as soon as the number of measurements satisfies $N = {\mathcal O}(s d^2 \log n)$.

Given the current intense interest in deep learning, it is  not surprising that numerous other deep learning based methods 
for phase retrieval have been proposed, see e.g.~\cite{metzler2018prdeep,rivenson2018phase,gladrow2019digital,zhang2018fast}. 
Many of the deep learning based methods  come with little theoretical foundation and are sometimes difficult to reproduce. Moreover, if one changes the input parameters just by a small amount, say, by switching to a slightly different image resolution, a complete retraining of the network is required.
As most  deep learning applications, there is currently almost no theory about any kind of reconstruction guarantee, convergence rate, stability analysis, and other basic questions one might pose to a numerical algorithm. On the other hand, there is  anecdotal evidence that deep learning has the potential to achieve convincing results in phase retrieval. 

Instead of designing an end-to-end deep learning based phase retrieval algorithm (and thereby ignoring the underlying physical model), a more promising direction seems to be to utilize all the information available to model the inverse problem and bring to bear the power of deep learning as a data-driven regularizer. Such an approach has been advocated for general inverse problems in~\cite{li2018nett,arridge_maass_2019}.
It will interesting to adapt these techniques to the setting of phase retrieval.

\medskip
In~\cite{schniter2015message} the authors have proposed an Approximate Message Passing (AMP)  approach for phase retrieval of sparse signals. AMP based methods were originally developed for compressed sensing problems of estimating sparse vectors from underdetermined linear measurements~\cite{donoho2010message}. They have now been extended to a wide range of estimation and learning problems including  matrix completion, dictionary learning, and  phase retrieval. 
The first AMP algorithm designed for phase retrieval for sparse signals using techniques from compressive sensing can be found in~\cite{schniter2014compressive}.
Various extensions and improvements have been developed~\cite{dremeau2015phase,metzler2016bm3d,metzler2017coherent}. 

As pointed out in~\cite{metzler2017coherent}, one downside is that AMP algorithms are heuristic algorithms and  at best offer only asymptotic guarantees. In the case of the phase retrieval problem, most AMP algorithms offer no guarantees at all. Despite this shortcoming, they often perform well in practice and a  key appealing feature of AMP is its computational scalability. See~\cite{metzler2017coherent} for a more detailed discussion of AMP algorithms for phase retrieval.

\medskip

{In another line of research, the randomized Kaczmarz method has been adapted to phase retrieval, see~\cite{wei2015solving}. Competitive theoretical convergence results can be found in~\cite{tan2019phase,jeong2017convergence}, where it has been shown that the convergence is exponential and comparable to the linear setting~\cite{strohmer2009randomized}.}

\section{Blind ptychography}\label{sec:blind-ptycho}\label{ss:blindptycho}

An important development in ptychography since the work of~\cite{TDB09} is the potential of simultaneous recovery of the object and the illumination. This is referred to as blind ptychography. There are two ambiguities inherent to any blind ptychography.
 
The first is the affine phase ambiguity. Consider the mask and object estimates
\beq
\label{lp1}
\nu^{0}(\bn)&=&\mu^{0}(\bn) \exp(-\im a -\im \bw\cdot\bn),\quad\bn\in\cM^{0}\\
\label{lp2} \xh(\bn)&=& \x(\bn) \exp(\im b+\im \bw\cdot\bn),\quad\bn\in \IZ^2_n
\eeq
for any $a,b\in \IR$ and $\bw\in \IR^2$.  For any $\bt$, we have the following
calculation
\beqn
\nu^\bt(\bn)&=&\nu^{0}(\bn-\bt)\\
&=&\mu^{0}(\bn-\bt) \exp(-\im \bw\cdot(\bn-\bt))\exp(-\im a)\\
&=&\mu^\bt(\bn) \exp(-\im \bw\cdot(\bn-\bt))\exp(-\im a)
\eeqn
and hence for all $\bn\in \cM^\bt, \bt\in\cT$
\beq
\label{drift2}
\nu^\bt(\bn)  x^\bt(\bn)&=&\mu^\bt(\bn)\x^\bt(\bn) \exp(\im(b-a))\exp(\im \bw\cdot\bt). 
\eeq
Clearly, \eqref{drift2} implies that $g$ and $\nu^{0}$ produce the same ptychographic data as $f$ and $\mu^{0}$ since
for each $\bt$, $\nu^\bt\odot  x^\bt$ is a constant phase factor times $\mu^\bt\odot \x^\bt$ {  where $\odot$ is the entry-wise (Hadamard) product}.  It is also clear that the above statement holds
true regardless of the set $\cT$ of shifts and the type of mask. 

In addition to the affine phase ambiguity \eqref{lp1}-\eqref{lp2},  a scaling factor ($\xh=c \x, \nu^{0}=c^{-1} \mu^{0}, c>0$) is inherent to any blind ptychography. 
Note that when the mask is exactly known (i.e. $\nu^{0}=\mu^{0}$), neither ambiguity can occur.

\subsubsection{Local rigidity} 

Motivated by \eqref{drift2} we seek sufficient conditions for results such as
\beq
\label{3.100}
\nu^{k}\odot  x^k=e^{\im\theta_{k}}\mu^{k}\odot \x^k,\quad k=0,\dots,Q-1, 
\eeq
for some constants $\theta_k\in \IR.$ We call \eqref{3.100} the property of {\em local rigidity}.

 \begin{figure}
 \centering
 \includegraphics[width=7cm]{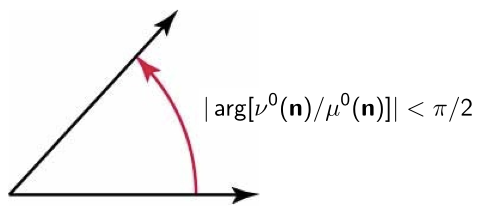}
\caption{$\nu^0$ satisfies MPC if $\nu_0(\bn)$ and $\mu^0(\bn)$ form an acute angle for all $\bn$.}
\label{fig:MPC}
\end{figure}

A main assumption needed here is the {\em mask phase constraint} (MPC):  
 \begin{quote}\em 
The mask estimate $\nu^0$ has  the property
$\Re(\overline{\muh^0}\odot \mu^0)>0$ at every pixel (where $\odot$ denotes the component-wise product and the bar denotes the complex conjugate). 
 \end{quote}
 Another ingredient in the measurement scheme is that at least for one block (say $\cM^\bt$) the corresponding object part
  $f^\bt$ has a tight support in $\cM^\bt$, i.e. 
\beq
\nn
\mb{\rm Box}[\supp(f^\bt)]=\cM^\bt
\eeq 
where $\mb{\rm Box}[E]$ stands for the box hull,  the smallest
rectangle containing $E$ with sides parallel to $\be_1=(1,0)$ or $\be_2=(0,1)$. We call such an object part
{\em an anchor}. Informally speaking, an object part $f^\bt$ is an anchor if
its support touches four sides of $\cM^\bt$ (Figure \ref{fig:corn}). 

In the case $\supp({x})=\cM$, every object part is an anchor. 
For an extremely sparse object such as shown in Figure \ref{fig:corn}, the anchoring assumption can pose a challenge. 

Both the anchoring assumption and MPC are nearly necessary conditions for local rigidity \eqref{3.100} to hold
as demonstrated by counterexamples constructed in \cite{blind-ptycho}.

\begin{thm}\cite{blind-ptycho}\label{thm:many} Suppose that 
$\{\x^k\} $ has an anchor and is $s$-connected with respect to the ptychographic scheme. 
 
Suppose that an object estimate $\xh=\bigvee_k  x^k$, where $ x^k$ are defined on  $\cM^k$,  and a mask estimate $\nu^0$ produce the same ptychographic data as $\x$ and $\mu^0$.   Suppose  that the mask estimate $\nu^0$ satisfies MPC. 
Then local rigidity \eqref{3.100} holds with probability exponentially (in $s$) close to 1. 

\end{thm}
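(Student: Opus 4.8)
The plan is to reduce the assertion to \emph{block-wise} Fourier phase retrieval and then to propagate a single constant-phase relation across the connectivity graph $\cG_s$, using the randomness of the mask phases to bound the failure probability by a union bound. First I would treat each block in isolation. Since $\xh=\bigvee_\bt x^\bt$ and $\nu^0$ produce the same ptychographic data as $\x$ and $\mu^0$, for every $\bt\in\cT$ the vectors $\nu^\bt\odot x^\bt$ and $\mu^\bt\odot\x^\bt$ have identical oversampled Fourier magnitudes, i.e.\ they are phase-retrieval equivalent under the plain transform $\Phi$. Because the phases of $\mu^0$ are independent continuous random variables, the masked part $\mu^\bt\odot\x^\bt$ is generic: its two-dimensional $z$-transform is almost surely irreducible, so the only admissible ambiguities are the trivial ones catalogued in \eqref{12}--\eqref{12'}, namely
\[
\nu^\bt\odot x^\bt \;=\; e^{\im\theta_\bt}\,\mu^\bt\odot\x^\bt
\quad\text{or}\quad
\nu^\bt\odot x^\bt \;=\; e^{\im\theta_\bt}\,\widecheck{\mu^\bt\odot\x^\bt},
\]
possibly preceded by a translation.

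Next I would use the anchor together with the mask phase constraint (MPC) to seed the argument on one block. On the anchor block the object part has tight support, so that $\mb{\rm Box}[\supp(\x^\bt)]=\cM^\bt$; this eliminates the translation freedom, leaving only the phase and twin alternatives. I would then exclude the twin via MPC: the conjugate-inversion is precisely the mechanism behind the twin-like image of Proposition \ref{twin}, and its existence requires a rigid alignment between $\nu^0$ and $\mu^0$ (a conjugated-and-reflected relation). The constraint $\Re(\overline{\nu^0}\odot\mu^0)>0$ at every pixel forbids exactly this alignment, and for a random $\mu^0$ it does so almost surely, since the reflected-conjugate product changes sign somewhere with probability one. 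This pins the anchor block to the phase alternative, $\nu^{\rm anc}\odot x^{\rm anc}=e^{\im\theta}\mu^{\rm anc}\odot\x^{\rm anc}$.

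The third step is propagation along $\cG_s$. For an edge joining a block already in the phase alternative to a neighbor $\bt'$, the $s$-connectivity condition \eqref{s-conn} guarantees at least $s$ pixels in $\cM^\bt\cap\cM^{\bt'}\cap\supp(\x)$ on which the estimate parts (resp.\ the true parts) agree. Restricting the two block relations to this overlap and taking ratios of the shared object and mask values yields an overdetermined system; a surviving twin or a nontrivial translation at $\bt'$ would impose, pixel by pixel, a constraint on the independent continuous phases of $\mu^0$, each holding with probability strictly below $1$, so that the joint event has probability at most $c^{s}$ for some $c<1$, in the spirit of the estimate \eqref{prob} of Theorem \ref{unique1}. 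Hence the neighbor inherits the phase alternative; since $\cG_s$ is connected, all $Q$ blocks do. A union bound over the $O(Q^2)$ edges gives total failure probability $\le Q^2 c^{s}$, so local rigidity \eqref{3.100} holds with probability exponentially close to $1$ in $s$.

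The main obstacle is the simultaneous elimination of the twin and translation alternatives under randomness. One must verify that, conditioned on an overlap of $\ge s$ object pixels, the event that the reflected-conjugate (or shifted) alternative is consistent with the single shared random mask really is the intersection of $\ge s$ nondegenerate conditions on independent continuous variables, and therefore has exponentially small probability; the shift and twin degrees of freedom are coupled through the overlaps, so this cannot be argued one edge at a time without care. Showing, in addition, that MPC excludes the twin not just on the anchor but consistently along the whole propagation is the delicate point—and, as the counterexamples in \cite{blind-ptycho} demonstrate, both the anchoring assumption and MPC are essentially necessary, so any correct argument must invoke them in an essential way.
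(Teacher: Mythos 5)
The survey states Theorem \ref{thm:many} without proof, citing \cite{blind-ptycho}, so your argument must be judged on its own terms, and on those terms it has a genuine gap at its pivotal step. Your Step 2 claims that MPC pins the \emph{anchor block} to the phase alternative by excluding conjugate inversion there. This cannot work block-locally. On a single block the data constrain only the product $\nu^\bt\odot x^\bt$; since masks are transparent ($\nu^\bt$ vanishes nowhere), for \emph{any} mask estimate $\nu^0$ --- in particular any satisfying MPC --- one can simply define the object part by componentwise division, $x^\bt := e^{\im\theta}\,\widecheck{\mu^\bt\odot\x^\bt}\,\big/\,\nu^\bt$, and the twin alternative is realized exactly with MPC intact. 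MPC is a condition relating $\nu^0$ to $\mu^0$, and no relation between \emph{products} on one block can contradict it, because the unknown object part absorbs whatever is needed. What the anchor genuinely kills on its block is the nonzero translation (a true shift of an array whose support has tight box hull would leave $\cM^\bt$); the twin preserves the box hull and survives. The correct logical order is the reverse of yours: connectivity plus the randomness of the mask phases force all blocks to adopt one coherent type --- either all phase-type (local rigidity) or a globally coherent conjugate-reflected pair in which $\nu^0$ itself is essentially the reflected conjugate of $\mu^0$ (and $\xh$ of $\x$) --- and it is this \emph{global} twin solution, now a statement about the mask estimate rather than about products, that MPC rules out at the end. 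Seeding the induction by ``MPC kills the twin at the anchor'' is a step that fails.

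A second, related gap sits in your probabilistic propagation. You argue that a surviving twin or translation at a neighbor imposes, pixel by pixel, constraints on the independent mask phases, ``each holding with probability strictly below $1$.'' But the competing pair $(\nu^0,\xh)$ is not fixed in advance: it only has to \emph{exist} given the data, so it may depend on the realization of $\mu^0$. To get an honest bound you must first eliminate the unknowns from the overlap equations --- using that the same $\nu^0$ appears, differently shifted, in the two blocks, and that $x^\bt=x^{\bt'}$ on $\cM^\bt\cap\cM^{\bt'}\cap\supp(\x)$ --- so that what remains is an estimate-independent algebraic relation among the random phases of $\mu^0$ at the $\ge s$ overlap pixels guaranteed by \eqref{s-conn}. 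Only for such cleaned relations does independence of the phases yield a bound of the form $c^{s}$, and the union bound must then run over the finite set of per-block ambiguity-type assignments (phase/twin/shift for each node), not over edges with estimates re-chosen edge by edge. Your closing paragraph names this as ``the delicate point'' but does not carry it out; that elimination, together with the relocation of MPC to the global twin, is precisely where the proof of \eqref{3.100} lives, so as written the argument does not close.
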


 \begin{figure}[t]
\begin{center}
\includegraphics[width=12cm]{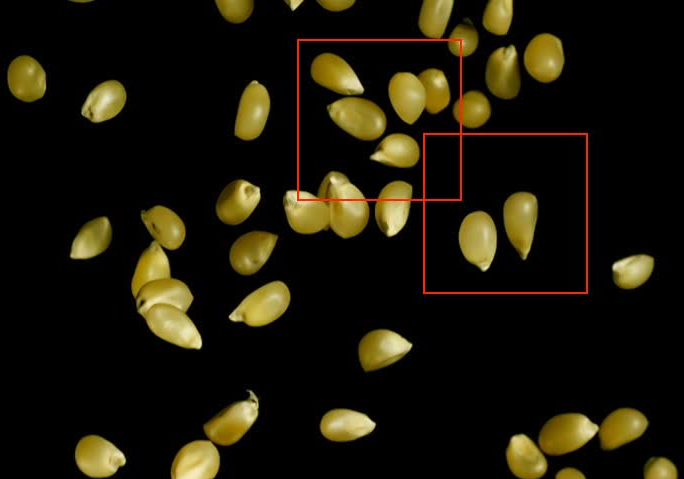}
\caption{Sparse objects such as this image of corn grains, where the dark area represents zero pixel value, can be challenging to ptychographic measurements.  The two red-framed blocks are not connected even though they overlap. The object part in the lower-right block is not an anchor since the object support does not touch the four sides of the block while
the object part in the upper-left block is an anchor. Indeed, the two corn grains at the lower-left and upper-right corners alone of the latter block
suffice to create a tight support. 
}
\label{fig:corn}
\end{center}
\end{figure}

\subsubsection{Raster scan ambiguities}
\label{sec:recovery}

Before describing  the global rigidity result, let us review 
the other ambiguities associated with the raster scan \eqref{raster}  other than the inherent ambiguities of
the scaling factor
and the affine phase ambiguity \eqref{lp1}-\eqref{lp2}.
{These ambiguities  include} the arithmetically progressing phase factor  inherited from the block phases and
the raster grid pathology which
has  a $\tau$-periodic structure of $\tau\times\tau$ degrees of freedom.  

Let $\cT'$ be any cyclic subgroup of $\cT$ generated by $\bv$, i.e. $\cT':=\{\bt_j=j\bv:j=0,\dots,s-1\}$, of order $s$, i.e.  $s\bv=0\mod\, n$. 
For ease of notation, denote by $\mu^k, \x^k, \muh^k, \xh^k$ and $M^{k}$ for the respective $\bt_k$-shifted quantities.
\medskip
\begin{thm}\cite{raster}\label{lem2}  Suppose that
\beqn
\muh^k\odot \xh^k=e^{\im \theta_k}\mu^k\odot \x^k,\quad k=0,\dots,s-1,
\eeqn
{  where $\mu^k$ and $\muh^k$ vanish nowhere in $\cM^k$.}
 If, for all $ k=0,\dots,s-1,$
\beq
\label{33}
 \cM^{k}\cap \cM^{k+1}\cap \supp(\x) \cap(\supp(\x)+ {\bv})\neq \emptyset,
\eeq
then the sequence 
 $\{\theta_0,\theta_1,\dots,\theta_{s-1}\}$ is an arithmetic progression  where $\Delta\theta=\theta_k-\theta_{k-1}$  is an integer multiple of $2\pi/s$.
 
 For the full raster scan $\cT$,  the block phases have the profile 
\beq
\label{a3} \theta_{kl}=\theta_{00}+\br\cdot(k,l),\quad k,l=0,\dots,q-1, 
\eeq
for some $\theta_{00}\in \IR$ and $  \br=(r_1,r_2)$ where $r_1$ and $r_2$ are integer multiples
of $2\pi/q$.

\end{thm}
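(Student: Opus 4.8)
The plan is to convert the local rigidity relations into pointwise equations for two ratio functions and then exploit the overlap structure of a single $\cT'$-orbit. By hypothesis $\muh^k\odot\xh^k=e^{\im\theta_k}\mu^k\odot\x^k$ with $\mu^k,\muh^k$ vanishing nowhere on $\cM^k$, so on $\cM^k\cap\supp(\x)$ I may divide and write $\alpha(\bn-\bt_k)\,\gamma(\bn)=e^{\im\theta_k}$, where $\alpha:=\muh^0/\mu^0$ is defined and nonzero on $\cM^0$ and $\gamma:=\xh/\x$ is defined and nonzero on $\supp(\x)$; here I used $\mu^k(\bn)=\mu^0(\bn-\bt_k)$ and $\bt_k=k\bv$. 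The whole argument rests on the fact that $\alpha$ is independent of $k$, so comparing the relation at a fixed physical site under two different shifts eliminates $\alpha$ and isolates a phase increment.

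First I would prove the arithmetic-progression claim. Fix $k$ and choose, using \eqref{33}, a site $\bn\in\cM^k\cap\cM^{k+1}\cap\supp(\x)\cap(\supp(\x)+\bv)$. Then $\bn-k\bv$ and $\bn-(k+1)\bv$ both lie in $\cM^0$, both $\bn$ and $\bn-\bv$ lie in $\supp(\x)$, and a one-line check shows $\bn-\bv\in\cM^{k-1}\cap\cM^k$. Evaluating the basic relation at the four admissible (site, shift) pairs gives
\beqn
\alpha(\bn-k\bv)\,\gamma(\bn)=e^{\im\theta_k},\qquad \alpha(\bn-(k+1)\bv)\,\gamma(\bn)=e^{\im\theta_{k+1}},\\
\alpha(\bn-k\bv)\,\gamma(\bn-\bv)=e^{\im\theta_{k-1}},\qquad \alpha(\bn-(k+1)\bv)\,\gamma(\bn-\bv)=e^{\im\theta_k}.
\eeqn
Dividing the two relations in the left column (which share $\alpha(\bn-k\bv)$) and those in the right column (which share $\alpha(\bn-(k+1)\bv)$) each produces $\gamma(\bn)/\gamma(\bn-\bv)$, so $e^{\im(\theta_k-\theta_{k-1})}=e^{\im(\theta_{k+1}-\theta_k)}$, i.e.\ $\theta_{k+1}-\theta_k=\theta_k-\theta_{k-1}$ modulo $2\pi$. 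This is precisely the role of the factor $\supp(\x)+\bv$ in \eqref{33}: it supplies the shifted site $\bn-\bv$ needed to link two consecutive increments. Running over all $k$ shows $\Delta\theta:=\theta_k-\theta_{k-1}$ is independent of $k$.

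Next I would pin down the quantization. Since $\cT'$ is cyclic of order $s$ with $s\bv\equiv0\bmod n$, the periodic boundary condition identifies $\cM^s$ with $\cM^0$ and hence $\theta_s=\theta_0$; summing the constant increment around the orbit gives $s\,\Delta\theta=\theta_s-\theta_0\equiv0$ modulo $2\pi$, so $\Delta\theta$ is an integer multiple of $2\pi/s$, as claimed (the wrap-around case $k=0$ being handled by reading indices cyclically, $\theta_{-1}=\theta_{s-1}$).

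Finally, for the full raster scan I would apply the one-dimensional result along each generator: the rows ($\bv=\bv_1=(\tau,0)$) and columns ($\bv=\bv_2=(0,\tau)$) are cyclic orbits of order $q$, so for each fixed $l$ the map $k\mapsto\theta_{kl}$ is arithmetic with increment $r_1(l)\in\frac{2\pi}{q}\IZ$, and symmetrically $l\mapsto\theta_{kl}$ has increment $r_2(k)\in\frac{2\pi}{q}\IZ$. A discrete mixed-difference (integrability) argument, computing $\theta_{kl}-\theta_{k-1,l}-\theta_{k,l-1}+\theta_{k-1,l-1}$ in two ways, forces $r_1(l)$ to be independent of $l$ and $r_2(k)$ independent of $k$, yielding the affine profile $\theta_{kl}=\theta_{00}+\br\cdot(k,l)$ with $\br=(r_1,r_2)$ and $r_1,r_2\in\frac{2\pi}{q}\IZ$. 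I expect the main obstacle to be the careful index bookkeeping in the four-equation step (verifying that $\bn-\bv$ lands in the correct pair of blocks) together with the 2D consistency step; once the overlaps are tracked correctly the remaining algebra is light.
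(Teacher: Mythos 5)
Your 1D argument is correct: with $\alpha:=\muh^0/\mu^0$ and $\gamma:=\xh/\x$, the four relations at the sites $\bn$ and $\bn-\bv$ share the common factors $\alpha(\bn-k\bv)$ and $\alpha(\bn-(k+1)\bv)$, eliminating them gives $\theta_{k+1}-\theta_k\equiv\theta_k-\theta_{k-1}\pmod{2\pi}$, and cyclicity ($\theta_s=\theta_0$, since $s\bv\equiv 0\bmod n$) quantizes $\Delta\theta$ to integer multiples of $2\pi/s$. The index bookkeeping you were worried about checks out.

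The 2D step, however, has a genuine gap. Your mixed-difference computation yields only $r_1(l)-r_1(l-1)=r_2(k)-r_2(k-1)$ for all $k,l$; since the left side is independent of $k$ and the right side independent of $l$, both equal a common constant $c\in\tfrac{2\pi}{q}\IZ$ --- but nothing in this computation forces $c=0$. Indeed, the profile $\theta_{kl}=\theta_{00}+ak+bl+ckl$ with any $c\in\tfrac{2\pi}{q}\IZ$ is arithmetic along every row and every column with quantized increments and satisfies your two-ways identity exactly, yet it is not of the affine form \eqref{a3}. So ``row-wise AP $+$ column-wise AP $+$ discrete integrability'' is strictly weaker than the claimed conclusion; the bilinear term must be excluded by a different mechanism.

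That mechanism is the same elimination trick you used in 1D, but applied \emph{across} the two directions, i.e.\ you must go back to the measurement relation rather than manipulate the 1D conclusions. Under the natural 2D analogue of \eqref{33}, pick a site $\bn\in\cM^{k+1,l}\cap\cM^{k+1,l+1}\cap\supp(\x)\cap(\supp(\x)+\bv_1)$ and set $\mbm:=\bn-\bt_{k+1,l}$. The two relations at the site $\bn$ (blocks $(k+1,l)$ and $(k+1,l+1)$) give $r_2(k+1)=\theta_{k+1,l+1}-\theta_{k+1,l}\equiv\arg\alpha(\mbm-\bv_2)-\arg\alpha(\mbm)$, while the two relations at the site $\bn-\bv_1$ (blocks $(k,l)$ and $(k,l+1)$, which contain $\bn-\bv_1$ automatically) give $r_2(k)=\theta_{k,l+1}-\theta_{kl}\equiv\arg\alpha\bigl((\bn-\bv_1)-\bt_{k,l+1}\bigr)-\arg\alpha\bigl((\bn-\bv_1)-\bt_{kl}\bigr)$, and since $(\bn-\bv_1)-\bt_{kl}=\bn-\bt_{k+1,l}=\mbm$ this is the \emph{same} quantity $\arg\alpha(\mbm-\bv_2)-\arg\alpha(\mbm)$. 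Hence $r_2(k+1)=r_2(k)$, i.e.\ $c=0$, and symmetrically $r_1$ is independent of $l$; this cross-coupling through the shared mask ratio $\alpha$ is the missing ingredient that produces \eqref{a3}.
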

Note that 
if $\x$ has a full support, i.e. $\supp(\x)=\IZ^2_n$, then \eqref{33} holds for any step size $\tau<m$ (i.e. positive overlap). 

The next example shows an ambiguity resulting from the arithmetically progressing block phases \eqref{a3} 
which make positive
and negative imprints on the object and phase estimates, respectively. 

\medskip
\begin{ex}\label{ex6} For $q=3, \tau=m/2$, let
 \beq
\nn \x&=&
\lt[\begin{matrix}
f_{00}&f_{10}&f_{20}\\
f_{01}& f_{11}  & f_{21}   \\
f_{02} & f_{12} & f_{22}
\end{matrix}\rt]\\
\nn\xh&=&
\left[
\begin{matrix}
f_{00}  & e^{\im 2\pi/3} f_{10} & e^{\im 4\pi/3} f_{20}   \\
e^{\im 2\pi/3} f_{01} &e^{\im 4\pi/3} f_{11} &f_{21}  \\
e^{\im 4\pi/3} f_{02}  & f_{12}& e^{\im 2\pi/3}  f_{22}  
\end{matrix}
\right]
\eeq
be  the object and its reconstruction, respectively, where $f_{ij}\in \IC^{n/3\times n/3}$.  Let 
 \beq
 \nn
\mu^{kl}=
\left[
\begin{matrix}
\mu^{kl}_{00}  & \mu^{kl}_{10}    \\
\mu^{kl}_{01}&\mu^{kl}_{11}
\end{matrix}
\right],\quad \muh^{kl}=
\left[
\begin{matrix}
\mu^{kl}_{00}  &e^{-\im 2\pi/3}  \mu^{kl}_{10}    \\
e^{-\im 2\pi/3} \mu^{kl}_{01}&e^{-\im 4\pi/3} \mu^{kl}_{11}
\end{matrix}
\right], 
\eeq
$ k,l=0,1,2,$
be the $(k,l)$-th shift of the mask and estimate, respectively, where $\mu^{kl}_{ij}\in \IC^{n/3\times n/3}$. 

Let $\x^{ij}$ and $\xh^{ij}$ be the part of the object and estimate illuminated by $\mu^{ij}$ and $\muh^{ij}$, respectively. For example, we have
\[   \x^{00}=
\lt[\begin{matrix}
f_{00}&f_{10}\\
f_{01}& f_{11}
\end{matrix}\rt],\quad  \x^{10}=
\lt[\begin{matrix}
f_{10}&f_{20}\\
 f_{11}  & f_{21}
\end{matrix}\rt],\quad  \x^{20}=
\lt[\begin{matrix}
f_{20}&f_{00}\\
f_{21}& f_{01}
\end{matrix}\rt]
\]
{  and likewise for other $\x^{ij}$ and $\xh^{ij}$}. 
 It is easily seen that $\muh^{ij}\odot \xh^{ij}=e^{\im (i+j)2\pi/3} \mu^{ij}\odot \x^{ij}.$
\end{ex}

Example \ref{ex6} illustrates  the non-periodic ambiguity inherited from the affine block phase profile. 
 The non-periodic arithmetically progressing  ambiguity  is different from  the affine phase ambiguity \eqref{lp1}-\eqref{lp2} as they manifest  on different scales: {  
the former is constant in each $\tau\times\tau$ block (indexed by $k,l$)  while the latter varies from pixel to pixel}.

The next example illustrates the periodic artifact called raster grid pathology. 

\begin{ex}\label{ex5} For $q=3, \tau=m/2$ and any  $\psi\in \IC^{{n\over 3}\times {n\over 3}}$, let 
 \beq
\nn \x&=&
\lt[\begin{matrix}
f_{00}&f_{10}&f_{20}\\
f_{01}& f_{11}  & f_{21}   \\
f_{02} & f_{12} & f_{22}
\end{matrix}\rt]\\
\label{7.2-1} \xh&=&
\left[
\begin{matrix}
e^{-\im \psi}\odot f_{00}  & e^{-\im \psi} \odot f_{10} & e^{-\im \psi} \odot f_{20}   \\
e^{-\im \psi}\odot f_{01} &e^{-\im\psi} \odot f_{11} & e^{-\im\psi}\odot f_{21}  \\
e^{-\im \psi}\odot  f_{02}  & e^{-\im\psi}\odot  f_{12}& e^{-\im \psi} \odot  f_{22}  
\end{matrix}
\right]
\eeq
be  the object and its reconstruction, respectively, where $f_{ij}\in \IC^{n/3\times n/3}$.  Let 
 \beq\label{7.2-2}
\mu^{kl}=
\left[
\begin{matrix}
\mu^{kl}_{00}  & \mu^{kl}_{10}    \\
\mu^{kl}_{01}&\mu^{kl}_{11}
\end{matrix}
\right],\quad \muh^{kl}=
\left[
\begin{matrix}
e^{\im\psi}\odot \mu^{kl}_{00}  &e^{\im \psi} \odot \mu^{kl}_{10}    \\
e^{\im \psi}\odot \mu^{kl}_{01}&e^{\im \psi} \odot \mu^{kl}_{11}
\end{matrix}
\right], 
\eeq
$ k,l=0,1,2,$
be the $(k,l)$-th shift of the mask and estimate, respectively, where $\mu^{kl}_{ij}\in \IC^{n/3\times n/3}$. 

Let $\x^{ij}$ and $\xh^{ij}$ be the part of the object and estimate illuminated by $\mu^{ij}$ and $\muh^{ij}$, respectively (as
in Example \ref{ex6}). It is verified easily that $\muh^{ij}\odot \xh^{ij}= \mu^{ij}\odot \x^{ij}.$
\end{ex}

Since $\psi$ in Example \ref{ex5}  is any complex $\tau\times \tau$ matrix, \eqref{7.2-1} and \eqref{7.2-2} represent the maximum degrees of ambiguity
over the respective initial sub-blocks. This ambiguity is transmitted to other sub-blocks, forming periodic artifacts called the raster grid pathology.

For a complete analysis of ambiguities associated with raster scan, we refer the reader to \cite{raster}.

\subsubsection{Global rigidity}

In view of Theorem \ref{thm:many},  we make simple observations and transform \eqref{3.100} into the ambiguity equation  that will be a key to subsequent development. 

 Let 
\beqn
\alpha(\bn)\exp[\im \phi(\bn)]=\muh^0(\bn)/\mu^0(\bn), \quad \alpha(\bn)>0,\quad \forall \bn \in \cM^0
\eeqn
and 
\beqn
h(\bn)&\equiv& \ln \xh(\bn)-\ln \x(\bn),\quad\forall \bn\in \cM, \eeqn
where $\x$ and $\xh$ are assumed to be non-vanishing.

Suppose that
\beq
\nn
\muh^{k}\odot \xh^k=e^{\im\theta_{k}}\mu^{k}\odot \x^k,\quad \forall k, 
\eeq
where $\theta_k$ are constants.
Then
\beq \label{200.1}
h(\bn+\bt_{k}) &=&\im \theta_{k}-\ln\alpha(\bn)-\im \phi(\bn) \mod \im 2\pi,\quad\forall\bn\in \cM^0,\eeq
and 
for all $\bn\in  \cM^{k} \cap \cM^{l}$
\beq
\nn
\alpha(\bn-\bt_l)&=&\alpha(\bn-\bt_k)\\
 \theta_k- \phi(\bn-\bt_k)&= & \theta_l- \phi(\bn-\bt_l) \mod 2\pi.  \nn
\eeq

 The ambiguity  equation \eqref{200.1} 
is a manifestation of local uniqueness \eqref{3.100} and  has the immediate consequence 
 \beq
 \label{200.4}
 h(\bn+\bt_{k})-h(\bn+\bt_{l})=\im \theta_{k}-\im\theta_{l} \mod \im2\pi,\quad \forall \bn\in \cM^{0},\quad \forall k,l
 \eeq
 or equivalently 
 \beq
 \label{200.5'}
 h(\bn+\bt_{k}-\bt_{l})-h(\bn)=\im \theta_{k}-\im\theta_{l} \mod \im2\pi,\quad \forall \bn\in \cM^{l}
 \eeq
 by shifting the argument in $h$. 

We refer to  \eqref{200.4}or \eqref{200.5'} as the {\em phase drift equation} which determines  the ambiguity 
(represented by $h$)  at different locations connected by ptychographic shifts.

We seek sufficient conditions 
for guaranteeing the following global rigidity
properties
\beqn
h(\bn)&=&h(0)+\im \bn\cdot (r_1,r_2)\mod \im 2\pi, \\
\phi(\bn)&=&\theta_0-\Im[h(0)]-\bn\cdot(r_1,r_2) \mod 2\pi\\
\alpha&=&e^{-\Re[h(0)]}\label{473''}\\
\theta_\bt&=&\theta_0
+\bt\cdot(r_1,r_2)\mod 2\pi,\quad\forall \bt\in \cT,
\eeqn
for some $ r_1, r_2\in \IR$ and all $\bn\in \IZ_n^2$.

In \cite{blind-ptycho} a class of {ptychographically complete schemes
 are introduced.  A ptychographic scheme is complete if global rigidity  holds under the minimum prior constraint MPC {defined in Figure 7.1}. }
A simple example of {ptychographically complete} schemes is the perturbed scan (Figure \ref{fig1'}(b)  \beq
\label{perturb}\label{rank2}
\bt_{kl}=\tau (k,l)+(\delta^1_{kl},\delta^2_{kl}),\quad k,l=0,\dots,q-1
\eeq
where $\tau=n/q$ needs only to be slightly greater than $m/2$ (i.e overlap ratio slightly greater than $50\%$)  and $ \delta^1_{kl}, \delta^2_{kl}$ are small integers with
some generic, non-degeneracy conditions \cite{blind-ptycho}. In particular, if we set 
\beq
\label{perturb2}\label{rank1}
\delta^1_{kl}=\delta^1_k, \quad \delta^2_{kl}=\delta^2_l,\quad\forall k,l=0,\cdots, q-1,  
\eeq
then we obtain the scan pattern shown in Figure \ref{fig1'} (a).

\begin{figure}
\centering
\subfigure[Perturbed  scan \eqref{perturb2}]{\includegraphics[width=4.8cm]{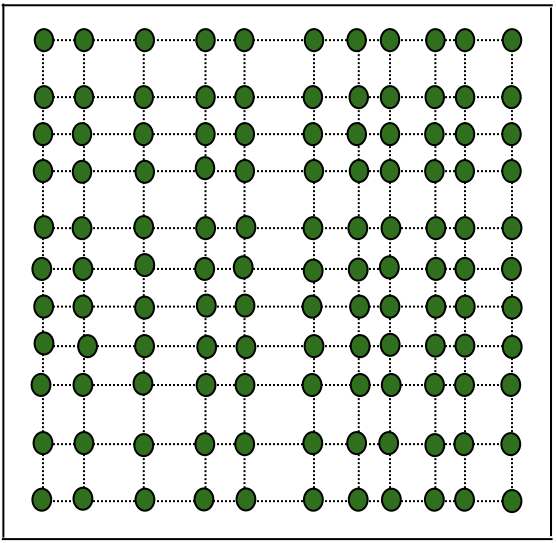}}\hspace{1cm}
\subfigure[Perturbed scan  \eqref{perturb}]{\includegraphics[width=5cm,height=4.7cm]{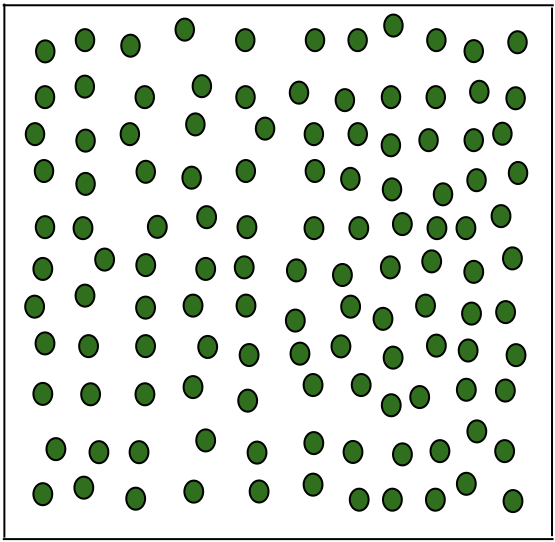}}
\caption{Perturbed raster scan patterns}
\label{fig1'}
\end{figure}

\subsubsection{Minimum overlap ratio}
 \label{ex50} 
\begin{figure}
\centering
\includegraphics[width=8cm]{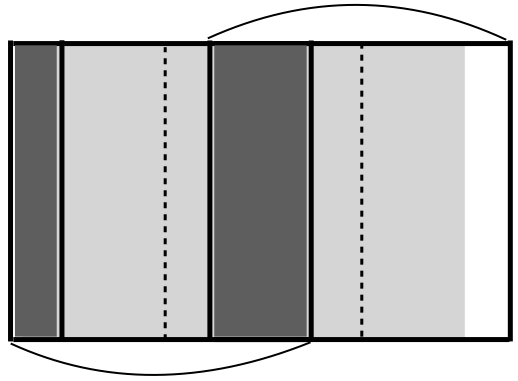}
\caption{A perturbed scan with $q=2$.  The arcs indicate the extend of the two blocks $\cM^{00}$ and $\cM^{10}$. The dotted lines mark the midlines of the two blocks. The grey area represents the object with the light grey areas being $R_{00}$ and $R_{10}$ and the dark grey areas being the overlap of the two blocks. The white area inside $\cM^{10}$ folds into the other end inside $\cM^{00}$ by the periodic boundary condition.}
\label{fig50}
\end{figure}

In this section, we show that $50\%$ overlap is roughly the minimum overlap ratio required by uniqueness
among the perturbed raster scans defined by  \eqref{perturb}-\eqref{perturb2}.

Let us consider the perturbed scheme \eqref{perturb2} with $q=2$ and  
\beqn
\bt_{kl}=(\tau_k,\tau_l),\quad k,l=0,1,2
\eeqn where $\tau_0=0,  \tau_2=n$ and 
\beq
{3m/2}<n<m+\tau_1. \label{84.0}
\eeq
The condition \eqref{84.0} is to ensure that the overlap ratio ($2-n/m$) between two adjacent blocks is less than (but can be made arbitrarily close to) $50\%$. To avoid the raster scan (which has many undesirable ambiguities \cite{raster}), we assume that $\tau_1\neq n/2$ and hence $\tau_2\neq 2\tau_1$. 
Note that the periodic boundary condition implies that $\cM^{00}=\cM^{20}=\cM^{02}=\cM^{22}$. 
Figure \ref{fig50} illustrates the relative positions of $\cM^{00}$ and $\cM^{10}$. 

First let us focus on the horizontal shifts $\{\bt_{k0}: k=0,1,2\}$. As shown in Figure \ref{fig50}, two subsets  of $\cM=\IZ^2_n$
\[
R_{00}= \lb m+\tau_1-n,\tau_1-1 \rb\times \IZ_m,\quad R_{10}= \lb m, n-1\rb\times \IZ_m
\]
 are covered only once  by $\cM^{00}$ and $\cM^{10}$
respectively due to the \eqref{84.0}. 

 Now consider the intersections 
 \beqn
 \tilde R_{10}:=R_{10}\cap (\bt_{10}+R_{00})&=&R_{10}\cap \lb m+2\tau_1-n,2\tau_1-1\rb\times \IZ_m\\
 \tilde R_{00}:= (R_{10}-\bt_{10})\cap R_{00}&=&\lb m-\tau_1, n-\tau_1-1\rb\times \IZ_m\cap R_{00}
 \eeqn
which respectively  correspond to the same region of the mask in $\cM^{10}$ and $\cM^{00}$
and let $h_1$ be any function defined on $\cM$ such that $h_1(\bn)=0$ for any $\bn\neq  \tilde R_{10}\cup \tilde R_{00}$
and $h_1(\bn+\bt_{10})=h_1(\bn)$ for any $\bn\in \tilde R_{00}$. 

Consider the object estimate
$ x(\bn)=
e^{h_1(\bn)}\x(\bn)$ and 
 the mask estimate
$
\nu^{k0}(\bn):=e^{-h_1(\bn)}\mu^{k0}(\bn) $,  which is well defined because $\tilde R_{10}=\bt_{10}+\tilde R_{00}$ and
both correspond to the same region of the mask.

By the same token, we can construct a similar ambiguity function $h_2$ for the vertical shifts. 
With both horizontal and vertical shifts, we define the ambiguity function $h=h_1h_2$ and the
associated pair of mask-object estimate $
\nu^{kl}(\bn):=e^{-h(\bn)}\mu^{kl}(\bn) $ and $ x(\bn)=
e^{h(\bn)}\x(\bn).$

Clearly, the mask-object pair $(\nu,x)$ produces the identical set of
diffraction patterns as $(\mu, \x)$. 
Therefore this ptychographic scheme has
at least $(2\tau_1-m)^2$ or $(2n-2\tau_1-m)^2$ degrees of ambiguity dimension
depending on whether $2\tau_1<n$ or $2\tau_1>n$.

\subsection{Algorithms for blind ptychography}

Let $\cF(\nu,x)$ be the bilinear transformation representing the totality of the Fourier (magnitude and phase) data for any mask $\nu$ and object $x$.
From $\cF(\nu^0,x)$ we can define two measurement matrices. First, for a given $\nu^0\in \IC^{m^2}$, 
let $A_{\nu}$ be defined via the relation $A_{\nu} x:=\cF(\nu^0,x)$ for all $x\in \IC^{n^2}$; second, for a given $x\in \IC^{n^2}$,
let $B_{x}$ be defined via $B_{x} {\nu}=\cF(\nu^0,x)$ for all $\nu^0\in \IC^{m^2}$. 

More specifically, let $\Phi$ denote the over-sampled Fourier matrix. The measurement matrix $A_{\nu}$ is a concatenation of $\{\Phi \,\diag(\nu^\bt):\bt \in \cT\}$ (Figure \eqref{fig10}(a)). Likewise,
$B_{x}$ is $\{\Phi \,\diag(x^\bt):\bt \in \cT\}$ stacked on top of each other (Figure \eqref{fig10}(b)). 
Since
$\Phi$ has orthogonal columns, both $A_{\nu}$ and $B_{x}$ have orthogonal columns.
We simplify the notation by setting $A=A_\mu$ and $B=B_{\x}$.

Let ${\nu}^{0}$ and $\xh=\vee_\bt  x^\bt$ be any pair
 of the mask and the object estimates producing  the same ptychography data as $\mu^{0}$ and $\x$, i.e.
 the diffraction pattern of ${\nu}^\bt\odot  x^\bt$ is identical to that of $\mu^\bt\odot \x^\bt$ where
 ${\nu}^\bt$ is the $\bt$-shift of ${\nu}^{0}$ and $ x^\bt$ is the restriction of $\xh$ to $\cM^\bt$. 
We refer to the pair $({\nu}^0,\xh)$ as   a  blind-ptychographic solution and
$(\mu^0,\x)$ as the true
solution  (in the mask-object domain).

We can write  the total measurement data as $b=|\cF(\mu^0,\x)|$ where $\cF$ is  the concatenated oversampled Fourier transform acting on
$\{\mu^\bt\odot \x^\bt:\bt \in \cT\}$  (see Fig. \ref{fig10}), i.e. a bi-linear transformation in the direct product of the mask space and the object space.  By definition, a blind-ptychographic solution $({\nu}^0,\xh)$ satisfies  $|\cF({\nu}^0, \xh)|=b$.

 According to the global rigidity theorem, we use relative error (RE) and relative residual (RR) as the merit metrics  for the recovered image $x_k$ and mask $\mu_k$ at the $k^{th}$ epoch:
\beq\label{RE}
\mbox{RE}(k)&= &\min_{\alpha\in \mathbb{C}, \mathbf{r} \in \mathbb{R}^2}\frac{\sqrt{\sum_\bn|\x(\bn) - \alpha e^{-\im {2\pi}\mathbf{n}\cdot \mathbf{r}/n} x_k(\bn)|^2}}{\|f\|}\\
\mbox{\rm RR}(k)&= & \frac{\|b- |A_kx_{k}|\|}{\|b\|}.
\eeq
Note that in \eqref{RE} both the affine phase and the scaling factors are waived.

\subsubsection{Initial mask estimate}
\label{sec:ppc}
 For non-convex iterative optimization, a good initial guess  or some regularization is usually crucial for convergence \cite{ML12}, \cite{Poisson2}.  This is even more so for blind ptychography which is doubly non-convex because, in addition to the phase retrieval step, extracting the mask and the object from their product is also non-convex.

We say  that a mask estimate $\nu^0$ satisfies MPC$(\delta)$ if 
\beq\nn
\measuredangle (\nu^0(\bn),\mu^0(\bn))<\delta\pi,\quad \forall\bn 
\eeq
where  $\delta\in (0, 1/2]$ is the uncertainty parameter. 
The weakest condition necessary for uniqueness is $\delta=0.5$, equivalent to
$\Re(\overline{\muh^0}\odot \mu^0)>0.$ Non-blind ptychography gives rise to infinitesimally small $\delta$.

We use MPC$(\delta)$ as measure of initial mask estimate for blind ptychographic reconstruction
and randomly choose $\nu^0$ from the set MPC$(\delta)$. 
Specifically,  we use the following mask initialization
\beq
\nn
\mu_1 (\mathbf{n})=\mu^0(\mathbf{n})\,\exp{\lt[\im 2\pi\frac{ \mathbf{k}\cdot\mathbf{n}}{n}\rt]}\,\exp{[\im \phi(\mathbf{n})]},\ \ \ \mathbf{n}\in \mathcal{M}^0
\eeq
where $\phi(\bn)$ are independently and uniformly distributed on $(-\pi\delta , \pi\delta)$.

Under MPC,  however, the initial mask  may be  significantly far away from the true mask in
norm. Even if  $|\nu^0(\bn)|=|\mu^{0}(\bn)|=\mbox{const.}$, 
 the mask guess with uniformly distributed $\phi$ in $(-\pi/2, \pi/2]$  has the relative error close to
\[
\sqrt{{1\over \pi}\int^{\pi/2}_{-\pi/2} |e^{\im\phi}-1|^2 d\phi}=\sqrt{2(1-{2\over \pi})}\approx   0.8525 
\]
with high probability.

\subsubsection{Ptychographic iterative engine (PIE)}
The ptychographic iterative engines, PIE \cite{PIE104}, \cite{PIE05}, \cite{PIE204}, ePIE \cite{ePIE09} and rPIE \cite{rPIE17},  are related to the mini-batch gradient method.

In PIE and ePIE, the exit wave estimate is given by 
\beq
\label{717}
\tilde\psi^{k}=\Phi^* \lt[b^{k}\odot\sgn(\Phi ({\nu}^{k}\odot  x^k))\rt]
\eeq
analogous to AP where
the $k$-th  object part $ x^k$  is updated  by a gradient descent 
\[
 x^k-{1\over 2\max_{\bn} |{\nu}^k(\bn)|^2} \nabla_{\muh}\|\muh^{k}\odot  x^k-\tilde \psi^{k}\|^2.  
\]
This choice of step size resembles  the Lipschitz constant of the gradient of  the loss function $\half \|\muh^{k}\odot  x^k-\tilde \psi^{k}\|^2$. 
The process continues in random order  until each of the diffraction patterns has been used to update the object and mask estimates, at which point a single PIE iteration has been completed.
The mask update proceeds in a similar manner. 

The update process can be done in parallel as in \cite{DM08}, \cite{TDB09}. First the exit wave estimates are updated in parallel by
the AAR algorithm instead of \eqref{717}, i.e.
\beq
\tilde\psi_{j+1}=\half \tilde\psi_j+R_YR_X \tilde \psi_j\nn
\eeq
where $\tilde\psi_j=[\tilde \psi^k_j]$ is the $j$-th iterate of the exit wave estimate. Second, the object and the mask are updated by solving iteratively
the Euler-Lagrange equations
\beqn
x_j(\bn)&=& {\sum_k [\mu_j^k\odot \tilde\psi_j^k](\bn) \over \sum_k |\mu_j^k(\bn)|^2}
\eeqn
of the bilinear loss function 
\beq
\nn
\half\sum_k \|\mu_j^{k}\odot x_j^k-\tilde \psi_j^{k}\|^2 &=&\half\sum_k \|\Phi\lt[\mu_j^{k}\odot x_j^k\rt]-\Phi\tilde \psi_j^{k}\|^2\\
&=&\half\sum_k \|\cF(\mu^k_j, x^k_j)-\Phi\tilde \psi^k_j\|^2\nn
\eeq
for given $\tilde\psi_j$ (recall the isometric property of $\Phi$).

\subsubsection{Noise-aware method} As a first step of the noise-aware ADMM method  for blind ptychography, 
we may consider the 
augmented Lagrangian
\beq
\nn
\cL(\nu,x,z,\lamb)=\half\|b-|z|\|^2+\lamb^*(z-\cF(\nu,x))+{\beta\over 2} \|z-\cF(\nu,x)\|^2
\eeq
and the scheme 
\beqn
\mu_{k+1}&=& \arg\min \cL(\nu,x_k,z_k,\lamb_k)\nn\\
x_{k+1}&=& \arg\min \cL(\mu_{k+1},x,z_k,\lamb_k)\nn\\
z_{k+1}&=& \arg\min \cL(\mu_{k+1},x_{k+1},z,\lamb_k)\\
\lamb_{k+1}&=&\lamb_k+ \beta(z_{k+1} -\cF(\mu_{k+1},x_{k+1})).
\eeqn
In \cite{March19}, more elaborate version of the above scheme is employed to enhance convergence. 

\subsubsection{Extended Gaussian-DRS}

As extension of the Gaussian-DRS \eqref{G1}, 
consider the augmented Lagrangian
 \beq\label{AL3}
 \cL(y,z, x,\nu,\lamb)&=& \half \||z|-b\|^2+\lamb^*(z-y)+{\rho\over 2} \|z-y\|^2+\II_\cF(y)
 \eeq
 where $\II_\cF$ is the indicator function of the set 
 \[
 \{y\in \IC^N: y=\cF(\nu, x) \quad \mbox{for some}\,\,\nu,x\}.
 \]
 Define the ADMM scheme for \eqref{AL3} as 
  \beqn
(z_{k+1}, \mu_{k+1})&=& \arg\min_z  \cL(y_{k}, z, x_{k}, \nu, \lamb_k)\\
 (y_{k+1},x_{k+1})&=& \arg\min_y \cL(y, z_{k+1},  x, \mu_{k+1}, \lamb_k) \\
 \lamb_{k+1}&=& \lamb_k+\rho (z_{k+1}-y_{k+1}) 
 \eeqn
which is carried out explicitly by
\beq
\label{50} z_{k+1}&=& {1\over \rho+1} P_Y (y_k-\lamb_k/\rho)+{\rho\over\rho+1} (y_k-\lamb_k/\rho)\\
\label{51} \mu_{k+1}&=& B_k^+ y_k\\
\label{52} y_{k+1}&=& A_{k+1}A_{k+1}^+ (z_{k+1}+\lamb_k/\rho)\\
\label{53} x_{k+1}&=& A_{k+1}^+ y_{k+1}\\
\label{54} \lamb_{k+1}/\rho&=&{\lamb_k/\rho} +z_{k+1}-y_{k+1}.
\eeq
We can further simplify the above scheme  in terms of the new variable
\[
u_k=z_{k}+\lamb_{k-1}/\rho.
\]

Rewrite eq.  \eqref{52} as 
\beq
\label{52'}
y_{k+1}=A_{k+1}A_{k+1}^+ u_{k+1}
\eeq
and hence \eqref{54} as 
\beq
\label{54'}\lamb_{k+1}/\rho&=& u_{k+1}-y_{k+1}\\
&=& u_{k+1}-A_{k+1}A_{k+1}^+u_{k+1}.\nn
\eeq
Combining \eqref{52'} and \eqref{54'} we obtain
\beqn
z_{k+1}&=& \Big({1\over \rho+1} P_Y+{\rho\over\rho+1}\Big)(2A_kA_k^+-I)u_k
\eeqn

On the other hand, 
\beq
\label{55}\lefteqn{u_{k+1}}\\
&=& {1\over \rho+1} P_Y (2A_{k}A_{k}^+u_k -u_k)+{\rho\over\rho+1} (2A_{k}A_{k}^+u_k -u_k) +
u_{k}-A_{k}A_{k}^+u_{k}\nn\\
&=& {u_k\over \rho+1}+{\rho-1\over \rho+1} A_kA_k^+ u_k+{1\over \rho+1} P_Y(2A_{k}A_{k}^+u_k -u_k)  \nn
\eeq
with the mask and object updated by
\beq
 \label{56} \mu_{k+1}&=& B_k^+ A_{k}A_{k}^+ u_k \\
\label{57} x_{k+1}&=&A_{k+1}^+ u_{k+1}. 
\eeq
Eq. \eqref{55}-\eqref{57} constitute the extended version of Gaussian-DRS (eGaussian-DRS) for blind ptychography.

\begin{figure}
\centering 
\subfigure[$50\%$ overlap; $\delta = 9/20$ ]{\includegraphics[width=5cm]{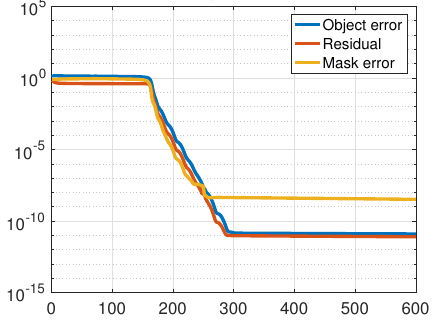}}
\subfigure[$66\%$ overlap; $\delta = 2/5$ ]{\includegraphics[width=5cm,height=3.7cm]{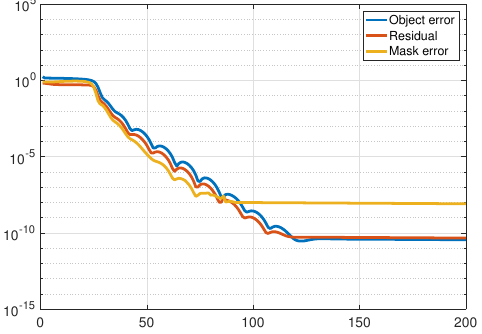}}
\subfigure[$75\%$ overlap; $\delta = 1/2$ ]{\includegraphics[width=5cm]{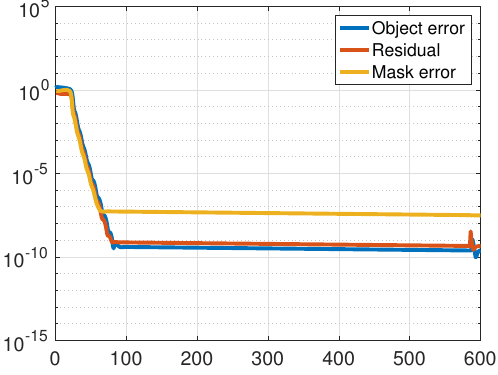}}
\caption{Relative errors versus iteration of blind ptychography by eGaussian-DRS with $\rho=1/3$ for the original object CiB. Scheme \eqref{rank1} with different overlap ratios and initializations are used as indicated in each plot. }
\label{fig:edrs}
\end{figure}

Figure \ref{fig:edrs} shows the relative errors (for object and mask) and residual of eGaussian-DRS with $\rho=1/3$ and
various overlap ratios in the perturbed scan and different initial mask phase uncertainty $\delta$. Clearly increasing the overlap ratio
and/or decreasing the initial mask phase uncertainty speed up  convergence. The straight line  feature of the semi-log plots indicates
geometric convergence and vice versa. 

\subsubsection{Noise-agnostic methods}

As an extension of the augmented Lagrangian \eqref{AL'}, consider 
 \beq\nn
 \cL(z,\nu,x,\lamb)&=& \II_Y (z)+\lamb^*(z- \cF(\nu,x))+{1\over 2} \|z-\cF(\nu,x)\|^2
 \eeq
 and the following ADMM scheme
 \beq
 z_{k+1}&=& \arg\min_z  \cL(z,\mu_k,x_k,\lamb_k)= P_Y \lt[\cF(\mu_k, x_k)-\lamb_k\rt]\\
 \label{1001} (\mu_{k+1}, x_{k+1})&=&\arg\min_{\nu} \cL(z_{k+1},\nu,x,\lamb_k) \\
 \lamb_{k+1}&=& \lamb_k+z_{k+1}-\cF(\mu_{k+1}, x_{k+1}). 
 \eeq
 If instead of the bilinear optimization step \eqref{1001}, we simplify it by one-step alternating minimization 
 \beqn
 \mu_{k+1}&= &\arg\min_{\nu} \cL(z_{k+1},\nu,x_k,\lamb_k) =  B^+_{k}(z_{k+1}+\lamb_k)\\
 x_{k+1}&=&\arg\min_g \cL(z_{k+1},\mu_{k+1},x,\lamb_k)= A^+_{k+1}(z_{k+1}+\lamb_k)
 \eeqn
with $B_k := B_{x_k}$ and $A_{k+1}=A_{\mu_{k+1}},$ then we obtain the DM algorithm for blind ptychography
\cite{TDB09},\cite{DM08}, one of the earliest methods for blind ptychography.

\subsubsection{ Extended RAAR} 
To extend RAAR to blind ptychography, let us consider  the augmented Lagrangian
 \beqn
 \cL(y, z,\nu,x,\lamb)&=& \II_Y (z)+\half \|y- \cF(\nu,x)\|^2+\lamb^*(z-y)+{\gamma\over 2} \|z-y\|^2
 \eeqn
 and the following ADMM scheme
  \beq 
\label{step1} (y_{k+1},x_{k+1})&=& \arg\min_y \cL(y, z_k,  x, \mu_{k}, \lamb_k) \\
 \label{step2} (z_{k+1}, \mu_{k+1})&=& \arg\min_z  \cL(y_{k+1}, z, x_{k+1}, \nu, \lamb_k)\\
\label{step3}  \lamb_{k+1}&=& \lamb_k+\gamma (z_{k+1}-y_{k+1}). 
 \eeq
 In the case of a known mask $\mu_k=\mu$ for all $k$, the procedure \eqref{step1}-\eqref{step3} is equivalent to RAAR. 
 We refer to the above scheme as the {\em extended} RAAR (eRAAR). 
 Note that eRAAR has a non-standard loss function as the term $\|y- \cF(\nu,\xh)\|^2$ is not separable. 
 {A similar scheme is implemented in \cite{Marchesini2016SHARP} in the domain of the masked object (see the discussion in Section \ref{sec:Fourier-object})}. 
 
 With $\beta$ given in \eqref{beta}
the minimizer for \eqref{step1} can be expressed explicitly as
\beq
y_{k+1}&=&\lt(I+P_k^\perp/\gamma\rt)^{-1}(z_k+\lamb_k/\gamma)=\lt(I-\beta P^\perp_k\rt) (z_k+\lamb_k/\gamma)\\
x_{k+1}&=& A_k^+ y_{k+1}=A_k^+ (z_k+\lamb_k/\gamma)\label{obj-update}
\eeq
 where  $A_k=A_{\mu_k}$ and $P_k=A_kA_k^+$. 
On the other hand,  Eq. \eqref{step2} can be solved exactly by
 \beq
z_{k+1}& = &P_Y \lt[y_{k+1}-\lamb_k/\gamma\rt]\\
\mu_{k+1}&=& B_{k+1}^+ y_{k+1} \label{mask-update}
\eeq
where $B_{k+1}=B_{x_{k+1}}$. 

Let
\beq
u_{k+1}:= y_{k+1}-\lamb_k/\gamma\label{704}
\eeq
and hence 
\beqn
u_{k+1}=(I-\beta P_k^\perp) (P_Y u_k+\lamb_k/\gamma)-\lamb_k/\gamma.
\eeqn
On the other hand, we can rewrite \eqref{step3} as
\beq
\label{15}
\lamb_{k}/\gamma=z_{k}-u_{k}=P_Y u_{k}-u_{k}
\eeq
and hence
\beq
\nn u_{k+1}&=&(I-\beta P^\perp_k) P_Y u_k-\beta P_k^\perp \lamb_k/\gamma\\
&=& (I-\beta P_k^\perp) P_Y  u_k+\beta P_k^\perp (I-P_Y) u_k\nn\\
&=& \beta u_k+(1-2\beta) P_Y u_k+\beta P_k R_Y u_k\label{16}
\eeq
where $R_Y=2P_Y-I$. This  is the RAAR map with the mask estimate $\mu_k$ updated by \eqref{mask-update} and \eqref{obj-update}. 

More explicitly, by \eqref{15} and \eqref{704}
\beq
y_{k+1}= u_{k+1}+P_Y u_k-u_k\nn
\eeq
and hence 
\beq
\label{19} x_{k+1}&=&A_k^+ (u_{k+1}+P_Y u_k-u_k )\\
\label{20} \mu_{k+1}&=& B_{k+1}^+ (u_{k+1}+P_Y u_k-u_k). 
\eeq
Eq. \eqref{19} can be further simplified as
\beq
\label{21} x_{k+1}&=& A_k^+R_Y u_k
\eeq
by applying $A^+_k$ to \eqref{16} to get $A^+_k u_{k+1}=A^+_k P_Y u_k$.

Eq. \eqref{16}, \eqref{21} and \eqref{20} constitute a simple,  self-contained iterative system called the extended RAAR (eRAAR).

\begin{figure}
\centering 
\subfigure[$50\%$ overlap; $\delta = 9/20$ ]{\includegraphics[width=5cm]{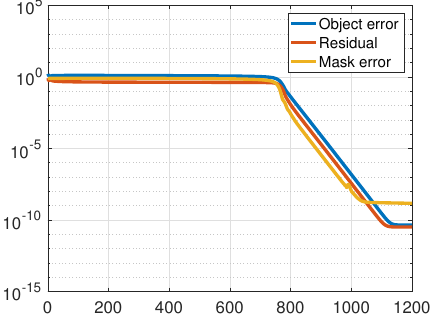}}
\subfigure[$66\%$ overlap; $\delta = 2/5$ ]{\includegraphics[width=5cm,height=3.7cm]{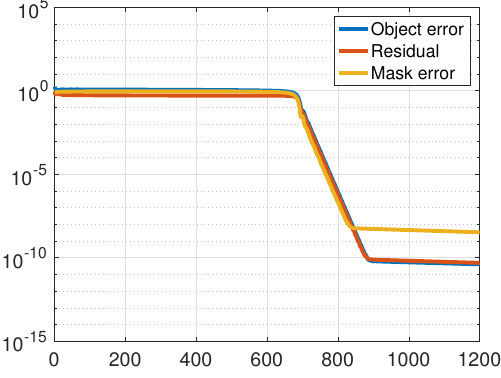}}
\subfigure[$75\%$ overlap; $\delta = 1/2$ ]{\includegraphics[width=5cm]{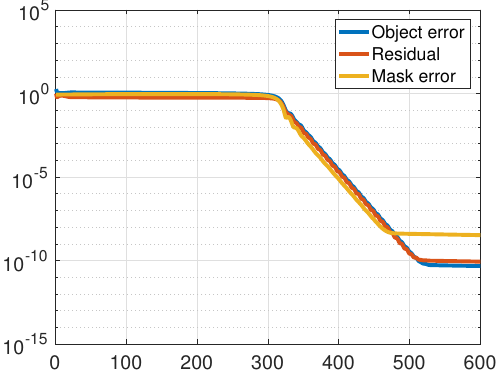}}
\caption{Relative errors versus iteration of blind ptychography for CiB by eRAAR with $\beta=0.8$. } 
\label{fig:eraar}
\end{figure}

Figure \ref{fig:eraar} shows the relative errors (for object and mask) and residual of eRAAR with $\beta=0.8$  corresponding to $\rho=1/3$ according to \eqref{beta2}. The rest of the set-up is the same as for Figure \ref{fig:edrs}. Comparing Figures \ref{fig:edrs} and \ref{fig:eraar} 
we see that eGaussian-DRS converges significantly faster than eRAAR, consistent with the results in Figure \ref{fig:raar}. 

\subsection{Further extensions of blind ptychography algorithms}

\subsubsection{One-loop version}
Let $T_k$ denote the $k$-th RAAR map \eqref{16} or Gaussian-DRS map \eqref{55}.
Starting with the initial guess $u_1$, let 
 \beq
 u_{k+1}&=& T_k^{\ell} (u_k)\quad \mbox{ for sufficiently large $\ell$} \label{25}
 \eeq
 for $k\ge 1$. The termination rule can be based on a predetermined number of iterations, the residual or combination of both. 
 
 Let \beq
\label{26-2} x_{k+1}&=& A_k^+R_Y u_k\\
\label{27-2} \mu_{k+1}&=& B_{k+1}^+ (u_{k+1}+P_Y u_k-u_k). 
\eeq
in the case of RAAR \eqref{16}
\beq
 \label{56-2} \mu_{k+1}&=& B_k^+ A_{k}A_{k}^+ u_k \\
\label{57-2} x_{k+1}&=&A_{k+1}^+ u_{k+1} 
\eeq
in the case of Gaussian-DRS \eqref{55}. 

\begin{algorithm}[h]
\caption{One-loop method}
\begin{algorithmic}[1]
\STATE Input: initial mask guess $\mathbf{\nu}_1$ using MPC  and random object guess $x_1$. 
\STATE Update the object estimate:  
$ x_{k+1} $ is given by  \eqref{25}  with \eqref{26-2} for RAAR or with \eqref{57-2}  for Gaussian/Poisson-DRS; 
\STATE Update the mask estimate:  
$ \mu_{k+1}$ is given by \eqref{27-2} for RAAR or \eqref{56-2} for Gaussian/Poisson-DRS.  
\STATE Terminate if $\||B_{k+1}\mu_{k+1}|- b\| $ stagnates or is less than tolerance; otherwise, go back to step 2 with $k\rightarrow k+1.$
\end{algorithmic}
\end{algorithm}

In a sense, eGaussian-DRS/eRAAR is the one-step version of one-loop Gaussian-DRS/RAAR.

\subsubsection{Two-loop version}

Two-loop methods have  two inner loops: the first is the object loop \eqref{25}-\eqref{26-2}
and the second is the mask loop defined as follows. Two-loop version is an example of
Alternating  Minimization (AM). 

Let $Q_k=B_kB_k^+$ and let $S_k$ be the associated RAAR map:
\beq
S_k(v) &:=& \beta v+(1-2\beta) P_Y v+\beta Q_k R_Yv \nn
 \eeq
 or the associated Gaussian-DRS map
 \beq
S_k(v) &=& {v\over \rho+1}+{\rho-1\over \rho+1} Q_k^+v+{1\over \rho+1} P_Y(2Q_k^+v -v) \nn
 \eeq
 
Starting with the initial guess $v_1$, let 
 \beq
v_{k+1}&=& S_k^{\ell} (v_k)\quad \mbox{ for sufficiently large $\ell$} \label{25'}
 \eeq
 for $k\ge 1$.
 
 Let \beq
\label{27'} \mu_{k+1}&=& B_{k}^+ R_Y v_k,  
\eeq
in the case of RAAR in analogy to \eqref{26-2} and
\beq
\mu_{k+1}&=& B_{k+1}^+ v_{k+1}\label{157}
\eeq
in the case of Gaussian-DRS in analogy to \eqref{57-2}.

\begin{algorithm}[h]
\caption{Two-loop method}\label{alg: suedo algorithm}
\begin{algorithmic}[1]
\STATE Input: initial mask guess ${\nu}_1$ using MPC  and random object guess $x_1$.  
\STATE Update the object estimate:
$
x_{k+1} $ is given by  \eqref{25}  with \eqref{26-2} for RAAR or with \eqref{57-2}  for Gaussian/Poisson-DRS; 
\STATE  Update the mask estimate:  
{$
\mu_{k+1} $ is given by \eqref{25'} with \eqref{27'} for RAAR or with \eqref{157} for Gaussian/Poisson-DRS;}  
\STATE Terminate if $\||B_{k+1}\mu_{k+1}|- b\| $ stagnates or is less than tolerance; otherwise, go back to step 2 with $k\rightarrow k+1.$
\end{algorithmic}
\end{algorithm}

\subsubsection{Two-loop experiments}
Following \cite{DRS-ptycho}, we refer to the two-loop version with Gaussian- or Poisson-DRS as {\em DRSAM}
which is tested next.  
 We demonstrate that even with the parameter $\rho=1$  far from the optimal value (near 0.3),  DRSAM converges geometrically under  the minimum conditions required by uniqueness, i.e. with overlap ratio slightly above $50\%$  and initial mask phase uncertainty $\delta=1/2$. 
  We let  $\delta_{k}^1$ and $\delta^2_l$ in the rank-one scheme \eqref{rank1} and 
 $\delta_{kl}^1$ and $\delta^2_{kl}$ in the full-rank scheme \eqref{rank2} to be {i.i.d. uniform random variables  over $\lb -4,4\rb$}.

The inner loops of Gaussian DRSAM become
\beqn
u_{k}^{l+1} &= &\frac{1}{2}u_{k}^{l} +\frac{1}{2}b\odot \sgn\big(R_ku_{k}^{l}\big)\\
v_{k}^{l+1}& =& \frac{1}{2} v_{k}^{l}+\frac{1}{2}b\odot \sgn{\Big(S_k v_{k}^{l}\Big)}. 
\eeqn
and the inner loops of the Poisson DRSAM become
 \beq\nn u_{k}^{l+1}& =& \frac{1}{2}u_{k}^{l} -\frac{1}{3}R_k u_{k}^{l}+\frac{1}{6} \sgn{\Big(R_k u_{k}^{l}\Big)}\odot \sqrt{|R_k u^l_k|^2+24b^2} \\
 v_{k}^{l+1} &=& \frac{1}{2} v_{k}^{l} -\frac{1}{3}S_k v_{k}^{l}+\frac{1}{6}\sgn{\Big(S_k v_{k}^{l}\Big)}\odot \sqrt{|S_k v_{k}^{l}|^2+24b^2}.\nn
\eeq
Here 
$R_k=2P_k-I$ is the  reflector corresponding to the projector $P_k:=A_kA_k^+ $ and $S_k$ is the reflector corresponding to the projector  $Q_k:=B_kB_k^+ $. 
We set  $u^1_k=u^{\infty}_{k-1}$ where $u^{\infty}_{k-1}$ is the terminal value
at epoch $k-1$ and $v^1_k=v^{\infty}_{k-1}$ where $v^{\infty}_{k-1}$ is the terminal value
at epoch $k-1$.

Figure \ref{fig:noise}(a) compares performance of four combinations of loss functions (Poisson or Gaussian) and scanning schemes (Rank 1 or full-rank) with a $60\times 60$ random mask for the test object CiB in the noiseless case.  Full-rank perturbation \eqref{rank2}  results in a faster convergence rate than rank 1 scheme \eqref{rank1}. The convergence rate of Poisson DRSAM is slightly better than Gaussian DRSAM with noiseless data. 

\begin{figure}
\centering 
\subfigure[RE vs. epoch]{\includegraphics[width=6cm]{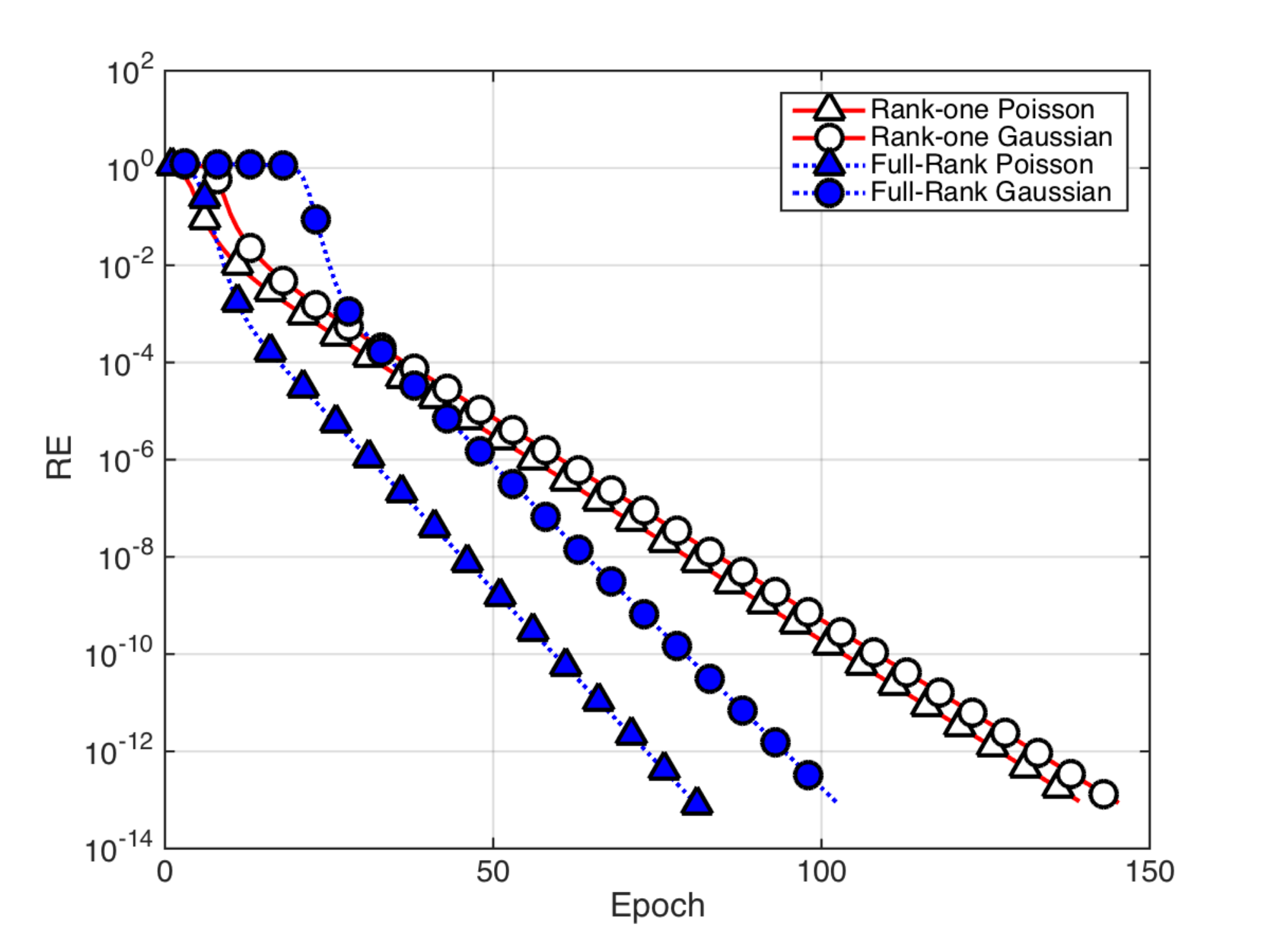}}
\subfigure[RE vs. NSR]{\includegraphics[width=6cm]{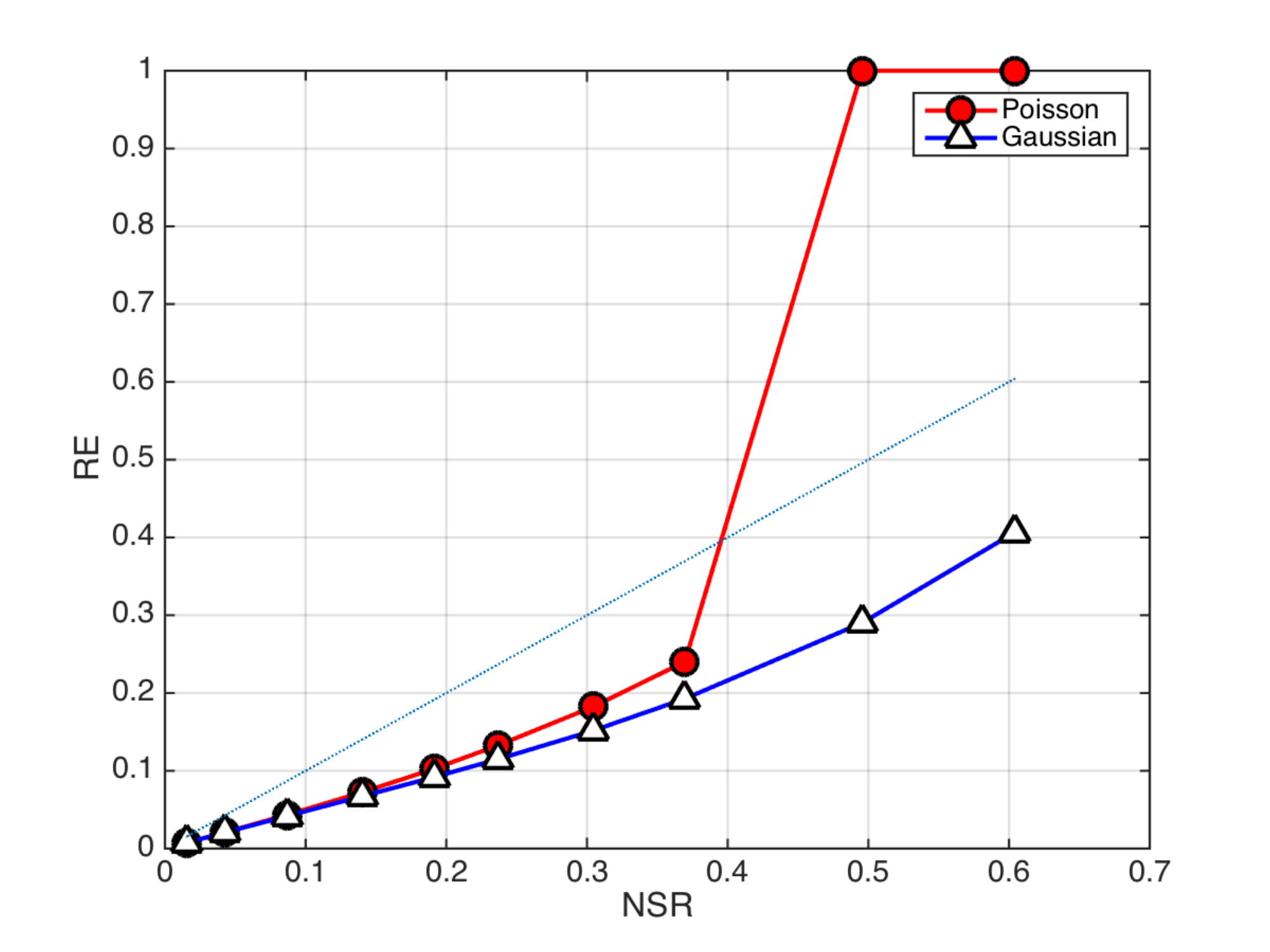}}
\caption{(a) Geometric convergence to CiB in the noiseless case at various rates for four combinations of loss functions and scanning schemes with i.i.d. mask (rank-one Poisson, $\mbox{rate}=0.8236$; rank-one Gaussian, $\mbox{rate}=0.8258$; full-rank Poisson, $\mbox{rate}=0.7205$; full-rank Gaussian, $\mbox{rate}=0.7373$) and (b) RE versus NSR for reconstruction of CiB with Poisson noise.}
\label{fig:noise}
\end{figure}

With data corrupted by by Poisson noise, Figure \ref{fig:noise}(b) shows RE versus NSR \eqref{nsr} for CiB  by Poisson-DRS and Gaussian-DRS with i.i.d. mask and the full-rank scheme. The maximum number of epoch in DRSAM is limited to $100$. The RR stabilizes  usually  after 30 epochs. The (blue) reference straight line has slope = 1. We see that the Gaussian-DRS outperforms the  Poisson-DRS, especially when  the Poisson RE becomes unstable for NSR $\ge 35\%$.  As noted in \cite{rPIE17},\cite{adaptive},\cite{AP-phasing} fast convergence (with the Poisson log-likelihood function) may introduce noisy artifacts and reduce reconstruction quality.

\section{Holographic coherent diffraction imaging}\label{s:holo}

Holography is a  lensless imaging technique that enables complex-valued image reconstruction by virtue of  placing a coherent point source at an appropriate distance from the object and having the object field interfere with the reference wave produced by this point source at the (far-field) detector plane~\cite{goodman2005introduction}. For example, adding a pinhole (corresponding to adding a delta distribution in the mathematical model) at an appropriate  position  to the sample
creates an additional wave in the far field, with a tilted phase, caused by the displacement between the pinhole and the sample. 
The far field detector now records the intensity of the Fourier transform of the sample and the reference signal (e.g., the pinhole).

The invention of holography goes back to Dennis Gabor\footnote{Gabor devoted a lot of his time and energy to overcome the initial skepticism of the community to the concept of holography  and proudly noted in a letter to Bragg  ``I have also perfected the experimental arrangement considerably, and now I can produce really pretty reproductions of the original from apparently hopelessly muddled diffraction diagrams.''~\cite{johnston2005white}.}, who in 1947 was working on improving the resolution of the recently invented
electron microscope~\cite{dennis1956improvements,gabor1948new,gabor1965reconstruction}.  In 1971, he was awarded the Nobel Prize in Physics for his invention.
 In the original scheme proposed by Gabor, called in-line holography, the reference and object waves are parallel to one another. In off-axis holography, the two waves are separated by a non-zero angle.
In classical holography, a photographic plate is used to record the spatial intensity distribution. In state-of-the-art digital holography systems  a
digital acquisition device  captures the spatial intensity distribution~\cite{seelamantula2011exact}.

We recommend~\cite{Latychevskaia19} for a recent survey on iterative algorithms in holography.
While holography leads to relatively simple algorithms for solving the phase retrieval problems, it does pose numerous challenges in the experimental practice. For a detailed discussion of various practical issues with holography, such as resolution limitations, see~\cite{duadi2011digital,latychevskaia2015practical,shechtman2015phase,saliba2016novel,Latychevskaia19}.


A compelling direction in holographic phase retrieval is to combine holography with CDI~\cite{latychevskaia2012holography,saliba2012fourier,raz2014direct}, see Figure~\ref{fig:holo} for a setup depicting holographic CDI.
This hybrid technique  ``inherits  the  benefits  of  both  techniques, namely the straightforward unambiguous recovery of the phase distribution and the visualization of a non-crystalline object at the highest possible resolution,''~\cite{latychevskaia2012holography}.
Researchers have recently successfully used holographic CDI to image proteins at the single-molecule level~\cite{Longchamp1474}.

\begin{figure}
\begin{center}
\includegraphics[height=50mm]{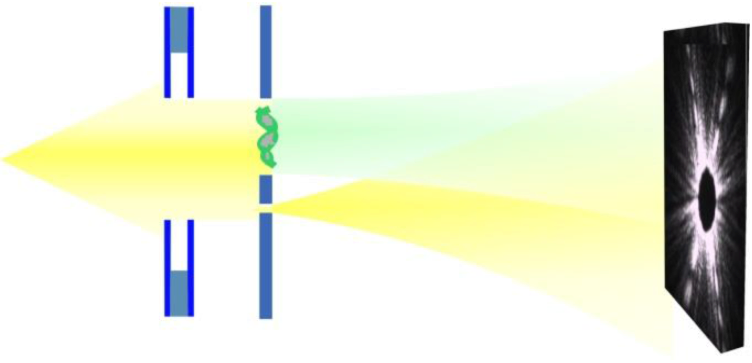}
\caption{Holographic CDI setup. Image courtesy of~\protect\cite{saliba2012fourier}.}
\label{fig:holo}
\end{center}
\end{figure}

While holographic techniques have been around for a long time, these investigations have been mainly empirical. A notable exception is the recent work~\cite{barmherzig2019holographic,barmherzig2019dual}, which contains a rigorous mathematical treatment of holographic CDI that sheds light on the reference design from an optimization viewpoint and provides a detailed error analysis. We will discuss some aspects of this work below.

From a mathematical viewpoint, the key point of holographic CDI  is that the introduction of a reference signal simplifies the phase retrieval problem
considerably, since  the computational problem of recovering the desired signal can now be expressed as a linear deconvolution problem~\cite{kikuta1972x,guizar2007holography,barmherzig2019holographic}. We discuss this insight below.

Here, we assume that our function of interest $\xo$ is an $n \times n$ image. 
We  denote the convolution of two functions $x,z$ by $x \ast z$ and define the involution { (a.k.a. the twin image) } $\check{x}$ of  $x$ as $\check{x}(t_1,t_2) = \overline{x(-t_1,-t_2)}$.
The cross correlation $\mathfrak{C}_{[x,z]}$ between the two functions $x, z$ is given by
\begin{equation}\label{crosscorr}
\mathfrak{C}_{[x,z]} := x \ast \check{z},
\end{equation}
where we use Dirichlet boundary conditions, i.e., zero-padding, outside the valid index range.
We already encountered the special case $x = z$ (although without stipulating specific boundary conditions), in form of  the autocorrelation 
\begin{equation}\label{autocorr}
\auto_{x} = x \ast \check{x}, 
\end{equation}
which is  at core of the phase retrieval problem via the relation\footnote{Arthur Lindo Patterson once asked Norbert Wiener:  ``What do you know about a function, when you know only the amplitudes of its Fourier coefficients?'' Wiener responded:  ``You know the Faltung [convolution]'', \cite{Glusker84}.}
$$F (x \ast \check{x}) = |F(x)|^2.$$

While extracting a function from its autocorrelation is a difficult quadratic problem (as exemplified by the phase retrieval problem),
extracting a function from a cross correlation is a linear problem if the other function is known, and thus much easier.
This observation is the key point of holographic CDI.
We will take full advantage of this fact by adding a reference area (in digital form represented by the signal $r$) to the specimen $\xo$.
For concreteness, we assume that the reference $r$ is placed on the right side
of $\xo$, and subject the so enlarged signal $[\xo, r]$ to the measurement process, as illustrated in Figure~\ref{fig:holo}.

For $(s_1,s_2) \in \{-(n-1),\dots,0\}\times  \{-(n-1),\dots,0\}$ we have
\begin{align}
\mathfrak{C}_{[\xo,r]}(s_1,s_2) = & (\xo \ast \check{r})(s_1,s_2) \notag \\
                                = & ([\xo,r] \ast \widecheck{[\xo,r]})(s_1,n-s_2)  =  \auto_{[x,r]}(s_1,-n+s_2).
\label{cross2auto}
\end{align}
Equation~\eqref{cross2auto} allows us to establish a linear relationship between  $\mathfrak{C}_{[\xo,  r]}$ and the measurements given by  the squared entries of $F(\auto_{[\xo,r]})$.
Most approaches in holography are based on utilizing this relationship in some way, see e.g.~\cite{seelamantula2011exact}.

Here, we take a signal processing approach and recall that the convolution of two 2-D signals with Neumann boundary conditions can be described as matrix-vector multiplication, where the matrix is given by a lower-triangular block Toeplitz matrix with lower-triangular Toeplitz blocks~\cite{gray2006toeplitz}. 
The lower-triangular property stems from the fact that the zero-padding combined with the particular index range we are considering is equivalent to applying a two-dimensional causal filter~\cite{gray2006toeplitz}. 

Let $r^{(k)}$ be the $k$-the column of the reference $r$ and let the lower triangular block-Toeplitz-Toeplitz block matrix $T(r)$ be given by
\begin{equation*}
T(r)  = \begin{bmatrix}
T_0 & 0 & \dots  &0 \\
T_1 & T_1 & 0 & \vdots  \\
\vdots & \ddots & &  \\
T_{n-1} & \dots & & T_1
\end{bmatrix},
\end{equation*}
where the first column of the lower-triangular Toeplitz matrix $T_k$ is given by $\check{r}^{(n-k-1)}$ for $k=0,\dots,n-1$.
We also define $y : =  F^{-1}(|F([\xo,r])|^2)$ and note that 
$$y =  F^{-1}(|F([\xo,r])|^2) = F^{-1}(F(\auto_{[\xo,r]}))=\auto_{[\xo,r]}.$$
Hence, with a slight abuse of notation (by considering $\xo$ also as column vector of length $n^2$ via stacking its columns)
we arrive at the following linear system of equations
\begin{equation}
\label{toepconv}
T(r) \xo = y.
\end{equation}
The $n^2 \times n^2$ matrix $T(r)$ is invertible if and only if its diagonal entries are non-zero, that is, if and only if $r_{n-1,n-1} \neq 0$.
As noted in~\cite{barmherzig2019holographic}, this condition is equivalent to the well-known holographic separation condition~\cite{guizar2007holography}, which dictates when an image is recoverable via using the reference $r$. In signal processing jargon, this separation condition prevents the occurrence of aliasing.

Let us consider the very special case of the pinhole reference. In this case $r \in \CC^{n\times n}$ is given by 
$$
r_{k,l} = \begin{cases}
1, & \text{if $k=l=n-1$,} \\
0, & \text{else.}
\end{cases}
$$
Thus $r$ acts as a delta-distribution with respect to the given digital resolution (which may be very difficult to realize in practice, and thus this is still one limiting factor in the achievable image resolution). In this particular case its diagonal entries are 
$[T(r)]_{k,k} =r_{n-1,n-1}=1$ for all $k=0,\dots,n^2-1$, and all off-diagonal entries of $T(r)$ are zero; thus $T(r)$ is simply the $n^2 \times n^2$ identity matrix. 

Other popular choices are the block reference defined by $r_{k,l}=1$ for all $k,l=0,\dots,n-1$; and the slit reference defined by 
$$
r_{k,l} = \begin{cases}
1, & \text{if $l=n-1$,} \\
0, & \text{else.}
\end{cases}
$$
In both cases the resulting matrix $T(r)$ as well as its inverse $[T(r)]^{-1}$ take a very simple form, as the interested reader may easily convince herself. 

In the noiseless case, the only difference between these references from a theoretical viewpoint is the computational complexity in solving the system~\eqref{toepconv}, which is obviously minimal for the pinhole reference. However, in the presence of noise different references have different advantages and drawbacks.
We refer to~\cite{barmherzig2019holographic} for a thorough error analysis when the measurements are corrupted by Poisson shot noise.

\bigskip

We describe some numerical experiments illustrating the effectiveness of the referenced deconvolution algorithm.
The description of these simulations and associated images are courtesy of~\cite{barmherzig2019holographic}, which also contains a number of other simulations.

\begin{figure}[!htbp] \label{fig:mimi-recov}
\begin{center}
    \subfigure[Ground-truth image]{
        \includegraphics[width=0.31\textwidth]{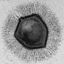}}
    \subfigure[Fourier magnitude of the groundtruth]{
        \includegraphics[width=0.31\textwidth]{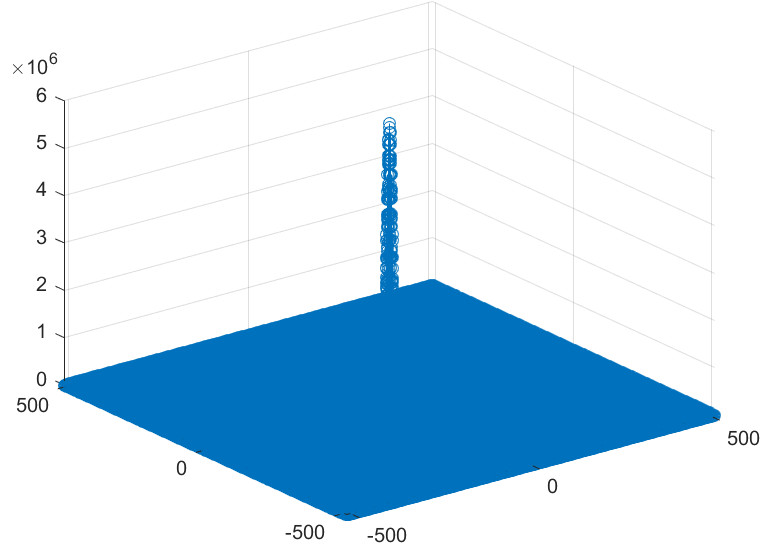}}
            \subfigure[HIO (no ref.)  $\eps = 93.794$, $\EE(\eps)$ NA]{
       \includegraphics[width=0.31\textwidth]{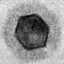}}
          \subfigure[HIO with block ref.\  $\eps = 42.813$, $\EE(\eps)$ NA]{
        \includegraphics[width=0.31\textwidth]{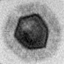}}
        \subfigure[HIO with slit ref.\ $\eps = 102.28$, $\EE(\eps)$ NA]{
        \includegraphics[width=0.31\textwidth]{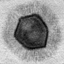}}
    \subfigure[HIO with pinhole ref.\ $\eps = 168.18$, $\EE(\eps)$ NA]{
        \includegraphics[width=0.31\textwidth]{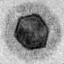}}
       \subfigure[Ref.Deconv.\ with block ref.\ $\eps = 3.703$, $\EE(\eps) = 3.795$]{
        \includegraphics[width=0.31\textwidth]{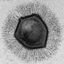}}
        \subfigure[Ref.~Deconv.\ with slit ref.\ $\eps = 5.720$, $\EE(\eps) = 5.147$]{
        \includegraphics[width=0.31\textwidth]{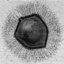}}
        \subfigure[Ref.~Deconv.\ with pinhole ref.\ $\eps = 46.97$, $\EE(\eps) = 63.84$]{
        \includegraphics[width=0.31\textwidth]{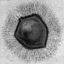}}
        \caption{Recovery result of the mimivirus image using various recovery schemes, and the corresponding relative recovery errors (all errors should be rescaled by $10^{-4}$). Referenced deconvolution clearly outperforms  HIO, both with and without the reference information enforced. Experimental and theoretical relative errors for referenced deconvolution closely match, as predicted by the theory derived in~\protect\cite{barmherzig2019holographic}. .}
    \end{center}
\end{figure}

In this experiment, the specimen $\xo$ is the mimivirus image \cite{Mimivirus}, and its spectrum mostly concentrates on very low frequencies, as shown in Figure~\ref{fig:mimi-recov}(b). The image size is $64 \times 64$, and the pixel values are normalized to $[0, 1]$. For the referenced setup, a reference $r$ of size $64 \times 64$ is placed next to $\xo$, forming a composite specimen $[\xo,r]$ of size $64 \times 128$.  Three references, i.e., the pinhole, the slit, and the block references, are considered. Note that the zero-padding introduced as boundary condition in the cross correlation function~\eqref{crosscorr} and the autocorrelation function~\eqref{autocorr} corresponds to an oversampling of the associated Fourier transform. In this experiment, the oversampled Fourier transform is taken to be of size $1024 \times 1024$, and the collected noisy data
are subject to Poisson shot noise.  We note that since the oversampling condition in  the  detector  plane  corresponds  to  zero-padding  in  the  object  plane, this requires  the  specimen  to  be surrounded  by  a  support  with  known  transmission properties.  For  instance,  when  imaging  a  biological molecule,  it  must  ideally  be  either  levitating  or  resting  on  a  homogeneous  transparent  film such  as graphene~\cite{latychevskaia2012holography}.
Thus, what is trivial to do from a mathematical viewpoint, may be rather challenging to realize in a practical experimental.

We run the referenced deconvolution algorithm and compare it to the HIO algorithm, the latter with and without enforcing the known reference for comparison. The results are presented in Figure~\ref{fig:mimi-recov}. It is evident that referenced deconvolution clearly outperforms HIO.
An inspection of the errors stated in the corresponding figure captions shows that for the referenced deconvolution schemes, the expected and empirical relative recovery errors are close for each reference, as predicted by the error analysis in~\cite{barmherzig2019holographic}.

In the example depicted in  Figure~\ref{fig:mimi-recov} the block reference gives the smallest recovery error among the tested reference schemes. However, this is not the case in general. As illustrated in~\cite{barmherzig2019holographic}
depending on the spectral decay behavior of the image under consideration, different reference schemes have different limitations.
To overcome the specific limitations of each reference, a {\em dual reference} approach has been proposed in~\cite{barmherzig2019dual}, in which 
the reference consists of two reference portions -- a pinhole portion $r_p$ and a block portion $r_b$. In this case the illuminated image takes the form
$\begin{bmatrix} \xo & r_p \\ r_b & {\bf 0} \end{bmatrix}$. The theoretical and empirical error analysis in~\cite{barmherzig2019dual} show that
this dual-reference scheme achieves a smaller recovery error than the leading single-reference schemes.

\section{Conclusion and outlook}\label{s:outlook}

In this survey we have tried to capture the state of the art  of the classical and at the same time fast-emerging field of numerical algorithms for phase retrieval.
The last decade has witnessed extensive activities in the systematic study of numerical algorithms for phase retrieval. 
Advances in convex and non-convex optimization have led to a better understanding of the benefits and limitations of various phase retrieval algorithms. The insights gained in the study of these algorithms in turn has advanced new measurement protocols, such as random illuminations. 

Some of the most challenging problems related to phase retrieval arise in blind ptychography, in imaging proteins at the single-molecule level~\cite{Longchamp1474}, and in non-crystallographic ``single-shot'' x-ray imaging~\cite{chapman2007femtosecond,loh2010cryptotomography}. In the latter problem, in addition to the phase retrieval problem one faces the major task of tomographic 3D reconstruction of the object  from the diffraction images  with unknown rotation angles -- a challenge that we also encounter in Cryo-EM~\cite{singer2018mathematics}. The review article~\cite{shechtman2015phase} contains a detailed discussion of current bottlenecks and future challenges, such as taking the CDI techniques to the regime of attosecond science. This topic remains one of the current challenges in phase retrieval.

Mathematicians sometimes develop theoretical and algorithmic  frameworks under assumptions that do not conform to current practice. It is then important to find out if these assumptions are fundamentally unrealistic, or if they actually point to new ideas that are (perhaps with considerable effort) implementable in practice and advance the field.

It is clear that much more work needs to be done and a closer dialogue between practitioners and theorists is highly desirable to create the kind of feedback loop where theory and practice drive each other forward with little temporal delay. Careful systematic numerical analysis is an essential ingredient in strengthening the bond between theory and practice.

\subsection*{Acknowledgements}

The authors are grateful to  Tatiana Latychevskaia, Emmanuel Candes, Justin Romberg, and Stefano Marchesini for
allowing us to use the  exquisite illustrations from their corresponding publications, see
~\cite{saliba2012fourier}, \cite{barmherzig2019holographic}, \cite{bahmani2016phase}, and \cite{qian2014efficient},
respectively. We thank Dr. Pengwen Chen for preparing Figures \ref{fig:init1}, \ref{fig:init2}, \ref{fig:edrs} and \ref{fig:eraar}. 
A.F.\ acknowledges support from the NSF via grant
NSF DMS-1413373 and from the Simons Foundation via grant SIMONS FDN 2019-24. 
T.~S.\ acknowledges support from the NSF via grant DMS 1620455 and from the NGA and the NSF via grant DMS
1737943.


\end{document}